%%&pdflatex
%\def\IsDraft{} % set for draft version
%\def\IsTrackChanges{} 
%\def\IsResultWithAbort{} 

\documentclass[11pt]{article}
\usepackage{fullpage}

\usepackage{amsfonts,amsmath,amssymb,boxedminipage,color,url,nccmath}

\usepackage[normalem]{ulem}
 
\usepackage{enumerate,paralist}
\usepackage[hypcap=false]{caption}
\usepackage[T1]{fontenc}
\usepackage{lmodern} \normalfont %to load T1lmr.fd
\DeclareFontShape{T1}{lmr}{bx}{sc} { <-> ssub * cmr/bx/sc }{}
\DeclareFontShape{T1}{lmr}{m}{scit}{<->ssub*lmr/m/scsl}{}%

\usepackage[bottom]{footmisc} % to make the footnote stay at the bottom of the page... (load before hyperref)

% pageanchor=false solves the warning "pdfTeX warning (ext4): destination with the same identifier (name{page.}) has been already used, duplicate ignored"
\usepackage[pageanchor=false,pdfstartview=FitH,colorlinks,linkcolor=blue,filecolor=blue,citecolor=blue,urlcolor=blue]{hyperref} % blue links

\usepackage{amsthm} % in order to remove warnings such as "destination with the same identifier (name{corollary.1.3}) has been already used, duplicate ignored" the order of packages should be hyperref, amsthm, cleveref

\usepackage{aliascnt}
\usepackage[numbers]{natbib} % \citet{foo} -> Foo et al. [5]
\usepackage{cleveref}
\usepackage{xspace}
\usepackage{xstring}

\usepackage[nottoc]{tocbibind} % add Bibliography to toc

\usepackage{tikz}

\pdfoptionpdfminorversion=7 % to remove warning "pdfTeX warning: pdflatex.exe (file ./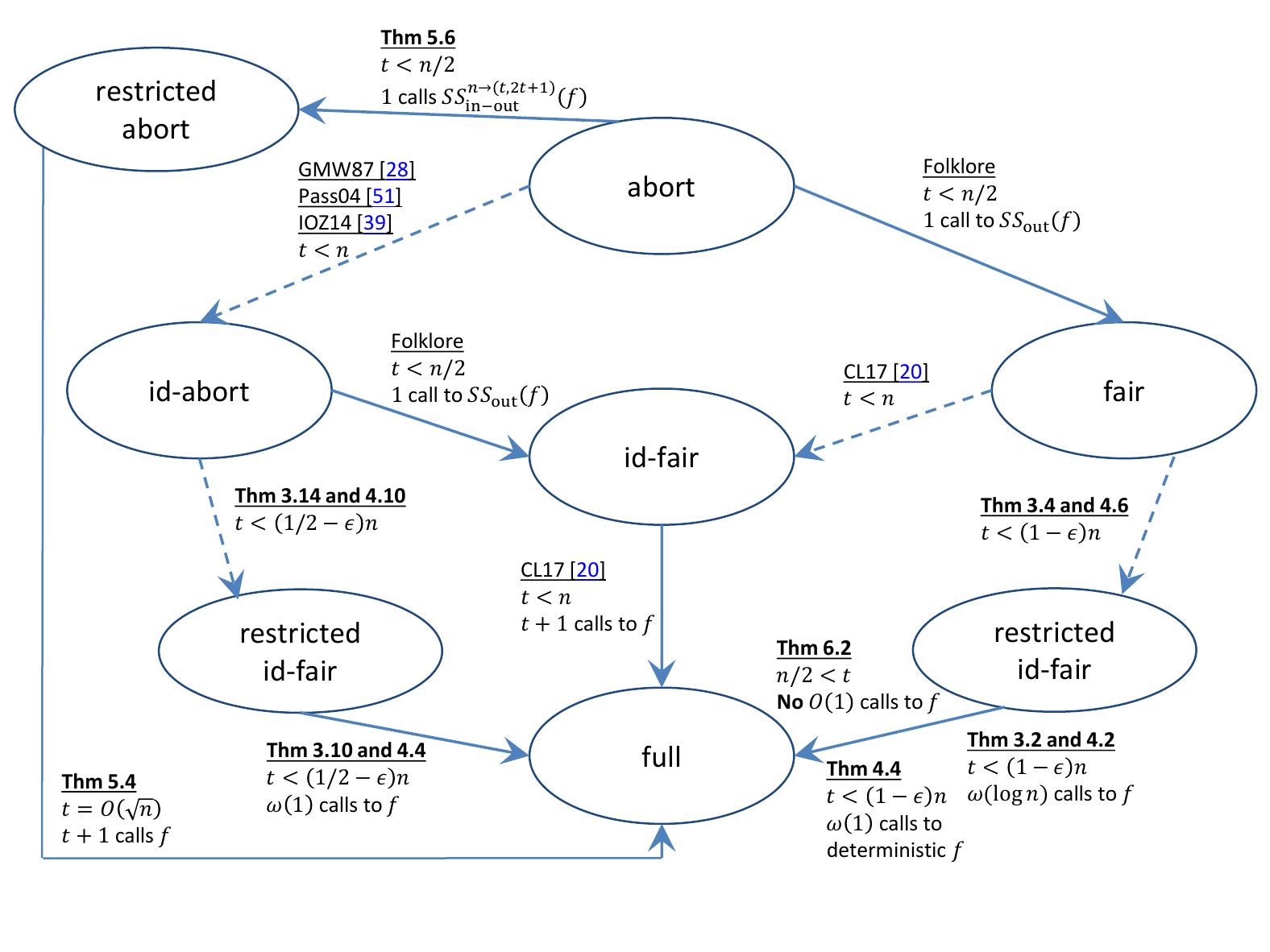): PDF inclusion: found PDF version <1.7>, but at most version <1.5> allowed"

\newcommand{\remove}[1]{}
\newcommand{\Draft}[1]{\ifdefined\IsDraft\texttt{ #1} \fi}

\newcommand{\TLLNCS}[2]{\ifdefined\IsLLNCS#1\else #2 \fi}

%%% revision and notes marker %%%%%%%%%%%%%%%%%%%%%%%%%%%%%%%%%%%%

\ifdefined\IsDraft
    \newcommand{\authnote}[2]{\textbf{[{\color{red} #1's Note:} {\color{blue} #2}]}}
\else
    \newcommand{\authnote}[2]{}
\fi

\ifdefined\IsTrackChanges
\newcommand{\stkout}[1]{\ifmmode\text{\sout{\ensuremath{#1}}}\else\sout{#1}\fi}
\newcommand{\changed}[3]{\textbf{Changed:} [{\color{red} \stkout{#2}}] {\color{blue} #3}}
\newcommand{\deleted}[2]{{\textbf{Deleted:}~{\color{red} \stkout{#2} }}}
\newcommand{\added}[2]{{\textbf{Added:}~{\color{blue} #2}}}
\else
\newcommand{\changed}[3]{#3}
\newcommand{\deleted}[2]{}
\newcommand{\added}[2]{#2}
\fi

%%% Environments %%%%%%%%%%%%%%%%%%%%%%%%%%%%%%%%%%%%%%%%%%%%%%%%

\newcommand{\sdotfill}{\textcolor[rgb]{0.8,0.8,0.8}{\dotfill}} %change to \cdotfill later

\newenvironment{protocol}{\begin{proto}}{\vspace{-\topsep}\sdotfill\end{proto}}
\newenvironment{algorithm}{\begin{algo}}{\vspace{-\topsep}\sdotfill\end{algo}}

\newcommand{\Ensuremath}[1]{\ensuremath{#1}\xspace}
\newcommand{\MathAlg}[1]{\mathsf{#1}}

\newcommand{\MathAlgX}[1]{\Ensuremath{\MathAlg{#1}}}

%%% Frames %%%%%%%%%%%%%%%%%%%%%%%%%%%%%%%%%%%%%%%%%%%%%%%%%%%%%%
% nfbox{CAPTION}{LABEL}{BODY}

\newcommand \mycaption {\small }     % Used as temporary variables in nf* environments to store caption and label for postamble.
\newcommand \mylabel {}

\ifdefined\IsLLNCS
    \newenvironment{nfbox}[3]{
    \renewcommand \mycaption {#1}
    \renewcommand \mylabel {#2}
    \begin{center}\small
    \begin{tabular}{|ll|}
    \hline
    \hspace{.3ex}
    \begin{minipage}{.97\linewidth}
         \vspace{0.5ex}
         #3}
         {\smallskip
         \captionof{figure}{\mycaption}
         \label{\mylabel}
     \end{minipage}
     &\hspace{.3ex} \\
     \hline
     \end{tabular}
     \end{center}    %\vspace{-5ex}
    }
\else
    \newenvironment{nfbox}[3]{
    \renewcommand \mycaption {#1}
    \renewcommand \mylabel {#2}
    \begin{center}\small
    \begin{tabular}{|ll|}
    \hline
    \hspace{.3ex}
    \begin{minipage}{.94\linewidth}
         \vspace{0.5ex}
         #3}
         {\smallskip
         \captionsetup{type=figure}
%         \captionof{figure}{\mycaption}
%         \label{\mylabel}
     \end{minipage}
     &\hspace{-0ex} \\
     \hline
     \end{tabular}
     \captionof{figure}{\mycaption}
     \label{\mylabel}
     \end{center}    %\vspace{-5ex}
    }
\fi

\ifdefined\IsLLNCS

\else

\fi

%%% General %%%%%%%%%%%%%%%%%%%%%%%%%%%%%%%%%%%%%%%%%%%%%%%%%%%%%%

%\newcommand{\etal}{{et~al.\xspace}

\newcommand{\resp}{resp.,\xspace}
\newcommand{\ie}  {i.e.,\xspace}
\newcommand{\eg}  {e.g.,\xspace}

\newcommand{\wrt} {with respect to\xspace}
\newcommand{\wlg} {without loss of generality\xspace}

\newcommand{\abs}[1]{\left\lvert #1 \right\rvert}
\newcommand{\ceil}[1]{\left\lceil #1 \right\rceil}

\newcommand{\set}[1]{\ens{#1}}
\newcommand{\sset}[1]{\{#1\}}

\newcommand{\floor}[1]{\left \lfloor#1 \right \rfloor}

\newcommand{\ith}           {$i$'th\xspace}
\newcommand{\jth}           {$j$'th\xspace}

\newcommand{\lth}           {$l$'th\xspace}

\newcommand{\half}{\tfrac{1}{2}}

\newcommand{\N}{{\mathbb{N}}}

\newcommand{\zo}{\{0,1\}}

\newcommand{\zs}{{\zo^\ast}}

\newcommand{\xor}{\oplus}

\newcommand{\eps}{\varepsilon}
\newcommand{\ci} {\equiv_c}
\newcommand{\deltaci} {\ci^\delta}
\newcommand{\statclose} {\equiv_s}
\newcommand{\deltaclose} {\statclose^\delta}
\newcommand{\deltaequiv} {\equiv^\delta}

\newcommand{\la}{\gets}

\newcommand{\poly}{\mathsf{poly}}

\newcommand{\logstar}{\operatorname{log^\ast}}

\newcommand{\uglyExp}{{\log(e)\cdot\left(\frac{2}{e}+\frac{1}{\varphi(\secParam)}\right)}}

\newcommand{\Com}{\MathAlg{Com}}

\newcommand{\Recon}{\MathAlgX{Recon}}
\newcommand{\Share}{\MathAlgX{Share}}

\newcommand{\EncGen}{\MathAlgX{Gen}}
\newcommand{\Enc}{\MathAlgX{Enc}}
\newcommand{\Dec}{\MathAlgX{Dec}}

\newcommand{\negl}{\operatorname{neg}}

\newcommand{\dk}{\mathit{dk}}
\newcommand{\ek}{\mathit{ek}}
\newcommand{\vek}{\vect{\ek}}

\newcommand{\decomval}{\rho}
\newcommand{\comval}{c}
\newcommand{\vcomval}{\vect{\comval}}

\newcommand{\encval}{e}
\newcommand{\vencval}{\vect{\encval}}
\newcommand{\rndaugct}{r}
\newcommand{\decomaugct}{\rho}
\newcommand{\comaugct}{\sigma}
\newcommand{\vcomaugct}{\vect{\comaugct}}
\newcommand{\sval}{s}
\newcommand{\vsval}{\vs}
\newcommand{\svalhat}{\hat{\sval}}
\newcommand{\vsvalhat}{\hat{\vect{\sval}}}
\newcommand{\rval}{r}

%\newcommand{\accept}{\mathtt{accept}}
%\newcommand{\reject}{\mathtt{reject}}

%\newcommand{\fail}{\mathtt{fail}}

%%% Computational Problems %%%%%%%%%%%%%%%%%%%%%%%%%%%%%%%%%%%%%%%

%{{\sc yes}}
%{{\sc no}}

%%% Complexity Classes %%%%%%%%%%%%%%%%%%%%%%%%%%%%%%%%%%%%%%%%%%

 % and their complements

% zero knowledge complexity classes

%Cref issues
%make all reference start with uppercase
\renewcommand{\cref}{\Cref}

\TLLNCS{
\newaliascnt{claiml}{theorem}
\newtheorem{claiml}[claiml]{Claim}
\aliascntresetthe{claiml}

\renewenvironment{claim}{\begin{claiml}}{\end{claiml}}
}
{
\newtheorem{theorem}{Theorem}[section]

\newaliascnt{lemma}{theorem}
\newtheorem{lemma}[lemma]{Lemma}
\aliascntresetthe{lemma}

%\newaliascnt{figure}{theorem}
%\newtheorem{figure}[figure]{Figure}
%\aliascntresetthe{figure}

\newaliascnt{claim}{theorem}
\newtheorem{claim}[claim]{Claim}
\aliascntresetthe{claim}

\newaliascnt{corollary}{theorem}
\newtheorem{corollary}[corollary]{Corollary}
\aliascntresetthe{corollary}

\newaliascnt{proposition}{theorem}
\newtheorem{proposition}[proposition]{Proposition}
\aliascntresetthe{proposition}

\newaliascnt{conjecture}{theorem}

\aliascntresetthe{conjecture}

\newaliascnt{definition}{theorem}
\newtheorem{definition}[definition]{Definition}
\aliascntresetthe{definition}

\newaliascnt{remark}{theorem}
\newtheorem{remark}[remark]{Remark}
\aliascntresetthe{remark}

\newaliascnt{example}{theorem}

\aliascntresetthe{example}
}

\crefname{lemma}{Lemma}{Lemmas}
\crefname{figure}{Figure}{Figures}
\crefname{claim}{Claim}{Claims}
\crefname{corollary}{Corollary}{Corollaries}
\crefname{proposition}{Proposition}{Propositions}
\crefname{conjecture}{Conjecture}{Conjectures}
\crefname{definition}{Definition}{Definitions}
\crefname{remark}{Remark}{Remarks}
\crefname{exmaple}{Example}{Examples}

\newaliascnt{construction}{theorem}
\newtheorem{construction}[construction]{Construction}
\aliascntresetthe{construction}
\crefname{construction}{Construction}{Constructions}

\newaliascnt{fact}{theorem}

\aliascntresetthe{fact}
\crefname{fact}{Fact}{Facts}

\newaliascnt{notation}{theorem}

\aliascntresetthe{notation}
\crefname{notation}{Notation}{Notation}

\crefname{equation}{Equation}{Equations}

%%%%%%%%%%%%%%%%%%%%%%%%%%%%%%%%%%%%%%%%%%%%%%%%%%%%%%%%%%%%%%%%%%%%%%%%%
%\theoremstyle{protocol}
%what is proto
\newaliascnt{proto}{theorem}

%the name to appear for the environ
\newtheorem{proto}[proto]{Protocol}

%the name to appear in the reference
\aliascntresetthe{proto}
\crefname{proto}{protocol}{protocols}
%%%%%%%%%%%%%%%%%%%%%%%%%%%%%%%%%%%%%%%%%%%%%%%%%%%%%%%%

\newaliascnt{algo}{theorem}
\newtheorem{algo}[algo]{Algorithm}
\aliascntresetthe{algo}
%\crefname{algorithm}{algorithm}{algorithms}
\crefname{algo}{algorithm}{algorithms}

\newaliascnt{expr}{theorem}
\newtheorem{expr}[expr]{Experiment}
\aliascntresetthe{expr}
\crefname{experiment}{experiment}{experiments}

%

%%% Proof environments %%%%%%%%%%%%%%%%%%%%%%%%%%%%%%%%%%%%%%%%%%%

\def\FullBox{$\Box$}
\def\qed{\ifmmode\qquad\FullBox\else{\unskip\nobreak\hfil
\penalty50\hskip1em\null\nobreak\hfil\FullBox
\parfillskip=0pt\finalhyphendemerits=0\endgraf}\fi}

\def\qedsketch{\ifmmode\Box\else{\unskip\nobreak\hfil
\penalty50\hskip1em\null\nobreak\hfil$\Box$
\parfillskip=0pt\finalhyphendemerits=0\endgraf}\fi}

\newenvironment{proofsketch}{\begin{trivlist} \item {\it Proof sketch.}} {\qed\end{trivlist}}

%\newenvironment{claimproof}{\begin{quotation}\noindent\begin{proof}[Proof of Claim]}{\end{proof}\end{quotation}}

%%% Statistically Hiding Commitments %%%%%%%%%%%%%%%%%%%%%%%%%%%%%

           % denotes the set of transcripts

\newcommand{\eex}[2]{\Ex_{#1}\left[#2\right]}
\newcommand{\ex}[1]{\Ex\left[#1\right]}
\newcommand{\Ex}{{\mathrm E}}
\renewcommand{\Pr}{{\mathrm {Pr}}}
\newcommand{\pr}[1]{\Pr\left[#1\right]}
\newcommand{\ppr}[2]{\Pr_{#1}\left[#2\right]}

 %{\ensuremath{\tbinom{2}{1}}}

\newcommand{\Ac}{\MathAlgX{A}}

\newcommand{\Bc}{\mathsf{B}}
\newcommand{\Cc}{\mathsf{C}}

\newcommand{\ens}[1]{\left\{#1\right\}}
\newcommand{\size}[1]{\left|#1\right|}
\newcommand{\ssize}[1]{|#1|}

\newcommand{\com}{\operatorname{\mathsf{Com}}}

\newcommand{\naive}{{na{\"i}ve}\xspace}

\newcommand{\Uni}{{\mathord{\mathcal{U}}}}

%%% Zero Knowledge Characterization %%%%%%%%%%%%%%%%%%%%%%%%%%%%%%

\newcommand{\prob}[1]{\mathsf{\textsc{#1}}}

\newcommand{\SD}{\prob{SD}}

\newcommand{\cM}{{\cal{M}}}

\newcommand{\I}{\mathcal{I}}
\newcommand{\J}{\mathcal{J}}

%%% others %%%%%%%%%%%%%%%%%%%%%%%%%%%%%%%%%%%%%%%%%%%%%%%%%%%%%%

\def\state{{\sf state}}
\newcommand{\ppt}{{\sc ppt}\xspace}

\newcommand{\cT}{\mathcal{T}}

\newcommand{\ebv}{\ex{B_\cT \mid V = v}}

\newcommand{\AdvpiI}{\Adv^{\pi}_\cT}
\newcommand{\Advpsi}{\Adv^{\psi}}
\newcommand{\Advpsiprime}{\Adv^{\psi}}
\newcommand{\AdvpiIprime}{{\Adv_{\cT}^{\pi}}}
\newcommand{\Jcoll}{\J_1,\ldots,\J_\numcalls}
\newcommand{\comStar}{\committee^\ast}

%%%%%%%%%%%%%%%%%%%%%%%%%%%%%%%%%%%%%%%%%%%%  PR

%\newcommand{\TypicalPairs}{\operatorname{TypicalPairs}}

\newcommand{\cs}{{\cal{S}}}

\newcommand{\cj}{{\cal{J}}}

\newcommand{\ct}{{\cal{T}}}

\newcommand{\is}{i^\ast}

\newcommand{\js}{j^\ast}
\newcommand{\ls}{l^\ast}

\newcommand{\Tableofcontents}{
\ifdefined\IsLLNCS \else
    \thispagestyle{empty}
    \pagenumbering{gobble}
    \clearpage
    \setcounter{tocdepth}{2}
    \tableofcontents
    \thispagestyle{empty}
    \clearpage
    \pagenumbering{arabic}
\fi
}

\newcommand{\vect}[1]{{ \boldsymbol{#1}}}

\newcommand{\vc}{\vect{c}}
\newcommand{\vf}{\vect{f}}
\newcommand{\vm}{\vect{m}}
\newcommand{\vr}{\vect{r}}
\newcommand{\vs}{\vect{s}}

\newcommand{\vx}{\vect{x}}

\newcommand{\vS}{\vect{S}}

\newcommand{\rinput}{r_\textsf{input}}
\newcommand{\rmask}{r_\textsf{mask}}
\newcommand{\vrmask}{\vr_\textsf{mask}}
\newcommand{\rprot}{r_\textsf{prot}}
\newcommand{\vrprot}{\vr_\textsf{prot}}

\newcommand{\party}[1]{%
    \IfEqCase{#1}{%
        {1}{\Ac}
        {2}{\Bc}
        {3}{\Cc}
        % you can add more cases here as desired
    }[\PackageError{\party}{Undefined option to party: #1}{}]%
}%

\newcommand{\Adv}{\Ac} % {\Dc}
\newcommand{\ptp}{{point-to-point}\xspace}
\newcommand{\secParam}{\kappa}

\newcommand{\Party}{\MathAlgX{P}}
\newcommand{\TParty}{\MathAlgX{\tilde P}}

\newcommand{\Sim}{\MathAlgX{S}}
\newcommand{\aux}{z}

\newcommand{\SMbox}[1]{\mbox{\scriptsize {\sc #1}}}
\newcommand{\REAL}{\SMbox{REAL}}
\newcommand{\IDEAL}{\SMbox{IDEAL}}
\newcommand{\HYBRID}{\SMbox{HYBRID}}
\newcommand{\HYB}{\SMbox{HYB}}

\newcommand{\type}{\MathAlgX{type}}

\newcommand{\abort}{\MathAlgX{abort}}
\newcommand{\full}{\MathAlgX{full}}
\newcommand{\fair}{\MathAlgX{fair}}

\newcommand{\idabort}{\MathAlgX{id \mhyphen abort}}

\newcommand{\idfair}{\MathAlgX{id \mhyphen fair}}

\newcommand{\continue}{\MathAlgX{continue}}

\newcommand{\bigbrack}{{\vphantom{2^{2^2}}}}
\mathchardef\mhyphen="2D

\newcommand{\committee}{{\cal{C}}}
\newcommand{\vCS}{\vect{\cal{C}}}
%\newcommand{\vCS}{\mbox{\footnotesize{$\vect{\cal{C}}$}}}
%P\newcommand{\vWS}{\mbox{\footnotesize{$\vect{\cal{C}}$}}}

\newcommand{\IS}{{\mathcal{I}}}
\newcommand{\JS}{{\mathcal{J}}}
\newcommand{\MS}{{\cal{M}}}
\newcommand{\PS}{{\cal{P}}}
\newcommand{\ID}{{\cal{J}}}
\newcommand{\OUT}{{\cal{V}}}

\newcommand{\outvalue}{\MathAlgX{out}}

\newcommand{\Comp}{\MathAlgX{Compiler}}
\newcommand{\CompNHMNI}{\MathAlgX{Compiler}_{\mbox{\tiny $\MathAlgX{no\mhyphen in}$}}}
\newcommand{\CompHMNI}{\MathAlgX{Compiler}_{\mbox{\tiny $\MathAlgX{hm,no\mhyphen in}$}}}
\newcommand{\CompNHM}{\MathAlgX{Compiler}_{\mbox{\tiny $\MathAlgX{}$}}}
\newcommand{\CompHM}{\MathAlgX{Compiler}_{\mbox{\tiny $\MathAlgX{hm}$}}}

\newcommand{\CompilerHMNI}[2]{\CompHMNI^{{#1}\text{\tiny$\rightarrow$}{#2}}}
\newcommand{\CompilerNHMNI}[2]{\CompNHMNI^{{#1}\text{\tiny$\rightarrow$}{#2}}}
\newcommand{\CompilerHM}[2]{\CompHM^{{#1}\text{\tiny$\rightarrow$}{#2}}}
\newcommand{\CompilerNHM}[2]{\CompNHM^{{#1}\text{\tiny$\rightarrow$}{#2}}}

\newcommand{\fout}[1]{\MathAlgX{SS_{out}}({#1})}
\newcommand{\foutss}[3]{\MathAlgX{SS_{out}^{({#2},{#3})}}({#1})}
\newcommand{\finout}[4]{\MathAlgX{SS}_{\MathAlgX{in \mhyphen out}}^{{#2} \text{\tiny$\rightarrow$} ({#3},{#4})}({#1})}
\newcommand{\fin}[3]{\MathAlgX{SS}_{\MathAlgX{in}}^{{#2} \text{\tiny$\rightarrow$} {#3}}({#1})}
\newcommand{\faugct}{f_{\MathAlgX{aug\mhyphen ct}}}
\newcommand{\zkmany}{\MathAlgX{ZK}^{\textsc{1:M}}}
\newcommand{\Renc}{R_{\textsf{enc}}}
\newcommand{\fcomor}{f_{\MathAlgX{com \mhyphen or}}}
\newcommand{\felect}{f_{\MathAlgX{elect}}}
\newcommand{\fcf}[1]{f^{{#1}}_{\MathAlgX{cf}}}

\newcommand{\append}{\MathAlgX{extend}}
\newcommand{\numcomm}{\ell}
\newcommand{\numcalls}{s}
\newcommand{\consistent}{\gamma}

\newcommand{\TDP}{{TDP}\xspace}
\newcommand{\CRH}{{CRH}\xspace}

\newcommand{\err}{\textsf{err}}

%%%%%%%

\def\rnote{\Rnote}
\def\enote{\Enote}

\newcommand{\Inote}[1]{\authnote{Iftach}{#1}}
\newcommand{\Enote}[1]{\authnote{Eran}{#1}}
\newcommand{\Rnote}[1]{\authnote{Ran}{#1}}

\newcommand{\radded}[1]{\added{Ran}{#1}}

\newcommand{\rchanged}[2]{\changed{Ran}{#1}{#2}}

\newcommand{\rdeleted}[1]{\deleted{Ran}{#1}}

\title{From Fairness to Full Security in Multiparty Computation\thanks{A preliminary version of this work appeared at \emph{SCN 2018}~\cite{CHOR18b}.}
\Draft{\\{\small \sc Working Draft: Please Do Not Distribute}}
}

\author{Ran Cohen\thanks{Northeastern University. E-mail: \texttt{rancohen@ccs.neu.edu}. Research supported by Alfred P.\ Sloan Foundation Award 996698, NEU Cybersecurity and Privacy Institute, and NSF TWC-1664445. Most of this work was done while the author was a post-doctoral researcher at Tel Aviv University, supported by ERC starting grant 638121.}
\and Iftach Haitner\thanks{School of Computer Science, Tel Aviv University. E-mail: \texttt{iftachh@cs.tau.ac.il}. Member of the Israeli Center of Research Excellence in Algorithms (ICORE) and the Check Point Institute for Information Security. Research supported by ERC starting grant 638121.}
\and Eran Omri\thanks{Department of Computer Science, Ariel University. Ariel Cyber Innovation Center (ACIC). E-mail: \texttt{omrier@ariel.ac.il}. Research supported by ISF grants 544/13 and 152/17.}
\and Lior Rotem\thanks{School of Computer Science and Engineering, Hebrew University of Jerusalem. E-mail: \texttt{lior.rotem@cs.huji.ac.il}. Supported by the European Union's Horizon 2020 Framework Program (H2020) via an ERC Grant (Grant No.\ 714253) and by the Israel Science Foundation (Grant No.\ 483/13).}
}

\begin{document}

\sloppy
\maketitle

\begin{abstract}
In the setting of secure multiparty computation (MPC), a set of mutually distrusting parties wish to jointly compute a function, while guaranteeing the privacy of their inputs and the correctness of the output.
An MPC protocol is called \emph{fully secure} if no adversary can prevent the honest parties from obtaining their outputs. A protocol is called \emph{fair} if an adversary can prematurely abort the computation, however, only before learning any new information.

We present highly efficient transformations from fair computations to fully secure computations, assuming the fraction of honest parties is constant (e.g., $1\%$ of the parties are honest).
Compared to previous transformations that require linear invocations (in the number of parties) of the fair computation, our transformations require super-logarithmic, and sometimes even super-constant, such invocations.
The main idea is to delegate the computation to chosen random committees that invoke the fair computation. Apart from the benefit of uplifting security, the reduction in the number of parties is also useful, since only committee members are required to work, whereas the remaining parties simply ``listen'' to the computation over a broadcast channel.

One application of these transformations is a new $\delta$-bias coin-flipping protocol, whose round complexity has a super-logarithmic dependency on the number of parties, improving over the protocol of Beimel, Omri, and Orlov (Crypto 2010) that has a linear dependency. A second application is a new fully secure protocol for computing the Boolean OR function, with a super-constant round complexity, improving over the protocol of Gordon and Katz (TCC 2009) whose round complexity is linear in the number of parties.

Finally, we show that our positive results are in a sense optimal, by proving that for some functionalities, a super-constant number of (sequential) invocations of the fair computation is necessary for computing the functionality in a fully secure manner.
\end{abstract}

\vfill
\noindent\textbf{Keywords: multiparty computation; fairness; guaranteed output delivery; identifiable abort, security reductions.}

\Tableofcontents

\section{Introduction}\label{sec:intro}

In the setting of secure multiparty computation (MPC), a set of mutually distrusting parties wish to jointly compute a function of their inputs, while guaranteeing the privacy of their local inputs and the correctness of the output. The security definition of such a computation has numerous variants. A major difference between the variants, which is the focus of this work, is the ability of an adversary to prevent the honest parties from completing the computation by corrupting a subset of the parties. According to the \emph{full-security} variant, an adversary cannot prevent the honest parties from receiving their output.\footnote{This property is also referred to as \emph{guaranteed output delivery}.} A more relaxed security definition called \emph{fairness}, allows an adversary to prematurely abort the computation, but only \emph{before} it has learned any information from the computation. Finally, \emph{security with abort} allows an adversary to prevent the honest parties from receiving the output, even \emph{after} it has learned the output, but never to learn anything more.\footnote{\radded{Throughout the paper, unless explicitly stated otherwise, by security with abort we mean \emph{unanimous} abort where all honest parties reach agreement on whether to abort or not. We note that since we consider a broadcast model, the weaker notion of \emph{non-unanimous} abort~\cite{GL05,CGZ20} in which some honest parties may abort while other receive their output, can be uplifted to unanimous abort in a single broadcast round.}}

A common paradigm for constructing a protocol that provides a high security guarantee (\eg full security) for a given functionality $f$, is to start with constructing a protocol for $f$ of a low security guarantee (\eg security with abort), and then to ``uplift'' the security of the protocol via different generic transformations (\eg the GMW compiler from semi-honest security to malicious security). Hence, finding such security-uplifting transformations is an important research question in the study of MPC. In this work, we study such security-uplifting transformations from security with abort and fairness to full security.

It is known that when the \emph{majority} of the parties are honest, security with abort can be uplifted to fairness.
Given an $n$-party functionality $f$, let $\fout{f}$ denote the functionality that outputs secret shares of $y=f(x_1,\ldots,x_n)$ using an $\ceil{n/2}$-out-of-$n$ error-correcting secret-sharing scheme (ECSS).\footnote{A $(t+1)$-out-of-$n$ secret-sharing scheme is error correcting, if the reconstruction algorithm outputs the correct secret even when up to $t$ shares are arbitrarily modified. ECSS schemes are also known as \emph{robust secret sharing}.} Assume that $\fout{f}$ can be computed securely with abort. In case the adversary aborts the computation of $\fout{f}$, it does not learn any new information, since it can only obtain less than $n/2$ shares. Whereas in case the adversary does not abort, it cannot prevent the honest parties from reconstructing the \emph{correct} output, thus completing the computation.
Similarly, assume $\fout{f}$ can be securely computed with \emph{identifiable abort},\footnote{Same as security with abort, except that upon a premature abort, all honest parties identify a corrupted party.} then the security of computing $\fout{f}$ can be uplifted to a fully secure computation of $f$ via the following player-elimination technique: All parties iteratively compute $\fout{f}$ with identifiable abort, such that in each iteration either all honest parties obtain the output, or the adversary aborts the computation at the cost of revealing the identity of a corrupted party. After at most $t+1$ iterations, it is guaranteed that the computation will successfully complete. Security with identifiable abort can be reduced to security with abort, assuming one-way functions, via a generic reduction~\cite{GoldreichMW87}. More efficient generic reductions in terms of round complexity appear in~\cite{Pass04,IOZ14}, using stronger hardness assumptions.

In case no honest majority is assumed, it is impossible to \emph{generically} transform security with (identifiable) abort to full security, and even not to fairness; every functionality can be computed with abort~\cite{GoldreichMW87} (assuming oblivious transfer exists), but some functionalities cannot be fairly computed~\cite{Cleve86,Mak14}. In contrast, fairness can be uplifted to full security also in the no-honest-majority case \cite{CL17} (assuming one-way functions exist),\footnote{Unless stated otherwise, we assume that parties can communicate over a broadcast channel. If a broadcast channel is not available, identifiable abort cannot be achieved generically~\cite{CL17}, and indeed, some functionalities can be fairly computed, but not with full security~\cite{CL17,CHOR18a}.} by first uplifting the security to fairness with identifiable abort, and then invoking (up to) $t+1$ fair computations of $f$ with identifiable abort.

In the setting of large-scale computation, the linear dependency on number of corruptions forms a bottleneck, and might blow-up the round complexity of the fully secure protocol. In this work, we explore how, and to what extent, this linear dependency can be reduced.

\subsection{Our Results}\label{sec:intro:ourResult}
Our main positive result is highly efficient reductions from full security to fair computation, assuming that the fraction of honest parties is constant (\eg $1\%$ of the parties are honest).
We show how to compute in a fully secure manner an $n$-party functionality $f^n$, by fairly computing a related $n'$-party functionality $f^{n'}$ for $\omega(1)$ sequential times, where $n'=\omega(\log(\secParam))$ (\eg $n' = \logstar(\secParam) \cdot \log(\secParam)$) and $\secParam$ is the security parameter. For some functionalities, we only need to be able to compute the functionality $f^{n'}$ in a security-with-abort manner (no fairness is needed). Throughout, we assume the static-corruption model, where the corrupted parties are determined \emph{before} the protocol begins.

Apart from the obvious benefit of being security-uplifting (from fairness to full security), the reduction in the number of parties is also useful, \ie only $n'=\omega(\log(\secParam))$ parties are required to work in the protocol, whereas the remaining parties simply ``listen'' to the computation over a broadcast channel. The efficiency of secure protocols is typically proportional to the number of parties (in some cases, \eg \cite{BeimelOO15,BHLT17}, the dependency is exponential). Furthermore, for implementations that are only $\delta$-close to being fair (\ie the real-world computation is $\delta$-distinguishable from the ideal-world computation, denoted $\delta$-fair below), the error parameter $\delta$ is typically a function of the number of parties. Hence, even given a fully secure implementation (or $\delta$-close to being fully secure, denoted $\delta$-fully-secure below) of a functionality, applying the above reductions can improve both the security error and the efficiency (see the applications part below for concrete examples).
The reductions presented in this paper are depicted in \cref{fig:results}, alongside previously known reductions.

\begin{figure}[!htb]


\begin{center}
\ifdefined\IsResultWithAbort
\includegraphics[scale=1]{./figures/summary_with_abort.pdf}
\else
\includegraphics[scale=1]{./figures/summary.pdf}
\fi
\end{center}
\vspace{-6ex}
\caption{Reductions between security notions. Solid arrows refer to black-box reductions \wrt the functionality (\ie a hybrid model) whereas dashed arrows refer to non-black-box reductions (\ie a protocol compiler). \emph{Restricted id-fair} refers to fairness where the set of parties who can abort the computation is restricted to a designated subset.}
\label{fig:results}
\end{figure}

To keep the following introductory discussion simple, we focus below on no-input, public-output functionalities (a single output is given to all parties). A less detailed description of the reductions to security with abort and the reductions for the case of with-input functionalities can be found in \cref{sec:additionalRess}. 
We start by describing the reduction from a fully secure computation of a no-input functionality (\eg coin flipping) to a fair computation of this functionality, and an application of this reduction to fair coin flipping. We then describe a lower bound on the number of rounds in which such a reduction (from fully secure) invokes the fair functionality.

\rchanged{As mentioned above (and elaborated on in Section~\ref{sec:intro:technique}), our}{Our} protocols make use of Feige's lightest-bin protocol for committee election~\cite{Feige99} \radded{(see \cref{sec:intro:technique})}. For integers $n'<n$ and for $0<\beta<\beta'<1$, Feige's protocol is used by $n$ parties, $\beta$ fraction of which are corrupted, to elect a committee of size $n'$, whose fraction of corrupted parties is at most $\beta'$. We denote by $\err(n,n',\beta,\beta')=\frac{n}{n'} \cdot e^{-\frac{(\beta'-\beta)^2 n'}{2(1-\beta)}}$ the error probability of Feige's protocol. Note that for $n'=\omega(\log(\secParam))$ it holds that $\err(n,n',\beta,\beta')$ is negligible (in $\secParam$), \radded{and for $n'=\Omega(\log(\secParam))$ it holds that $\err(n,n',\beta,\beta')$ is inverse-polynomial (the latter is applicable to the $\delta$-bias coin-flipping application in which $\delta$ is inverse-polynomial).}

Our results in the no-honest-majority setting hold under the assumption that \emph{enhanced trapdoor permutations (TDP)} and \emph{collision-resistant hash functions (CRH)} exist. Given a no-input functionality $f$, let $f^n$ denote its $n$-party variant: the output contains $n$ copies of the common output.

\begin{theorem}[fairness to full security, no-input case, informal]\label{thm:intro:mainInlessfair_to_full}
Let $f$ be a no-input functionality, let $n'<n\in \N$, let $0<\beta<\beta'<1$, let $t=\beta n$, let $t'=\beta' n'$, and let $\err=\err(n,n',\beta,\beta')$.
If $f^{n'}$ can be \rchanged{$\delta$}{$\delta'$}-fairly computed by an \rchanged{$r$}{$r'$}-round protocol~$\pi'$ tolerating $t'$ corruptions, then the following hold.
\begin{enumerate}
    \item\label{item::intro:mainInlessfair_to_full_nohm}
    Assuming \TDP and \CRH, $f^n$ can be computed with \textsf{\rchanged{$(t'\cdot\delta+\err)$}{$(t'\cdot\delta'+\err)$}-full-security}, tolerating $t$ corruptions by an \rchanged{$O(t'\cdot r)$}{$O(t'\cdot r')$}-round protocol. Furthermore, if $\pi'$ is \rchanged{$\delta$}{$\delta'$}-fully-secure, then the resulting protocol is \rchanged{$(\delta+\err)$}{$(\delta'+\err)$}-fully-secure and has \rchanged{$O(t'+ r)$}{$O(t'+ r')$} rounds.
    \item
    For $\beta' < 1/2$ and $n'=\varphi(\secParam) \cdot \log(\secParam)$ for $\varphi=\Omega(1)$, if $\pi'$ can be computed $\ell$-times in parallel, for $\ell=\secParam^{c}$ (for some universal constant $c$), then $f^n$ can be computed with \textsf{\rchanged{$(\varphi(\secParam)^2 \cdot \ell\cdot\delta + \err)$}{$(\varphi(\secParam)^2 \cdot \ell\cdot\delta' + \err)$}-full-security}, unconditionally, tolerating $t$ corruptions by an \rchanged{$O(\varphi(\secParam)^2 \cdot r)$}{$O(\varphi(\secParam)^2 \cdot r')$}-round protocol. Furthermore, the computation is black-box in the protocol $\pi'$.\footnote{Following~\cite{IKOS07}, by a black-box access to a protocol we mean a black-box usage of a semi-honest MPC protocol computing its next-message function.}
\end{enumerate}
\end{theorem}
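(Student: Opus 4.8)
The plan is to reduce full security for $f^n$ to (iterated) fair computations of $f^{n'}$ by delegating the computation to a small, randomly elected committee and then robustly broadcasting its output to all $n$ parties. First, the $n$ parties run Feige's lightest-bin protocol to elect a committee $C$ of size $n'$; by its guarantee, except with probability $\err=\err(n,n',\beta,\beta')$ the committee contains at most $t'=\beta' n'$ corrupted members. Conditioned on this good event, $C$ can run the given $\delta'$-fair protocol $\pi'$ for $f^{n'}$, since $\pi'$ tolerates $t'$ corruptions; as $f$ takes no input, the non-committee parties play no role in the computation itself and merely observe the broadcast channel.

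To uplift fairness to full security inside $C$, I would follow the player-elimination paradigm of~\cite{CL17}: first compile $\pi'$ from $\delta'$-fairness to $\delta'$-fairness-with-identifiable-abort---this is precisely where \TDP and \CRH are used, providing an efficient such compiler---and then have $C$ invoke the resulting protocol up to $t'+1$ times sequentially. In each invocation either all honest committee members obtain the output $y$, or the adversary aborts and a corrupted member is publicly identified and removed from $C$ for the remaining invocations. Since at most $t'$ members can be eliminated, the honest members hold $y$ after at most $t'+1$ invocations, giving the $O(t'\cdot r')$ round bound. If $\pi'$ is already $\delta'$-fully-secure, no elimination loop is needed: a single invocation suffices, leaving only the $O(r')$ cost of $\pi'$ plus the $O(t')$ cost of the delivery phase, hence $O(t'+r')$ rounds.

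It remains to deliver $y$ from $C$ to all $n$ parties robustly against a possibly dishonest-majority committee. All honest committee members agree on $y$, so delivery must only survive up to $t'$ inconsistent broadcasts. For general $\beta'<1$ (Item~1) I would have $C$ produce an authenticated output via the signature/ECSS machinery that \TDP and \CRH provide, so every party accepts $y$ from a single valid broadcast; for the unconditional Item~2, where $\beta'<1/2$ forces an honest committee majority, I would instead amplify fairness to full security by parallel repetition of $\pi'$ inside $C$ and let each party take the majority output, trading the cryptographic assumptions for the $\varphi(\secParam)^2\cdot\ell$ factor. The error bookkeeping is then immediate: committee election contributes $\err$, and each of the at most $t'$ fair invocations contributes at most $\delta'$ by the $\delta'$-fairness of $\pi'$, for a total of $t'\cdot\delta'+\err$ (just $\delta'+\err$ in the fully-secure case, where a single invocation is charged).

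The main obstacle is the simulation, with two coupled difficulties. First, the adversary influences Feige's election, so the simulator must operate conditioned on the high-probability event that $C$ has at most $t'$ corrupted members while folding the residual bad event into the $\err$ term; this is delicate because the corrupted parties' bin choices may depend arbitrarily on their view. Second, the per-invocation $\delta'$-fairness errors must be composed across the up-to-$t'$ sequential elimination rounds so that they \emph{add} rather than multiply---a hybrid argument that replaces one real invocation at a time by a call to the fair-computation simulator of $\pi'$ is what yields the linear-in-$t'$, rather than exponential, error. Stitching together the committee-election simulator, the per-invocation fair simulators glued by the elimination logic, and the authenticated-delivery simulator into a single simulator for $f^n$ whose transcript is $(t'\cdot\delta'+\err)$-indistinguishable from the real execution is the crux of the argument.
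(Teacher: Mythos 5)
Your architecture for Item~1 matches the paper's: Feige election of a committee $\committee$ of size $n'$, compilation of the fair protocol into one with (restricted) identifiable abort using \TDP and \CRH via the constant-round bounded-concurrent ZK of Pass, and then up to $t'+1$ player-elimination iterations, with the errors summing to $t'\cdot\delta'+\err$ by a hybrid over the sequential calls. One point where your mechanism would fail as stated is output delivery for a dishonest-majority committee: you propose that $\committee$ ``produce an authenticated output via the signature/ECSS machinery so every party accepts $y$ from a single valid broadcast,'' but ECSS does not exist for $t'\geq n'/2$, and signatures alone do not prevent a corrupted majority of $\committee$ from consistently attesting to a wrong $y$. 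The paper's device is different and essential: committee members first obtain \emph{committed} randomness via $\faugct^{\committee}$, run $\pi'$ entirely over the broadcast channel so that all $n$ parties see the transcript, and then each member proves via $\zkmany$ that its claimed output is the one induced by the public transcript and its committed coins; a single accepting proof then certifies correctness to everyone. This is also what makes the ``restricted'' identifiable abort work (only committee members can abort, yet all $n$ parties learn the output or an identity), a definitional subtlety the paper isolates as a new ideal model and that your write-up glosses over.

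The genuine gap is Item~2. ``Parallel repetition of $\pi'$ inside $\committee$ plus majority vote'' does not yield full security: the same corrupted committee members sit in every repetition and can abort all of them before anyone learns the output, and majority voting is a remedy for inconsistency, not for abort. The paper's unconditional construction uses two ideas you are missing. First, the committee computes $\foutss{f}{t'}{n'}$ — the variant whose output is $(t'+1)$-out-of-$n'$ ECSS shares of $y$ — so that mere security with abort is automatically fair (an aborting adversary holding at most $t'$ shares learns nothing), and security with abort is upgraded to identifiable abort unconditionally and in a black-box way via the IOZ/CCGZ compilers, which is where $\beta'<1/2$ is used. Second, the round saving comes from a combinatorial sub-committee structure: one elects $\committee$ of size $m=\varphi(\secParam)\log(\secParam)$ and runs the computation in parallel in \emph{all} $\binom{m}{n'}$ sub-committees of size $n'=m-\log(\secParam)/\varphi(\secParam)$ (there are only $\ell=\secParam^{O(1)}$ of them); to abort every parallel call the adversary must expose a distinct corrupted party in each sub-committee, which forces it to burn $\log(\secParam)/\varphi(\secParam)$ corrupted identities per iteration, so the protocol terminates within $\varphi(\secParam)^2$ iterations. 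Without this structure your approach gives no bound better than $O(t')$ sequential calls, and without the shared-output functionality it gives no unconditional fairness at all.
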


The idea underlying the above reduction is quite simple. To achieve a fully secure computation of an $n$-party functionality $f^n$, we first choose a small committee of size $n'$, using an information-theoretically secure committee-election protocol. The computation is then delegated to this small committee, which in turn, securely computes the functionality with fairness and identifiable abort. Since, the computation of the small committee might abort, we might need to repeat this process several times, while eliminating the aborting parties. See \cref{sec:intro:technique} for more details.

\paragraph{Application to coin flipping.}
As an application of the above type of reduction, we show how to improve on the round complexity of $\delta$-bias coin-flipping protocols. The $n$-party, no-input, public-output, coin-flipping functionality $\fcf{n}$ outputs to all parties a uniformly distributed bit $b\in\zo$. A $\delta$-bias, $t$-secure, $n$-party coin-flipping protocol is a real-world, polynomial-time, $n$-party protocol that emulates the ideal functionality $\fcf{n}$ up to a $\delta$ distinguishing distance, even in the face of up to $t$ corruptions.

\citet{Cleve86} has given a lower bound that relates the bias in any $r$-round coin-flipping protocol to $1/r$. \citet{ABCGM85} constructed an $r$-round, $t$-secure, $O(t/\sqrt{r})$-bias coin-flipping protocol for an arbitrary number of parties $n$ and $t<n$. This was improved by \citet{BeimelOO15}, who gave an $r$-round, $t$-secure, $O(1/\sqrt{r-t})$-bias coin-flipping protocol for the case that $t=\beta n$ for some constant $0<\beta <1$.
Recently, \citet{BHMO18} showed that for a ``large'' number of parties, $n=r^\eps$ (for a constant $\eps>0$), any $r$-round protocol can be efficiently biased by $\tilde{\Omega}(1/\sqrt{r})$.
We remark that $r$-round coin-flipping protocols of bias $o(t/\sqrt{r})$ are known when the number of parties is ``small,''
%at most double-logarithmic in $r$
$n<\log\log{r}$, or when the difference between corrupted and honest parties is constant~\cite{MoranNS09,BeimelOO15,HaitnerT14,AO16,BHLT17}. None of these protocols, however, deals with a large number of parties when a $\beta> 0.51$ fraction of them are malicious.
For this case, it is not known how to obtain a bias that is independent of the number of corruptions.
%We note that for this case, it is not even known how to obtain an $r$-round, $t$-secure, $O(1/\sqrt{r})$-round coin-flipping protocol, \ie where the bias is independent of the number of corruptions.
%Actually, for this case, it is not even known how to obtain an $r$-round, $t$-secure, $O(1/\sqrt{r})$-round coin-flipping protocol, \ie where the bias is independent of the number of corruptions.
Using \cref{thm:intro:mainInlessfair_to_full}, we are able to improve upon \cite{BeimelOO15} by replacing the linear dependency on $t$ with a super-logarithmic dependency on the security parameter $\secParam$.
\rnote{Eran -- can you please double check this paragraph}

\begin{corollary}[informal]\label{cor:intro_CF}
Assume that \TDP and \CRH exist.
Let $n'<n$ be integers, and let $0<\beta<\beta'<1$ be constants.
If there exists an $n'$-party, \rchanged{$\delta$}{$\delta'$}-bias, \rchanged{$r$}{$r'$}-round coin-flipping protocol $\pi'$ tolerating $t'=\beta'n'$ corrupted parties, then there exists an $n$-party, \rchanged{$(\delta+\err(n,n',\beta,\beta'))$}{$(\delta'+\err(n,n',\beta,\beta'))$}-bias, \rchanged{$O(t'+r)$}{$O(t'+r')$}-round coin-flipping protocol, tolerating $t=\beta n$ corrupted parties.
\end{corollary}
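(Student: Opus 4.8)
The plan is to derive the corollary as a direct instantiation of \cref{thm:intro:mainInlessfair_to_full} applied to the coin-flipping functionality. The first step is to note that the $n'$-party functionality $\fcf{n'}$ is a no-input, public-output functionality (no party holds an input, and all parties are to receive the same bit), so it lies squarely within the scope of the theorem. The second, and conceptually central, step is to reconcile the two security phrasings: by the definition recalled above, a $\delta'$-bias, $t'$-secure, $r'$-round coin-flipping protocol $\pi'$ is, by definition, a protocol that emulates $\fcf{n'}$ up to distinguishing distance $\delta'$ against $t' = \beta' n'$ corruptions, i.e.\ a $\delta'$-fully-secure $r'$-round computation of $\fcf{n'}$ tolerating $t'$ corruptions. (For a no-input, public-output functionality the only quantity an adversary can influence is the distribution of the common output bit, so emulating $\fcf{n'}$ up to distinguishing distance $\delta'$ is the same as guaranteeing both that all honest parties agree on an output and that this output bit is $\delta'$-close to uniform.)

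Given this identification, I would invoke the ``furthermore'' part of the first item of \cref{thm:intro:mainInlessfair_to_full}, instantiated with $f = \fcf{}$ and with the parameters $n' < n$, $0 < \beta < \beta' < 1$, $t = \beta n$, and $t' = \beta' n'$. Because $\pi'$ is not merely $\delta'$-fair but $\delta'$-fully-secure, the stronger conclusion of that clause applies: assuming \TDP and \CRH, one obtains an $n$-party protocol that is $(\delta' + \err(n,n',\beta,\beta'))$-fully-secure for $\fcf{n}$ against $t = \beta n$ corruptions and runs in $O(t' + r')$ rounds. Translating back through the same definitional equivalence, a $(\delta' + \err(n,n',\beta,\beta'))$-fully-secure computation of $\fcf{n}$ is precisely an $n$-party, $(\delta' + \err(n,n',\beta,\beta'))$-bias, $O(t' + r')$-round coin-flipping protocol tolerating $t = \beta n$ corruptions, which is exactly the claimed statement.

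Since essentially all of the technical work is carried by \cref{thm:intro:mainInlessfair_to_full}, the proof itself is short, and the points that warrant care are definitional rather than combinatorial. I would verify (i) that ``bias'' and ``distinguishing distance'' are measured on the same scale, so that the additive error reported by the theorem comes out as the stated $\delta' + \err(n,n',\beta,\beta')$ --- here the $\err$ term absorbs the (negligible, or inverse-polynomial) probability that Feige's committee-election step produces a committee whose corrupted fraction exceeds $\beta'$, while the $\delta'$ term is inherited from $\pi'$ conditioned on a good committee being elected; and (ii) that the protocol produced by the reduction remains polynomial-time and public-output, so that the non-committee ``listener'' parties also output the agreed coin over the broadcast channel. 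If there is any genuine obstacle, it is pinning down the equivalence in the second step against the precise simulation-based definition of full security used in the body of the paper; once that equivalence is made explicit, the corollary follows immediately from the theorem.
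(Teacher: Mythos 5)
Your proposal is correct and matches the paper's own proof: the paper likewise observes that, by its definition of bias (\cref{def:coinflip}), a $\delta'$-bias, $t'$-secure coin-flipping protocol is exactly a protocol that $(\delta',t')$-securely computes $\fcf{n'}$ with full security, and then applies the full-security clause (\cref{thm:mainThm_fairtofull_noinput_step2} of \cref{thm:mainThm_fairtofull_noinput}) to obtain the $(\delta'+\err(n,n',\beta,\beta'))$-bias, $O(t'+r')$-round protocol. The only difference is that the definitional equivalence you carefully justify in your second step is in fact immediate from the paper's definition, so no further argument is needed there.
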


Concretely, by using the protocol of \citet{BeimelOO15}, we obtain that for every $\varphi(\secParam)=\omega(1)$, every sufficiently large $n$ (greater than $\varphi(\secParam)\cdot\log(\secParam)$), every $0<\beta<1$, and every efficiently computable $r:\N\mapsto\N$ there exists an $n$-party, $r(\secParam)$-round, $O(1/\sqrt{r(\secParam)-\varphi(\secParam)\cdot\log(\secParam)})$-bias, $\beta n$-secure coin-flipping protocol.

%\Inote{new statement. actually I think the above statement is just incorrect (should be $O(\logstar(\secParam) \cdot \log(\secParam) \cdot 1/\delta^2)$-round
%\begin{corollary}[informal]%\label{cor:intro_CF}
%Let $\beta \colon \N \mapsto [0,1]$ and $n,r \colon \N \mapsto \N$ be efficiently computable functions. Assume one-way function exits \Inote{this suffices right?}, then exists an $n$-party, $r$-round, $\beta n$-secure, $O(\log(1/\beta \sqrt{r})/\beta \sqrt{r})$-bias coin-flipping protocol.
%\end{corollary}
%}

\paragraph{Lower bound on the number of sequential fair calls.}
We prove that some functionalities, and in particular coin flipping, achieving full-security requires a super-constant number of \textit{functionality rounds}, \ie rounds in which a fair ideal functionality is invoked, even if a constant fraction of parties are honest. Namely, the (super-)logarithmic multiplicative overhead in the round complexity, \rchanged{appearing in}{induced by} \cref{thm:intro:mainInlessfair_to_full} (Item~\ref{item::intro:mainInlessfair_to_full_nohm}) \radded{for achieving negligible (or inverse-polynomial) error}, cannot be reduced to constant.

The lower bound is proven in a hybrid model in which an ideal computation with fairness and identifiable abort of the functionality is carried out by a trusted party.
For a no-input functionality $f^n$, the model allows different subsets of parties (committees) to invoke the trusted party \emph{in parallel} (in the same functionality round), such that only committee members can abort the call to the trusted party that is made by the committee.
We assume that the outputs of such parallel invocations, which consist of bit-values and/or identities of the aborting parties, are given at the \emph{same time} to all $n$ parties, unless an invocation is made by an all-corrupted committee, which can first see the output of the other parallel invocations \emph{before} deciding upon its action.

The above model is more optimistic than the one we can actually prove to exist, assuming a fair protocol for computing the functionality at hand (hence, proving lower bounds is harder in this model). Actually, the no-honest-majority part of \cref{thm:intro:mainInlessfair_to_full} (Item~\ref{item::intro:mainInlessfair_to_full_nohm}) can be pushed further in this model to match the lower bound given below.
See \cref{sec:imposability:model} for further discussion regarding this model.

\begin{theorem}[necessity of super-constant sequential fair calls, informal]\label{thm:intro_lowerbound}
The following holds in the hybrid model in which any subset of the parties can invoke the trusted party that \textsf{fairly} computes the coin-flipping functionality. Let $\pi$ be a coin-flipping protocol in this model that calls the trusted party in a constant number of rounds (\ie in each round, the trusted party can be invoked many times in parallel by different subsets). Then, for any $1/2<\beta<1$, there exists an efficient fail-stop adversary controlling $\beta n$ parties that noticeably biases the output of the protocol.
\end{theorem}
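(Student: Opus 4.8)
The plan is to prove the lower bound by a Cleve-style fail-stop attack organized around the constant number $d=O(1)$ of functionality rounds. First I would set up the natural martingale: let $T_i$ denote the public transcript (all invocation outputs together with the broadcast messages) through the $i$-th functionality round, and let $M_i=\Ex[b\mid T_i]$ be the expected protocol output conditioned on $T_i$ when all parties continue honestly from round $i+1$ onward. Then $M_0=1/2$ by the (full) security of $\pi$, and $M_d$ is essentially determined by $T_d$: once the last fair coin has been drawn, a dishonest majority can bias any purely interactive residual computation, so no fresh unbiased randomness can enter after round $d$, and we may treat $M_d\in\{0,1\}$ (this reduction is routine but must be stated carefully). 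Using the martingale identity $\sum_{i=1}^d\Ex[(M_i-M_{i-1})^2]=\Ex[M_d^2]-M_0^2=1/4$ together with pigeonhole, there is a round $i^\ast$ with $\Ex[(M_{i^\ast}-M_{i^\ast-1})^2]\ge 1/(4d)=\Omega(1)$, and since the increments lie in $[-1,1]$ this yields an expected absolute jump $\Ex[\,|M_{i^\ast}-M_{i^\ast-1}|\,]\ge 1/(4d)$.

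Next I would convert this jump into bias. The adversary behaves honestly through round $i^\ast-1$; having seen $T_{i^\ast-1}$ it fixes a target direction (toward $0$ or toward $1$) according to the sign in which round $i^\ast$ is expected to move $M$, and then uses its fail-stop power in round $i^\ast$ to suppress the unfavorable movement. Aborting an invocation replaces its fresh coin by an abort (triggering the protocol's abort handling), while letting it complete injects the coin; the adversary plays a one-sided game, keeping the coins that push $M_{i^\ast}$ in the favorable direction and killing those that push the other way, so the realized transition is a constant fraction of $|M_{i^\ast}-M_{i^\ast-1}|$ in the chosen direction, giving an $\Omega(1/d)$ bias. Two features of the model drive this step. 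For every committee containing a corrupted party the adversary can abort, and for every \emph{all-corrupted} committee the rushing rule lets it first observe the other round-$i^\ast$ outputs and only then decide whether to abort, i.e.\ whether to keep or discard that committee's coin. Because the corrupted set is a strict majority, on average a majority of each committee's slots are corrupted, so no matter how $\pi$ forms its round-$i^\ast$ committees the adversary either sits inside them (and can abort) or fully owns them (and can rush), capturing enough of the round's coins to steer the aggregate outcome.

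The quantitative core---and the step I expect to be the main obstacle---is exactly this single-round analysis: fairness prevents the adversary from seeing the coins of committees that contain an honest party before it must decide whether to abort them, so within a round its only adaptive leverage is the rushing afforded by all-corrupted committees. The argument must therefore show, uniformly over every committee structure $\pi$ might use in round $i^\ast$, that a fail-stop majority still captures a constant fraction of the jump: when many committees are all-corrupted the rushing adversary can decide their contributions adaptively, and when committees are large (hence rarely all-corrupted) the blind aborts it can apply, combined with the fact that a large jump forces the outcome to be sensitive to few coins, still suffice. Making this trade-off tight, so that it holds against an arbitrary protocol-chosen committee assignment, is where the bulk of the work lies, and it is what pins the bound at $\Omega(1/d)$.

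Finally I would verify that the attack is feasible across the whole execution. Identifiable abort charges the adversary (at least) one revealed party per aborted invocation, but the attack aborts only within the single round $i^\ast$ and otherwise plays honestly, so only a constant number of eliminations occur; since the adversary starts from $\beta n>n/2$ parties and $d=O(1)$, it retains a strict majority throughout and hence keeps the abort and rushing powers the attack relies on. Together these steps give an efficient fail-stop adversary that biases the output by $\Omega(1/d)$, which is non-negligible for constant $d$; equivalently, driving the bias below any inverse-polynomial threshold forces $d=\omega(1)$, establishing that a super-constant number of sequential fair calls is necessary.
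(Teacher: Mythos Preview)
Your proposal correctly isolates the bottleneck but does not clear it. The single-round step---converting a large martingale jump at round $i^\ast$ into bias---does not go through as sketched. For any committee containing an honest party, fairness forces the adversary to commit to abort \emph{before} that committee's coin is revealed; it therefore cannot ``keep the coins that push $M_{i^\ast}$ in the favorable direction and kill those that push the other way.'' Your fallback, that blind aborts suffice because ``a large jump forces the outcome to be sensitive to few coins,'' is false in general: if round $i^\ast$ has $k$ committees and the continuation depends only on the XOR of their $k$ coins, then $|M_{i^\ast}-M_{i^\ast-1}|=1/2$, every coin contributes symmetrically, and aborting any fixed subset blindly leaves the conditional expectation at $1/2$. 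The only real leverage is the rushing power over \emph{all-corrupted} committees, and nothing in your argument shows that such committees carry a constant---or even noticeable---share of the round's variance against an arbitrary committee structure chosen by $\pi$. This is also why your target bias $\Omega(1/d)$ is stronger than what the paper actually proves, $\Omega(n^{-c}/m)$: the $n^{-c}$ loss is precisely the price of ensuring that the one committee that matters is fully controlled. (A smaller slip: ``only a constant number of eliminations occur'' conflates rounds with aborts---a single functionality round may host many committees, and your description has the adversary aborting many of them.)

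The paper's route is structurally different from a direct martingale on $\pi$. First, large committees (size $>c\log n$) are neutralized by reserving one linear-size aborting set $\J_i\subseteq[n]$ per functionality round; with $s=O(1)$ rounds these sets fit inside the corrupted majority, and a random choice of $\J_1,\ldots,\J_s$ hits every large committee with high probability. Second, the residual $n$-party protocol is \emph{reduced to a two-party protocol} $\psi$: fix a block $\cs$ of size $(1-\beta)n$, let $\Party_0$ play the parties in $\cs$ and additionally emulate every ideal-functionality answer by sending one fresh bit per (surviving) committee, and apply Cleve's lower bound to $\psi$. Cleve yields a fail-stop attacker for $\psi$ that aborts at some round $i^\ast$ corresponding to a specific committee $\committee^\ast$ of $\pi$, now guaranteed to have size at most $c\log n$. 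To port this back to $\pi$, the paper draws the honest set $\cT$ uniformly among $(1-\beta)n$-subsets of $[n]\setminus(\cs\cup\bigcup_i\J_i)$; since $\committee^\ast$ is small and independent of $\cT$, with probability $\Omega(n^{-c})$ we have $\committee^\ast\cap\cT=\emptyset$, so $\committee^\ast$ is all-corrupted and the rushing adversary can replicate the two-party abort exactly. Averaging over $\cT$ gives the stated bias. The idea your plan is missing is this probabilistic choice of the corrupted set, which trades a polynomial factor in the bias for the guarantee that the single Cleve-critical committee is entirely in the adversary's hands.
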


%Our lower bound extends the impossibility result of \citet{Cleve86}, who showed that fully secure coin-flipping protocols do not exist in the standard model, to the aforementioned hybrid model.
Note that in this model, fully secure coin-flipping protocols do exist (\eg as we show in \cref{thm:intro:mainInlessfair_to_full}, by invoking the trusted party in a super-logarithmic number of rounds).

\subsection{Our Techniques}\label{sec:intro:technique}

We start with describing the techniques underlying our positive results, focusing on the no-input case for the sake of clarity of the presentation. Later below, we discuss the ideas underlying the lower bound on round complexity.

\paragraph{Upper bound.}
%\Inote{update to match new statement}
Let $f^n$ be some $n$-party (no-input, public-output) functionality, and let $\pi$ be an $n$-party, $r$-round protocol that computes $f^n$ with fairness, tolerating $t<n$ corruptions. It was shown by~\cite{CL17} that $\pi$ can be compiled into a protocol that computes $f^n$ with fairness and identifiable abort. The original compilation uses the technique of~\cite{GoldreichMW87} and is inefficient in terms of round complexity. However, using the constant-round, bounded-concurrent, zero-knowledge techniques of \citet{Pass04} \radded{(that require TDP and CRH)}, the resulting protocol has $O(r)$ rounds. Having this compilation in mind, we henceforth consider the goal of uplifting fairness with identifiable abort to full security. Let $\pi$ be a protocol that computes $f^n$ with fairness and identifiable abort tolerating $t=\beta\cdot n$ corruptions. A \naive way for achieving full security is using the above mentioned player-elimination technique to obtain a fully secure computation of $f^n$. This, however, comes at a cost in terms of round complexity. Specifically, the resulting protocol will run in $O(t\cdot r)$ rounds.

In the following, we explain how the security-uplifting transformation can be kept efficient in terms of round complexity. Our transformation builds on the player-elimination technique and works given the following three components: (i) a method to select a small subset (committee) $\committee$ of $n'$ parties that contains at most $t'=\beta' \cdot\size{\committee}$ corrupted parties (for arbitrary small $\beta'>\beta$), (ii) an $n'$-party, $r'$-round protocol $\pi'$ that computes $f^{n'}$ with fairness and identifiable abort, and (iii) a monitoring procedure for all $n$ parties to verify the correctness of an execution of $\pi'$ run by the committee members. In such a case, we could get a simple security-uplifting reduction with a low round complexity (assuming $r'\le r$). Specifically, in order to compute $f^n$ with full security, we would select a committee $\committee$, let the parties in $\committee$ execute $\pi'$ with full security using the player-elimination technique, while the remaining parties monitor the execution and receive the final output from the committee members. Since player elimination will only be applied to committee members, it may be applied at most $t'$ times. Hence, the resulting protocol will run in $O(t'\cdot r')$ rounds. Below, we explain how to select a committee $\committee$, and how the execution of the protocol $\pi'$ can be monitored by non-committee parties. Whether an appropriate protocol $\pi'$ exists depends on the functionality at hand.

Our key tool for electing the committee is Feige's lightest-bin protocol~\cite{Feige99}. This is a single-round protocol, secure against computationally unbounded adversaries, ensuring the following. If $n$ parties with up to $\beta\cdot n$ corruptions use the protocol to elect a committee $\committee$ of size $n'$, then for all $\beta'>\beta$, the fraction of corrupted parties in the committee is at most $\beta'$, with all but probability $\err(n,n',\beta,\beta')=\frac{n}{n'} \cdot e^{-\frac{(\beta'-\beta)^2 n'}{2(1-\beta)}}$. In particular, for $n'=\omega(\log(\secParam))$ Feige's protocol succeeds with all but negligible probability (in $\secParam$). The beauty of this protocol is in its simplicity, as parties are simply instructed to select a random bin (out of $n/n'$ possible ones), and the elected committee are the parties that chose the lightest bin.

We now turn to explain how the non-committee parties can monitor the work of the committee members. In the no-input setting that we have discussed so far, things are quite simple. Recall that all our protocols assume a broadcast channel, which allows the non-committee parties to see all communication among committee members.\footnote{{Private messages should be encrypted before being sent over the broadcast channel.}} Now, all that is needed is that when the protocol terminates, the non-committee parties can verify that they obtain the correct output from the computation. To this end, we start the protocol with committee members being publicly committed to a random string (used as their randomness in the execution). Then, as the protocol ends, a committee member notifies all parties of the output it received by proving in zero knowledge that it has followed the prescribed protocol using the randomness it committed to.

Proving security of the above reduction raises a subtle technical issue. Whenever a computation by the committee is invoked, it is required that all parties will obtain the output (either a genuine output or an identity of a corrupted committee member), however, only corrupted committee members are allowed to abort the computation. This property is not captured by the standard definition of fairness with identifiable abort, where every corrupted party can abort the computation. We therefore introduce a new ideal model with \textsf{fairness and restricted identifiable abort} that models this property. In this ideal model, the trusted party is parametrized by a subset $\committee\subseteq[n]$. The adversary, controlling parties in $\IS\subseteq[n]$, can abort the computation only if $\IS\cap\committee\neq\emptyset$, by revealing the identity of a corrupted party $\is\in\IS\cap\committee$. This means that if $\IS\cap\committee=\emptyset$ this ideal model provides full security, however, in case $\committee\subseteq \IS$, no security is provided, and the adversary gets to choose the output.\footnote{In the with-input setting \cref{sec:additionalRess}, the adversary also obtains the input values of all honest parties.}

The proof consists of two steps. Initially, full security is reduced to fairness with restricted identifiable abort. This is done by electing a super-logarithmic committee $\committee$ using Feige's protocol, and iteratively invoking the trusted party for computing $f^n$ with fairness with restricted identifiable abort, parametrized by $\committee$, until the honest parties obtain the output. Next, fairness with restricted identifiable abort is reduced to fairness. This is done by compiling (in a similar way to the GMW compiler) the protocol $\pi'$ for computing $f^{n'}$ with fairness into a protocol $\pi$ for computing $f^n$ with fairness with restricted identifiable abort.

\paragraph{Lower bound.}

Recall that our lower bound is given in the hybrid model in which a trusted party computes the coin-flipping functionality with fairness and {restricted} identifiable abort (as presented before, \cref{thm:intro_lowerbound}). In this model, in addition to standard communication rounds, a protocol also has \emph{functionality rounds} in which different committees (subsets) of the parties invoke the trusted party.

Consider an $n$-party coin-flipping protocol $\pi$ in this hybrid model with a constant number of functionality rounds. The heart of the proof is showing that if none of the committees is large, \ie has more than \rchanged{$\log(n)$}{$\log(\secParam)$} parties, then the protocol can be biased noticeably. The proof is completed by showing that since $\pi$ has only a constant number of functionality rounds, an adversary can force all calls made by large committees to abort, and thus attacking arbitrary protocols reduces to the no-large-committee case.

To prove the no-large-committees case, we transform the $n$-party coin-flipping protocol $\pi$ in the hybrid model, into a two-party coin-flipping protocol $\psi$ in the standard model. By \citet{Cleve86}, there exists an attack on protocol $\psi$. Hence, we complete the proof by showing how to transform the attack on $\psi$ (guaranteed by \citet{Cleve86}) into an attack on $\pi$. The aforementioned protocol transformation goes as follows: partition the $n$ parties of $\pi$ into two subsets, $\cs_0$ of size $\beta n$ and $\cs_1 = [n] \setminus \cs_0$. The two-party protocol $\psi = (\Party_0,\Party_1)$ emulates a random execution of $\pi$ by letting party $\Party_0$ emulate the parties in $\cs_0$ and party $\Party_1$ emulate the parties in $\cs_1$. The calls to the trusted party are emulated by $\Party_0$ as follows: let $\committee_1,\ldots,\committee_\ell$ be the (small) committees that invoke the trusted party, in parallel, in a functionality round. In protocol $\psi$, party $\Party_0$ sends $\ell$ uniformly distributed bits, each bit in a different round, and the parties interpret these bits as the output produced by the coin-flipping functionality. At the end of the protocol, each party outputs the output of the first party of $\pi$ in its control. If $\Party_0$ aborts while emulating a functionality round, \ie when it is supposed to send the output bit of a committee $\committee$, party $\Party_1$ continues as if the first party in $\committee$ (for simplicity, we assume this party is in $\cs_0$) aborts the call to the trusted party in $\pi$, and the rest of the parties in $\cs_0$ abort immediately after the call to the trusted party. If $\Party_0$ aborts in a round that emulates a communication round in $\pi$, party $\Party_1$ continues the emulation of $\pi$ as if all parties in $\cs_0$ abort. Party $\Party_0$ handles an abort by $\Party_1$ analogously.

By \citet{Cleve86}, there exists a round $\is$ such that one of the parties in $\psi$ can bias the protocol merely by deciding, depending on its view, whether to abort in round $\is$ or not.\footnote{The attacker of \cite{Cleve86} either aborts at round $\is$ or at round $\is+1$, but the transformation to the above attacker is simple (see \cref{sec:impo:Cleve}).} Assume, \wlg, that the attacking party is $\Party_0$, and the round $\is$ is a functionality round (other cases translate directly to attacks on $\pi$).
The core difference between the ability of an adversary corrupting party $\Party_0$ in $\psi$ from that of an adversary corrupting the parties in $\cs_0$ in $\pi$, is that the adversary in $\psi$ can decide whether to abort \emph{before} sending the $\is$'th message.
This raises a subtle issue, since the $\is$'th message corresponds to an output of the coin-flipping functionality in $\pi$, in response to a call made by some committee $\committee$. Yet, if the adversary in $\pi$ controls \emph{all} parties in $\committee$, he can abort \emph{after} seeing the output of the call to trusted party made by $\committee$ and the results of all other parallel calls, while still preventing other parties from getting the output of the call made by $\committee$.
We conclude the proof by showing that if the corrupted subset $\cs_0$ is chosen at random, then it contains all parties in the relevant committee with a noticeable probability, and thus the attack on $\pi$ goes through.

\subsection{Additional Results}\label{sec:additionalRess}
In the above discussion we only reviewed our reductions from full security to fairness for the no-input case. This was done for the sake of clarity, however, in this paper we also deal with arbitrary functionalities (with input)%
\ifdefined\IsResultWithAbort
 and with reducing full security to security with abort (when a vast majority of the parties are honest). We next state these additional results.
\else
.
\fi
We remark that the lower bound for the no-input case, described above, applies also to the with-input case.

\subsubsection{Full Security to Fairness -- Arbitrary Functionalities (with Inputs)}
The case of functionalities with inputs is somewhat more involved than that of no-input functionalities.
As in the no-input case, our fully secure computation of an $n$-party functionality $f^n$ is done by delegating the computation to a small committee that computes a related $n'$-party functionality with fairness.
However, when considering functionalities with inputs, parties outside the committee cannot reveal their inputs to committee members, but still need to make sure that the right input was used in the computation performed by committee members.
This can be done using secret-sharing schemes and commitments. Note that non-committee parties take a bigger role in the computation now. However, corrupted parties outside the committee should never be able to cause the protocol to prematurely terminate, as otherwise the number of rounds would depend on the number of corruptions among all parties and not only committee members. The above becomes even more challenging when wishing to have a few committees perform the computation in parallel. Here, it must also be verified that each party provides \emph{the same} input to all committees.

%%%%%%%%%%%%%%%%%%%%%%%%%%%%%%

Considering the no-honest-majority case, we let each party $\Party_i$ secret share its input $x_i$ in an $n'$-out-of-$n'$ secret sharing, publicly commit to every share, and send each decommitment value, encrypted, to the corresponding committee member. We define $\fin{f^n}{n}{n'}$ to be the $n'$-party functionality, parametrized by a vector of commitments $(\comval_i^1,\ldots,\comval_i^{n'})$ for every $\Party_i$, where $\comval_i^j$ is a commitment to the \jth share of $x_i$. The functionality receives as input the decommitments of each $c_i^j$, reconstructs the decommitted values to obtain the $n$-tuple $(x_1,\ldots,x_n)$, computes $y=f^n(x_1,\ldots,x_n)$, and outputs $y$ in the clear (see \cref{fig:ssin}).

By having the parties publicly commit to shares of their inputs (using a perfectly binding commitment) and send the decommitment values to the committee members, corrupted committee members cannot change the values corresponding to honest parties (otherwise the decommit will fail and the cheating committee member will be identified). Preventing corrupted parties from sending invalid decommitments to honest committee members is external to the functionality and must be part of the protocol. In addition to TDP and CRH, we assume non-interactive perfectly binding commitment schemes exist.\footnote{Although non-interactive perfectly binding commitments can be constructed from one-way permutations, in our setting, one-way functions are sufficient. This follows since Naor's commitments~\cite{Naor91} can be made non-interactive in the common random string (CRS) model, and even given a weak CRS (a high min-entropy common string). A high min-entropy string can be constructed by $n$ parties, without assuming an honest majority, using the protocol from~\cite{GVZ06} that requires $\logstar(n)+O(1)$ rounds.}
We prove the following.
%\rnote{adjusted the theorem}

\begin{theorem}[fairness to full security, informal]\label{thm:intro:mainThmfair_to_full}
Let $f^n$ be an $n$-party functionality, let $n'=\varphi(\secParam)\cdot\log(\secParam)$ for $\varphi=\Omega(1)$, let $0<\beta<\beta'<1$, let $t=\beta n$, let $t'=\beta' n'$, and let $\err=\err(n,n',\beta,\beta')$.
The following hold assuming TDP, CRH, and non-interactive perfectly binding commitment schemes.
\begin{enumerate}
    \item
    If $\fin{f^n}{n}{n'}$ can be \rchanged{$\delta$}{$\delta'$}-fairly computed by an \rchanged{$r$}{$r'$}-round protocol, tolerating $t'$ corruptions, then $f^n$ can be computed with \rchanged{$(t'\cdot\delta + \err)$}{$(t'\cdot\delta' + \err)$}-full-security, tolerating $t$ corruptions, by an \rchanged{$O(t'\cdot r)$}{$O(t'\cdot r')$}-round protocol.
    \item
    If $\fin{f^n}{n}{n'}$ can be \rchanged{$\delta$}{$\delta'$}-fairly computed by an \rchanged{$r$}{$r'$}-round protocol, tolerating $n'-1$ corruptions, $\ell$-times in parallel, for $\ell=\secParam^{c}$ (for some universal constant $c$), then $f^n$ can be computed with \rchanged{$(\varphi(\secParam)^2\cdot\ell\cdot\delta + \err)$}{$(\varphi(\secParam)^2\cdot\ell\cdot\delta' + \err)$}-full-security, tolerating $t$ corruptions, by an \rchanged{$O(\varphi(\secParam)^2\cdot r)$}{$O(\varphi(\secParam)^2\cdot r')$}-round protocol.
\end{enumerate}
\end{theorem}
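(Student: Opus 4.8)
The plan is to prove both items by reducing full security to an intermediate ideal model of \emph{fairness with restricted identifiable abort} --- where the trusted party is parametrized by a committee $\committee\subseteq[n]$, aborts are possible only when the adversary controls a committee member, and every abort reveals the identity of a corrupted party in $\committee$ --- and then realizing this intermediate model from a plain fair protocol for $\fin{f^n}{n}{n'}$. This mirrors the two-step structure already used for the no-input case in \cref{thm:intro:mainInlessfair_to_full}, so the genuinely new work lies in the input-handling layer and, for Item~2, in the parallel amortization.

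For the reduction from full security to fairness with restricted identifiable abort, I would first elect a committee $\committee$ of size $n'$ via Feige's lightest-bin protocol~\cite{Feige99}; with all but probability $\err$, $\committee$ contains at most $t'=\beta' n'$ corrupted parties. The parties then invoke the restricted-id-abort computation of $f^n$ parametrized by $\committee$, iterating with player elimination: each invocation either delivers the output to all honest parties or aborts while pinning down and removing a corrupted committee member. Since only committee members can trigger an abort, at most $t'$ iterations suffice, giving the claimed $O(t'\cdot r')$ rounds for Item~1, with error accumulating additively as $t'\cdot\delta'+\err$.

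The heart of the with-input case is realizing fairness with restricted identifiable abort from a fair protocol $\pi'$ for $\fin{f^n}{n}{n'}$, routing inputs into the committee without letting non-committee parties either alter honest inputs or stall the computation. I would have each $\Party_i$ secret-share $x_i$ in an $n'$-out-of-$n'$ scheme, \emph{publicly} commit to every share with the perfectly binding commitment, and send the \jth decommitment, encrypted under committee member $j$'s key (via the TDP-based encryption), to that member. The committee then runs $\pi'$ on the decommitted shares, compiled in GMW-style so that each member first commits to its randomness and accompanies each message with a constant-round bounded-concurrent zero-knowledge proof~\cite{Pass04} (hence the need for TDP and CRH) that it acted consistently with its committed randomness and received decommitments; a member caught cheating is identified, yielding identifiable abort restricted to $\committee$. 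Perfect binding guarantees that a corrupted committee member cannot swap an honest party's share, while a corrupted non-committee party that sends an invalid decommitment is handled \emph{inside the protocol} --- the committee member publicly complains against the public commitment and a default input is substituted --- so that such misbehavior never causes an abort attributable to an honest committee member. Keeping the ZK proofs constant-round is what preserves the $O(r')$ multiplicative overhead.

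The main obstacle is Item~2, where $\ell=\secParam^c$ committees run $\pi'$ in parallel (each tolerating $n'-1$ corruptions, so a committee fails only if it is entirely corrupted) in order to collapse the sequential depth from the $\Theta(\varphi\log\secParam)$ of Item~1 down to $O(\varphi^2)$. Two points need care. First, the round and error amortization: running many parallel committees must exhaust the adversary's ability to force aborts within $O(\varphi^2)$ sequential rounds while the error grows only to $\varphi(\secParam)^2\cdot\ell\cdot\delta'+\err$. Second, and unique to the with-input setting, is \emph{cross-committee input consistency}: since the same $\Party_i$ feeds decommitments to many committees at once, I must force it to commit \emph{once}, publicly, to a single sharing of $x_i$ and hand consistent decommitments to every committee, so that all parallel executions evaluate $f^n$ on the same input vector. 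Binding this single public commitment across all $\ell$ invocations, and arguing that parallel player-elimination still terminates within the claimed depth, is where the analysis is most delicate.
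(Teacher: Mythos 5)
Your overall route is exactly the paper's: elect a committee with Feige's lightest-bin protocol, reduce full security to fairness with restricted ($\committee$-)identifiable abort via player elimination, and realize the latter from a fair protocol for $\fin{f^n}{n}{n'}$ by an $n'$-out-of-$n'$ sharing with public perfectly binding commitments, encrypted decommitments sent to committee members, and a GMW-style compilation using constant-round bounded-concurrent zero knowledge. Item~1 as you describe it matches the paper's proof essentially step for step, including the accounting $t'\cdot\delta'+\err$ and $O(t'\cdot r')$ rounds.

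For Item~2, however, you flag the two delicate points but do not supply the ideas that resolve them, and those ideas are the actual content of the argument. First, the combinatorial design: the paper elects a single super-committee $\committee$ of size $m=\varphi(\secParam)\cdot\log(\secParam)$ and takes as the parallel committees \emph{all} $\binom{m}{n'}$ subsets of size $n'=m-\log(\secParam)/\varphi(\secParam)$; a counting bound shows this number is $\secParam^{O(1)}$, each subset contains an honest party (so no sub-committee is fully corrupted when $t'=n'-1$ is tolerated), and---the key point---if the adversary aborts every sub-committee in one iteration it must reveal at least $\log(\secParam)/\varphi(\secParam)$ \emph{distinct} corrupted parties, since otherwise some size-$n'$ subset avoids all identified parties and its call cannot be aborted. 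This is what collapses the depth to $m/(\log(\secParam)/\varphi(\secParam))=\varphi(\secParam)^2$ iterations; ``running many parallel committees'' by itself does not yield it. Second, your consistency mechanism---committing once to a single sharing of $x_i$ and handing the same decommitments to every committee---is underspecified and does not obviously work, because the $\ell$ committees are distinct size-$n'$ sets and an $n'$-out-of-$n'$ sharing is tied to a particular set of recipients. The paper instead has each party produce an \emph{independent} sharing per committee, publicly commit to all $n'\cdot\ell$ shares, and prove via the one-to-many zero-knowledge functionality (relation $\Renc$) that all $\ell$ sharings reconstruct to the same value. Without one of these two mechanisms pinned down, Item~2 is not yet proved.
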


In the honest-majority setting, a similar result can be achieved with the transformation only requiring black-box access to the fair protocol, and the resulting security being unconditional. Furthermore, the transformation becomes much simpler with an honest majority and relies solely on ECSS scheme.
We denote by $\finout{f^n}{n}{t'}{n'}$, for $t'<n'/2$, the $n'$-party functionality that receives secret shares of an $n$-tuple $(x_1,\ldots,x_n)$, reconstructs the inputs, computes $y=f(x_1,\ldots,x_n)$ and outputs secret shares of $y$ (see \cref{fig:RecCompShare}). See \cref{sec:fairtofull_withinput_HM} for more details.

\paragraph{Reducing a logarithmic factor.}
%\rnote{Eran - this is where we highlight the nice idea}
When considering functionalities with inputs, it is possible to use generic techniques (see, for example, \cite[Sec.\ 2.5]{HL10}) and assume without loss of generality that the functionality is deterministic and has a public output (\ie all parties receive the same output). In this case, we show how to reduce an additional logarithmic factor from the number of fair computations performed by the committee, compared to the no-input case. The parties start by electing a random, (super-)logarithmic committee $\committee$, of size $m=\varphi(\secParam) \cdot \log(\secParam)$, for some $\varphi(\secParam)\in\Omega(1)$ (\eg $\varphi(\secParam)=\logstar(\secParam)$). However, instead of sharing the inputs with the committee members, the protocol considers all sufficiently large sub-committees, \ie all subsets of $\committee$ of size $n'=m-\log(\secParam) / \varphi(\secParam)$. Next, every party secret shares its input to each of the sub-committees, and each of the sub-committees computes, in parallel, the functionality $\fin{f}{n}{n'}$ with fairness and identifiable abort. It is important for each party to prove in zero knowledge that the same input value is shared across all sub-committees, in order to ensure the same output value in all computations. We show that in this case: $(1)$ there are polynomially many sub-committees, $(2)$ with overwhelming probability, no sub-committee is fully corrupted, and $(3)$ if the adversary aborts the fair computations in all sub-committees, then $\log(\secParam) / \varphi(\secParam)$ corrupted parties must be identified. It follows that after $\varphi^2(\secParam)$ iterations the protocol is guaranteed to successfully terminate.

In order to prove security of this construction, we generalize the notion of \emph{fairness with restricted identifiable abort} to the with-input setting. The ideal model is parametrized by a list of subsets $\committee_1,\ldots,\committee_\ell\subseteq[n]$, such that if one of the subsets is fully corrupted, \ie $\committee_i\subseteq\IS$ for some $i\in[\ell]$ (where $\IS$ is the set of corrupted parties), then no security is provided (the adversary gets all inputs and determines the output). If one of the subsets is fully honest, \ie $\committee_i\cap\IS=\emptyset$ for some $i\in[\ell]$, then the adversary cannot abort the computation. Otherwise, the adversary is allowed to abort the computation by revealing a corrupted party in each subset, however, only before it has learned any new information. See \cref{sec:restricted_idabort} for more details.

\rnote{Are there applications for $1/p$-security?}

\paragraph{Application to fully secure multiparty Boolean OR.}
An application of the above reductions is a fully secure protocol for $n$-party Boolean OR.
\citet{GordonK09} constructed a fully secure protocol, tolerating $t<n$ corruptions, that requires $O(t)$ rounds.
\radded{Using \cref{thm:intro:mainThmfair_to_full} we show how to achieve (any) super-constant round complexity when the fraction of corruptions is constant.}
\begin{corollary}[informal]\label{cor:intro_OR}
Under the assumptions in \cref{thm:intro:mainThmfair_to_full}, the $n$-party Boolean OR functionality can be computed with full security tolerating $t=\beta n$ corruptions, for $0<\beta<1$, with round complexity $O(\logstar(\secParam))$.
\end{corollary}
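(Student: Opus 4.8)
The plan is to instantiate the committee-based transformation of \cref{thm:intro:mainThmfair_to_full} for the specific functionality $f^n_{\mathsf{OR}}$, exploiting the fact that Boolean OR is \emph{completely fair}: since OR is monotone, forcing the input of any corrupted (or aborting) party to $0$ is always a legitimate ideal-world strategy, and it can only decrease the output. Consequently, whenever the output is $1$ because of the input of an \emph{honest} party, it remains $1$ after zeroing out all corrupted contributions, so honest parties can never be denied their correct output by a premature abort. This observation is precisely what lets the restricted-identifiable-abort computation inside the committee be uplifted all the way to full security (and not merely to fairness), and it lets us bypass the sequential $O(n)$-round structure of \citet{GordonK09}.

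First I would elect a committee $\committee$ of size $n'=\omega(\log(\secParam))$ via Feige's lightest-bin protocol, guaranteeing at most a $\beta'$-fraction of corrupted members except with probability $\err(n,n',\beta,\beta')=\negl(\secParam)$. Each party then secret-shares and publicly commits to its input toward $\committee$ (exactly as in the with-input transformation), so that the committee can evaluate the committee functionality $\fin{f^n_{\mathsf{OR}}}{n}{n'}$ while the remaining parties monitor the execution over the broadcast channel. The crucial efficiency gain is that, thanks to the monotonicity of OR, this committee computation need only be carried out with \emph{security with identifiable abort}, which is achievable in a \emph{constant} number of rounds $r'=O(1)$ under the assumed \TDP and \CRH (via the constant-round bounded-concurrent zero-knowledge machinery already used in the paper). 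Aborts are then absorbed by the outer player-elimination loop: each aborting committee member is identified and eliminated, and upon re-sharing every corrupted contribution is effectively fixed to $0$, which by the argument above preserves full security. Finally I would tally the round complexity: using the sub-committee (logarithmic-factor-reduction) variant, a committee of size $m=\varphi(\secParam)\cdot\log(\secParam)$ is fully eliminated after $O(\varphi(\secParam)^2)$ iterations of $r'=O(1)$ rounds each, while the commitment-CRS preprocessing costs $O(\logstar(\secParam))$ rounds; choosing $\varphi(\secParam)=\omega(1)$ slowly enough that $\varphi(\secParam)^2=O(\logstar(\secParam))$ (while keeping $n'=\omega(\log(\secParam))$, hence $\err$ negligible) yields total round complexity $O(\logstar(\secParam))$ and full-security error $\negl(\secParam)$, as claimed.

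The main obstacle will be the security analysis underlying the monotonicity step. One must construct a simulator that translates a real-world adversary who learns the (clear) committee output and then aborts into an ideal adversary that simply submits $0$ for all corrupted inputs, and then verify that this reduction survives the secret-sharing layer — where an aborting committee member destroys a share of \emph{every} party's input and forces a complete re-share, an operation that is safe only because monotonicity guarantees that re-sharing the honest $1$-inputs preserves the $1$-output. Showing that this restricted-identifiable-abort-to-full-security argument composes cleanly with the outer elimination loop is the delicate part. A secondary, more technical point is the round bookkeeping required to land on $O(\logstar(\secParam))$ rather than a larger polylogarithmic bound, which relies on the tight per-iteration elimination count $\log(\secParam)/\varphi(\secParam)$ afforded by the sub-committee technique.
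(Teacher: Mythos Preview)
Your monotonicity intuition is correct and is indeed the key property of OR that the paper exploits, but the protocol you sketch has a genuine gap. In the with-input transformation of \cref{thm:intro:mainThmfair_to_full}, each iteration of the outer player-elimination loop is a \emph{fresh} invocation of the $\committee$-identifiable-abort functionality, so the adversary may supply \emph{different} effective inputs in different iterations. Your claim that ``upon re-sharing every corrupted contribution is effectively fixed to $0$'' is not what the transformation does: only the single eliminated committee member has its input replaced by the default; every other corrupted party re-shares whatever value it likes.

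This breaks your proposed simulation. Concretely, the adversary sends $0$ for all corrupted parties in iteration~$1$; the (unfair) committee computation reveals $y_h=\bigvee_{i\notin\IS}x_i$; the adversary aborts via a corrupted committee member; then in iteration~$2$ it sends $1$ for some surviving corrupted party, forcing the honest output to be $1$. A simulator that ``simply submits $0$ for all corrupted inputs'' obtains ideal output $y_h$, which is $0$ when all honest inputs are $0$---a mismatch with the real output $1$. Monotonicity only guarantees that an honest $1$ cannot be suppressed; it does \emph{not} stop the adversary from first learning $y_h$ with input $0$ and then flipping the output with input $1$.

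The paper addresses this not by a direct instantiation of \cref{thm:intro:mainThmfair_to_full} (it explicitly notes this does not help) but by a dedicated construction: parties broadcast a single perfectly binding commitment to their bit \emph{before} the loop, and the committee evaluates a two-phase single-input reactive functionality $\fcomor$. Phase~$1$ publicly identifies every party whose decommitment is invalid (those parties are permanently removed and treated as~$0$), and only Phase~$2$ releases the OR of the remaining committed bits. The two-phase structure is exactly what blocks the attack above: the adversary can see $y_h$ only after every corrupted party that still holds a committed $1$ has either revealed it (so the output is already~$1$) or been irrevocably zeroed out via the Phase-$1$ set $\MS$. Computing this reactive $\fcomor$ with restricted identifiable abort in constant rounds then yields the $O(\varphi(\secParam)^2)$ overall bound.
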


\paragraph{Application to a best-of-both-worlds type result.}
Another application is to a variant of the protocol of \citet{IKKLP11} that guarantees $t$-full-security assuming an honest majority and \emph{$t$-full-privacy} otherwise.\footnote{$t$-full-privacy means that the adversary does not learn any additional information other than what it can learn from $t+1$ invocations of the ideal functionality, with fixed inputs for the honest parties.}
Their idea is to repeatedly compute $\fout{f^n}$, using a secure protocol with identifiable abort, and use the player-elimination approach until the honest parties obtain the secret shares and reconstruct the result. It follows that the round complexity in~\cite{IKKLP11} is $O(t)$. The above reduction suggests an improvement both to the round complexity of the protocol and to the privacy it guarantees.
\begin{corollary}[informal]\label{cor:intro_IKLP}
Let $f^n$ be an $n$-party functionality and let $t=\beta n$ for $0<\beta<1$, and consider the assumptions as in \cref{thm:intro:mainThmfair_to_full}.
Then, there exists a single protocol $\pi$, with round complexity $O(\logstar(\secParam)\cdot\log(\secParam))$, such that:
\begin{enumerate}
    \item $\pi$ computes $f^n$ with $O(\logstar(\secParam)\cdot\log(\secParam))$-full-privacy.
    \item If $\beta<1/2$, then $\pi$ computes $f^n$ with full security.
\end{enumerate}
\end{corollary}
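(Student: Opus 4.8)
The plan is to instantiate the committee-based reduction of \cref{thm:intro:mainThmfair_to_full} on the \emph{share-output} variant of $f^n$ rather than on $f^n$ itself, mirroring the IKKLP strategy but delegating the player-elimination loop to a small committee. Concretely, recall that $\fout{f^n}$ is the functionality handing every party a share of $y=f^n(x_1,\ldots,x_n)$ under a $\ceil{n/2}$-out-of-$n$ ECSS scheme. First I would elect, via Feige's lightest-bin protocol, a committee $\committee$ of size $n'=\logstar(\secParam)\cdot\log(\secParam)$ (so $\varphi=\logstar$), and have $\committee$ fairly compute (with restricted identifiable abort) the functionality $\finout{f^n}{n}{t'}{n'}$ that reconstructs the parties' shared-and-committed inputs and outputs $\ceil{n/2}$-out-of-$n$ ECSS shares of $y$ to all $n$ parties, exactly as in the with-input reduction (inputs are shared $n'$-out-of-$n'$ toward $\committee$, publicly committed, with zero-knowledge proofs of consistency). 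The committee iterates this computation under player elimination until the honest parties hold shares, after which they reconstruct $y$.

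The round-complexity accounting is the crux of the $O(\logstar(\secParam)\cdot\log(\secParam))$ bound. Since $\committee$ contains at most $t'=\beta' n'$ corrupted members, the player-elimination loop runs at most $t'+1=O(\logstar(\secParam)\cdot\log(\secParam))$ times, and each iteration is a constant-round fair-with-identifiable-abort computation by the committee (using the $O(1)$-round, bounded-concurrent machinery underlying \cref{thm:intro:mainThmfair_to_full}); together with the $\logstar(\secParam)+O(1)$ rounds for committee election and the weak-CRS/commitment setup, this yields $O(\logstar(\secParam)\cdot\log(\secParam))$ rounds in total.

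For the full-security claim (Item~2), I would argue that when $\beta<1/2$ one may fix a Feige threshold $\beta<\beta'<1/2$, so that with all but $\err(n,n',\beta,\beta')$ probability the elected committee has an honest majority; the committee's fair computation then terminates (each abort eliminates a distinct corrupted member), producing correct $\ceil{n/2}$-out-of-$n$ shares. Because the \emph{global} fraction of honest parties also exceeds $1/2$, the honest parties jointly hold more than $n/2$ correct shares, and the ECSS reconstruction recovers the correct $y$ even against the up-to-$n/2$ adversarially modified shares, giving full security (the negligible Feige failure folded into $\err$).

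The full-privacy claim (Item~1), which must hold for \emph{every} $\beta<1$, is the step I expect to be the main obstacle, since here the committee need not have an honest majority and its fair computation need not terminate. The points I would formalize are: (i) for any $\beta<1$ one can take $\beta'<1$, so Feige guarantees at least one honest committee member except with probability $\err(n,n',\beta,\beta')$, whence the $n'$-out-of-$n'$ input sharing toward $\committee$ leaves the adversary short of one share and prevents it from reconstructing any honest input directly; and (ii) each of the at most $O(\logstar(\secParam)\cdot\log(\secParam))$ committee iterations reveals to the adversary at most a single output value, so the adversary's entire view can be simulated from that many invocations of the ideal $f^n$ with the honest parties' inputs held fixed --- which is precisely $O(\logstar(\secParam)\cdot\log(\secParam))$-full-privacy. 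The delicate parts are ensuring, via the public commitments and the consistency proofs, that a single effective input vector is attributed to the adversary per iteration (so that the ``one iteration equals one ideal query'' bookkeeping is sound), and handling the worst case in which a fully corrupted sub-committee sees a parallel computation's output before deciding whether to abort; both are already addressed by the restricted-identifiable-abort analysis and need only be carried through the $\fout{f^n}$ instantiation.
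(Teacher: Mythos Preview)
Your overall plan matches the paper's: elect a small committee, iterate a share-output computation with identifiable abort under player elimination, and let the share threshold yield full security when there is an honest majority and $n'$-full-privacy otherwise. The instantiation, however, differs in a way that leaves a gap. You propose $\ceil{n/2}$-out-of-$n$ ECSS shares delivered privately to all $n$ parties, whereas the paper has the committee compute a variant of $\fin{f}{n}{n'}$ whose private output to each of the $n'$ committee members is that member's share under a $\ceil{n'/2}$-out-of-$n'$ ECSS; committee members then broadcast their shares and every party reconstructs. The paper's choice fits the framework directly---the committee runs an $n'$-party functionality, so private outputs naturally land on committee members---while yours would need the committee to deliver private shares to the $n$ non-committee parties (e.g., encrypt each share under its recipient's key and prove consistency), machinery you do not spell out and that is not provided by the compiler of \cref{prot:fair_to_ridfair}. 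With the committee-level threshold the dual guarantee is immediate because Feige carries the global honest-majority condition to the committee: when $\beta<1/2$ one may take $\beta'<1/2$, so $t'<n'/2$ and the with-abort computation of the share-output variant is in fact fair (the adversary holds fewer than $n'/2$ shares and learns nothing on abort), yielding full security; when $\beta\ge 1/2$ each of the at most $n'$ iterations leaks at most one evaluation of $f$, giving $n'$-full-privacy.

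Two smaller points. First, the committee's inner protocol is run with \emph{abort} (e.g., via~\cite{Pass04}), not ``fairly''---fair computation of an arbitrary $\fin{f}{n}{n'}$ is unavailable when the committee lacks an honest majority; fairness in the $\beta<1/2$ case is \emph{derived} from the $\ceil{n'/2}$-out-of-$n'$ sharing, not assumed. Second, only a single committee is used in this corollary, so your worry about a fully corrupted sub-committee peeking at parallel outputs before aborting does not arise here.
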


\paragraph{Application to uplifting partially identifiable abort to full security.}
Finally, we improve a recent transformation of \citet[Thm.\ 3 and 4]{IKPSY16} from partially identifiable abort\footnote{A computation has $\alpha$-partially identifiable abort~\cite{IKPSY16}, if in case the adversary aborts the computation, a subset of parties is identified, such that at least an $\alpha$-fraction of the subset is corrupted.} to full security in the honest-majority setting. In \cite{IKPSY16}, the computation of $\finout{f^n}{n}{t'}{n'}$ with partially identifiable abort is carried out iteratively by a committee, initially consisting of all the parties, until the output is obtained. In case of abort, all the identified parties (both honest and corrupted) are removed from the committee. It follows that the number of iterations in~\cite{IKPSY16} is $O(n)$.

\begin{corollary}[informal]\label{cor:intro_IKPSY}
Let $f^n$ be an $n$-party functionality, let $n'=\logstar(\secParam)\cdot\log(\secParam)$, let $0<\beta<\beta'<1/2$, let $t=\beta n$ and $t'=\beta' n'$, and let $\pi'$ be an $r$-round protocol that securely computes $\finout{f^n}{n}{t'}{n'}$ with $\beta'$-partially identifiable abort, tolerating $t'$ corruptions.
Then, $f^n$ can be computed with full security, tolerating $t$ corruptions, by a $O(t'\cdot r)$-round protocol that uses the protocol $\pi'$ in a black-box way.
\end{corollary}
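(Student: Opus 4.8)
The plan is to instantiate the honest-majority transformation underlying \cref{thm:intro:mainThmfair_to_full} (the ECSS-based, black-box variant built around $\finout{f^n}{n}{t'}{n'}$), but to replace the \emph{full} identifiable abort used there by the weaker $\beta'$-\emph{partially} identifiable abort that $\pi'$ provides, in the spirit of the player-elimination strategy of~\cite{IKPSY16}. The single new ingredient relative to~\cite{IKPSY16} is that the computation is first delegated to a small elected committee rather than being carried out by all $n$ parties; this is precisely what will cut the number of invocations from $O(n)$ down to $O(t')$. Since $\beta'<1/2$, the committee has an honest majority, so I expect the resulting protocol to be unconditionally secure and to use $\pi'$ only as a black box, adding no cryptographic assumptions of its own.

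Concretely, I would proceed as follows. First, the $n$ parties run Feige's lightest-bin protocol to elect a committee $\committee$ of size $n'=\logstar(\secParam)\cdot\log(\secParam)$; as this is $\omega(\log(\secParam))$, the election fails (i.e., the corrupted fraction of $\committee$ exceeds $\beta'$) only with the negligible probability $\err=\err(n,n',\beta,\beta')$, and conditioned on success $\committee$ holds at most $t'=\beta'n'$ corrupted parties. Second, every party robustly shares its input to $\committee$ via a $t'$-out-of-$n'$ ECSS; since a single committee is used, there is no cross-invocation input-consistency issue (unlike the sub-committee variant), and corrupted parties \emph{outside} $\committee$ take no active part in the computation, so they can neither force an abort nor inflate the round count. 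Third, the committee runs the \cite{IKPSY16}-style loop: it invokes $\pi'$ to compute $\finout{f^n}{n}{t'}{n'}$; on success the members hold ECSS shares of $y=f^n(x_1,\ldots,x_n)$, which are robustly reconstructed and broadcast so every (listening) party outputs $y$, while on a $\beta'$-partially identifiable abort the identified set $D\subseteq\committee$ (at least a $\beta'$-fraction of which is corrupted) is eliminated and the loop repeats on the surviving committee.

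Both facts that pin the round count at $O(t'\cdot r)$ follow from one invariant: \emph{removing a subset whose corrupted fraction is at least $\beta'$ cannot increase the corrupted fraction of the committee}. Starting from a committee with corrupted fraction at most $\beta'$, after eliminating $D$ the surviving corrupted count is at most $\beta'$ times the surviving committee size (removing an above-average set lowers the average), so the committee retains an honest majority in every iteration and $\pi'$'s guarantee keeps applying. Simultaneously, each abort eliminates at least one corrupted party (a nonempty set with a $\beta'$-fraction corrupted contains at least one corrupted member), and the total number of corrupted committee members is at most $t'$, so the loop terminates after at most $t'+1$ invocations. Multiplying by the $r$ rounds per invocation, and adding the single election round and the constant-round share/reconstruct steps, gives $O(t'\cdot r)$ rounds, improving the $O(n)$ invocations of~\cite{IKPSY16} to $O(\logstar(\secParam)\cdot\log(\secParam))$.

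I expect the main obstacle to be the security proof of the elimination loop rather than its round accounting. The clean way to organize it is to generalize the \emph{restricted identifiable abort} ideal model (\cref{sec:restricted_idabort}) to $\beta'$-partial identifiability, reduce full security to iterating this ideal model under player elimination (invoking the invariant above), and then show that $\pi'$ wrapped by the robust sharing/reconstruction layer realizes that model. The two delicate points are: (i) the honest members wrongly swept up in an eliminated set $D$ lose their shares, so I must argue correctness survives these losses—this is exactly what the error-correcting reconstruction buys, since it tolerates the up-to-$t'$ missing or incorrect shares, but I must also make sure re-running on the reduced committee (via re-sharing among survivors, or by treating eliminated slots as erasures) stays within the ECSS parameters; and (ii) the simulator must extract the $\beta'$-partially-identified set from each aborted invocation of $\pi'$ and feed it to the elimination bookkeeping, as in the full-identifiability proof but with the weaker blame guarantee.
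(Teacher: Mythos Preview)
The proposal is correct and follows essentially the same route as the paper's proofsketch: elect an $\omega(\log\secParam)$-size committee via Feige's protocol, distribute inputs to it via ECSS, and then run the \cite{IKPSY16} player-elimination loop on $\pi'$ restricted to the committee, using the invariant that eliminating a set with $\beta'$-fraction corrupted preserves the honest-majority condition while removing at least one corrupted member per iteration. Your treatment is in fact more careful than the paper's brief sketch---you explicitly flag the re-sharing/erasure issue when honest committee members are swept up in an eliminated set, and you propose organizing the simulation through a partial-identifiability variant of the restricted-abort ideal model, both of which the paper leaves implicit.
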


\ifdefined\IsResultWithAbort
\subsubsection{Security with Abort to Full Security}
When a vast majority of the parties are honest, similar techniques can be used to efficiently uplift security with abort to full security. We emphasize that since we only consider security with abort, no corrupted parties are identified if the protocol halts, and so the player-elimination technique is not applicable in this setting. We stress that for this result, $n$ is not required to be super-constant.
We prove the following theorem.
\begin{theorem}[security with abort to full security, informal]\label{thm:intro:aborttofull}
Let $f^n$ be an $n$-party functionality, and let $t$ such that $t \cdot (2t+1)<n$. Then, $f^n$ can be computed with full security tolerating $t$ corruptions (with information-theoretic security) in the hybrid model computing $\finout{f^n}{n}{t}{2t+1}$ with abort. For $t \cdot (3t+1)<n$, the above holds with perfect security.
\end{theorem}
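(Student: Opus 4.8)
The plan is to bypass the sequential player-elimination paradigm—which is unavailable here, since security with abort reveals no identity of a cheater—and instead run many committees \emph{in parallel}, arranged by a fixed combinatorial design so that at least one committee is guaranteed to be free of corrupted parties. Concretely, I would fix once and for all a public family $\committee_1,\ldots,\committee_m\subseteq[n]$ of committees, each of size $2t+1$, such that for every corrupted set $\IS$ with $\size{\IS}\le t$ there is some $\committee_j$ with $\committee_j\cap\IS=\emptyset$. The role of the hypothesis $t\cdot(2t+1)<n$ is precisely to guarantee that such a (polynomial-size) family exists: in a balanced design in which each party lies in exactly $d$ committees, a coalition of $t$ parties meets at most $td$ committees, while the incidence count gives $m=nd/(2t+1)$, so a corruption-free committee survives for every corruption pattern iff $nd/(2t+1)>td$, i.e.\ iff $n>t(2t+1)$; this threshold is essentially tight. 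A crucial payoff of using a \emph{fixed} public design, rather than electing a committee in the style of Feige's protocol, is that no randomized election is needed, which is ultimately what yields information-theoretic—as opposed to computational—security.

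Given the design, the protocol proceeds in three phases. First, each party $\Party_i$ secret-shares its input to every committee using a robust $(t+1)$-out-of-$(2t+1)$ error-correcting scheme, and these shares serve as the inputs to $m$ parallel invocations of $\finout{f^n}{n}{t}{2t+1}$, one per committee, each run with abort. Second, the adversary can abort only those committees in which it holds a member, hence at most $td<m$ of them; by the design some corruption-free committee therefore survives, its invocation cannot be aborted, and it delivers a valid sharing of the output to its (honest) members. Third, the committee members reveal their output shares over the broadcast channel and the parties reconstruct $y$. To make the output single-valued and a legal ideal-world output, I would designate the value reconstructed from the lowest-indexed non-aborting committee whose revealed shares are consistent (so that a corruption-free committee is always accepted); a simulator then reads off, from the fixed sharing phase, the corrupted parties' inputs to exactly that committee, feeds them to the ideal functionality, and simulates the public abort/consistency pattern, producing a matching output.

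The step I expect to require the most care is the reconstruction in the third phase, and it is here that the information-theoretic and perfect variants diverge. In the information-theoretic setting one may reconstruct from a single non-aborting committee using a \emph{statistically} robust secret-sharing scheme, which recovers the correct secret from $2t+1$ shares against up to $t$ adversarial modifications at the cost of a negligible error, matching $t(2t+1)<n$. Perfect security tolerates no reconstruction error whatsoever, so the recovery must be correct with certainty against the faulty shares; guaranteeing perfect robustness (genuine error correction rather than detection or statistical recovery) is what forces the stronger honest margin captured by $t(3t+1)<n$, and reconciling this requirement with the fixed committee size is the delicate point. The remaining work is routine: checking that the extracted corrupted inputs are well-defined, that all honest parties agree on the same accepted committee (immediate, since the abort and consistency pattern is public over the broadcast channel), and that the simulated and real executions are statistically (resp.\ perfectly) indistinguishable, the only slack being the reconstruction-error probability of the robust scheme, which vanishes in the perfect case.
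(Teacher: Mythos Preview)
Your approach is correct and takes a genuinely different route from the paper. The paper proves this theorem by going through an intermediate notion (\emph{fairness with restricted abort}) and splitting into two regimes: for the tight threshold $t(2t+1)<n$ it uses $t$ \emph{disjoint} committees run \emph{sequentially}, exploiting the observation that if all $t$ fair-with-restricted-abort calls are aborted then each disjoint committee must contain a distinct corruption, forcing the leftover party $\Party_{t(2t+1)+1}$ to be honest so a final call through it succeeds. Its parallel variant still uses disjoint committees and therefore needs the slightly stronger $(t+1)(2t+1)\le n$. You instead stay entirely parallel and relax disjointness to a \emph{regular} design (e.g.\ the cyclic intervals $\committee_j=\{j,\ldots,j+2t\}\bmod n$, giving $m=n$, $d=2t+1$), and your incidence count $m=nd/(2t+1)>td$ recovers the tight threshold $t(2t+1)<n$ with a single parallel round. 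What the paper buys is fewer hybrid invocations ($t$ rather than $\Theta(n)$); what you buy is a one-round protocol at the tight threshold and a cleaner argument that bypasses the intermediate ideal model. Your identification of the statistical/perfect split as coming from the robustness of ECSS at $n'=2t+1$ versus $n'=3t+1$ matches the paper exactly.

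Two small points worth tightening. First, ``lowest-indexed non-aborting committee whose revealed shares are consistent'' is slightly ambiguous; the cleaner rule (and the one the paper uses) is: after each committee member broadcasts $\bot$ if it received $\bot$, take the smallest $l$ for which at most $t$ members broadcast $\bot$, and then reconstruct via ECSS from whatever its members publish---no separate ``consistency'' test is needed, since with $\le t$ corrupted members the ECSS decoder already handles any garbage they post. Second, make sure your simulator extracts $x'_i$ for corrupted $i$ from the shares that reached the \emph{honest} members of the chosen committee (and not from what corrupted members later fed into the functionality); with overlapping committees this is still well-defined, and ECSS guarantees this is what the real functionality actually used.
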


To the best of our knowledge, the transformation in \cref{thm:intro:aborttofull} is the first generic black-box transformations from security with abort to full security (not requiring identifiability).
\fi

\subsection{Additional Related Work}\label{sec:relatedWork}

The MPC literature contains many examples of reductions from strong security notions to weaker ones, \eg \cite{GoldreichMW87,CLOS02,IKOS07,IPS08}, to name but a few.
Recently, \citet{IKPSY16} presented a formal framework for studying (black-box) transformations between different security notions (see further examples therein). All of our results in the honest-majority setting can be stated in the framework of~\cite{IKPSY16}.

Security with identifiable abort was first explicitly used by \citet{AL10}, however, it was widely used in the literature implicitly, especially in the realm of fairness~\cite{IKKLP11,HMZ08,ZHM09,GordonK09,BeimelOO15,GK10,BLOO11,HaitnerT14,CL17,IOZ14,BOS16,AO16,BHLT17}.
\citet{KZZ16} recently defined security with \emph{publicly identifiable abort}, where a subset of the parties perform a computation, and all parties (even outside of the subset) identify a corrupted party in case of abort. This definition is different from the definition we introduce of \emph{restricted identifiable abort}, since unlike our definition, in~\cite{KZZ16} non-working parties do not provide input and do not receive output; such parties can only identify a corrupted party in case of abort.

The idea of electing a small committee to perform a computation was initially used in~\cite{Bracha87}, and has been considered in numerous  settings, such as: leakage-resilient secure computation~\cite{BGK11,BGJK12}, large-scale MPC~\cite{BGT13,BCP15,DKMSZ17,BCDH18}, leader election~\cite{KSSV06,KKKSS08}, Byzantine agreement~\cite{KS09,KS10,KLST11,BGH13}, and distributed key-generation~\cite{CS04}.

\citet{LR17} recently considered a slightly different model, in which a secure protocol must have a fixed and a priori known \emph{committal round}, where the effective inputs of the corrupted parties are determined. They showed that fair protocols cannot be constructed in this model without an honest majority. We note that our constructions of fair protocols over committed inputs do not contradict this impossibility result, since we consider the standard model, where the simulator may choose not to use the values that are committed by the corrupted parties in the protocol, and may decide on the effective inputs based on the behavior of the real-world adversary at \emph{any} round.

\subsection{Open Questions}
In the no-input, no-honest-majority setting, there is a logarithmic gap between the $\omega(\log(\secParam))$ number of sequential fair calls required by our reductions, and our $\omega(1)$ lower bound on this number. The source for this gap is in the difference between the pessimistic model we use in our reductions, and the more optimistic (yet, as far as we know, possible) model we use in our lower bound (see more details in \cref{sec:imposability:model}).
Finding the right security model to capture the power of such protocols, and finding tight reductions in this setting, is an interesting open question.

\subsection*{Paper Organization}
Basic definitions can be found in \cref{sec:Preliminaries}.
Our reductions from full security to fairness for no-input functionalities are given in \cref{sec:fairtofull_noinput}, and for functionalities with inputs in \cref{sec:fair_to_full_withinputs}. 
\ifdefined\IsResultWithAbort
Reductions from full security to security with abort are given in \cref{sec:abort_to_full}. 
\fi
The lower bound on the number of sequential fair calls is given in \cref{sec:impossibility}.

\section{Preliminaries}\label{sec:Preliminaries}

\rnote{Maybe move most of the preliminaries to the appendix and only present the basic model and the new definition?} \Inote{why?}
\rnote{not to exhaust the reader with standard definitions}

\subsection{Notations}\label{sec:notations}
We use calligraphic letters to denote sets, uppercase for random variables, lowercase for values, boldface for vectors, and sans-serif (\eg \Ac) for algorithms (\ie Turing Machines).
%All logarithms considered here are in base two, where $\conc$ denotes string concatenation.
For $n\in\N$, let $[n]=\set{1,\cdots,n}$. Let $\poly$ denote the set all positive polynomials and let \ppt denote a probabilistic algorithm that runs in \emph{strictly} polynomial time. A function $\nu \colon \N \mapsto [0,1]$ is \emph{negligible}, denoted $\nu(\secParam) = \negl(\secParam)$, if $\nu(\secParam)<1/p(\secParam)$ for every $p\in\poly$ and large enough $\secParam$.
The statistical distance between two random variables $X$ and $Y$ over a finite set $\Uni$, denoted $\SD(X,Y)$, is defined as $\frac12 \cdot \sum_{u\in \Uni}\size{\pr{X = u}- \pr{Y = u}}$.
Given a random variable $X$, we write $x\la X$ to indicate that $x$ is selected according to $X$.

Two distribution ensembles $X=\set{X(a,\secParam)}_{a\in\zs,\secParam\in\N}$ and $Y=\set{Y(a,\secParam)}_{a\in\zs,\secParam\in\N}$ are computationally $\delta$-indistinguishable (denoted $X\deltaci Y$) if for every non-uniform polynomial-time distinguisher $\Adv$ there exists a function $\nu(\secParam) = \negl(\secParam)$, such that for every $a\in\zs$ and $\secParam\in\N$
\[
\abs{\pr{\Adv(X(a,\secParam),1^\secParam)=1} - \pr{\Adv(Y(a,\secParam),1^\secParam)=1}}\leq \delta(\secParam) + \nu(\secParam).
\]
In case $\delta$ is negligible, we say that $X$ and $Y$ are computationally indistinguishable and denote $X\ci Y$.
The distribution ensembles $X$ and $Y$ are (statistically) $\delta$-close (denoted $X\deltaclose Y$) if $\SD(X,Y)\le \delta+\nu(\secParam)$ for a negligible function $\nu$, and statistically close (denoted $X\statclose Y$) is they are $\delta$-close and $\delta$ is negligible. $X$ and $Y$ are perfectly $\delta$-close (denoted $X\deltaequiv Y$) if $\SD(X,Y)\le \delta$. In case $\delta=0$, \ie if $X$ and $Y$ are identically distributed, denote $X\equiv Y$.

We denote by $n$ the number of participating parties in a protocol, by $t$ an upper bound on the number of corrupted parties, and by $\secParam$ the security parameter. \rchanged{We}{As standard in the context of large-scale secure computation (see, \eg \cite{DI06,DIKNS08}), we} assume that $n$ and $\secParam$ are polynomially related,
\ifdefined\IsResultWithAbort
(except for \cref{sec:abort_to_full}),
\fi
\ie $n=\secParam^c$ for some constant $c>0$ (possibly $c<1$).
\radded{In particular, this means that a functionality $f$ is in fact an ensemble of functionalities $\sset{f_n}_{n\in\N}$, where $f_n$ is an $n$-party functionality; for example, in the Boolean OR functionality, for every $n\in\N$, $f_n(x_1,\ldots,x_n)=x_1\vee\ldots\vee x_n$. We refer the reader to \cite{BCDH18} for a discussion on functionalities and protocols in the large-scale setting.}

\subsection{Secret Sharing}\label{sec:secretsharing}
A (threshold) secret-sharing scheme~\cite{Shamir79} is a method in which a dealer distributes shares of some secret to $n$ parties such that $t$ colluding parties do not learn anything about the secret, and any subset of $t+1$ parties can fully reconstruct the secret.

\begin{definition}[secret sharing]\label{def:TSS}
A \emph{$(t+1)$-out-of-$n$ secret-sharing scheme} over a message space $\cM$ consists of a pair of algorithms $(\Share, \Recon)$ satisfying the following properties:
\begin{enumerate}
    \item\textbf{$t$-privacy:}
    For every $m\in \cM$, and every subset $\IS\subseteq[n]$ of size $\ssize{\IS}\leq t$, the distribution of $\sset{s_i}_{i\in \IS}$ is independent of $m$, where $(s_1,\ldots,s_n)\gets \Share(m)$.
    \item\textbf{$(t+1)$-reconstructability:}
    For every $m\in \cM$, every subset $\IS\subseteq[n]$ of size $t+1$, every $\vs=(s_1,\ldots,s_n)$ and every $\vs'=(s'_1,\ldots,s'_n)$ such that $\ppr{\vS\gets \Share(m)}{\vS=\vs}>0$, $\vs_{\IS}=\vs'_{\IS}$, and $\vs'_{\bar{\IS}} = \bot^{\ssize{\bar{\IS}}}$, it holds that $m=\Recon(\vs')$.
\end{enumerate}
\end{definition}

An error-correcting secret-sharing (ECSS) scheme is a secret-sharing schemes, in which the reconstruction is guaranteed to succeed even if up to $t$ shares are faulty.
This primitive has also been referred to as \emph{robust secret sharing} or as \emph{honest-dealer VSS}~\cite{RB89,CFOR12,CDF01}.

\begin{definition}[error-correcting secret sharing]\label{def:ECSS}
A \emph{$(t+1)$-out-of-$n$ error-correcting secret-sharing scheme} (ECSS) over a message space $\cM$ consists of a pair of algorithms $(\Share, \Recon)$ satisfying the following properties:
\begin{enumerate}
    \item\textbf{$t$-privacy:} As in \cref{def:TSS}.
    \item\textbf{Reconstruction from up to $t$ erroneous shares:}
    For every $m\in \cM$, every $\vs = (s_1,\ldots,s_n)$, and every $\vs' = (s'_1,\ldots,s'_n)$ such that $\ppr{\vS\gets \Share(m)}{\vS=\vs}>0$ and $\ssize{\sset{i \mid s_i=s'_i}}\geq n-t$, it holds that $m=\Recon(\vs')$.
    % (except for a negligible probability).
\end{enumerate}
\end{definition}

ECSS can be constructed with perfect correctness when $t<n/3$ using Reed-Solomon decoding~\cite{BGW88} and with a negligible error probability when $t<n/2$ by authenticating the shares using one-time MAC~\cite{RB89}.
In case $t\geq n/2$ it is impossible to construct a $(t+1)$-out-of-$n$ ECSS scheme, or even a secret-sharing scheme that identifies cheaters~\cite{IOS12}.

\subsection{Security Definitions}\label{sec:mpc_Def}

We provide the basic definitions for secure multiparty computation according to the real/ideal paradigm, for further details see~\cite{Goldreich04}. Informally, a protocol is considered secure if whatever an adversary can do in the real execution of protocol, can be done also in an ideal computation, in which an uncorrupted trusted party assists the computation.

%\paragraph{Functionalities.}
\begin{definition}[functionalities]\label{def:func}
An $n$-party {\sf functionality} is a random process that maps vectors of $n$ inputs to vectors of $n$ outputs.\footnote{We assume that a functionality can be computed in polynomial time.} Given an $n$-party functionality $f \colon (\zs)^n \mapsto (\zs)^n$, let $f_i(\vx)$ denote its \ith output coordinate, \ie $f_i(\vx) = f(\vx)_i$. A functionality $f$ has {\sf public output}, if the output values of all parties are the same, \ie for every $\vx\in(\zs)^n$, $f_1(\vx)=f_2(\vx)=\ldots=f_n(\vx)$, otherwise $f$ has {\sf private output}.

A no-input $n$-party functionality $f \colon (\zs)^n \mapsto (\zs)^n$ is a functionality in which the input of every party is the empty string $\lambda$. That is, $f$ is computed as $(y_1,\ldots,y_n)\gets f(\lambda,\ldots,\lambda;r)$ over random coins $r$. In case $f$ is a no-input functionality with public output, it can be defined for any number of parties; we denote by $f^n$ the functionality $f$ when defined for $n$ parties.
\end{definition}

\subsubsection{Execution in the Real World}
An $n$-party protocol $\pi= (\Party_1,\ldots,\Party_n)$ is an $n$-tuple of probabilistic polynomial-time interactive Turing machines.
The term \emph{party $\Party_i$} refers to the \ith interactive Turing machine.
Each party $\Party_i$ starts with input $x_i\in\zs$ and random coins $r_i\in\zs$.
Without loss of generality, the input length of each party is assumed to be the security parameter $\secParam$.
An \emph{adversary} \Adv is another interactive TM describing the behavior of the corrupted parties.
It starts the execution with input that contains the identities of the corrupted parties and their private inputs, and possibly an additional auxiliary input.
The parties execute the protocol in a synchronous network.
That is, the execution proceeds in rounds: each round consists of a \emph{send phase} (where parties send their messages from this round) followed by a \emph{receive phase} (where they receive messages from other parties).
The adversary is assumed to be \emph{rushing}, which means that it can see the messages the honest parties send in a round before determining the messages that the corrupted parties send in that round.

The parties can communicate in every round over a broadcast channel or using a fully connected \ptp network.
We consider two models for the communication lines between the parties: In the \emph{authenticated-channels} model (used in the computational setting), the communication lines are assumed to be ideally authenticated but not private (and thus the adversary cannot modify messages sent between two honest parties but can read them).
In the \emph{secure-channels} model (used in the information-theoretic setting), the communication lines are assumed to be ideally private (and thus the adversary cannot read or modify messages sent between two honest parties).

Throughout the execution of the protocol, all the honest parties follow the instructions of the prescribed protocol, whereas the corrupted parties receive their instructions from the adversary.
The adversary is considered to be \emph{malicious}, meaning that it can instruct the corrupted parties to deviate from the protocol in any arbitrary way.
At the conclusion of the execution, the honest parties output their prescribed output from the protocol, the corrupted parties output nothing and the adversary outputs an (arbitrary) function of its view of the computation (containing the views of the corrupted parties).
The view of a party in a given execution of the protocol consists of its input, its random coins, and the messages it sees throughout this execution.

\begin{definition} [real-model execution]\label{def:RealModel}
Let $\pi= (\Party_1,\ldots, \Party_n)$ be an $n$-party protocol and let $\IS \subseteq [n]$ denote the set of indices of the parties corrupted by $\Adv$. The {\sf joint execution of $\pi$ under $(\Adv,\IS)$ in the real model}, on input vector $\vx= (x_1,\ldots, x_n)$, auxiliary input $\aux$ and security parameter $\secParam$, denoted $\REAL_{\pi,\IS,\Adv(\aux)}(\vx,\secParam)$, is defined as the output vector of $\Party_1,\ldots,\Party_n$ and $\Adv(\aux)$ resulting from the protocol interaction, where for every $i \in \IS$, party $\Party_i$ computes its messages according to $\Adv$, and for every $j \notin \IS$, party $\Party_j$ computes its messages according to $\pi$.
\end{definition}

\paragraph{Bounded-parallel composition.}
An $\ell$-times parallel execution of a protocol $\pi'$ (for a pre-determined $\ell$) is a protocol $\pi$, in which the parties run $\ell$ independent instances of $\pi'$ in parallel. Each party receives a vector of $\ell$ input values, one input for every execution, and all the executions proceed in a synchronous manner, round by round, such that all the messages of the \ith round in all executions of $\pi'$ are guaranteed to be delivered before round $i+1$ starts. At the conclusion of the executions, every party outputs a vector of $\ell$ output values, one output value from each execution.
Note that the honest parties run each execution of $\pi$ obliviously to the other executions. (Thus, this is stateless composition.) The adversary may gather information from all the executions in order to attack any specific execution. The $\ell$-times parallel execution of $\pi'$ can also be executed by different subsets of parties, in this case every party that does not participate in the \ith execution receives an empty input $\lambda$ and outputs $\lambda$.

\subsubsection{Execution in the Ideal World}
In this section, we present standard definitions of ideal-model computations that are used to define security with abort, with identifiable abort, with fairness and with guaranteed output delivery (\ie full security).
We start by presenting the ideal-model computation for security with abort, where the adversary may abort the computation either before or after it has learned the output; other ideal-model computations are defined by restricting the power of the adversary either by forcing the adversary to identify a corrupted party in case of abort, or by allowing fair abort (\ie abort only before learning the output) or no abort (full security).

\paragraph{Ideal computation with abort.}
An ideal computation with abort of an $n$-party functionality $f$ on input $\vx=(x_1,\ldots,x_n)$ for parties $(\Party_1,\ldots,\Party_n)$ in the presence of an ideal-model adversary $\Adv$ controlling the parties indexed by $\IS\subseteq[n]$, proceeds via the following steps.
\begin{itemize}
    \item[\emph{Sending inputs to trusted party}:]
    An honest party $\Party_i$ sends its input $x_i$ to the trusted party.
    The adversary may send to the trusted party arbitrary inputs for the corrupted parties. Let $x_i'$ be the value actually sent as the input of party $\Party_i$.

	\item[\emph{Early abort}:]
    The adversary $\Adv$ can abort the computation by sending an $\abort$ message to the trusted party. In case of such an abort, the trusted party sends $\bot$ to all parties and halts.

    \item[\emph{Trusted party answers adversary}:]
    The trusted party computes $(y_1, \ldots, y_n)=f(x_1', \ldots, x_n')$ and sends $y_i$ to party $\Party_i$ for every $i\in\IS$.

	\item[\emph{Late abort}:]
    The adversary $\Adv$ can abort the computation (\emph{after seeing the outputs of corrupted parties}) by sending an \abort message to the trusted party. In case of such abort, the trusted party sends $\bot$ to all parties and halts. Otherwise, the adversary sends a $\continue$ message to the trusted party.

    \item[\emph{Trusted party answers remaining parties}:]
    The trusted party sends $y_i$ to $\Party_i$ for every $i\notin\IS$.

    \item[\emph{Outputs}:]
    Honest parties always output the message received from the trusted party and the corrupted parties output nothing.
    The adversary $\Adv$ outputs an arbitrary function of the initial inputs $\set{x_i}_{i\in\IS}$, the messages received by the corrupted parties from the trusted party and its auxiliary input.
\end{itemize}

\begin{definition}[ideal-model computation with abort]\label{def:ideal_abort}
Let $f \colon (\zs)^n \mapsto (\zs)^n$ be an $n$-party functionality, let $\IS\subseteq [n]$ be the set of indices of the corrupted parties, and let $\secParam$ be the security parameter.
Then, the \emph{joint execution of $f$ under $(\Adv, I)$ in the ideal model}, on input vector $\vx=(x_1, \ldots, x_n)$, auxiliary input $\aux$ to $\Adv$ and security parameter $\secParam$, denoted $\IDEAL^{\abort}_{f,\IS,\Adv(\aux)}(\vx,\secParam)$, is defined as the output vector of $\Party_1, \ldots, \Party_n$ and $\Adv$ resulting from the above described ideal process.
\end{definition}

\noindent
We now define the following variants of this ideal computation:
\begin{itemize}
    \item\textbf{Ideal computation with identifiable abort}.
    This ideal model proceeds as in \cref{def:ideal_abort}, with the exception that in order to abort the computation, the adversary chooses an index of a corrupted party $\is\in\IS$ and sends $(\abort,\is)$ to the trusted party. In this case the trusted party responds with $(\bot,\is)$ to all parties. This ideal computation is denoted as $\IDEAL^{\idabort}_{f,\IS,\Adv(\aux)}(\vx,\secParam)$.
    \item\textbf{Ideal computation with fairness}.
    This ideal model proceeds as in \cref{def:ideal_abort}, with the exception that the adversary is allowed to send \abort only in step \emph{Early abort}. This ideal computation is denoted as $\IDEAL^{\fair}_{f,\IS,\Adv(\aux)}(\vx,\secParam)$.
    \item\textbf{Ideal computation with fairness and identifiable abort}.
    This ideal model proceeds as the ideal model for security with fairness, with the exception that in order to abort the computation, the adversary sends $(\abort,\is)$ with $\is\in\IS$ and the trusted party responds with $(\bot,\is)$ to all parties. This ideal computation is denoted as $\IDEAL^{\idfair}_{f,\IS,\Adv(\aux)}(\vx,\secParam)$.
    \item\textbf{Ideal computation with full security (aka guaranteed output delivery)}.
    This ideal model proceeds as in \cref{def:ideal_abort}, with the exception that the adversary is not allowed to send \abort to the trusted party. This ideal computation is denoted as $\IDEAL^{\full}_{f,\IS,\Adv(\aux)}(\vx,\secParam)$.
\end{itemize}

\subsubsection{Security Definitions}

Having defined the real and ideal models, we can now define security of protocols according to the real/ideal paradigm.
\begin{definition}\label{def:SecureProtocol}
Let $\type\in \set{\full, \fair, \idfair,\abort,\idabort}$.
Let $f\colon(\zs)^n \mapsto (\zs)^n$ be an $n$-party functionality, and let $\pi$ be a probabilistic polynomial-time protocol computing $f$.
The \emph{protocol $\pi$ $(\delta,t)$-securely computes $f$ with \type (and computational security)}, if for every probabilistic polynomial-time real-model adversary \Adv, there exists a probabilistic (expected) polynomial-time adversary $\Sim$ for the ideal model, such that for every $\IS\subseteq [n]$ of size at most $t$, it holds that
%{\small{
\[
\set{\bigbrack \REAL_{\pi, \IS, \Adv(\aux)}(\vx, \secParam)}_{(\vx, \aux)\in(\zs)^{n+1}, \secParam\in\N}
\deltaci
\set{\bigbrack \IDEAL^\type_{f, \IS, \Sim(\aux)}(\vx, \secParam)}_{(\vx, \aux)\in(\zs)^{n+1}, \secParam\in\N}.
\]
%}}
If $\delta$ is negligible, we say that $\pi$ is a protocol that $t$-securely computes $f$ with \type and computational security.

The \emph{protocol $\pi$ $(\delta,t)$-securely computes $f$ with \type (and statistical security)}, if for every real-model adversary \Adv, there exists an adversary $\Sim$ for the ideal model, whose running time is polynomial in the running time of $\Adv$, such that for every $\IS\subseteq [n]$ of size at most $t$, it holds that
%{\small{
\[
\set{\bigbrack \REAL_{\pi, \IS, \Adv(\aux)}(\vx, \secParam)}_{(\vx, \aux)\in(\zs)^{n+1}, \secParam\in\N}
\deltaclose
\set{\bigbrack \IDEAL^\type_{f, \IS, \Sim(\aux)}(\vx, \secParam)}_{(\vx, \aux)\in(\zs)^{n+1}, \secParam\in\N}.
\]
%}}
If $\delta$ is negligible, we say that $\pi$ is a protocol that $t$-securely computes $f$ with \type and statistical security.

Similarly, \emph{$\pi$ is a protocol that $(\delta,t)$-securely computes $f$ with \type (and perfect security)}, if
%{\small{
\[
\set{\bigbrack \REAL_{\pi, \IS, \Adv(\aux)}(\vx, \secParam)}_{(\vx, \aux)\in(\zs)^{n+1}, \secParam\in\N}
\deltaequiv
\set{\bigbrack \IDEAL^\type_{f, \IS, \Sim(\aux)}(\vx, \secParam)}_{(\vx, \aux)\in(\zs)^{n+1}, \secParam\in\N}.
\]
%}}
If $\delta=0$, we say that $\pi$ is a protocol that $t$-securely computes $f$ with \type and perfect security.
\end{definition}

\subsubsection{Reactive Functionalities}\label{sec:reactive}

The previous section described non-reactive ideal computations (also referred to as secure function evaluation), where each party sends one input to the trusted party and receives back one output; the trusted party is stateless, \ie no internal state is stored between computations of different functionalities. A reactive functionality is defined as vector of $n$-party functionalities $\vf=(f_1,\ldots,f_q)$ where each functionality receives an additional input representing a state. That is, for every $i\in[q]$, the functionality $f_i$ is defined as $(y^i_1,\ldots,y^i_n)=f_i(x^i_1,\ldots,x^i_n,\state^i;r^i)$. In an ideal computation of a reactive functionality $\vf$ with security \type, the trusted party is modeled as a stateful interactive Turing machine. Initially the trusted party sets the internal state to be $\state^1=\lambda$. Next, the ideal computation proceeds in phases, where in the \ith phase every party and the adversary send their messages to the trusted party according to the non-reactive ideal-model computation of $f_i$ with security \type, with the exception that upon receiving inputs $x^i_1,\ldots,x^i_n$ from the honest parties and the adversary, the trusted party samples random coins $r^i$, computes $(y^i_1,\ldots,y^i_n)=f_i(x^i_1,\ldots,x^i_n,\state^i;r^i)$ and sets $\state^{i+1}=\state^i\circ(x^i_1,\ldots,x^i_n,r^i)$.

Looking ahead, in \cref{sec:apps_withinput} (a protocol for Boolean OR) we will make use of a weaker form of reactive functionalities.
We define a reactive functionality $\vf=(f_1,\ldots,f_q)$ to be a \textsf{single-input reactive functionality}, if for every $i\geq 2$, the function $f_i$ is deterministic and does not receive inputs from the parties, \ie whose output depends only on the state $\state^2$, which is in fact the input values sent for $f_1$. Stated differently, denote by $x_1,\ldots,x_n$ the input values provided by the parties in the first invocation, then for every $i\in[q]$ the computation is $f_i(x_1,\ldots,x_n)$.
%\rnote{we only need this definition for the protocol for Boolean OR}

\subsubsection{The Hybrid Model}

The \emph{hybrid model} is a model that extends the real model with a trusted party that provides ideal computation for specific functionalities.
The parties communicate with this trusted party in exactly the same way as in the ideal models described above.
The question of which ideal model is considered must be specified.
Specifically, the trusted party may work according to any of the ideal models that we have defined above.

Let $f$ be a functionality.
Then, an execution of a protocol $\pi$ computing a functionality $g$ in the $f$-hybrid model, involves the parties sending normal messages to each other (as in the real model) and in addition, having access to a trusted party computing $f$.
It is essential that the invocations of $f$ are done sequentially, meaning that before an invocation of $f$ begins, the preceding invocation of $f$ must finish.
In particular, there is at most a single call to $f$ per round, and no other messages are sent during any round in which $f$ is called. In case $f$ is a reactive functionality then $f$ must be sequentially called until the computation of $f$ is completed.

Let $\type\in \set{\full, \fair, \idfair,\abort,\idabort}$.
Let $\Adv$ be an adversary with auxiliary input $\aux$ and let $\IS\subseteq[n]$ be the set of corrupted parties.
We denote by $\HYBRID^{f,\type}_{\pi, \IS, \Adv(\aux)}(\vx, \secParam)$ the random variable consisting of the view of the adversary and the output of the honest parties, following an execution of $\pi$ with ideal calls to a trusted party computing $f$ according to the ideal model \type, on input vector $\vx=(x_1, \ldots, x_n)$, auxiliary input $\aux$ to $\Adv$, and security parameter $\secParam$.
We call this the $(f,\type)$-hybrid model.

The following proposition follows from the sequential composition theorem of \citet{Canetti00}.
\begin{proposition}\label{prop:Composition}
Let $\type_1,\type_2\in \set{\full, \fair, \idfair,\abort,\idabort}$.
Let $f$ be an $n$-party functionality.
Let $\rho$ be a protocol that $(\delta_1,t)$-securely computes $f$ with $\type_1$, and let $\pi$ be a protocol that $(\delta_2,t)$-securely computes $g$ with $\type_2$ in the $(f,\type_1)$-hybrid model, using $q$ calls to the ideal functionality.
Then protocol $\pi^\rho$, that is obtained from $\pi$ by replacing all ideal calls to the trusted party computing $f$ with the protocol $\rho$, is a protocol that $((q\cdot\delta_1+\delta_2),t)$-securely computes $g$ with $\type_2$ in the real model.
\end{proposition}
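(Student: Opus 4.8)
The plan is to mirror the structure of Canetti's sequential composition theorem, the only genuinely new content being the bookkeeping of the slack $\delta_1,\delta_2$. Fix a real-model adversary $\Adv$ attacking $\pi^\rho$ (which computes $g$) and controlling a set $\IS$ with $\size{\IS}\le t$. I would construct an ideal-model simulator $\Sim$ for $g$ with $\type_2$ in two stages and bound the distinguishing advantage by $q\cdot\delta_1+\delta_2$ via the triangle inequality, passing through the intermediate $(f,\type_1)$-hybrid execution of $\pi$.

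Stage 1, removing $\rho$: from $\Adv$ I would build a hybrid-model adversary $\Adv'$ for $\pi$ in the $(f,\type_1)$-hybrid model. $\Adv'$ runs $\Adv$ internally and relays all genuine $\pi$-messages unchanged; whenever the execution of $\pi^\rho$ enters the $j$-th copy of the subprotocol $\rho$, $\Adv'$ instead issues the corresponding ideal call to $f$ (with security $\type_1$) and uses the simulator $\Sim_\rho$ guaranteed by the $(\delta_1,t)$-security of $\rho$ --- instantiated with $\Adv$'s current state fed as auxiliary input --- to produce for $\Adv$ a simulated view of that $\rho$-execution. To bound the cost I would run a hybrid argument over experiments $H_0,\dots,H_q$, where $H_j$ is the experiment in which the first $j$ invocations of $\rho$ are already replaced by ideal $f$-calls (with $\Sim_\rho$) while the last $q-j$ are still genuine executions of $\rho$; thus $H_0=\REAL_{\pi^\rho,\IS,\Adv}$ and $H_q=\HYBRID^{f,\type_1}_{\pi,\IS,\Adv'}$. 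Consecutive hybrids $H_{j-1}$ and $H_j$ differ only in the $j$-th call, so a machine that generates the prefix of the experiment, embeds its challenge (a real $\rho$-transcript versus an ideal $f$-computation with $\Sim_\rho$) as the $j$-th call, and completes the suffix by running $\pi$, the remaining calls, and $\Adv$, reduces telling $H_{j-1}$ from $H_j$ to breaking the $(\delta_1,t)$-security of $\rho$. Hence $H_{j-1}$ and $H_j$ are $\delta_1$-close, and summing over $j$ yields that $H_0$ and $H_q$ are $(q\cdot\delta_1)$-close.

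Stage 2, removing $\pi$: the experiment $H_q$ is exactly the execution of $\pi$ against $\Adv'$ in the $(f,\type_1)$-hybrid model, so invoking the $(\delta_2,t)$-security of $\pi$ in that model on the adversary $\Adv'$ yields an ideal-model simulator $\Sim$ for $g$ with $\type_2$ such that $\HYBRID^{f,\type_1}_{\pi,\IS,\Adv'}$ and $\IDEAL^{\type_2}_{g,\IS,\Sim}$ are $\delta_2$-close. Chaining the two stages by the triangle inequality places $\REAL_{\pi^\rho,\IS,\Adv}$ within $q\cdot\delta_1+\delta_2$ of $\IDEAL^{\type_2}_{g,\IS,\Sim}$, which is the claim. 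The same chain works verbatim in the computational, statistical, and perfect regimes, replacing $\deltaclose$ by $\deltaci$ or $\deltaequiv$ throughout; the three flavours differ only in how the slack is measured and in whether $\Sim$ must be expected polynomial time or merely polynomial in the running time of $\Adv$.

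The main obstacle is the single-invocation replacement $H_{j-1}\approx H_j$. The delicate points are: (i) the inputs handed to the $j$-th copy of $\rho$ and the internal state of $\Adv$ at that moment are determined \emph{adaptively} by the random prefix of the experiment, so I must apply the security of $\rho$ in its ``for all $\vx,\aux$'' form --- either by folding the prefix-sampling into a single efficient distinguisher (computational case) or by averaging the pointwise statistical bound over prefixes, where the uniform negligible slack of \cref{def:SecureProtocol} is exactly what survives the averaging; (ii) the sequentiality of the $f$-calls (at most one call per round, with no concurrent $\rho$-executions, per the hybrid-model definition) is what makes the prefix/suffix decomposition well-defined and keeps this a \emph{standalone}, rather than concurrent, use of $\rho$'s security; and (iii) the simulators $\Sim_\rho$ and the Stage-2 simulator may run in expected polynomial time, so I would note that composing the $q+1$ expected-polynomial-time machines that constitute $\Sim$ preserves expected polynomial time (resp.\ polynomial in the running time of $\Adv$), matching \cref{def:SecureProtocol}. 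Modulo these points the argument is precisely Canetti's, and I would ultimately present it as an instantiation of his sequential composition theorem with the explicit $q\cdot\delta_1+\delta_2$ error accounting spelled out above.
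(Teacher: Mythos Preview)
Your proposal is correct and matches the paper's approach: the paper does not give a proof at all but simply states that the proposition ``follows from the sequential composition theorem of \citet{Canetti00}.'' You have effectively unpacked that citation, spelling out the hybrid argument over the $q$ calls and the resulting $q\cdot\delta_1+\delta_2$ error accounting, which is exactly the content implicit in the paper's one-line reference.
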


%%%%%%%%%%%%%%%%%%%%%%%%%%%%%%%%%%%%%%%%%%%%%%%%%%%%%%%%%%%%%%%
%%%%%%%%%%%%%%%%%%%%%%%%%%%%%%%%%%%%%%%%%%%%%%%%%%%%%%%%%%%%%
\subsection{Committee Election}\label{sec:Feige}
Feige's lightest-bin protocol~\cite{Feige99} is an elegant $n$-party, public-coin protocol, consisting of a single broadcast round, for electing a committee of size $n'<n$, in the information-theoretic setting.
Each party uniformly selects one of $\ceil{n/n'}$ bins and broadcasts it choice. The parties that selected the lightest bin are elected to participate in the committee. The protocol ensures that the ratio of corrupted parties in the elected committee is similar to their ratio in the population.
The original protocol in~\cite{Feige99} considered committees of size $\log(n)$, however, this results with a non-negligible failure probability.
\citet[Lem.\ 2.6]{BGK11} analyzed Feige's protocol for arbitrary committee sizes and proved the following lemma.

\begin{lemma}[\cite{BGK11}]\label{lem:Feige}
For integers $n'<n$ and constants $0<\beta<\beta'<1$, denote
\[
\err\left(n,n',\beta,\beta'\right)=\frac{n}{n'} \cdot e^{-\frac{(\beta'-\beta)^2 n'}{2(1-\beta)}}.
\]
Feige's lightest-bin protocol is a $1$-round, $n$-party protocol for electing a committee $\committee$, such that for any set of corrupted parties $\IS\subseteq [n]$ of size $t=\beta n$, the following holds.
\begin{enumerate}
    \item
    $\size{\committee}\leq n'$.
    \item
    $\pr{\size{\committee\setminus\IS}\leq (1-\beta')\cdot n'} < \err(n,n',\beta,\beta')$.
    \item
    $\pr{\size{\committee\cap \IS}\geq \beta' \cdot \size{\committee}} < \err(n,n',\beta,\beta')$.
\end{enumerate}
\end{lemma}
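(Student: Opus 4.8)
The plan is to treat the three items separately: Item~1 is a pigeonhole observation, Item~2 is the heart of the matter and follows from a Chernoff bound combined with a union bound over the bins, and Item~3 reduces to Item~2. Throughout, let $b=\ceil{n/n'}$ be the number of bins, let $h=(1-\beta)n$ be the number of honest parties, and for a bin $j$ let $H_j$ denote the number of honest parties that selected bin $j$. Since the elected committee $\committee$ consists of exactly the parties that chose the lightest bin, its size is at most the average load over all bins, namely $\size{\committee}\le n/b\le n'$, which gives Item~1.

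For Item~2 the key point is a separation of randomness. The honest counts $\set{H_j}_{j\in[b]}$ are determined by the honest parties' coins alone, \emph{before} the rushing adversary places the corrupted parties; the identity $j^\ast$ of the lightest bin is then chosen adversarially. Nevertheless, over every fixing of all coins, the event $\size{\committee\setminus\IS}=H_{j^\ast}\le(1-\beta')n'$ implies that \emph{some} bin $j$ satisfies $H_j\le(1-\beta')n'$, a property of the honest coins only. Hence $\pr{\size{\committee\setminus\IS}\le(1-\beta')n'}\le b\cdot\pr{H_1\le(1-\beta')n'}$ by a union bound, and the adversary's adaptivity is irrelevant to this estimate. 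Each $H_j$ is a sum of $h$ independent indicators (each honest party lands in bin $j$ with probability $1/b$), so $\mu:=\ex{H_j}=h/b=(1-\beta)n'$ (in the divisible case $b=n/n'$). Since $\beta<\beta'$, the threshold $(1-\beta')n'$ lies strictly below $\mu$, so this is a lower-tail event: writing $(1-\beta')n'=(1-\delta)\mu$ with $\delta=(\beta'-\beta)/(1-\beta)\in(0,1)$ and invoking the multiplicative Chernoff bound $\pr{H_j\le(1-\delta)\mu}\le e^{-\delta^2\mu/2}$, the exponent evaluates to $\delta^2\mu/2=\frac{(\beta'-\beta)^2 n'}{2(1-\beta)}$. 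Multiplying by $b=n/n'$ yields exactly $\err(n,n',\beta,\beta')$, as claimed.

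Item~3 then follows from Item~2 with no further probabilistic work. Using $\size{\committee}\le n'$ from Item~1, the event $\size{\committee\cap\IS}\ge\beta'\size{\committee}$ is equivalent to $\size{\committee\setminus\IS}=\size{\committee}-\size{\committee\cap\IS}\le(1-\beta')\size{\committee}\le(1-\beta')n'$, which is precisely the event already bounded by $\err$ in Item~2; hence its probability is at most $\err$ as well. The step I expect to require the most care is the union-bound argument in Item~2, specifically justifying that it remains valid against a rushing adversary who may adaptively steer which bin becomes lightest — the resolution is exactly the observation above, that the bad event is a deterministic consequence, over every coin toss, of the existence of a lightly loaded bin, which is a statement about the honest randomness alone. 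A secondary technical point is the handling of $b=\ceil{n/n'}$ when $n'\nmid n$: in that case $\mu=(1-\beta)n/\ceil{n/n'}$ drops slightly below $(1-\beta)n'$, and one either assumes divisibility (the standard convention in this setting) or absorbs the lower-order slack, neither of which alters the stated form of $\err$.
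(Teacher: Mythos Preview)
Your proposal is correct. The paper does not supply its own proof of this lemma; it is quoted from \cite{BGK11} (specifically Lemma~2.6 there), so there is no in-paper argument to compare against. Your approach---pigeonhole for Item~1, Chernoff plus union bound over bins for Item~2, and reducing Item~3 to Item~2 via $\size{\committee}\le n'$---is the standard one and matches what one finds in \cite{BGK11}. One cosmetic point: in the Item~3 reduction you write ``is equivalent to'' for the chain $\size{\committee\setminus\IS}\le(1-\beta')\size{\committee}\le(1-\beta')n'$, but the second inequality is only an implication (using Item~1), not an equivalence; this does not affect the argument, since an implication is all you need.
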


\noindent
The committee selected by the lightest-bin protocol of Feige has the desired fraction of honest parties except for probability $\err(n,n',\beta,\beta')$. We capture this guarantee by defining the following ideal functionality.

\begin{definition}[the committee-election functionality]\label{def:func-felect}~
\begin{itemize}
\item[\emph{Input:}]
In the committee-election functionality $\felect(n,n',\beta')$ parties have no private input. The common input \rchanged{include}{includes} the number of parties $n$, an upper bound $n'$ on the size of the elected committee (a subset of parties), and a bound on the fraction of corrupted parties in that committee.\footnote{\rchanged{The functionality knows the subset $\I$ of corrupted parties.}{We note that the functionality is corruption aware, \ie its actions are depended on knowing the subset $\I$ of corrupted parties. This is standard and does not pose any restrictions since Feige's protocol does not require knowing $\I$.}}

\item[\emph{Computation:}]
Let $k=n/n'$, the functionality defines $k$ subsets $\committee_1,\ldots,\committee_k$ (all initially empty). The functionality randomly selects $k\cdot \left\lceil (1-\beta')n'\right\rceil$ honest parties and partitions them evenly into the $k$ subsets. Next, the functionality assigns each of the remaining honest parties a random subset $\committee_i$. Then, the functionality informs the adversary of the resulting partition.

\item[\emph{Adversary's choice:}]
The adversary chooses a committee $\committee_i$ for some $i\in[k]$ and a subset $\JS\subseteq \I$, such that $\size{\committee_i}+\size{\JS}\le n'$.

\item[\emph{Output:}]
The output of the functionality is the subset $\committee = \committee_i \cup \JS$.
\end{itemize}
\end{definition}

%\begin{definition}[the committee-election functionality]\label{def:func-felect}\ \\
%\noindent
%{\bf Input:} In the committee-election functionality $\felect(n,n',\beta')$ parties have no private input. The common input \rchanged{include}{includes} the number of parties $n$, an upper bound $n'$ on the size of the elected committee (a subset of parties), and a bound on the fraction of corrupted parties in that committee.\footnote{The functionality knows the subset $\I$ of corrupted parties.}  \\
%\noindent
%{\bf Computation:}
%Let $k=n/n'$, the functionality defines $k$ subsets $\committee_1,\ldots,\committee_k$ (all initially empty). The functionality randomly selects $k\cdot \left\lceil (1-\beta')n'\right\rceil$ honest parties and partitions them evenly into the $k$ subsets. Next, the functionality assigns each of the remaining honest parties a random subset $\committee_i$. Then, the functionality informs the adversary of the resulting partition.
%
%\noindent
%{\bf Adversary's choice:}
%The adversary chooses a committee $\committee_i$ for some $i\in[k]$ and a subset $\JS\subseteq \I$, such that $\size{\committee_i}+\size{\JS}\le n'$.
%
%\noindent
%{\bf Output:}
%The output of the functionality is the subset $\committee = \committee_i \cup \JS$.
%\end{definition}

The following corollary follows immediately from \cref{lem:Feige}.
%\rdeleted{ when $n'=\omega(\log(\secParam))$; for $n'=O(\log(\secParam))$ we simply take the entire set of $n$ parties as the committee.} \rnote{adjusted the corollary}
\begin{corollary}\label{cor:elect}
Let $n'<n$ be integers, let $0<\beta<\beta'<1$ be constants, and let $t=\beta n$. Then, Feige's lightest-bin protocol $(\err(n,n',\beta,\beta'),t)$-securely computes $\felect(n,n',\beta')$ with full security. In particular, for $n'=\omega(\log(\secParam))$ Feige's protocol $t$-securely computes $\felect(n,n',\beta')$ with full security.
\end{corollary}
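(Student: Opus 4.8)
The plan is to exhibit, for every malicious real-model adversary $\Adv$ corrupting a set $\IS$ of size $t=\beta n$, an ideal-model simulator $\Sim$ for the full-security computation of $\felect(n,n',\beta')$, and to bound the statistical distance between the two executions by $\err=\err(n,n',\beta,\beta')$ using \cref{lem:Feige}. Since Feige's protocol is a single broadcast round and is public-coin, the simulation is direct: $\Sim$ receives from the trusted party the honest-party partition $\committee_1,\dots,\committee_k$ (where $k=n/n'$), hands these bin choices to the internal copy of $\Adv$ as the broadcast messages of the honest parties, and---exploiting that $\Adv$ is rushing---collects the bins chosen by the corrupted parties. $\Sim$ then computes the lightest bin $\is$ of the resulting (honest $+$ corrupted) assignment, sets $\JS$ to be the corrupted parties that landed in bin $\is$, sends the pair $(\committee_{\is},\JS)$ as its adversary's choice to the trusted party, and outputs whatever $\Adv$ outputs.

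First I would check that $\Sim$'s message $(\committee_{\is},\JS)$ is always a legal input to $\felect$, i.e.\ that $\size{\committee_{\is}}+\size{\JS}\le n'$. This quantity is exactly the size of the lightest bin in the emulated execution, which is at most $n'$ by item~1 of \cref{lem:Feige}; hence the simulator never gets stuck. Consequently the honest parties' output in the ideal execution, $\committee_{\is}\cup\JS$, equals the committee that the lightest-bin rule would output in the emulated run, so the joint distribution of (adversary output, honest output) produced by $\Sim$ is, up to the distribution of the honest partition, identical to $\REAL_{\pi,\IS,\Adv(\aux)}$.

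The main obstacle is reconciling the distribution of the honest-party partition in the two worlds: in the real protocol each honest party picks its bin uniformly and independently, whereas the trusted party produces the ``guaranteed'' partition in which every bin already contains at least $\ceil{(1-\beta')n'}$ honest parties. The plan here is an identical-until-bad argument. Let $\mathsf{Bad}$ be the event that a uniform honest assignment yields a lightest bin with at most $(1-\beta')n'$ honest parties, equivalently that the elected committee violates the honest-fraction guarantee; by items~2--3 of \cref{lem:Feige} we have $\pr{\mathsf{Bad}}<\err$. Conditioned on $\neg\mathsf{Bad}$ the uniform assignment already meets the guarantee, so the uniform and the guaranteed honest-partition distributions can be coupled to agree, and the whole executions (views and outputs) then coincide. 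Hence the two executions differ only on an event of probability $<\err$, giving
\[
\set{\REAL_{\pi,\IS,\Adv(\aux)}(\secParam)}\ \deltaclose\ \set{\IDEAL^{\full}_{\felect,\IS,\Sim(\aux)}(\secParam)}
\]
with $\delta=\err$, which is the claimed $(\err,t)$-security. Nailing down this coupling (showing the guaranteed partition coincides with the uniform one on $\neg\mathsf{Bad}$) is the crux of the ``follows immediately'' claim.

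Finally, for the ``in particular'' clause I would substitute the asymptotics into $\err=\frac{n}{n'}\cdot e^{-\frac{(\beta'-\beta)^2 n'}{2(1-\beta)}}$: since $n=\secParam^{c}\in\poly(\secParam)$ we have $\frac{n}{n'}\le n\le\poly(\secParam)$, while $n'=\omega(\log(\secParam))$ forces the exponential factor $e^{-\Theta(n')}$ below any inverse polynomial, so their product $\err$ is negligible in $\secParam$. Plugging $\delta=\err=\negl(\secParam)$ into the previous paragraph yields that Feige's protocol $t$-securely computes $\felect(n,n',\beta')$ with full security, completing the corollary.
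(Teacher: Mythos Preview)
The paper supplies no proof beyond the sentence ``follows immediately from \cref{lem:Feige},'' so your detailed simulation-plus-coupling argument is the natural expansion of that remark, and your simulator is the right one.

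The gap is in the coupling step. You claim that, conditioned on $\neg\mathsf{Bad}$, the uniform honest-bin assignment and $\felect$'s guaranteed assignment ``can be coupled to agree.'' This fails on two counts. First, your $\mathsf{Bad}$ event concerns only the lightest (elected) bin, but the rushing adversary in Feige's protocol sees the \emph{entire} honest partition before placing its own parties; even on $\neg\mathsf{Bad}$ some non-elected bin may have fewer than $\ceil{(1-\beta')n'}$ honest parties, which is impossible under $\felect$. Second, even if you strengthen $\mathsf{Bad}$ to ``some bin has too few honest parties,'' the uniform distribution conditioned on the complementary event is \emph{not} equal to $\felect$'s distribution: with two bins, four honest parties and threshold one, the conditioned uniform puts mass $4{:}6{:}4$ on bin-size profiles $(1,3),(2,2),(3,1)$, whereas $\felect$ (force one party per bin, then scatter the remaining two uniformly) gives $1{:}2{:}1$. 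Hence the two executions do not coincide on $\neg\mathsf{Bad}$, and the $\err$ bound on statistical distance does not follow from your argument as written.

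What the paper presumably intends is that $\felect$ is \emph{defined} to encapsulate exactly the guarantee Feige's protocol delivers except with probability $\err$, so the corollary is meant to be essentially definitional rather than to require a tight distributional coupling. A cleaner formalization would have $\felect$ sample the honest partition uniformly and hand the adversary an arbitrary committee on the bad event; then your identical-until-bad argument becomes exact, with the bad-event probability bounded by \cref{lem:Feige}. As the functionality is written, however, closing the gap would require bounding the total-variation distance between the two partition distributions directly, which \cref{lem:Feige} does not provide.
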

%\begin{corollary}\label{cor:elect}
%Let $n'=\min(n,\psi(\secParam))$ with $\psi=\omega(\log(\secParam))$, let $0<\beta<\beta'<1$ be constants, and let $t=\beta n$. Then, Feige's lightest-bin protocol $t$-securely computes $\felect(n,n',\beta')$ with full security.
%\end{corollary}

%\rdeleted{
%\begin{remark}\label{remark:election_arbitrary_parties}
%We can simplify the condition $n'=\min(n,\psi(\secParam))$ and simply require $n'=\omega(\log(\secParam))$, by using the following trick in case $n=O(\log(\secParam))$. Let each party play $n_1=\ceil{(\log(\secParam)\cdot\logstar(\secParam))/ n}$ virtual parties, and consider the set of $n\cdot n_1$ virtual parties (this set has the same fraction of honest parties as in the original party-set). The corollary applies for $n'=\omega(\log(\secParam))$ satisfying $n'<n\cdot n_1$.
%\end{remark}
%}
%\rnote{I don't think this is needed anymore}

%%%%%%%%%%%%%%%%%%%%%%%%%%%%%%%%%%%%%%%%%%%%%%%%%%%%%%%%%%%%
%%%%%%%%%%%%% Fairness with Restricted Identifiable Abort %%
%%%%%%%%%%%%%%%%%%%%%%%%%%%%%%%%%%%%%%%%%%%%%%%%%%%%%%%%%%%%
\subsection{Fairness with Restricted Identifiable Abort (With Inputs)}\label{sec:restricted_idabort}
Delegating computation to a small committee will be a useful technique throughout this work. In such a computation, we wish to allow non-members of the committee to monitor the execution of the protocol by committee members; however, non-members should never be able to disrupt the execution themselves.
To capture the required security, we introduce a variant of fairness with identifiable abort that will be used as an intermediate step in our constructions.

This definition captures the delegation of the computation to smaller committees that independently carry out the (same) fair computation, such that the adversary can only abort the computation of committees with corrupted parties.
%\rdeleted{Looking ahead, in the honest-majority setting we will consider a vector of committees $\vCS=(\committee_1,\ldots,\committee_\ell)$ that run in parallel, whereas in the no-honest-majority case we will consider a single committee $\committee$ (\ie $\ell=1$).} \rnote{this is only relevant to the no-input case}

In \cref{sec:fairtofull_noinput}, we use this security notion for the case of no-input functionalities. Clearly, this is a special case captured by the general definition.
Nevertheless, for clarity, we specify the no-input variant of this notion in \cref{sec:restricted_idabort_noinput}.
We first present a variant of the definition that does not require fairness, which, looking ahead, will turn out to be useful in some of the applications.

\paragraph{Ideal model with restricted identifiable abort.}
An ideal computation, with $\vCS$-identifiable-abort, of an $n$-party functionality $f$ on input $\vx=(x_1,\ldots,x_n)$ for parties $(\Party_1,\ldots,\Party_n)$ \wrt $\vCS=(\committee_1,\ldots,\committee_\ell)$, where $\committee_1,\ldots,\committee_\ell\subseteq [n]$, in the presence of an ideal-model adversary $\Adv$ controlling the parties indexed by $\IS\subseteq[n]$, proceeds via the following steps.
\begin{itemize}
    \item[\emph{Sending inputs to trusted party}:]
    An honest party $\Party_i$ sends its input $x_i$ to the trusted party.
    The adversary may send to the trusted party arbitrary inputs for the corrupted parties. Let $x_i'$ be the value actually sent as the input of party $\Party_i$.

	\item[\emph{Early abort}:]
    If there exists a corrupted party in every subset $\committee_j$, \ie if $\IS\cap\committee_j\neq \emptyset$ for every $j\in[\ell]$, then the adversary $\Adv$ can abort the computation by choosing an index of a corrupted party $\is_j\in\IS \cap \committee_j$ for every $j\in [\ell]$ and sending the abort message $(\abort,\sset{\is_1,\ldots,\is_\ell})$ to the trusted party. In case of such abort, the trusted party sends the message $(\bot,\sset{\is_1,\ldots,\is_\ell})$ to all parties and halts.

    \item[\emph{Trusted party answers adversary}:]
    If $\committee_j\subseteq\IS$ for some $j\in[\ell]$, the trusted party sends all the input values $x'_1,\ldots,x'_n$ to the adversary, waits to receive from the adversary output values $y'_1,\ldots,y'_n$, sends $y'_i$ to $\Party_i$ and proceeds to the \emph{Outputs} step.
    Otherwise, the trusted party computes $(y_1, \ldots, y_n)=f(x_1', \ldots, x_n')$ and sends $y_i$ to party $\Party_i$ for every $i\in\IS$.

	\item[\emph{Late abort}:]
    If there exists a corrupted party in every subset $\committee_j$, then the adversary $\Adv$ can abort the computation (\emph{after seeing the outputs of corrupted parties}) by choosing an index $\is_j\in\IS \cap \committee_j$ for every $j\in [\ell]$ and sending the abort message $(\abort,\sset{\is_1,\ldots,\is_\ell})$ to the trusted party. In case of such abort, the trusted party sends the message $(\bot,\sset{\is_1,\ldots,\is_\ell})$ to all parties and halts. Otherwise, the adversary sends a $\continue$ message to the trusted party.
		
    \item[\emph{Trusted party answers remaining parties}:]
	The trusted party sends $y_i$ to $\Party_i$ for every $i\notin\IS$.

    \item[\emph{Outputs}:]
    Honest parties always output the message received from the trusted party and the corrupted parties output nothing.
    The adversary $\Adv$ outputs an arbitrary function of the initial inputs $\set{x_i}_{i\in\IS}$, the messages received by the corrupted parties from the trusted party and its auxiliary input.
\end{itemize}

\begin{definition}[ideal-model computation with restricted identifiable abort]\label{def:ideal_ridabort}
Let $f\colon(\zs)^n \mapsto (\zs)^n$ be an $n$-party functionality, let $\IS\subseteq [n]$, and let $\vCS=(\committee_1,\ldots,\committee_\ell)$, where $\committee_1,\ldots,\committee_\ell\subseteq [n]$.
The {\sf joint execution of $f$ with $\vCS$ under $(\Adv, I)$ in the ideal model}, on input vector $\vx=(x_1, \ldots, x_n)$, auxiliary input $\aux$ to $\Adv$, and security parameter $\secParam$, denoted $\IDEAL^{\vCS\mhyphen\idabort}_{f,\IS,\Adv(\aux)}(\vx,\secParam)$, is defined as the output vector of $\Party_1, \ldots, \Party_n$ and $\Adv(\aux)$ resulting from the above described ideal process.
\end{definition}

\noindent
To keep notation short, in case $\vCS = \set{\committee_1}$, i.e., $\ell=1$, we denote $\committee_1\mhyphen\idabort$ instead of $\vCS\mhyphen\idabort$.
The ideal model presented above defines security with $\vCS$-identifiable-abort. We define the fair variant of this ideal computation as follows:

\paragraph{Ideal model with fairness and $\vCS$-identifiable-abort.}
This ideal model proceeds as in \cref{def:ideal_ridabort} with the exception that in step \emph{Late abort}, the adversary is not allowed to abort the computation. This ideal computation is denoted as $\IDEAL^{\vCS\mhyphen\idfair}_{f,\IS,\Adv(\aux)}(\vx,\secParam)$.

\paragraph{Security definitions.}
We present the security definition according to the ideal model computing $f$ with fairness and $\vCS$-identifiable-abort. The definitions for security with $\vCS$-identifiable-abort follow in a similar way.

\begin{definition}\label{def:SecureProtocol_ridfair}
Let $f\colon(\zs)^n \mapsto (\zs)^n$ be an $n$-party functionality and let $\pi$ be a probabilistic polynomial-time protocol computing $f$.
The {\sf protocol $\pi$ $(\delta,t)$-securely computes $f$ with fairness and $(\ell, n',t')$-identifiable-abort (and computational security)}, if for every probabilistic polynomial-time real-model adversary \Adv, there exists a probabilistic polynomial-time adversary $\Sim$ for the ideal model, such that for every $\IS\subseteq [n]$ of size at most $t$ and subsets $\committee_1,\ldots,\committee_\ell\subseteq [n]$ satisfying $\ssize{\committee_j}=n'$ and $\ssize{\IS\cap\committee_j}\leq t'$, for every $j\in[\ell]$, it holds that
%{\small{
\[
\set{\bigbrack \REAL_{\pi, \IS, \Adv(\aux)}(\vx, \secParam)}_{(\vx, \aux)\in(\zs)^{n+1}, \secParam\in\N}
\deltaci
\set{\bigbrack \IDEAL^{(\committee_1,\ldots,\committee_\ell)\mhyphen\idfair}_{f, \IS, \Sim(\aux)}(\vx, \secParam)}_{(\vx, \aux)\in(\zs)^{n+1}, \secParam\in\N}.
\]
%}}
If $\delta$ is negligible, we say that $\pi$ is a protocol that $t$-securely computes $f$ with fairness and $(\ell, n',t')$-identifiable-abort and computational security.

The {\sf protocol $\pi$ $(\delta,t)$-securely computes $f$ with fairness and $(\ell, n',t')$-identifiable-abort (and
%information-theoretic
statistical security)}, if for every real-model adversary \Adv, there exists an adversary $\Sim$ for the ideal model, whose running time is polynomial in the running time of $\Adv$, such that for every $\IS\subseteq [n]$ of size at most $t$, and subsets $\committee_1,\ldots,\committee_\ell\subseteq [n]$ satisfying $\ssize{\committee_j}=n'$ and $\ssize{\IS\cap\committee_j}\leq t'$, for every $j\in[\ell]$, it holds that
%{\small{
\[
\set{\bigbrack \REAL_{\pi, \IS, \Adv(\aux)}(\vx, \secParam)}_{(\vx, \aux)\in(\zs)^{n+1}, \secParam\in\N}
\deltaclose
\set{\bigbrack \IDEAL^{(\committee_1,\ldots,\committee_\ell)\mhyphen\idfair}_{f, \IS, \Sim(\aux)}(\vx, \secParam)}_{(\vx, \aux)\in(\zs)^{n+1}, \secParam\in\N}.
\]
%}}
If $\delta$ is negligible, we say that $\pi$ is a protocol that $t$-securely computes $f$ with fairness and $(\ell, n',t')$-identifiable-abort and statistical security.

Similarly, {\sf $\pi$ is a protocol that $(\delta,t)$-securely computes $f$ with fairness and $(\ell, n',t')$-identifiable-abort (and perfect security)}, if
%{\small{
\[
\set{\bigbrack \REAL_{\pi, \IS, \Adv(\aux)}(\vx, \secParam)}_{(\vx, \aux)\in(\zs)^{n+1}, \secParam\in\N}
\deltaequiv
\set{\bigbrack \IDEAL^{(\committee_1,\ldots,\committee_\ell)\mhyphen\idfair}_{f, \IS, \Sim(\aux)}(\vx, \secParam)}_{(\vx, \aux)\in(\zs)^{n+1}, \secParam\in\N}.
\]
%}}
If $\delta=0$, we say that $\pi$ is a protocol that $t$-securely computes $f$ with fairness and $(\ell, n',t')$-identifiable-abort and perfect security.
\end{definition}

%%%%%%%%%%%%%%%%%%%%%%%%%%%%%%%%%%%%%%%%%%%%%%%%%%%%%%%%%%%%%%%
%%%%%%%%%%%%%%%%%%%%%%%%%%%%%%%%%%%%%%%%%%%%%%%%%%%%%%%%%%%%%
\subsubsection{Ideal functionalities delegated to committees}\label{sec:ideal_functionalities}
We prove \radded{some of} our constructions \radded{(\cref{prot:fair_to_ridfair_noinput,prot:fair_to_ridfair})} in a hybrid model that computes (a variant of) the augmented coin-tossing functionality $\faugct$ (see \cite{Goldreich04,CLOS02}) and the one-to-many zero-knowledge proof of knowledge functionality $\zkmany$ (see \cite{CLOS02,Pass04}).

\begin{definition}[delegated augmented coin-tossing]\label{def:augFunc}
The augmented coin-tossing functionality, denoted $\faugct^{\committee}$, is an $n$-party no-input functionality. It is parametrized by a subset $\committee\subseteq[n]$ of size $n'$ and a commitment scheme $\Com$. The output of each party $\Party_i\in\committee$ in this functionality is a private random string $\rndaugct_i$ together with a decommitment information $\decomval_i$, and in addition, all $n$ parties receive as public output a vector of commitments to the random strings. That is,
\[
\faugct^{\committee}(\lambda,\ldots,\lambda) = ((y_1,\vcomaugct),\ldots,(y_n,\vcomaugct)),
\text{ where }
y_i = \left\{
        \begin{array}{ll}
          (\rndaugct_i,\decomaugct_i), & \text{if } \Party_i\in\committee \\
          \lambda, & \text{otherwise}
        \end{array}
      \right.
\]
and $\vcomaugct=(\comaugct_1,\ldots,\comaugct_{n'})$ such that $\comaugct_j=\Com(\rndaugct_j;\decomaugct_j)$ for every $\Party_j\in\committee$. %The ideal-model computation considered for $\faugct^{\committee}$ is with $\committee$-identifiable-abort.
\end{definition}

\begin{definition}[one-to-many zero-knowledge]\label{def:zkmany}
The one-to-many zero-knowledge proof of knowledge functionality $\zkmany$ for a language $L$ is an $n$-party functionality with a special party, called the prover. The common input of all parties is a statement $x$. The prover sends a statement-witness pair $(x,w)$ to the functionality.  The common output is $(x,1)$ if $(x,w)\in R_L$ and $(x,0)$ otherwise.

\end{definition}

\begin{lemma}
\label{lem:instantiating_zkmanmy_noinput}
Assume that \TDP and \CRH exist and let \rchanged{$m=m(\secParam)$ be an a priori fixed polynomial}{$t<n$}.
Then,
\begin{enumerate}
    \item
    $\zkmany$ can be $t$-securely computed with full security using a constant-round protocol.
    \item
    Let $\committee\subseteq[n]$, then $\faugct^{\committee}$ can be $t$-securely computed with $\committee$-identifiable-abort using a constant-round protocol.
\end{enumerate}
\end{lemma}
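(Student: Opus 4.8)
The plan is to realize both functionalities by compiling standard two-party zero-knowledge machinery into the one-to-many setting over the broadcast channel, following the approach of \cite{CLOS02,Pass04}. I would rely on two building blocks: (i) a constant-round, public-coin zero-knowledge argument of knowledge for \NP, which exists under \TDP, and whose constant round complexity together with negligible soundness and zero-knowledge errors is obtained using \CRH (as in \cite{Pass04}); and (ii) a non-interactive statistically-binding commitment scheme $\Com$, which follows from one-way functions \cite{Naor91} and hence from \TDP. The key structural point I would exploit is that in both functionalities all private inputs (if any) belong to a single designated party, and every output is a public predicate of the broadcast transcript, so that all honest parties necessarily agree on the outcome.

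For Item~1, I would compile the argument of knowledge $(P,V)$ into a protocol for $\zkmany$ as follows. The prover broadcasts each of its messages in turn. The public-coin verifier challenges are generated jointly by all $n$ parties via a commit-then-reveal coin-toss: in each verifier round every party first broadcasts a commitment to a random share and then opens it, with the challenge set to the combination (\eg XOR) of the opened shares (a share that is not opened is treated as a default value). Finally every party computes the verdict as the deterministic decision of $V$ on the resulting transcript, and outputs $(x,1)$ or $(x,0)$ accordingly. Full security holds because this verdict is a function of the broadcast transcript alone, so all honest parties output the same bit; moreover a corrupt prover that aborts or produces a rejecting transcript is mapped in the ideal world to a prover sending a non-witness, yielding output $(x,0)$, and corrupt non-prover parties cannot prevent the honest parties from reaching a verdict. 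For simulation I would treat the two cases separately: when the prover is corrupted, I extract its witness using the argument-of-knowledge extractor (rewinding the honest parties' shares in the reveal round to obtain two accepting transcripts on the same prover first message) and feed it to the trusted party; when the prover is honest, I run the zero-knowledge simulator, using control over the honest parties' committed shares (and rewinding) to force the challenge to the value for which a transcript can be faked. Soundness against a rushing adversary uses the fact that $t<n$, so at least one honest party contributes a committed, hidden share to every challenge, which the corrupt prover cannot predict when committing to its first message.

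For Item~2, I would reduce $\faugct^{\committee}$ to $\zkmany$ together with the commitment scheme $\Com$. Each committee member $\Party_i\in\committee$ samples a uniform $\rndaugct_i$ with decommitment $\decomaugct_i$, broadcasts the commitment $\comaugct_i=\Com(\rndaugct_i;\decomaugct_i)$, and then invokes $\zkmany$ to prove knowledge of a valid opening of $\comaugct_i$ (\ie of a pair $(\rndaugct_i,\decomaugct_i)$ with $\comaugct_i=\Com(\rndaugct_i;\decomaugct_i)$). Non-committee parties take no part beyond observing the broadcasts and the publicly delivered $\zkmany$ verdicts. If every committee member's proof succeeds, all $n$ parties output the commitment vector $\vcomaugct=(\comaugct_1,\ldots,\comaugct_{n'})$ and each member additionally outputs its private pair $(\rndaugct_i,\decomaugct_i)$; if some committee member aborts or its $\zkmany$ proof rejects, that member is a corrupted party and is identified, and the protocol aborts with its index. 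This yields exactly $\committee\mhyphen\idabort$ security: only a corrupted committee member can cause an abort (and is then identified), while a fully honest committee guarantees output, and full security is not attainable here precisely because a corrupt member can refuse to produce a well-formed committed value. Simulation is immediate from the $\zkmany$ ideal calls: the simulator extracts corrupt members' committed values through the $\zkmany$ functionality and samples honest members' values itself, relying on the hiding of $\Com$ to argue indistinguishability of the honest commitments.

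The main obstacle I anticipate is Item~1, specifically the joint challenge generation: simultaneously guaranteeing (a) argument-of-knowledge soundness against a corrupt prover that may collude with the rushing corrupt verifiers to bias the challenge, and (b) zero-knowledge when the prover is honest, which requires the simulator to steer the challenge. I would resolve this via the commit-then-reveal structure plus rewinding of the honest parties' shares, exactly as in the stand-alone reductions of \cite{CLOS02,Pass04,Goldreich04}; the remaining work is the routine but careful bookkeeping of the hybrid argument and of mapping real-world aborts to the corresponding ideal-world behaviour.
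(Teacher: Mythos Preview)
Your approach to Item~1 is along the right lines and close to what the paper does: the paper simply cites \cite{Pass04} (in the synchronous broadcast adaptation of \cite{BeimelOO15}) for a constant-round, fully secure realization of $\zkmany$. Your commit-then-reveal challenge generation would still need care---a rushing adversary that selectively refuses to open can steer the challenge among $2^{\size{\IS}}$ values, and forcing the challenge in the honest-prover simulation interacts with extraction in the corrupt-prover case---but these are exactly the subtleties that \cite{Pass04} addresses, so citing it is the cleaner route.

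There is, however, a genuine gap in your Item~2. The functionality $\faugct^{\committee}$ is a \emph{no-input} functionality (\cref{def:augFunc}): the ideal functionality itself samples each $\rndaugct_i$ uniformly and hands it to $\Party_i$. Your protocol instead lets each committee member \emph{choose} its own $\rndaugct_i$, commit to it, and prove knowledge of an opening. A corrupted committee member can then fix $\rndaugct_i$ arbitrarily; your simulator ``extracts corrupt members' committed values'' and feeds those to the environment, but in the ideal world the corrupt member's string is uniformly random, not adversarially chosen. Hence the real and ideal outputs differ and your protocol does not realize $\faugct^{\committee}$.

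The paper's protocol (following \cite{CLOS02}) repairs this by having \emph{every} committee member contribute a committed share to \emph{every} party's string: each $\Party'_i\in\committee$ commits to $r_{ij}$ and proves knowledge via $\zkmany$; then all parties except $\Party'_j$ open $r_{ij}$ publicly, and $\Party'_j$ sets $\rndaugct_j=\bigoplus_i r_{ij}$ while keeping $(r_{jj},\decomval_{jj})$ as its private decommitment. Since at least one contributor is honest whenever $\committee\not\subseteq\IS$, each $\rndaugct_j$ is uniform even when $\Party'_j$ is corrupted, matching the functionality.
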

\begin{proof}
The first part of the lemma follows from \citet{Pass04}, who showed that under the assumptions in the lemma there exist constant-round protocols for a simulation-sound,\footnote{Zero-knowledge protocols are simulation sound if the soundness of each of the protocols is preserved even when the other protocol is simulated at the same time with the roles of the prover and verifier reversed.} bounded-concurrent composition of $\zkmany$. We consider the variant of the protocol by \citet{BeimelOO15} that adjusted the protocol from~\cite{Pass04}, originally designed for an asynchronous point-to-point network, to the synchronous setting where all messages are sent over a broadcast channel.

We now show that $\faugct^{\committee}$ can be securely computed with $\committee$-identifiable abort in the $\zkmany$-hybrid model. We use the protocol from \citet{CLOS02} with a small adjustment. In order to obtain a committed random string for $\Party'_j\in\committee$, all parties in $\committee$ proceed as follows.
\begin{enumerate}
    \item
    Every party $\Party'_i\in\committee$ uniformly samples a random string $r_{ij}$, computes $\comval_{ij}=\Com(r_{ij};\decomval_{ij})$, and sends $(\comval_{ij},(r_{ij},\decomval_{ij}))$ to $\zkmany$ (parameterized with the NP-relation consisting of statement-witness pairs $(\comval,(r,\decomval))$ that satisfy $\comval=\Com(r;\decomval)$). In case $\zkmany$ returns $(\comval,0)$ for some $\is\in\committee$, all parties output $(\bot,\is)$ (for the smallest such $\is$) and halt.
    \item
    Every party $\Party'_i\in\committee$, except for $\Party'_j$, broadcasts $(r_{ij},\decomval_{ij})$, and all parties verify that $\comval_{ij}=\Com(r_{ij};\decomval_{ij})$; if not the parties output $(\bot,i)$ (for the smallest such $i$) and halt.
    \item
    Party $\Party'_j$ computes $r_j=\bigoplus_i r_{ij}$ and outputs $(r_j,(r_{jj},\decomval_{jj}))$ (\ie the random string is $r_j$ and the decommitment is $(r_{jj},\decomval_{jj})$), whereas all other parties compute $r_j^\ast=\bigoplus_{i\neq j} r_{ij}$ and output $(\comval_{jj}, r_j^\ast)$ as the commitment for $r_j$.
\end{enumerate}
The proof of the protocol follows from~\cite{CLOS02}.
\end{proof}

\section{Fairness to Full Security for No-Input Functionalities}\label{sec:fairtofull_noinput}

In this section, we present a reduction from a fully secure computation to a fair computation for functions without inputs (\eg coin flipping).
This serves as a first step before presenting the more complex case of functionalities with input in \cref{sec:fair_to_full_withinputs}.
In \cref{sec:fairtofull_noinput_noHM}, we consider a reduction that does not assume an honest majority, and in \cref{sec:fairtofull_noinput_HM} a more efficient reduction in the honest-majority setting.
In \cref{sec:apps_noinput}, we show applications regarding coin-flipping protocols.

%\enote{why the change in terms? transformation vs.~reduction. This may confuse the reader to think tat something is different here.} \rnote{changed transformation to reductions, what do you think?}
Our reductions are two-phased. Initially, we show how to reduce full security to fairness with restricted identifiable abort (defined in \cref{sec:restricted_idabort}, see \cref{sec:restricted_idabort_noinput} for the no-input case) in a round-efficient manner. Next, we show how to reduce fairness with restricted identifiable abort to fairness.
%Our transformations are two-phased. In \cref{sec:restricted_idabort_noinput} an intermediate security notion of \emph{fairness with restricted identifiable abort} is defined.
%Next, in both settings, we show how to transform fairness with restricted identifiable abort to full security in a round-efficient manner.

Recall that if a no-input functionality $f$ has public output, then it can be defined for any number of parties. We denote by $f^n$ the functionality $f$ when defined for $n$ parties, and show how to compile any fair protocol computing $f^{n'}$ to a protocol that fairly computes $f^n$ with restricted identifiable abort (for $n'<n$).
For integers $n'<n$ and for $0<\beta<\beta'<1$ we define $\err(n,n',\beta,\beta')=\frac{n}{n'} \cdot e^{-\frac{(\beta'-\beta)^2 n'}{2(1-\beta)}}$.
In addition, denote by $\foutss{f^n}{t}{n}$ the $n$-party functionality that computes $f^n$ and outputs shares of the result using a $(t+1)$-out-of-$n$ error-correcting secret-sharing scheme (ECSS, see \cref{def:ECSS}).
We prove the following theorem.

\begin{theorem}[restating \cref{thm:intro:mainInlessfair_to_full}]\label{thm:mainThm_fairtofull_noinput}
Assume that \TDP and \CRH exist.
Let $f$ be a no-input functionality with public output, let $n'<n$ be integers, let $0<\beta<\beta'<1$, and let $t=\beta n$ and $t'=\beta' n'$.
\begin{enumerate}
    \item
    If $f^{n'}$ can be \rchanged{$(\delta,t')$}{$(\delta',t')$}-securely computed with fairness by an \rchanged{$r$}{$r'$}-round protocol, then $f^n$ can be \rchanged{$(t'\cdot\delta + \err(n,n',\beta,\beta'),t)$}{$(t'\cdot\delta' + \err(n,n',\beta,\beta'),t)$}-securely computed with full security by an \rchanged{$O(t'\cdot r)$}{$O(t'\cdot r')$}-round protocol.
    \item \label{thm:mainThm_fairtofull_noinput_step2}
    If $f^{n'}$ can be \rchanged{$(\delta,t')$}{$(\delta',t')$}-securely computed with full security by an \rchanged{$r$}{$r'$}-round protocol, then $f^n$ can be \rchanged{$(\delta + \err(n,n',\beta,\beta'),t)$}{$(\delta' + \err(n,n',\beta,\beta'),t)$}-securely computed with full security by an \rchanged{$O(t'+r)$}{$O(t'+r')$}-round protocol.
    \item
    For $\beta'<1/2$ and $n'=\min(n,\log(\secParam)\cdot\varphi(\secParam))$ with $\varphi=1/\sqrt{1-2\beta'}+\Omega(1)$,\footnote{By $\varphi=1/\sqrt{1-2\beta'}+\Omega(1)$ we mean that for sufficiently large $\secParam$ it holds that $\varphi(\secParam)>1/\sqrt{1-2\beta'}$.} the following holds unconditionally.
    If $\foutss{f^{n'}}{t'}{n'}$ can be \rchanged{$(\delta,t')$}{$(\delta',t')$}-securely computed with abort by an \rchanged{$r$}{$r'$}-round protocol, $\ell$-times in parallel, for $\ell=\secParam^\uglyExp$, then $f^n$ can be \rchanged{$(\varphi(\secParam)^2\cdot\ell\cdot\delta + \err(n,n',\beta,\beta'),t)$}{$(\varphi(\secParam)^2\cdot\ell\cdot\delta' + \err(n,n',\beta,\beta'),t)$}-securely computed with full security by an \rchanged{$O(\varphi(\secParam)^2\cdot r)$}{$O(\varphi(\secParam)^2\cdot r')$}-round protocol.
\end{enumerate}
\end{theorem}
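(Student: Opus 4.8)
The plan is to obtain all three assertions through a single two-phase scheme that passes through the intermediate notion of \emph{fairness with restricted identifiable abort} of \cref{def:ideal_ridabort} (in its no-input specialization): first realize full security of $f^n$ in a hybrid model that computes $f^n$ with fairness and $\committee$-identifiable-abort, and then realize that hybrid from a protocol for the smaller party set $n'$. In every item the committee is elected once by Feige's lightest-bin protocol, which by \cref{cor:elect} computes $\felect(n,n',\beta')$ with full security up to error $\err(n,n',\beta,\beta')$; this single invocation is the sole source of the additive $\err$ term, and all error accumulation is tracked through the composition theorem (\cref{prop:Composition}).

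\paragraph{Phase one: full security from restricted identifiable abort.}
Working in the hybrid model that computes $f^n$ with fairness and $\committee$-identifiable-abort, I would have the parties elect $\committee$ of size $n'$ via Feige's protocol, so that outside an event of probability $\err$ we have $\ssize{\IS\cap\committee}\le t'$. The parties then repeatedly invoke the trusted party: in each call either all honest parties receive the correct, unbiasable output and halt, or the adversary aborts while exposing a corrupted $\is\in\IS\cap\committee$, who is deleted from $\committee$. Since only committee members may trigger an abort and each abort removes a distinct corrupted member, at most $t'$ calls occur, after which $\committee\cap\IS=\emptyset$ and output is guaranteed; this gives the $O(t'\cdot r')$-round protocol once Phase two supplies each call in $O(r')$ rounds. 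The simulator runs the ideal adversary, plays the trusted party for each restricted call, and relays the final output, falling back to the no-security clause of \cref{def:ideal_ridabort} on the $\err$-event $\committee\subseteq\IS$. For the \emph{second} assertion this loop collapses: when $\pi'$ is already fully secure, the honest committee members complete the computation regardless of corrupted behavior, so no abort arises, a single committee run (plus output dissemination) suffices, and the round count becomes additive, $O(t'+r')$, with error only $\delta'+\err$.

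\paragraph{Phase two: restricted identifiable abort from fairness.}
Here I would compile the $n'$-party fair protocol $\pi'$ into an $n$-party protocol computing $f^n$ with fairness and $\committee$-identifiable-abort, GMW-style but carried out over the broadcast channel so that every party can monitor $\committee$. Using \cref{lem:instantiating_zkmanmy_noinput}, each committee member first obtains a publicly committed random tape via $\faugct^{\committee}$ (constant round, $\committee$-identifiable-abort), and then emulates $\pi'$, attaching to each broadcast a one-to-many proof through the fully secure $\zkmany$ that the message is the correct next message of $\pi'$ with respect to its committed randomness and the public transcript. A member that fails a proof or halts is identified, yielding a $\committee$-restricted abort, whereas a non-committee party never speaks and so can never abort; since $\faugct^{\committee}$ and $\zkmany$ add only constant overhead, the compiled protocol runs in $O(r')$ rounds. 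Composing Phases one and two through \cref{prop:Composition} delivers the $(t'\cdot\delta'+\err)$-secure, $O(t'\cdot r')$-round protocol of the first assertion. I expect the \textbf{main obstacle} to be the simulation in this phase: proving that the compilation realizes \cref{def:ideal_ridabort} against a rushing adversary, extracting the adversary's effective behavior from the zero-knowledge proofs, and matching the three regimes of the restricted model (full security when $\committee\cap\IS=\emptyset$, fair restricted abort when $\committee$ is mixed, and no security when $\committee\subseteq\IS$).

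\paragraph{The honest-majority item.}
For the \emph{third} assertion, where $\beta'<1/2$, the committee carries an honest majority, which I would exploit to drop both the cryptographic assumptions and identifiability, at the price of parallelism. The engine is error-correcting secret sharing (\cref{def:ECSS}): computing $\foutss{f^{n'}}{t'}{n'}$ merely \emph{with abort} already yields committee-level fairness unconditionally, since an abort leaves the adversary at most $t'$ shares (hence no information, by $t'$-privacy), while a completed call lets the $\ge t'+1$ honest members reconstruct the correct value even against $t'$ faulty shares. To lift this to full security without identification, I would elect one committee of size $\Theta(\varphi(\secParam)\cdot\log(\secParam))$ and run $\foutss{f^{n'}}{t'}{n'}$ in parallel over the family of size-$n'$ sub-committees obtained by excluding each $\log(\secParam)/\varphi(\secParam)$-subset; as the committee has only $O(\log\secParam)$ members, the number of such sub-committees stays polynomial in $\secParam$ (the precise bound being the stated $\ell=\secParam^{\uglyExp}$). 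The hypothesis $\varphi>1/\sqrt{1-2\beta'}$ is exactly what secures the two combinatorial facts that drive termination: no sub-committee is entirely corrupted, and whenever every sub-committee aborts the pool of exposed corrupted parties grows by $\log(\secParam)/\varphi(\secParam)$, so the process halts within $O(\varphi(\secParam)^2)$ iterations. With $\ell$ parallel calls of cost $O(r')$ each, this gives $O(\varphi(\secParam)^2\cdot r')$ rounds, and accumulating $\varphi(\secParam)^2\cdot\ell$ sharing-based computations (each of error $\delta'$) plus Feige's $\err$ through \cref{prop:Composition} gives the claimed $\varphi(\secParam)^2\cdot\ell\cdot\delta'+\err$. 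The points I expect to fight with here are the combinatorial termination-and-agreement argument---guaranteeing that the honest parties converge on a single correctly sampled, unbiased output despite the adversary's freedom to abort sub-committees blindly---and confirming that the construction invokes $\pi'$ only as a black box.
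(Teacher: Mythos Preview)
Your two-phase architecture---elect a committee via Feige, then alternate between a player-elimination outer loop and a GMW-style compiler realizing the $\committee$-restricted hybrid---is exactly the paper's decomposition. For Parts~1 and~2 your plan matches \cref{thm:ridfair_to_full_noinput} together with \cref{thm:idfair_to_ridfair_noinput}/\cref{cor:full_to_full_noinput}. One small correction on Part~2: the loop in Phase one does collapse, but ``no abort arises'' is not the reason the round count is $O(t'+r')$. The adversary can still abort the setup call to $\faugct^{\committee}$ (which is only $\committee$-identifiable-abort, \cref{lem:instantiating_zkmanmy_noinput}), and player elimination on that step is what contributes the additive $t'$; once committed randomness is fixed, the fully secure $\pi'$ runs once.

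For Part~3 there is a genuine gap. You write that the honest majority lets you ``drop \ldots\ identifiability,'' yet your termination argument hinges on the claim that ``whenever every sub-committee aborts the pool of exposed corrupted parties grows by $\log(\secParam)/\varphi(\secParam)$.'' But if the sub-committees compute $\foutss{f^{n'}}{t'}{n'}$ merely with abort, an abort exposes no one: you learn only that \emph{some} corrupted member of each sub-committee aborted, not which, so there is no pool to shrink and the $\varphi(\secParam)^2$-iteration bound does not go through. The paper confronts this head-on: before running the sub-committee executions it compiles the secure-with-abort protocol into one with \emph{identifiable} abort, information-theoretically and black-box, via \cref{lem:abort_to_idabort_noinput_HM} (the honest-majority compilation of \cite{IOZ14,CCGZ17}). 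Only then does the sub-committee combinatorics of \cref{thm:ridfair_to_full_parallel_noinput} (your ``exclude every $\log(\secParam)/\varphi(\secParam)$-subset'' idea) and the ECSS-based compiler of \cref{thm:idfair_to_ridfair_noinput_HM} apply. Also, the combinatorial fact you need about sub-committees is not merely that none is entirely corrupted, but that each retains an \emph{honest majority} (so ECSS reconstruction survives $t'$ faulty shares); this is precisely what the threshold $\varphi>1/\sqrt{1-2\beta'}$ buys (\cref{claim:par_honest_noinput}).
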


The proof of \cref{thm:mainThm_fairtofull_noinput} is given in the sections below, where the first part follows from a combination of \cref{thm:ridfair_to_full_noinput} and \cref{thm:idfair_to_ridfair_noinput}; the second part from \cref{thm:ridfair_to_full_noinput} and \cref{cor:full_to_full_noinput}; and the third part from \cref{thm:ridfair_to_full_parallel_noinput} and \cref{thm:idfair_to_ridfair_noinput_HM}.

\subsection{Fairness to Full Security without an Honest Majority (No Inputs)}\label{sec:fairtofull_noinput_noHM}
We now   present a reduction from full security to fairness for no-input functionalities when an honest majority is not assumed.
In \cref{sec:ridfair_to_full_noinput_noHM}, we show how to compute $f^n$ with full security in the hybrid model computing $f^n$ with fairness and restricted identifiable abort.
In \cref{sec:fair_to_ridfair_noinput_noHM}, we show how to compile a fair protocol for $f^{n'}$ to a fair protocol for $f^n$ with restricted identifiable abort.

\subsubsection{Fairness with Restricted Identifiable Abort to Full Security}\label{sec:ridfair_to_full_noinput_noHM}

We start by showing how to reduce full security to fairness with restricted identifiable abort. A single committee $\committee$ is considered in this setting (\ie $\ell=1$). The idea is quite simple: initially, a committee $\committee$ is elected using Feige's lightest-bin protocol~\cite{Feige99} such that the ratio of corrupted parties in the committee is approximately the same as in the original party-set (see \cref{sec:Feige}).
Next, the parties sequentially call the fair computation with $\committee$-identifiable-abort \radded{(where only corrupted parties in $\committee$ can abort the computation, in which case they are identified by all parties)}, until receiving the output.

\begin{theorem}\label{thm:ridfair_to_full_noinput}
Let $f$ be a no-input, $n$-party functionality with public output, let $n'<n$, let $0<\beta<\beta'<1$, and let $t=\beta n$ and $t'=\beta' n'$.
Then, $f$ can be $(\err(n,n',\beta,\beta'),t)$-securely computed with full security in a hybrid model that computes $f$ with fairness and $(n',t')$-identifiable-abort, by using $t'+1$ sequential calls to the ideal functionality.
\end{theorem}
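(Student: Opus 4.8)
The plan is to construct a protocol in the $(f,\idfair\text{-restricted})$-hybrid model that first elects a committee via $\felect$ and then iteratively invokes the fair-with-restricted-identifiable-abort computation, eliminating an identified corrupted committee member after each abort. The correctness argument rests on a simple counting bound: since the committee $\committee$ has at most $t'=\beta' n'$ corrupted members, each abort costs the adversary (at least) one such member, so after $t'+1$ calls the adversary must either have delivered the output or have run out of corrupted parties to sacrifice — in the latter case $\IS\cap\committee=\emptyset$, and by the definition of $\committee$-identifiable-abort the trusted party then delivers full security (the adversary cannot abort a fully-honest committee). Thus the honest parties obtain the output within $t'+1$ sequential calls, matching the round bound in the statement.

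\textbf{The protocol.} First I would have all $n$ parties invoke $\felect(n,n',\beta')$ to obtain a committee $\committee$ of size $n'$; by \cref{cor:elect}, Feige's lightest-bin protocol realizes this functionality with $(\err(n,n',\beta,\beta'),t)$-full-security, and this is the sole source of error and the sole place where we pay $\err$. I would then maintain a running set of eliminated parties (initially empty) and repeat, for up to $t'+1$ iterations: invoke the trusted party computing $f$ with fairness and $\committee'$-identifiable-abort, where $\committee'$ is the current (non-eliminated) committee. If the call returns a genuine output value $y$, all parties output $y$ and halt; if it returns $(\bot,\is)$ for some $\is\in\committee'\cap\IS$, all parties remove $\is$ from $\committee'$ and proceed to the next iteration. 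Since each abort strictly shrinks $\committee'\cap\IS$ and this set starts with at most $t'$ elements, by iteration $t'+1$ we have $\committee'\cap\IS=\emptyset$, at which point the restricted-identifiable-abort model forbids an abort and guarantees delivery of the correct common output.

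\textbf{The simulator.} For the security proof I would build an ideal-world simulator $\Sim$ that interacts with the full-security trusted party for $f^n$. Conditioned on the (overwhelmingly likely) good event that $\committee$ contains at least one honest party, $\Sim$ runs the hybrid-model adversary internally, simulating the committee-election output and then each call to the fair-with-restricted-identifiable-abort trusted party. Because $f$ is a no-input public-output functionality, $\Sim$ can obtain $y=f^n(\lambda,\ldots,\lambda)$ from its own trusted party once and feed the consistent value $y$ to the adversary in whichever iteration the adversary chooses not to abort; in iterations where the adversary aborts, $\Sim$ simply forwards the $(\bot,\is)$ message and updates the eliminated set exactly as the honest parties would. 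The only event on which the simulation can fail is the committee-election failure event, which occurs with probability at most $\err(n,n',\beta,\beta')$ by \cref{cor:elect}; note that the fully-corrupted-committee branch (which would grant the adversary control of the output) is subsumed in this bad event, since $\committee\subseteq\IS$ requires $\size{\committee\cap\IS}=n'>\beta'n'=t'$. This yields the claimed $(\err(n,n',\beta,\beta'),t)$-full-security.

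\textbf{The main obstacle} I anticipate is bookkeeping the interaction between the committee-election step and the restricted-identifiable-abort ideal calls cleanly within the composition framework, and in particular arguing that the output value the adversary learns from a non-aborting call in the hybrid world is exactly the value $\Sim$ must commit to in the ideal world — this is where the no-input public-output assumption does the real work, since it makes the output a single fixed value independent of party behavior, so there is no subtlety about effective inputs or input-consistency across the sequential calls. The round-complexity claim is then immediate: one round for election plus $t'+1$ sequential ideal calls gives the stated bound.
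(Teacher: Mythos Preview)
Your proposal is correct and follows essentially the same approach as the paper: elect a committee via $\felect$, iterate at most $t'+1$ calls to the fair-with-restricted-identifiable-abort trusted party while eliminating identified corrupted committee members, and simulate by forwarding aborts and querying the full-security trusted party once in the first non-aborting iteration. The only cosmetic difference is that the paper keeps the committee parameter fixed at the originally elected $\committee_1$ throughout (having the lowest-index surviving party simulate the eliminated members in each call), rather than shrinking the committee $\committee'$ across calls as you describe---this sidesteps any ambiguity about whether the $(n',t')$-identifiable-abort hybrid is available for committees of size strictly below $n'$.
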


\begin{proof}
We present the protocol in the hybrid model that computes the committee-election functionality $\felect$ with full security ($\felect$ is defined in \cref{sec:Feige}).

\begin{protocol}(fairness with restricted identifiable abort to full security (no inputs))\label{prot:ridfair_to_full_noinput}
\begin{itemize}
    \item\textbf{Hybrid Model:}
    The protocol is defined in the hybrid model computing $\felect$ with full security, and $f$ with fairness and $(n',t')$-identifiable-abort.
    \item\textbf{Common Input:}
    The values $t',n'\in\N$.
    \item\textbf{The Protocol:}
\end{itemize}
\begin{enumerate}
    \item \label{step:rid_elect_noinput}
    All the parties invoke $\felect(n,n',\beta')$ and elect a committee $\committee_1\subset[n]$ of size $n'$.
    \item
    For $i=1,\ldots,t'+1$ do
    \begin{enumerate}
        \item
        All parties that have not been previously identified call the trusted party computing $(f,\committee_1\mhyphen\idfair)$,
        where the party with the lowest index in $\committee_i$ simulates all parties in $\committee_1\setminus \committee_i$.
        Denote the output $\Party_j$ receives by $y_j$.
        \item
        Every party $\Party_j$ checks if $y_j$ is a valid output, if so $\Party_j$ outputs $y_j$ and halts.
        Otherwise, all parties received $(\bot,\is)$ as output, where $\is\in\committee_1\cap\IS$.
        If $\is\notin\committee_i$ (and so $\Party_{\is}$ is a previously identified corrupted party), then all parties set $\is$ to be the lowest index in $\committee_i$.
        \item
        All parties set $\committee_{i+1}=\committee_i\setminus\set{\is}$.
    \end{enumerate}
\end{enumerate}
\end{protocol}

Let $\Adv$ be an adversary attacking \cref{prot:ridfair_to_full_noinput} and let $\IS$ be the set of corrupted parties.
We construct a simulator $\Sim$ for the ideal model computing $f$ with full security, as follows.
$\Sim$ starts by emulating $\Adv$ on its auxiliary input $\aux$.
Initially, $\Sim$ emulates the committee-election functionality $\felect(n,n',\beta')$; that is, $\Sim$ partitions the honest parties to $n/n'$ random subsets, each containing at least $(1-\beta')\cdot n$ parties, hands them to $\Adv$ and receives back a committee $\committee_1$ of size $n'$ that contains exactly one of the subsets (and does not intersect any of the others).

Next, in every iteration, $\Sim$ simulates towards $\Adv$ the computation of $f$ with $\committee_1$-identifiable-abort.
If $\Adv$ sends the message $(\abort,\is)$ in the \ith iteration, with $\is\in\IS\cap\committee_1$, then $\Sim$ checks if $\Party_{\is}$ has not been previously identified (otherwise set $\is$ to be the smallest element in $\committee_i$), simulates sending the response $(\bot,\is)$ to all parties, sets $\committee_{i+1}=\committee_i\setminus\sset{\is}$ and proceeds to the next iteration.
In the first iteration in which no \abort\ is sent, $\Sim$ calls the trusted party in the fully secure ideal model computing $f$.
Upon receiving the output from its trusted party, $\Sim$ hands it to $\Adv$ as if it were the output of the corrupted parties in the iteration of $\pi$, and outputs whatever $\Adv$ outputs.

The simulation in the hybrid model is perfect since $\Sim$ can perfectly simulate the trusted party for all iterations in which $\abort$ is sent.
Furthermore, in the first iteration for which $\abort$ is not sent, $\Sim$ sends to \Adv the output of the function $f$ as computed in the protocol. After $t'$ iterations it is guaranteed that there are no corrupted parties left in $\committee_1$ and the protocol will complete. By \cref{cor:elect}, Feige's lightest-bin protocol $(\err(n,n',\beta,\beta'),t)$-securely computes $\felect(n,n',\beta')$ with full security and the theorem follows.
\end{proof}

\subsubsection{Fairness to Fairness with Restricted Identifiable Abort}\label{sec:fair_to_ridfair_noinput_noHM}

We next present a reduction from a fair computation with restricted identifiable abort of $f^n$ to a fair computation of $f^{n'}$. More specifically, let $\pi'$ be a fair protocol computing $f^{n'}$ by a subset of $n'$ parties $\committee$. We show that $\pi'$ can be compiled into a protocol $\pi$ that computes $f$ with fairness and $\committee$-identifiable-abort. The underlying idea is to let the committee $\committee$ prove that every step in the execution is correct (in a similar way to the GMW compiler~\cite{GoldreichMW87}) such that when $\pi'$ terminates the parties in $\committee'$ either obtain the output or identify a corrupted party. Next, every party in the committee broadcasts the result and proves that it is indeed the correct result to all $n$ parties.

The above is formally stated in the theorem below, using the following notations. Let $f$ be a no-input functionality with public output, let $t,n'<n$, let $t'<n'$, and let $\committee\subseteq[n]$ of size $n'$.

\begin{theorem}\label{thm:idfair_to_ridfair_noinput}
Assume that \TDP and \CRH exist, and let $f$ be a no--input functionality with public output.
Then, there exists a \ppt algorithm $\CompilerNHMNI{n'}{n}$ such that for any $n'$-party, \rchanged{$r$}{$r'$}-round protocol $\pi'$ computing $f^{n'}$, the protocol $\pi = \CompilerNHMNI{n'}{n}(\pi',\committee)$ is an $n$-party, \rchanged{$O(r)$}{$O(r')$}-round protocol computing $f^n$ with the following guarantee.
If the number of corrupted parties in $\committee$ is at most $t'$, and $\pi'$ is a protocol that \rchanged{$(\delta,t')$}{$(\delta',t')$}-securely computes $f^{n'}$ with fairness, then $\pi$ is a protocol that \rchanged{$(\delta,t)$}{$(\delta',t)$}-securely computes $f^n$ with fairness and $\committee$-identifiable-abort.
\end{theorem}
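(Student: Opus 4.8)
The plan is to realize $\CompilerNHMNI{n'}{n}$ through a GMW-style compilation performed in the hybrid model equipped with the delegated augmented coin-tossing functionality $\faugct^{\committee}$ (computed with $\committee$-identifiable-abort) and the one-to-many zero-knowledge functionality $\zkmany$ (computed with full security). The compiled protocol $\pi$ has three phases. In the setup phase, the committee members invoke $\faugct^{\committee}$, so that each $\Party_i \in \committee$ obtains a uniform string $\rndaugct_i$ with decommitment $\decomaugct_i$ while all $n$ parties learn the public commitment vector $\vcomaugct$; if this call aborts with $(\bot,\is)$, every party outputs $(\bot,\is)$ and halts. In the emulation phase, the committee runs $\pi'$ over the broadcast channel (encrypting any private messages), where in each round every $\Party_i \in \committee$ broadcasts its next $\pi'$-message $m_i$ and then uses $\zkmany$ to prove that $m_i$ is the correct next message of $\pi'$ with respect to the public transcript and to randomness consistent with $\comaugct_i$; if some committee member's proof returns $0$ (or the member fails to broadcast), all parties output $(\bot,\is)$ for the smallest offending $\is \in \committee$ and halt. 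In the output phase, once $\pi'$ terminates each committee member broadcasts the output together with a $\zkmany$ proof of its correctness, and every party adopts the value proven by any (honest) committee member. Since $\faugct^{\committee}$ and $\zkmany$ are constant-round (\cref{lem:instantiating_zkmanmy_noinput}) and are invoked a constant number of times per round of $\pi'$, the protocol runs in $O(r')$ rounds; instantiating the two functionalities via \cref{lem:instantiating_zkmanmy_noinput} and applying the composition theorem (\cref{prop:Composition}) yields the real-model protocol under \TDP and \CRH.

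For security I would construct a simulator $\Sim$ for the $(f^n,\committee\mhyphen\idfair)$ ideal model from the simulator $\Sim'$ guaranteed for $\pi'$; the latter applies because the hypothesis $\ssize{\IS\cap\committee}\le t'$ forces at least one honest committee member. Given the real adversary $\Adv$ controlling $\IS$, $\Sim$ derives the induced $\pi'$-adversary $\Adv'$ controlling $\IS\cap\committee$ and runs $\Sim'$ on it, sitting between them: the honest committee members' broadcasts expected by $\Adv$ are read off from the view that $\Sim'$ simulates, while $\Adv$'s committee-member broadcasts are forwarded to $\Sim'$ as $\Adv'$'s messages. The hybrid calls are simulated directly: playing $\faugct^{\committee}$, $\Sim$ hands $\Adv$ fresh random strings and decommitments for the corrupted committee members and commitments to dummy values for the honest ones (which are never opened, by the hiding of $\Com$); playing $\zkmany$, $\Sim$ returns $1$ for honest committee members without a witness and receives and checks witnesses for corrupted ones. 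The two ideal computations are synchronized through the common output: when $\Sim'$ (without aborting) queries its fair trusted party for $f^{n'}$, $\Sim$ declines to early-abort, obtains the output $y$ from its own $\committee\mhyphen\idfair$ trusted party for $f^n$, and feeds $y$ back to $\Sim'$. As $f$ is no-input with public output, the two outputs are identically distributed, so the simulated transcript is consistent with $y$.

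The abort handling is where the restricted-identifiability and fairness guarantees are proved. If $\Sim'$ aborts the fair computation of $f^{n'}$ (equivalently, $\Adv'$ aborts $\pi'$), then by fairness of $\pi'$ this happens before any output is revealed; $\Sim$ mirrors it by issuing an early abort $(\abort,\is)$ to the $\committee\mhyphen\idfair$ trusted party, taking $\is$ to be the committee member whose $\zkmany$ proof fails, or who goes silent, in the emulation. Conversely, if $\pi'$ completes, then by fairness the honest committee members hold the output $y$ at the very moment the corrupted ones do, so in the output phase an honest committee member always relays $y$ to the non-committee parties and the adversary can no longer prevent delivery --- this is exactly why full fairness survives the compilation. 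Finally, since non-committee parties produce no messages or proofs in the emulation and output phases, and since $\zkmany$ and $\faugct^{\committee}$ are themselves secure, no corrupted party outside $\committee$ can ever trigger an abort, which is precisely the restricted-identifiable-abort property. A standard hybrid argument over the rounds of $\pi'$, using the $\delta'$-indistinguishability of $\Sim'$ and the perfect simulation of the hybrid calls, then shows that $\REAL_{\pi,\IS,\Adv(\aux)}(\vx,\secParam)$ is $\delta'$-close to $\IDEAL^{\committee\mhyphen\idfair}_{f^n,\IS,\Sim(\aux)}(\vx,\secParam)$, giving the claimed $(\delta',t)$-security.

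The step I expect to be the main obstacle is the faithful two-way translation between the fair ideal model of $\pi'$ and the $\committee\mhyphen\idfair$ ideal model of $\pi$: one must argue that every deviation available to $\Adv$ inside the emulation collapses --- via soundness of $\zkmany$ and the binding of the randomness fixed by $\faugct^{\committee}$ --- to either honest-but-possibly-aborting behavior of $\Adv'$ in $\pi'$ (captured by $\Sim'$) or an identifiable deviation of a specific committee member (captured by an early abort), and that in neither case can the adversary first learn the output and then prevent the honest parties from obtaining it, which is exactly what preserving fairness demands.
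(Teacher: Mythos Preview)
Your proposal is correct and follows the same GMW-style compilation in the $(\faugct^{\committee},\zkmany)$-hybrid model as the paper. The one structural difference is that you invoke $\zkmany$ once per round of $\pi'$ to enforce honest behavior inline, whereas the paper first applies, as a black box via \cref{remark:compiler-restrictions} and \cite{CL17}, a round-preserving compiler that converts the fair $\pi'$ into a protocol with fairness \emph{and identifiable abort}; it then simply runs that protocol in Step~\ref{step:rid_gmw_noinput} of \cref{prot:fair_to_ridfair_noinput} and uses $\zkmany$ only once at the end (Step~\ref{step:rid_output_noinput}) so that committee members can certify the output to the non-committee parties. Your inline version is more self-contained; the paper's is more modular and keeps the main construction shorter. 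Both yield $O(r')$ rounds and the same $(\delta',t)$-security, and the simulator you describe matches the one in \cref{lem:idfair_to_ridfair_noinput} up to this factoring (the paper's simulator also samples genuine randomness for the honest committee members rather than committing to dummies, but since the honest-party $\zkmany$ outputs are faked in both cases this is immaterial).
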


\begin{proof}
We construct (in \cref{prot:fair_to_ridfair_noinput}) the protocol compiler $\CompilerNHMNI{n}{n'}(\pi',\committee)$ in the $(\faugct,\zkmany)$-hybrid model (see \cref{def:augFunc,def:zkmany}), and prove the security properties of this compiler in~\cref{lem:idfair_to_ridfair_noinput}. \cref{lem:instantiating_zkmanmy_noinput}, proves that the ideal functionalities $\faugct$ and $\zkmany$ can be instantiated in a round-preserving manner.
The proof follows from the sequential composition theorem (\cref{prop:Composition}).
\end{proof}

%\paragraph{A compiler from a fair protocol to a protocol with fairness and restricted identifiable abort.}
\paragraph{Compiling a fair protocol to a protocol with fairness and restricted identifiable abort.}
Towards proving \cref{thm:idfair_to_ridfair_noinput}, we next describe the compiler from fairness to fairness and restricted identifiable abort. We begin by making two remarks regarding the construction.

\begin{remark}[fairness to fairness with identifiable abort]\label{remark:compiler-restrictions}
The compiler described in \cref{prot:fair_to_ridfair_noinput} assumes that the protocol $\pi'$ (the input of the compiler) is fair with identifiable abort and that all communication in the protocol $\pi'$ is sent over the broadcast channel. These assumptions are \wlg. Following~\cite[Lem.\ 3]{CL17} any fair protocol can be compiled in a round-preserving manner into a protocol that provides fairness with identifiable abort, tolerating any number of corrupted parties, in the $(\faugct,\zkmany)$-hybrid model.\footnote{The original proof in~\cite{CL17} assumes one-way functions and is not round preserving, however, using the techniques from~\cite{Pass04}, the compilation will blow-up the round complexity only by a constant factor.}
\end{remark}

\begin{remark}[Using functionality $\zkmany$ in the compiler]
We will consider the relation $R_j$ (for the \jth party in $\committee$), parametrized by a commitment scheme $\Com$, that contains pairs $((\vm,\comaugct,\outvalue),(\rndaugct,\decomaugct))$, and validates that $\outvalue$ is the output value of protocol $\pi'$ using randomness $\rndaugct$ and messages $\vm=(m_1,\ldots,m_p)$, and that $\comaugct=\Com(\rndaugct;\decomaugct)$.
\end{remark}
The $n$-party protocol $\pi = \CompilerNHMNI{n'}{n}(\pi',\committee)$ is defined as follows.
\begin{construction}(fairness to fairness with restricted identifiable abort)\label{prot:fair_to_ridfair_noinput}
\begin{itemize}
    \item\textbf{Hybrid Model:}
    The protocol is defined in the hybrid model computing $\faugct$ with restricted identifiable abort and $\zkmany$ with full security.
    \item\textbf{Common Input:}
    A subset $\committee\subseteq[n]$ of size $n'$ and an $n'$-party protocol $\pi'$, computing the functionality $f^{n'}$ with fairness and identifiable abort, using only a broadcast channel (see \cref{remark:compiler-restrictions}). We use the notation $\Party'_j$ to refer to the \jth party in $\committee$.
    \item\textbf{The Protocol:}
\end{itemize}
\begin{enumerate}%[leftmargin=*]
    \item\label{step:rid_coinflip_noinput}
    All parties invoke $\faugct^{\committee}$ with $\committee$-identifiable-abort, every $\Party'_j\in\committee$ receives back $(\rndaugct_j,\decomaugct_j,\vcomaugct)$ where $\vcomaugct=(\comaugct_1,\ldots,\comaugct_{n'})$ is common to all $n$ parties.
    In case the computation for $\committee$ aborts with the identity of party $\Party_{\is}=\Party'_{\js}\in\committee$, all $n$ parties output $(\bot,\is)$ and halt.

    \item\label{step:rid_gmw_noinput}
    The parties in $\committee$ execute the protocol $\pi'$ for computing $f^{n'}$ over the broadcast channel, where $\Party'_j$ uses $\rndaugct_j$ as its random coins.
    Let $\outvalue_j$ be the output $\Party'_j$ received (either a valid value $y$ or $(\bot,\is)$ with $\Party_{\is}=\Party'_{\js}\in\committee$). Denote by $\vm_j=(m_1^j,\ldots,m_p^j)$ the messages $\Party'_j$ received during the protocol.

    \item\label{step:rid_output_noinput}
    Every $\Party'_j\in\committee$ invokes $\zkmany$ and proves to all parties that $\outvalue_j$ is indeed the correct output value generated by $\pi'$ using the committed randomness $r_j$ and messages $\vm_j$, \ie $\Party'_j$ sends $((\vm_j,\comaugct_j,\outvalue_j),(\rndaugct_j,\decomaugct_j))$ to $\zkmany$ (parametrized by the relation $R_j$).
    Once party $\Party_i$ receives from $\zkmany$ an accepting proof for $\Party'_j$'s output value $\outvalue_j$, party $\Party_i$ outputs this value (invalid proofs are ignored).
\end{enumerate}
\end{construction}

\begin{lemma}\label{lem:idfair_to_ridfair_noinput}
Assume that commitment schemes exist and consider the same notations as in \cref{thm:idfair_to_ridfair_noinput}.
If $\pi'$ is a \rchanged{$(\delta,t')$}{$(\delta',t')$}-secure protocol computing $f^{n'}$ with fairness, then the protocol $\pi = \CompilerNHMNI{n'}{n}(\pi',\committee)$ is an $n$-party protocol that \rchanged{$(\delta,t)$}{$(\delta',t)$}-securely computes $f^n$ with fairness and $\committee$-identifiable-abort, in the $(\faugct,\zkmany)$-hybrid model.
\end{lemma}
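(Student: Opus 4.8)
The plan is to prove the lemma by the standard real/ideal simulation argument, working throughout in the $(\faugct,\zkmany)$-hybrid model, and to build the required simulator $\Sim$ for the $(f^n,\committee\mhyphen\idfair)$-ideal model on top of the fairness simulator $\Sim'$ guaranteed by the $(\delta',t')$-security of $\pi'$ (recall that by \cref{remark:compiler-restrictions} we may assume $\pi'$ is fair \emph{with identifiable abort} and communicates only over the broadcast channel). I would first record the one structural fact that drives the whole argument: \cref{def:SecureProtocol_ridfair} quantifies only over adversaries with $\ssize{\IS\cap\committee}\le t'<n'$, so the committee always contains at least one honest member and the ``$\committee\subseteq\IS$'' branch of the ideal model (in which the adversary learns all inputs and fixes the output) is never reached. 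Moreover, since $f$ is a no-input, public-output functionality, $f^{n'}(\lambda,\ldots,\lambda)\equiv f^n(\lambda,\ldots,\lambda)$, so the output sampled by the $f^n$-trusted party is distributed exactly as the output $\Sim'$ expects to receive from an $f^{n'}$-trusted party.

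Next I would describe $\Sim$ explicitly. Running $\Adv$ internally, $\Sim$ simulates $\faugct^{\committee}$ by sampling strings and decommitments for all committee members, handing $\Adv$ the corrupted members' $(\rndaugct_j,\decomaugct_j)$ together with the public vector $\vcomaugct$; a $\committee$-identifiable abort of $\faugct$ by $\Adv$ (which names some $\is\in\IS\cap\committee$) is translated into an early abort $(\abort,\is)$ to the $f^n$-trusted party. $\Sim$ then invokes $\Sim'$, with $\Adv$ playing the corrupted committee members as the $\pi'$-adversary: if $\Sim'$ aborts and identifies $\is\in\IS\cap\committee$, this is again mapped to an early abort $(\abort,\is)$ (legitimate, since fairness of $\pi'$ guarantees the abort precedes the output, matching the \emph{early abort} step of the $\committee\mhyphen\idfair$ model), while if $\Sim'$ queries for the output, $\Sim$ calls its own trusted party, obtains $y$, and feeds $y$ to $\Sim'$ as the $f^{n'}$-output. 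Finally $\Sim$ simulates the output phase of \cref{prot:fair_to_ridfair_noinput}: for each honest committee member it makes the ideal $\zkmany$ output an accepting $((\vm_j,\comaugct_j,y),1)$, and for each corrupted member it answers $\zkmany$ honestly, relying on the fact that binding of the committed randomness together with the publicly broadcast transcript pins the only provable output to the genuine $y$ (or the genuine $(\bot,\is)$). Hence every honest party---committee or not---outputs $y$, exactly as in the ideal world.

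For indistinguishability I would use a short hybrid sequence. Let $G_0$ be the real hybrid execution. In $G_1$ I replace the honest committee members' commitments $\comaugct_j=\Com(\rndaugct_j;\decomaugct_j)$ by commitments to \emph{independent} strings and override their $\zkmany$ calls to accept; $G_0\approx G_1$ follows from the hiding of $\Com$, since the only observable change is the commitment value (both hybrids emit the accept bit $1$). Crucially, this step \emph{decouples} the honest members' $\pi'$-randomness from $\Adv$'s view, which is precisely what lets me pass to $G_2$, where the honest committee members' execution of $\pi'$ is replaced by the transcript produced by $\Sim'$ on the sampled output $y$; here $G_1\approx_{\delta'}G_2$ is a direct reduction to the $(\delta',t')$-fairness of $\pi'$, with the honest parties' uniform $\pi'$-randomness now supplied by the $\pi'$-security game itself (it need no longer match any published commitment). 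Replacing the sampled $y$ by the value returned by the $f^n$-trusted party---identically distributed by $f^{n'}\equiv f^n$---turns $G_2$ into the ideal execution driven by $\Sim$, and summing the errors yields the claimed $(\delta',t)$-security (the hiding loss is negligible and is absorbed into the $\delta'$-indistinguishability).

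I expect the decoupling in the passage $G_0\to G_1\to G_2$ to be the main obstacle: a naive reduction to $\pi'$'s security fails because $\faugct$ must publish commitments to the honest committee members' $\pi'$-randomness, which the $\pi'$-security game keeps internal. The resolution is to first switch those commitments to independent values via hiding, and to exploit the fact that in the $(\faugct,\zkmany)$-hybrid model the honest members' witnesses are never exposed (the functionality reveals only the accept/reject bit), after which the honest randomness can be absorbed into the game's own coins. The remaining care is the soundness argument for corrupted committee members in the output phase, which uses binding of the $\faugct$-committed randomness and the publicness of the transcript to guarantee that non-committee parties can be convinced only of the genuine output.
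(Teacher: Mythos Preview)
Your proposal is correct and follows the same approach as the paper: the same simulator (emulate $\faugct^{\committee}$ honestly, invoke the $\pi'$-simulator $\tilde{\Sim}$ for the committee's execution, and simulate $\zkmany$ by emitting accepting proofs for honest committee members while honestly verifying the corrupted members' witnesses), with indistinguishability argued via the security of $\pi'$. Your explicit hybrid $G_0\to G_1\to G_2$ makes the commitment-hiding step---decoupling the honest committee members' published commitments from their $\pi'$-randomness before the reduction to $\pi'$'s security---explicit, whereas the paper's proof simply asserts that indistinguishability ``follows directly from the security of $\pi'$'' without isolating that step.
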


\begin{proof}
Let $\Adv$ be an adversary attacking the execution of protocol $\pi$ in the $(\faugct,\zkmany)$-hybrid model and let $\IS\subseteq[n]$ be a subset of size at most $t$, satisfying $\ssize{\IS\cap\committee}\leq t'$.
We construct the following adversary $\Sim$ for the ideal model computing $f$ with fairness and $\vCS$-identifiable-abort.
$\Sim$ starts by emulating $\Adv$ on the auxiliary input $\aux$.
The simulator interacts with $\Adv$, playing the roles of the honest parties and the ideal functionalities $\faugct$ and $\zkmany$.

To simulate Step~\ref{step:rid_coinflip_noinput}, the simulator $\Sim$ plays $\faugct^{\committee}$ honestly by sampling random strings $(\rndaugct_j,\decomaugct_j)$ for parties $\Party'_j\in\committee$, computing $\comaugct_j=\com(\rndaugct_j;\decomaugct_j)$ and setting $\vcomaugct=(\comaugct_1,\ldots,\comaugct_{n'})$. Next, \Sim hands $(\rndaugct_j,\decomval_j,\vcomaugct)$ to corrupted parties $\Party'_j\in\committee$ and $(\lambda,\vcomaugct)$ to corrupted parties outside of $\committee$. In case $\Sim$ receives $(\abort,\is)$ with $\is\in\IS\cap\committee$ from $\Adv$, it forwards $(\abort,\is)$ to the trusted party, responds with $(\bot,\is)$ to \Adv, outputs whatever \Adv outputs and halts.

Next, the simulator $\Sim$ uses the simulator $\tilde{\Sim}$ that is guaranteed to exist for $\pi'$ when interacting with the residual adversary of $\Adv$ in Step~\ref{step:rid_gmw_noinput} \radded{(\ie the adversary against the protocol $\pi'$ that is induced from the behaviour of $\Adv$)}.
The simulator $\Sim$ invokes $\tilde{\Sim}$ on auxiliary information containing $z$ and the view of the adversary in the simulation until this point.
If $\tilde{\Sim}$ sends $(\abort,\is)$ with $\is\in\IS\cap\committee$, the simulator forwards $(\abort,\is)$ to the trusted party and sets $\outvalue=(\bot,\is)$.
Otherwise, \Sim sends empty strings as the input of the corrupted parties, receives back the output $y$ and sets $\outvalue=y$.
Next, The simulator $\Sim$ forwards $y$ to $\tilde{\Sim}$ as the output of the computation, receives back the output from $\tilde{\Sim}$, which contains the view of the adversary, and interacts with $\Adv$ accordingly.

To simulate Step~\ref{step:rid_output_noinput}, the simulator $\Sim$ simulates $\zkmany$. The simulator sends, on behalf of every honest party $\Party'_j\in\committee$, the message $((\vm_j,\comaugct_j,\outvalue),1)$ to every corrupted party, where $\vm_j$ is obtained from the output of $\tilde{\Sim}$. In addition, $\Sim$ receives $((\vm'_j,\comaugct'_j,\outvalue'_j),(\rndaugct'_j,\decomaugct'_j))$ from $\Adv$ on behalf of every corrupted party $\Party'_j\in\committee$ and verifies according to the relation $R_j$ (incorrect proofs are ignored). Finally, $\Sim$ outputs whatever $\Adv$ outputs and halts.

Computational indistinguishability between the real execution of the compiled protocol $\pi$ running with adversary \Adv and the ideal computation of $f$ running with \Sim follows directly from the security of $\pi'$.
\end{proof}

The proof of \cref{thm:idfair_to_ridfair_noinput} can be easily adjusted to the case where $\pi'$ is a fully secure protocol for computing $f^{n'}$. In this case, since the augmented coin-tossing functionality $\faugct^{\committee}$ is secure with $\committee$-identifiable-abort (see \cref{sec:ideal_functionalities}),
the adversary can force to restart it $t'+1$ times. Once $\faugct^{\committee}$ completes, the adversary cannot abort the computation. This yields the following corollary.
\begin{corollary}\label{cor:full_to_full_noinput}
Assume that \TDP and \CRH exist.
Then, there exists a \ppt algorithm $\CompilerNHMNI{n'}{n}$ such that for any $n'$-party, \rchanged{$r$}{$r'$}-round protocol $\pi'$ computing $f^{n'}$, the protocol $\pi = \CompilerNHMNI{n'}{n}(\pi',\committee)$ is an $n$-party, \rchanged{$O(t'+r)$}{$O(t'+r')$}-round protocol computing $f^n$ with the following guarantee.
If the number of corrupted parties in $\committee$ is at most $t'$, and $\pi'$ is a protocol that \rchanged{$(\delta,t')$}{$(\delta',t')$}-securely computes $f^{n'}$ with full security, then $\pi$ is a protocol that \rchanged{$(\delta,t)$}{$(\delta',t)$}-securely computes $f^n$ with full security.
\end{corollary}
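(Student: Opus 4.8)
The plan is to reuse \cref{prot:fair_to_ridfair_noinput} almost verbatim, changing only its first step. In \cref{prot:fair_to_ridfair_noinput}, Step~\ref{step:rid_coinflip_noinput} invokes the augmented coin-tossing $\faugct^{\committee}$ once and halts the moment it aborts; since $\faugct^{\committee}$ is only secure with $\committee$-identifiable-abort (\cref{lem:instantiating_zkmanmy_noinput}), this abort is the one place at which a fully secure $\pi'$ could still be derailed. I would therefore replace this single invocation by an \emph{iterated} one: initialize $\committee_1=\committee$, and for $i=1,2,\dots$ call $\faugct^{\committee}$ with $\committee_i$-identifiable-abort, where (exactly as in \cref{prot:ridfair_to_full_noinput}) the lowest-indexed surviving party in $\committee_i$ emulates the already-eliminated slots of $\committee\setminus\committee_i$. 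If the call returns $(\bot,\is)$ with $\is\in\committee_i\cap\IS$, set $\committee_{i+1}=\committee_i\setminus\set{\is}$ and repeat; otherwise proceed to Steps~\ref{step:rid_gmw_noinput}--\ref{step:rid_output_noinput} with the committed randomness just produced, running the \emph{fully secure} $\pi'$ (again with the lowest-indexed survivor emulating eliminated slots) and then the $\zkmany$ proofs unchanged. Because each abort identifies a fresh corrupted member of $\committee$ and $\ssize{\IS\cap\committee}\le t'$, the loop terminates after at most $t'+1$ invocations.

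For round complexity, the coin-tossing loop performs at most $t'+1$ invocations, each constant-round by \cref{lem:instantiating_zkmanmy_noinput}, for a total of $O(t')$ rounds; running $\pi'$ costs $r'$ rounds and the $\zkmany$ proofs of Step~\ref{step:rid_output_noinput} are constant-round. Hence $\pi$ runs in $O(t'+r')$ rounds, as required.

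The security argument would follow the simulator of \cref{lem:idfair_to_ridfair_noinput}, with two changes. First, $\Sim$ simulates the coin-tossing \emph{loop}: it plays $\faugct^{\committee}$ honestly and, on each message $(\abort,\is)$ from $\Adv$ with $\is\in\committee_i\cap\IS$, it returns $(\bot,\is)$, deletes $\Party_{\is}$, and advances to the next iteration. As $\faugct$ is a no-input functionality, this is simulated perfectly, and at most $t'$ deletions occur. Second---and this is the crux---once the loop succeeds, $\Sim$ drives Steps~\ref{step:rid_gmw_noinput}--\ref{step:rid_output_noinput} using the \emph{full-security} simulator $\tilde{\Sim}$ guaranteed for $\pi'$. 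Since $\pi'$ is fully secure against $t'$ corruptions and only (corrupted) members have been eliminated, $\tilde{\Sim}$ never produces an abort, so $\Sim$ always calls its own trusted party in the full-security ideal model and all honest parties receive the output. Thus the compiled protocol inherits full security rather than merely restricted identifiable abort. Indistinguishability then reduces to the $(\delta',t')$-security of $\pi'$, and \cref{prop:Composition} lets me instantiate $\faugct$ and $\zkmany$ while preserving the bound, giving $(\delta',t)$-full-security of $\pi$.

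The step I expect to require the most care is the bookkeeping of eliminated slots. I must check that having the lowest-indexed survivor emulate the eliminated (corrupted) slots---both during the coin-tossing and during $\pi'$---never pushes the number of corrupted slots in the $n'$-party execution above $t'$: eliminating $k$ aborting corrupted parties leaves at most $t'-k$ corrupted survivors, and the $k$ emulated slots are folded into a single survivor, so whether that survivor is honest or corrupted the effective corruption count stays at most $t'$. Granting this invariant, the precondition of $\pi'$'s full-security simulator holds throughout, and the remainder is a routine adaptation of \cref{lem:idfair_to_ridfair_noinput}.
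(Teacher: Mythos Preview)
Your proposal is correct and takes essentially the same approach as the paper: the paper's justification (given in the paragraph immediately preceding the corollary) is precisely that $\faugct^{\committee}$ is the only abortable step, so one applies player-elimination to it for at most $t'+1$ restarts, after which the fully secure $\pi'$ cannot be aborted. You have filled in the bookkeeping details (emulation of eliminated slots, the corruption-count invariant) that the paper leaves implicit, but the structure and the $O(t'+r')$ round count match exactly.
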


\subsection{Fairness to Full Security with an Honest Majority (No Inputs)}\label{sec:fairtofull_noinput_HM}
We now turn to the honest-majority setting, and present a reduction from full security to fairness for no-input functionalities.
In \cref{sec:ridfair_to_full_noinput_HM}, we show how to compute $f^n$ with full security in the hybrid model computing $f^n$ with fairness with restricted identifiable abort using $\Omega(1)$ calls.
In \cref{sec:fair_to_ridfair_noinput_HM}, we show how to compile a fair protocol for $f^{n'}$ to a fair protocol for $f^n$ with restricted identifiable abort, unconditionally, and in a black-box manner.

\subsubsection{Reducing the Round Complexity with an Honest Majority}\label{sec:ridfair_to_full_noinput_HM}

In the honest-majority setting, we are able to utilize parallel computations in many committees in order to reduce the number of calls to the ideal computation from $\Omega(\log(\secParam))$ to $\Omega(1)$.
The idea is to start as in the no-honest-majority setting (\cref{prot:ridfair_to_full_noinput}) by electing a (super-)logarithmic committee $\committee$, but instead of sequentially invoking the computation with $\committee$-identifiable-abort, we consider multiple sub-committees $\vCS=(\committee_1,\ldots,\committee_\ell)$, where $\committee_j\subseteq \committee$, and invoke a computation with $\vCS$-identifiable-abort. By defining the subsets in $\vCS$ appropriately, we can ensure that every sub-committee has an honest majority, and that the adversary must reveal the identity of many corrupted parties in order to abort the computation.

\begin{theorem}\label{thm:ridfair_to_full_parallel_noinput}
Let $f$ be a no-input functionality with public output, let $0<\beta<\beta'<1/2$, let $n\in\N$, let $n'=\log(\secParam)\cdot(\varphi(\secParam)-1/\varphi(\secParam))$ with $\varphi=1/\sqrt{1-2\beta'}+\Omega(1)$, let $\ell=\secParam^\uglyExp$, and let $t=\beta n$ and $t'=\beta' n'$.
Then, $f^n$ can be $(\err(n,n',\beta,\beta'),t)$-securely computed with full security in a hybrid model that computes $f^n$ with fairness and $(\ell, n',t')$-identifiable-abort, by invoking the ideal functionality in $\varphi(\secParam)^2$ rounds.
\end{theorem}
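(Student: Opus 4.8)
The plan is to follow the template of the no-honest-majority reduction (\cref{thm:ridfair_to_full_noinput}), but to replace the single committee and its sequential player-elimination by a \emph{fixed} super-logarithmic committee together with a carefully chosen family of sub-committees that are processed \emph{in parallel}. First I would have all parties run $\felect$ to elect a committee $\committee$ of size $m=\varphi(\secParam)\cdot\log(\secParam)$, tuning the Feige parameters via \cref{cor:elect} so that, except with probability at most $\err(n,n',\beta,\beta')$, the number of corrupted parties in $\committee$ is at most $t'=\beta' n'$. I would then set $\vCS=(\committee_1,\dots,\committee_\ell)$ to be the list of \emph{all} subsets of $\committee$ of size $n'=m-\log(\secParam)/\varphi(\secParam)$, and repeatedly invoke the trusted party computing $f^n$ with $(\ell,n',t')$-identifiable-abort on $\vCS$, where any party previously identified as corrupted is thereafter simulated (honestly) by the lowest-indexed surviving committee member, exactly as in \cref{prot:ridfair_to_full_noinput}. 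The crucial feature is that $\committee$ \emph{retains its size $m$} across all iterations, while the set of parties able to cause an abort shrinks.

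The two easy structural facts are the first two bullets of \cref{sec:intro:technique}. For the claim that there are polynomially many sub-committees, I would bound $\ell=\binom{m}{n'}=\binom{m}{m-n'}$, where $m-n'=\log(\secParam)/\varphi(\secParam)$, by $\secParam^{\uglyExp}$ using a standard binomial estimate; this is exactly where the choice $n'=m-\log(\secParam)/\varphi(\secParam)$ (so the excluded part is only a $1/\varphi(\secParam)^2$-fraction of $\committee$) is used. For the claim that no sub-committee is ever fully corrupted, I would use that the number of corrupted parties in $\committee$ is at most $t'=\beta' n'<n'$, so no $n'$-subset can consist solely of corrupted parties; this rules out the insecure branch of the $\vCS$-identifiable-abort ideal model (\cref{def:ideal_ridabort}).

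The heart of the argument, and the step I expect to be the main obstacle, is the third bullet together with the round bound. By the definition of the $\vCS$-identifiable-abort ideal model, in order to abort an invocation the adversary must exhibit a corrupted party $\is_j\in\IS\cap\committee_j$ in \emph{every} sub-committee $\committee_j$. Since the $\committee_j$ range over all $n'$-subsets of the fixed size-$m$ committee, the revealed set $\set{\is_1,\dots,\is_\ell}$ must be a hitting set for all $n'$-subsets, and therefore has size at least $m-n'+1=\log(\secParam)/\varphi(\secParam)+1$; these are all corrupted and all newly eliminated. Because each aborting iteration thus removes at least $\log(\secParam)/\varphi(\secParam)$ of the at most $t'=\beta' n'$ corrupted parties in $\committee$, the number of aborting iterations is at most $t'/(m-n')=\beta' n'\cdot\varphi(\secParam)/\log(\secParam)=O(\varphi(\secParam)^2)$; once the number of effectively-corrupted parties drops to at most $m-n'$, some $n'$-subset is fully honest and the next invocation cannot be aborted, delivering the output. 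The subtle point is that keeping $\committee$ at the fixed size $m$ (by simulating eliminated parties rather than shrinking the committee) is precisely what makes the hitting-set lower bound $m-n'+1$ apply in \emph{every} round; shrinking the committee would drop it below $n'$ after a single abort, whereas the fixed-size analysis is what converts the naive linear bound into the $\varphi(\secParam)^2$ bound.

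Finally, I would package this into a simulator $\Sim$ for the fully secure ideal model, just as in the proof of \cref{thm:ridfair_to_full_noinput}. Using its knowledge of $\IS$, $\Sim$ emulates $\felect$ (producing the partition of the honest parties and receiving $\Adv$'s committee choice), and then emulates each $\vCS$-identifiable-abort invocation: it forwards an abort by recording the identified parties and proceeding to the next round with the updated sub-committee family, and in the first round in which $\Adv$ does not abort it calls its own (fully secure) trusted party for $f$, hands the output to $\Adv$, and halts. The emulation is perfect in the hybrid model, and the termination bound above guarantees $\Sim$ runs in $O(\varphi(\secParam)^2)$ functionality rounds; the only loss is the event that the elected committee violates the corruption bound, which by \cref{cor:elect} occurs with probability at most $\err(n,n',\beta,\beta')$. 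This yields the claimed $(\err(n,n',\beta,\beta'),t)$-full-security with the ideal functionality invoked in $O(\varphi(\secParam)^2)$ rounds.
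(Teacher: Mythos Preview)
Your approach matches the paper's. One slip to fix: invoking $\felect(n,m,\beta')$ bounds the number of corrupted parties in the size-$m$ committee $\committee$ by $\beta' m$, not by $t'=\beta' n'$ (recall $m>n'$); with the corrected bound your ``no sub-committee fully corrupted'' and termination arguments still go through, but you must use $\beta' m$ in place of $t'$ throughout. The paper also proves the stronger \cref{claim:par_honest_noinput} (honest majority in every sub-committee, i.e., $\beta' m < n'/2$), which is precisely where the hypothesis $\varphi>1/\sqrt{1-2\beta'}$ in the statement is used --- your weaker ``not fully corrupted'' claim needs only $\varphi>1/\sqrt{1-\beta'}$ --- and it is this honest-majority property that later permits composition with \cref{thm:idfair_to_ridfair_noinput_HM}.
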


%\begin{proofsketch}
\begin{proof}
Assume for simplicity of exposition that $n=\Omega(\log(\secParam)\cdot \varphi(\secParam))$.
%\rdeleted{(see \cref{remark:election_arbitrary_parties} for technical trick to handle arbitrary $n$).}
We modify \cref{prot:ridfair_to_full_noinput} as follows. First, the parties elect a committee $\committee\subseteq[n]$ of size $m=\log(\secParam)\cdot \varphi(\secParam)$ by invoking $\felect(n,m,\beta')$. Consider all subsets of $\committee$ of size $n'=m-n''$ for $n''=\log(\secParam)/\varphi(\secParam)$, denoted $\vCS=(\committee_1,\ldots,\committee_{\ell'})$ with $\ell'=\binom{m}{n'}$. Next, the parties proceed in iterations, where in each iteration they invoke the ideal functionality computing $f$ with fairness and $\vCS$-identifiable-abort. In case the output value is a subset of identified corrupted parties, they are removed and another iteration is carried out. Otherwise, the parties halt with the output value.

We start by showing that the protocol indeed runs in polynomial time, \ie that the number of sub-committees is polynomial.

\begin{claim}\label{claim:par_poly_noinput}
$\binom{m}{n'}\leq n^{\log{e}\cdot\left(\frac{2}{e} + \frac{1}{\varphi(\secParam)}\right)}$.
\end{claim}
\begin{proof}
For every $1<k<m$ it holds that $\binom{m}{k} \leq (em/k)^k$, therefore,
\begin{align*}
%\binom{m}{n'} = \binom{m}{n''} & \ \leq \ \left(\frac{em}{n''}\right)^{n''} \\
\binom{m}{n'} = \binom{m}{n''} \ \leq \ \left(\frac{em}{n''}\right)^{n''}
 &  = \ \left(\frac{e\cdot\log(\secParam)\cdot \varphi(\secParam)}{\log(\secParam)/\varphi(\secParam)}\right)^{\log(\secParam)/\varphi(\secParam)} \\
~ & \ = \ \ \left(e\cdot\varphi(\secParam)^2\right)^{\log(\secParam)/\varphi(\secParam)} \\
%~ & \ = \ \ 2^{\log{\left(e\cdot\varphi(\secParam)^2\right)\cdot\log(\secParam)/\varphi(\secParam)}} \\
~ & \ = \ \ 2^{\frac{\log{\left(e\cdot\varphi(\secParam)^2\right)}\cdot\log(\secParam)}{\varphi(\secParam)}} \\
~ & \ = \ \ \secParam^{\left(\frac{\log{e}}{\varphi(\secParam)}+\frac{2\log{\varphi(\secParam)}}{\varphi(\secParam)}\right)} \\
~ & \ \stackrel{(\ast)}{\leq} \ \ \secParam^{\left(\frac{\log{e}}{\varphi(\secParam)}+\frac{2\log{e}}{e}\right)},
\end{align*}
where $(\ast)$ follows by \citet{Steiner50} who showed that $x^{1/x}$ is bounded from above by $e^{1/e}$, for every positive $x$, hence
\[
\frac{\log{\varphi(\secParam)}}{\varphi(\secParam)} =
\log\left(\varphi(\secParam)^{1/\varphi(\secParam)}\right) \leq
\log\left(e^{1/e}\right) =
\frac{\log{e}}{e}.
\]
\end{proof}

Next, we show that every sub-committee still has an honest majority.
\begin{claim}\label{claim:par_honest_noinput}
In every sub-committee $\committee_i$ there exists an honest majority.
\end{claim}
\begin{proof}
We require that $n'>2\cdot\ssize{\committee\cap\IS}$.
By the definition of $\felect(n,m,\beta')$ it holds that $\ssize{\committee\cap\IS}\leq\beta'\cdot m$. The claim will therefore follow if $n'>2\cdot\beta'\cdot m$, \ie
\begin{equation}\label{eq:hm}
\log(\secParam)\cdot(\varphi(\secParam)-1/\varphi(\secParam)) >2\cdot \beta'\cdot\log(\secParam)\cdot \varphi(\secParam).
\end{equation}

\noindent
Since $0<\beta'<1/2$ and $\varphi=\Omega(1)$, \cref{eq:hm} holds for
\[
\varphi(\secParam)>\frac{1}{\sqrt{1-2\beta'}}.
\]
\end{proof}

Finally, we show that after calling the ideal functionality in $\varphi(\secParam)^2$ rounds, the adversary cannot abort the computation.
\begin{claim}\label{claim:par_term_noinput}
After $\varphi(\secParam)^2$ iterations the protocol terminates.
\end{claim}
\begin{proof}
Denote by $\JS$ the set of identified corrupted parties in some iteration of the protocol. If $\ssize{\JS}<n''$ then $\ssize{\committee\setminus\JS}>m-n''$; hence, there exists a sub-committee $\committee_i\subseteq \committee\setminus\JS$ of size $m-n''$ such that the adversary didn't identify any corrupted parties from $\committee_i$.
It follows that in every iteration either the computation completes or at least $n''$ corrupted parties are identified. Therefore, after $m/n'' = \varphi(\secParam)^2$ iterations, the computation will complete.
\end{proof}

The simulator proceeds similarly to the simulator in \cref{thm:ridfair_to_full_noinput}, where in every iteration, if the adversary sends $(\abort,\ID)$, with $\ID\subseteq\IS$ and $\ID\cap\committee_l\neq\emptyset$ for every $l\in[\ell']$, the simulator $\Sim$ simulates sending $(\bot,\ID)$ to the parties. If $\Adv$ does not send \abort, $\Sim$ calls the trusted party in the fully secure ideal model computing $f$, obtains the output $y$, forwards $y$ to the adversary, and outputs whatever $\Adv$ outputs.
\end{proof}
%\end{proofsketch}

\subsubsection{Fairness with Restricted Identifiable Abort with an Honest Majority}\label{sec:fair_to_ridfair_noinput_HM}
Now, we show that in the honest-majority setting, the reduction from fairness with restricted identifiable abort to fairness can be much more elegant, and more importantly, be based on much simpler tools. Specifically, we devise a compiler, similar to the one for the no-honest-majority case, that is based solely on error-correcting secret sharing schemes (ECSS), which exist unconditionally (see \cref{def:ECSS}).

Given a no-input, $n$-party functionality with public output, we consider the no-input, $n'$-party functionality, denoted $\foutss{f}{t'}{n'}$ that computes $f^n$ and outputs shares of the result using $(t'+1)$-out-of-$n'$ ECSS, for some $t'<n/2$. Note that a secure computation of $\foutss{f}{t'}{n'}$ with abort is in fact fair assuming $t'$ corruptions, since an adversary aborting the computation does not learn any new information.

We prove the theorem below, using the following notations. Let $f$ be a no-input functionality with public output, let $t<n/2$, let $n'<n$, let $t'<n'/2$, and let $\vCS=(\committee_1,\ldots,\committee_\ell)$, where $\committee_l\subseteq [n]$ for every $l\in[\ell]$.
\begin{theorem}\label{thm:idfair_to_ridfair_noinput_HM}
There exists a \ppt algorithm $\CompilerHMNI{n'}{n}$ such that if the number of corrupted parties in every $\committee_j$ is at most $t'$, then the following holds with information-theoretic security.
For any $n'$-party, \rchanged{$r$}{$r'$}-round protocol that \rchanged{$(\delta,t')$}{$(\delta',t')$}-securely computes $\foutss{f}{t'}{n'}$ with abort, $\ell$ times in parallel, the protocol $\pi = \CompilerHMNI{n'}{n}(\pi',\vCS)$ is an $n$-party, \rchanged{$O(r)$}{$O(r')$}-round protocol that \rchanged{$(\ell\cdot\delta,t)$}{$(\ell\cdot\delta',t)$}-securely computes $f$ with fairness and $\vCS$-identifiable-abort.
Furthermore, the compiler is black-box \wrt the protocol $\pi'$.
\end{theorem}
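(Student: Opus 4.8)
The plan is to mirror the no-honest-majority compiler of \cref{prot:fair_to_ridfair_noinput}, replacing its cryptographic machinery ($\faugct$ and $\zkmany$) with the robust reconstruction afforded by an honest majority. The compiler $\CompilerHMNI{n'}{n}(\pi',\vCS)$ runs the $\ell$ parallel executions of $\pi'$ over the broadcast channel, with sub-committee $\committee_j$ running the $j$-th execution of $\foutss{f}{t'}{n'}$ with abort; thus each member $\Party'_i\in\committee_j$ obtains a $(t'+1)$-out-of-$n'$ ECSS share $s_i^{(j)}$ of a sampled output $y^{(j)}=f^n$. Every committee member then broadcasts its share, and each of the $n$ parties applies $\Recon$ to the shares of the completed execution of smallest index. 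Since $\ssize{\committee_j\cap\IS}\le t'<n'/2$, the error-correcting guarantee of \cref{def:ECSS} corrects the at-most-$t'$ faulty shares contributed by corrupted members, so reconstruction yields the sampled $y^{(j)}$, and the adversary cannot block the output once shares have been produced.

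I would then verify the three regimes of the $\vCS$-identifiable-abort model. Fairness holds because the output of $\pi'$ consists solely of ECSS shares, any $t'$ of which are independent of $y^{(j)}$ by $t'$-privacy; hence aborting before the share-broadcast phase leaks nothing, and the with-abort guarantee of $\pi'$ is already fair. If some $\committee_j$ is fully honest, its execution has no corrupted participant and therefore cannot abort, so that execution produces shares and the output is delivered, matching the prohibition on aborting in this case. If some $\committee_j\subseteq\IS$ is fully corrupted, the adversary controls that execution and may fix its output, which is exactly the ``no security'' clause of the ideal model.

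The delicate case, and the main obstacle, is when every $\committee_j$ contains a corrupted party but none is fully honest: the adversary can force every execution to abort, and the ideal model then expects the honest output $(\bot,\{\is_1,\dots,\is_\ell\})$ with $\is_j\in\IS\cap\committee_j$. Plain security with abort does not reveal who caused an abort, so I must extract such an $\is_j$ from each aborted execution using only ECSS, information-theoretically and black-box in $\pi'$. My plan is to have each committee member first ECSS-share the randomness it will feed to $\pi'$ among its sub-committee, run $\pi'$ deterministically from this committed tape over broadcast, and upon an abort use the honest majority to reconstruct the committed tapes and re-derive the prescribed messages up to the first round at which some member's broadcast deviates from its tape; that member is identified. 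As honest members always broadcast tape-consistent messages, only corrupted parties are ever identified, so $(\bot,S)$ has $S\subseteq\IS$ with a representative in each aborted $\committee_j$. I must further argue that this replay preserves fairness: it is triggered only for an aborted execution that never reached its output round, so by the with-abort security of $\pi'$ the revealed transcript prefix is simulatable from the adversary's view and is independent of $y^{(j)}$.

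Finally, I would assemble the simulator from the $\ell$ with-abort simulators guaranteed for $\pi'$, run in parallel against the residual adversaries. For each execution that aborts, the simulator (being corruption-aware) records some $\is_j\in\IS\cap\committee_j$ in the abort set; if at least one execution completes, it queries its $\vCS$-identifiable-abort trusted party, obtains $y$, and emulates the share-broadcast of the smallest completed execution with fresh ECSS shares of $y$ consistent with the corrupted members' views output by the $\pi'$-simulators. Indistinguishability reduces to the $\ell$-fold parallel security of $\pi'$, and a hybrid over the $\ell$ executions accounts for the $\ell\cdot\delta'$ loss. The round complexity is $O(r')$, since the executions run in parallel and the added share-broadcast and conditional-replay phases cost $O(r')$ rounds, and the construction is black-box, as $\pi'$ is touched only through its next-message function on committed randomness.
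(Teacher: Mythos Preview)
Your high-level plan for the non-aborting case is correct and matches the paper: run the $\ell$ executions of $\pi'$ in parallel, have committee members broadcast their ECSS shares, and let everyone reconstruct from the first successful committee. The honest-majority guarantee in each $\committee_j$ indeed makes reconstruction robust and makes late aborts harmless.

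The gap is in your identification mechanism. You propose that upon abort, committee members reconstruct \emph{everyone's} committed random tape and replay the transcript to spot the first deviation. But reconstructing the honest members' tapes can leak the output $y^{(j)}$: in a protocol computing $\foutss{f}{t'}{n'}$, the value $y^{(j)}=f^n(\lambda;r)$ is determined by the joint randomness, so once all tapes are public the adversary recomputes $y^{(j)}$ while the honest parties only get $\bot$. Your justification that ``the revealed transcript prefix is simulatable from the adversary's view'' confuses transcript with tapes; the with-abort simulator for $\pi'$ simulates the adversary's \emph{view}, which does not include honest internal state. A related issue is that you force $\pi'$ to run over broadcast so the replay makes sense, but in the information-theoretic setting $\pi'$ may rely on private channels and cannot be rerouted over broadcast without destroying privacy.

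The paper avoids this entirely by a modular decomposition: it first invokes an off-the-shelf honest-majority compiler (\cref{lem:abort_to_idabort_noinput_HM}, based on \cite{IOZ14,CCGZ17}) that transforms any with-abort protocol into one with \emph{identifiable} abort, information-theoretically, black-box, and round-preserving via the Setup-Commit-then-Prove functionality. Only then does it plug the resulting identifiable-abort protocol into the simple share-broadcast-reconstruct construction (\cref{lem:idfair_to_ridfair_noinput_HM}). In that construction the abort case is trivial, since $\pi'$ itself names a corrupted $\is_l\in\committee_l$, and no honest randomness is ever opened. Your direct approach is morally what \cite{IOZ14} does, but doing it correctly requires the zero-knowledge-style machinery of Setup-Commit-then-Prove rather than raw tape revelation.
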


\begin{proof}
In \cref{lem:idfair_to_ridfair_noinput_HM} we prove the theorem under the assumption that $\pi'$ is secure with identifiable abort. In \cref{lem:abort_to_idabort_noinput_HM} we explain how to compile, in a round-preserving manner, every protocol that is secure with abort into a protocol that is secure with identifiable abort in the honest-majority setting, with information-theoretic security and using only a black-box access to the underlying protocol.
\end{proof}

\begin{lemma}\label{lem:idfair_to_ridfair_noinput_HM}
Consider the same notations as in \cref{thm:idfair_to_ridfair_noinput_HM}.
If $\pi'$ is protocol that \rchanged{$(\delta,t')$}{$(\delta',t')$}-securely computes $\foutss{f}{t'}{n'}$ with identifiable abort, $\ell$ times in parallel, then the protocol $\pi = \CompilerHMNI{n'}{n}(\pi',\vCS)$ is an $n$-party protocol that \rchanged{$(\ell\cdot\delta,t)$}{$(\ell\cdot\delta',t)$}-securely computes $f^n$ with fairness and $\vCS$-identifiable-abort.
\end{lemma}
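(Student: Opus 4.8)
The plan is to make the compiler explicit and then wrap it around the simulator guaranteed for $\pi'$. Since the lemma already assumes $\pi'$ enjoys \emph{identifiable} abort, the compiler $\CompilerHMNI{n'}{n}(\pi',\vCS)$ I would use is the error-correcting analogue of \cref{prot:fair_to_ridfair_noinput}: the parties in $\bigcup_j\committee_j$ run the $\ell$ parallel instances of $\pi'$ over the broadcast channel, instance $j$ being carried out by $\committee_j$ and computing $\foutss{f}{t'}{n'}$; an aborting instance broadcasts $(\bot,\is_j)$ with $\is_j\in\committee_j\cap\IS$. Once all instances have terminated, the parties set the canonical index $\js=\min\set{j : \text{instance } j \text{ did not abort}}$; the members of $\committee_{\js}$ broadcast the ECSS shares they obtained in instance $\js$, and all $n$ parties output $\Recon$ applied to these $n'$ shares. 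If every instance aborted, they instead output $(\bot,\set{\is_1,\ldots,\is_\ell})$. This wrapper is black-box in $\pi'$, round-preserving, and uses nothing beyond the ECSS.

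Two structural facts drive the analysis, and I would isolate them as claims. First, \emph{nothing is learned before the abort decision}: throughout the $\pi'$ phase each set $\committee_j\cap\IS$ holds at most $t'$ shares, which by $t'$-privacy of the ECSS are independent of the sampled output, and the honest shares are private outputs of $\pi'$ that are revealed only afterwards, for the single instance $\js$. Second, \emph{a completed instance cannot be subverted}: among the $n'$ shares broadcast for a not-fully-corrupted instance $\js$ at most $t'$ are controlled by the rushing adversary, so the error-correction property of the ECSS ensures every honest party reconstructs the same correct value no matter what the adversary broadcasts. Together these give exactly the restricted-abort guarantee: aborting instance $j$ requires driving $\pi'$ on $\committee_j$ to an identifiable abort, which is possible only when $\committee_j\cap\IS\neq\emptyset$; hence the honest parties are denied output only if \emph{every} instance is aborted (matching \emph{Early abort}), while an instance with $\committee_j\cap\IS=\emptyset$ cannot be aborted at all (matching ``the adversary cannot abort''). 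Crucially, the abort decision is taken during the $\pi'$ phase, before any honest share is broadcast, so it is an \emph{early} abort and fairness is preserved.

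For the simulator I would run $\Adv$ and invoke, in a black-box way, the simulator $\tilde\Sim$ for $\pi'$ with identifiable abort, emulating the ideal $\foutss{f}{t'}{n'}$ toward it: for $\committee_j\not\subseteq\IS$ I hand $\Adv$ uniformly random shares for the $\le t'$ corrupted members of $\committee_j$ (correctly distributed by $t'$-privacy), and for $\committee_j\subseteq\IS$ a full sharing of a freshly sampled output. From the simulation I read off, per instance, whether $\Adv$ aborts and which identity it reveals. If all instances abort I send $(\abort,\set{\is_1,\ldots,\is_\ell})$ to the trusted party; otherwise I locate $\js$ and either invoke the no-security branch (when some $\committee_j\subseteq\IS$) or call the trusted party for the genuine output $y$. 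In the latter case I finish by choosing the honest shares of instance $\js$ so that $\Recon$ returns $y$; this is the standard ECSS simulation, feasible because $t'$-privacy guarantees that the already-fixed $\le t'$ corrupted shares extend to a sharing of any target secret, in particular of $y$.

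The distinguishing bound follows from a hybrid over the $\ell$ parallel instances: replacing the real run of each instance by $\tilde\Sim$ costs $\delta'$, and---because the statistical honest-majority simulators can be taken straight-line---the swaps compose despite the parallelism, giving total advantage $\ell\cdot\delta'$ (the ECSS-reconstruction error contributes only a negligible term, absorbed into the statistical notion). The main obstacle I anticipate is not the per-instance simulation of $\pi'$, which is handed to us, but reconciling the \emph{single}-output ideal model with the multi-instance protocol: I must argue that all honest parties compute $\js$ identically (they do, since every abort is broadcast), and that the output the simulator produces in the no-security branch---where the real protocol may nonetheless output an honest, uncontrolled sample whenever $\js$ is not fully corrupted---matches the real distribution exactly. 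Carrying out this case analysis, together with verifying that the adversary's rushing advantage in the reconstruction round is neutralized by error correction, is where the real work lies.
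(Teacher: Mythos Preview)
Your proposal is correct and follows essentially the same approach as the paper: the compiler (\cref{prot:abort_to_ridfair_noinput_hmn}) has each $\committee_l$ run $\pi'$ in parallel, broadcast abort indicators, pick the minimal non-aborting index $\ls$, have $\committee_{\ls}$ broadcast its ECSS shares, and reconstruct; the simulator feeds dummy shares through $\tilde\Sim$, reads off the abort pattern, and either sends $(\abort,\{\is_1,\ldots,\is_\ell\})$ or obtains $y$ and plants consistent honest shares for $\committee_{\ls}$. Your treatment of the share-extension step (completing the already-handed $\le t'$ dummy shares to a sharing of $y$) and of why a late abort in $\pi'$ still maps to an \emph{early} abort in the $\vCS$-ideal model is in fact more explicit than the paper's, which defers these points to ``the unconditional security of the ECSS scheme and of $\pi'$''; the paper also handles agreement on which instances aborted via a majority vote over the broadcast $(\bot,\is)$ messages, a detail you left implicit.
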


\begin{proof}
Given the protocol $\pi'$ and $\vCS=(\committee_1,\ldots,\committee_\ell)$, the $n$-party protocol $\pi = \CompilerHMNI{n'}{n}(\pi',\vCS)$ is defined as follows.

\begin{construction}(security with abort to fairness with restricted identifiable abort)\label{prot:abort_to_ridfair_noinput_hmn}
\begin{itemize}
    \item\textbf{Common Input:}
    An $n'$-party protocol $\pi'$ and $\vCS=(\committee_1,\ldots,\committee_\ell)$, where each $\committee_l\subseteq[n]$ \radded{is} of size $n'$. We use the notation $\Party^l_j$ to refer to the \jth party in $\committee_l$.
    \item\textbf{The Protocol:}
\end{itemize}
\begin{enumerate}
    \item\label{step:idabort_to_ridfair_protocol}
    For every committee $\committee_l$, the parties in $\committee_l$ execute the protocol $\pi'$ for computing $\foutss{f}{t'}{n'}$ with identifiable abort.
    Let $\outvalue_{j,l}$ be the output $\Party_j^l\in\committee_l$ received (either a valid share $y_j^l$ or $(\bot,\is)$ with $\Party_{\is}=\Party_{\js}^l\in\committee_l$).
    If $\outvalue_{j,l}$ is of the form $(\bot,\is)$, then $\Party_j^l$ broadcasts $(l,\outvalue_{j,l})$.
    \item\label{step:idabort_to_ridfair_announce}
    If for every committee $\committee_l$, more than $t'$ parties broadcasted a value, then every party $\Party_i$ takes the value $(l,\is_l)$ that appears the most for every $\committee_l$ (if there is no unique majority value, choose arbitrarily) and outputs $(\bot,\sset{\is_1,\ldots,\is_\ell})$. Otherwise, denote by $\ls$ the minimal index $l$ such that at most $t'$ values $(l,\is)$ were broadcasted.
    \item\label{step:idabort_to_ridfair_shares}
    Every party \rdeleted{in} $\Party_j^{\ls}\in\committee_{\ls}$ broadcasts its output value $\outvalue_{j,\ls}$.
    \item\label{step:idabort_to_ridfair_reconstruct}
    Denote by $y_j^{\ls}$ the value broadcasted by $\Party_j^{\ls}$. Every party computes $y=\Recon(y_1^{\ls},\ldots,y_{n'}^{\ls})$ and outputs $y$.
\end{enumerate}
\end{construction}

Let $\Adv$ be a computationally unbounded adversary attacking the execution of $\pi$, and let $\IS\subseteq[n]$ satisfying $\ssize{\IS\cap\committee_l}\leq t'$ for every $l\in[\ell]$.
We construct an adversary $\Sim$ (simulator) for the ideal model computing $f$ with fairness and $\vCS$-identifiable-abort.
The simulator $\Sim$ starts by emulating $\Adv$ on the auxiliary input $\aux$. The simulator $\Sim$ interacts with $\Adv$, playing the roles of the honest parties. For simplicity, assume that the output value of $f$ are elements in $\zo^\secParam$.

To simulate Step~\ref{step:idabort_to_ridfair_protocol}, the simulator uses the simulator $\tilde{\Sim}$ that is guaranteed to exist for the $\ell$-times parallel execution of $\pi'$ with the residual adversary of $\Adv$ in Step~\ref{step:idabort_to_ridfair_protocol}. $\Sim$ invokes $\tilde{\Sim}$ on the auxiliary $\aux$ and receives from $\tilde{\Sim}$ early messages $(\abort,\is_l)$ for the \lth computation. If $\tilde{\Sim}$ did not abort the computation of a committee $\committee_l$, the simulator $\Sim$ hands secret shares of $0^\secParam$ to the corrupted parties in $\committee_l\cap\IS$. Next, $\tilde{\Sim}$ may send late-abort messages $(\abort,\is_l)$. Finally, the simulator \Sim receives the output from $\tilde{\Sim}$ that contains the simulated view of the adversary, and interacts with \Adv accordingly.

To simulate Step~\ref{step:idabort_to_ridfair_announce}, for every $\committee_l$ for which $\tilde{\Sim}$ aborted the computation, \Sim simulates broadcasting $(l,\is_l)$ by all honest parties in $\committee_l$.
In case $\tilde{\Sim}$ aborted the computations for every committee $\committee_l$, the simulator \Sim sends $(\abort,\sset{\is_1,\ldots,\is_\ell})$ to the trusted party (in the ideal model with fairness and $\vCS$-identifiable-abort). Otherwise, \Sim receives the output $y$ from the trusted party, secret shares $y$ and simulates the honest parties in $\committee_{\ls}$ broadcasting their shares, for simulating Step~\ref{step:idabort_to_ridfair_shares}. Finally, $\Sim$ outputs whatever \Adv outputs and halts.

The proof follows in a straightforward manner based on the unconditional security of the ECSS scheme and of the protocol $\pi'$.
\end{proof}

\begin{lemma}\label{lem:abort_to_idabort_noinput_HM}
Let $f$ be an $n$-party functionality and let $t<n/2$. The following holds with information-theoretic security.
There exists a \ppt compiler $C$ such that if $\pi$ is an $r$-round protocol that $(\delta,t)$-securely computes $f$ with abort, then $C(\pi)$ is an $O(r)$-round protocol that $(\delta,t)$-securely computes $f$ with identifiable abort. Moreover, the compiler uses the protocol $\pi$ in a black-box manner.
\end{lemma}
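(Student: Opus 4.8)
The plan is to use a \emph{commit-then-run-then-identify} compiler. Since $t<n/2$ and the parties have access to both secure point-to-point channels and a broadcast channel (the information-theoretic model of \cref{sec:mpc_Def}), statistical verifiable secret sharing (VSS) is available unconditionally and in a constant number of rounds. In an initial phase of $C(\pi)$, every party $\Party_i$ VSS-shares its input $x_i$ together with the random tape $\rho_i$ it intends to use in $\pi$; by the binding property of VSS these values become fixed, and by honest-majority robust reconstruction they are well defined even for corrupted dealers. The parties then run $\pi$ as a black box, with $\Party_i$ using $\rho_i$ as its coins, and in addition each party broadcasts a VSS-commitment to every message it sends (private messages are exchanged over the secure channels, while their committed copies fix the public transcript). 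If $\pi$ terminates with an output, the parties output it; no party is identified and the execution behaves exactly like $\pi$.

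The heart of the construction is the reaction to an abort. A fail-stop abort (a party failing to broadcast/send when the protocol prescribes it) is attributed directly over the broadcast channel, so assume the transcript is syntactically complete but some honest party obtained $\bot$. In this case the parties invoke a single \emph{cheater-identification} sub-computation: an honest-majority, information-theoretically secure protocol (carried out in a constant number of rounds by standard $t<n/2$ techniques) whose inputs are the committed values $\{(x_i,\rho_i)\}$ and the committed message transcript. This sub-computation recomputes the execution of $\pi$ on the committed values and outputs only the \emph{index of the lowest-numbered party whose broadcast message deviates} from the one prescribed by $\pi$'s next-message function (and a ``no-deviation'' symbol otherwise). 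Access to $\pi$ is black-box throughout, in the IKOS sense recalled earlier: the next-message function is evaluated through a semi-honest protocol used as a black box. Since the identification function has polynomial size, the check costs only $O(1)$ rounds, so the total round complexity is $O(r)+O(1)=O(r)$.

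For security, the simulator $\Sim$ runs the abort-model simulator $\tilde\Sim$ guaranteed for $\pi$, emulating the VSS commit phase honestly (sharing $0$ for honest parties and extracting the corrupted parties' committed inputs via VSS binding) together with the committed-message transcript. Whenever $\tilde\Sim$ produces an abort, $\Sim$ translates it into an identified corrupted party: a fail-stop abort exposes the silent party directly, while a message-inconsistency abort is mapped to the lowest-index deviating party output by the (simulated) identification sub-computation. Robustness of VSS under honest majority guarantees that an honest party is never falsely accused — honest messages always match the committed $(x_i,\rho_i)$ — so every identified index lies in $\IS$, exactly as \idabort requires. Privacy is preserved because the identification sub-computation reveals only the index of a (necessarily corrupted) deviating party, and nothing about honest inputs or about $\pi$'s output.

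The step I expect to be the main obstacle is the cheater-identification sub-computation: it must (i) locate a deviating party \emph{without} revealing honest inputs or the protocol output, (ii) do so in a number of rounds independent of $r$ so as not to break round-preservation, and (iii) touch $\pi$ only as a black box. Reconciling (i)--(iii) is the crux — in particular arguing soundness (an honest party is never blamed, which leans on VSS binding and correctness of honest next-messages) and completeness (whenever an honest party would have aborted in $\pi$, some corrupted party is provably inconsistent) of this single verification pass, and checking that the nested use of an honest-majority secure-with-abort primitive for the identification step does not itself reintroduce an unattributable abort.
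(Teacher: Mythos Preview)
Your commit--run--identify outline is the same high-level strategy the paper uses, but the paper's proof is essentially a two-line citation rather than a direct construction: it invokes the compiler of \citet{IOZ14} (Thm.~6), which takes any protocol secure with abort to one secure with identifiable abort in the \emph{Setup-Commit-then-Prove} hybrid model with correlated randomness, and then invokes \citet{CCGZ17} (Lem.~6.2), which shows that in the honest-majority information-theoretic setting this hybrid (and the required correlated randomness) can be realized with full security in a constant number of rounds. The black-box and round-preservation properties are inherited from these results.

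Your sketch is morally the same compiler, but the step you flag as ``the main obstacle'' --- a constant-round, information-theoretically fully secure cheater-identification computation on committed inputs/randomness/transcript --- is exactly the non-trivial content that the paper outsources to \cite{IOZ14,CCGZ17}. Two points where your write-up would need tightening if you pursued it directly: (1) your worry that the identification sub-protocol is itself only ``secure-with-abort'' is a real circularity; it must be run with \emph{guaranteed output delivery}, which is available for $t<n/2$ but is precisely what makes the constant-round claim non-obvious and what \cite{CCGZ17} supplies; and (2) for point-to-point messages, committing only on the sender's side does not by itself resolve a sender/receiver dispute in the identification step --- the Setup-Commit-then-Prove abstraction handles this cleanly, whereas your description (``their committed copies fix the public transcript'') glosses over how a lying receiver is distinguished from a lying sender. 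None of this is fatal to your approach, but it explains why the paper prefers to cite the modular results rather than re-derive the machinery.
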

\begin{proof}
In the proof we combine existing results from~\cite{IOZ14,CCGZ17}.
We start by using the compiler of \citet[Thm.\ 6]{IOZ14}, from security with abort to security with identifiable abort, that satisfies the requirements in the lemma, in the Setup-Commit-then-Prove hybrid model.\footnote{The Setup-Commit-then-Prove is a reactive functionality, parametrized by a vector of NP-relations, that in the first invocation hands every party a secret witness, and in all future invocation allows each party to prove NP-statements to all other parties, using the secret witness. This functionality extends the Commit-then-Prove functionality from~\cite{CLOS02}.} In \cite{IOZ14}, the Setup-Commit-then-Prove functionality was realized in a hybrid model that gave correlated randomness to the parties.
\radded{
Namely, they proved the following claim.
\begin{claim}
Let $\pi$ be an $r$-round protocol which $(\delta,t)$-securely computes $f$ with abort and with information-theoretic security in the correlated-randomness model for a distribution $D$ (\ie where a trusted dealer samples correlated randomness for the parties in the setup phase). There exists a \ppt compiler $C$ such that $C(\pi)$ is an $O(r)$-round protocol that $(\delta,t)$-securely computes $f$ with identifiable abort in the Setup-Commit-then-Prove hybrid model and the correlated randomness model. Moreover, the compiler uses the protocol $\pi$ in a black-box manner.
\end{claim}
}
\citet[Lem.\ 6.2]{CCGZ17} showed how to security compute (without abort) the required correlated randomness in the honest-majority setting by a constant-round protocol.
\radded{
Namely, they proved the following claim.
\begin{claim}
Let $\pi$ be a constant-round protocol which $(\delta,t)$-securely computes $f$ with identifiable abort and with information-theoretic security in the Setup-Commit-then-Prove hybrid model and the correlated-randomness model for an efficiently sampleable distribution $D$ in $NC^0$. Then $f$ can be $(\delta,t)$-securely computed with identifiable abort and with information-theoretic in constant rounds in the broadcast model with secure point-to-point channels.
\end{claim}
}
The lemma follows from these results.
\end{proof}

\subsection{Applications}\label{sec:apps_noinput}

We next give a few applications of the above security uplifting reductions \wrt coin-flipping protocols.
\begin{definition}\label{def:coinflip}
The $n$-party coin-flipping functionality is defined as $\fcf{n}(\lambda,\ldots,\lambda)=(b,\ldots,b)$, where $b\in\zo$ is a uniformly distributed bit. A $\delta$-bias, $t$-secure coin-flipping protocol is a protocol that $(\delta,t)$-securely computes $\fcf{n}$ with full security.
\end{definition}

For any $n\in\N$, $t<n$, and $r=r(\secParam)$, \citet{ABCGM85} presented an $r$-round, $O(t/\sqrt{r})$-bias coin-flipping protocol tolerating up to $t$ corrupted parties (assuming OWF). \citet{BeimelOO15} improved this result, giving an $r$-round, $O(1/\sqrt{r-t})$-bias coin-flipping protocol tolerating up to $t = \beta\cdot n$ corrupted parties, for $0<\beta<1$ (assuming OT).
We now show how to reduce the dependency on $t$.

\begin{corollary}\label{cor:main_cf-coro}
Assume that \TDP and \CRH exist.
Let $n'<n$ be integers, and let $0<\beta<\beta'<1$ be constants.
If there exists an $n'$-party, \rchanged{$\delta$}{$\delta'$}-bias, \rchanged{$r$}{$r'$}-round coin-flipping protocol $\pi'$ tolerating $t'=\beta'n'$ corrupted parties, then there exists an $n$-party, \rchanged{$(\delta+\err(n,n',\beta,\beta'))$}{$(\delta'+\err(n,n',\beta,\beta'))$}-bias, \rchanged{$O(t'+r)$}{$O(t'+r')$}-round coin-flipping protocol, tolerating $t=\beta n$ corrupted parties.
\end{corollary}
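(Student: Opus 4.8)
The plan is to observe that \cref{cor:main_cf-coro} is a direct instantiation of \cref{thm:mainThm_fairtofull_noinput}, specialized to coin flipping. First I would note that the $n$-party coin-flipping functionality $\fcf{n}$ is precisely the $n$-party variant $f^n$ of the no-input, public-output functionality $f$ that samples and outputs a single uniform bit; since $f$ is no-input and public-output it is defined for every number of parties, so both $f^{n'}=\fcf{n'}$ and $f^n=\fcf{n}$ are meaningful and \cref{thm:mainThm_fairtofull_noinput} applies to them.

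Next I would translate the hypothesis through \cref{def:coinflip}: a $\delta'$-bias, $t'$-secure, $n'$-party coin-flipping protocol $\pi'$ is, by definition, a protocol that $(\delta',t')$-securely computes $\fcf{n'}$ with full security. Hence the premise of \cref{cor:main_cf-coro} coincides with the premise of \itemref{thm:mainThm_fairtofull_noinput_step2} of \cref{thm:mainThm_fairtofull_noinput}: assuming \TDP and \CRH, with constants $0<\beta<\beta'<1$, $t=\beta n$, and $t'=\beta' n'$, there is an $r'$-round protocol that $(\delta',t')$-securely computes $f^{n'}$ with full security.

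Applying that item then yields an $O(t'+r')$-round protocol that $(\delta'+\err(n,n',\beta,\beta'),t)$-securely computes $f^n=\fcf{n}$ with full security. Translating the conclusion back through \cref{def:coinflip}, this is exactly an $n$-party, $(\delta'+\err(n,n',\beta,\beta'))$-bias, $O(t'+r')$-round coin-flipping protocol tolerating $t=\beta n$ corrupted parties, which is what the corollary asserts.

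The only point requiring care is the definitional equivalence, in both directions, between ``$\delta$-bias $t$-secure coin flipping'' and ``$(\delta,t)$-secure computation of $\fcf{n}$ with full security''; this is immediate from \cref{def:coinflip}, so there is no genuine obstacle here. All the substantive work is already contained in the proof of \cref{thm:mainThm_fairtofull_noinput}: electing a committee whose corruption fraction is at most $\beta'$ via Feige's lightest-bin protocol, which contributes the additive $\err(n,n',\beta,\beta')$ term, and compiling the fully secure committee protocol into a fully secure $n$-party protocol via \cref{cor:full_to_full_noinput}, which preserves the $\delta'$ error of $\pi'$ while incurring only an additive $O(t')$ round overhead from the restricted-identifiable-abort augmented coin-tossing setup.
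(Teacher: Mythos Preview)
Your proposal is correct and follows essentially the same approach as the paper's proof: observe that a $\delta'$-bias coin-flipping protocol is by definition a $(\delta',t')$-fully-secure protocol for $\fcf{n'}$, then invoke \itemref{thm:mainThm_fairtofull_noinput_step2} of \cref{thm:mainThm_fairtofull_noinput}. The paper's proof is two sentences to this effect; your version spells out the same reduction more explicitly, including the back-translation via \cref{def:coinflip}, which is fine.
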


\begin{proof}
Note that $\pi'$ is an $n'$-party, \rchanged{$r$}{$r'$}-round protocol that \rchanged{$(\delta,t')$}{$(\delta',t')$}-securely computes $\fcf{n'}$ with full security. Hence, the proof follows from \cref{thm:mainThm_fairtofull_noinput_step2} of \cref{thm:mainThm_fairtofull_noinput}.
\end{proof}

\begin{corollary}[restating \cref{cor:intro_CF}]
Assume that \TDP and \CRH exist. Let $n\in \N$, let $t=\beta n$ for $0<\beta<1$, and let $r:\N\mapsto\N$ be an efficiently computable function.
\begin{enumerate}
    \item\label{cf:itemone}
    There exists an $n$-party, $r(\secParam)$-round,
    $O\Big(\frac{1}{\sqrt{r(\secParam)-\log(\secParam)}} + \frac{1}{\secParam\cdot\log(\secParam)}\Big)$-bias,
    $t$-secure coin-flipping protocol.
    \item\label{cf:itemtwo}
    Let $\varphi=\omega(1)$.
    There exists an $n$-party, $r(\secParam)$-round,
    $O\Big(\frac{1}{\sqrt{r(\secParam)-\varphi(\secParam)\cdot\log(\secParam)}}\Big)$-bias,
    $t$-secure coin-flipping protocol.
\end{enumerate}
\end{corollary}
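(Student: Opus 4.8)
The plan is to instantiate the generic reduction of~\cref{cor:main_cf-coro} with the coin-flipping protocol of~\citet{BeimelOO15} in the role of the inner $n'$-party protocol $\pi'$. Recall that for every $0<\beta'<1$ their protocol is an $r'$-round, $O(1/\sqrt{r'-t'})$-bias protocol tolerating $t'=\beta' n'$ corruptions. Plugging it into~\cref{cor:main_cf-coro} produces an $n$-party coin-flipping protocol of bias $O(1/\sqrt{r'-t'})+\err(n,n',\beta,\beta')$ and round complexity $r'+O(t')$; the overhead is additive in $r'$ because in the underlying construction the committee runs $\pi'$ directly, and only the committed-randomness set-up and the final output proof pay the $O(t')$ restart/compilation cost. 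Throughout I fix an arbitrary $\beta'\in(\beta,1)$ and exploit the standing convention $n=\secParam^c$ to control $\err$. The two items then differ only in the choice of committee size $n'$, and hence in the regime of $\err$.

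For~\itemref{cf:itemtwo}, given an arbitrary $\varphi=\omega(1)$, I take the committee size to be $\omega(\log(\secParam))$ but small enough that the additive overhead $O(t')+O(n')=O(n')$ is at most $\varphi(\secParam)\cdot\log(\secParam)$; concretely $n'=\Theta(\varphi(\secParam)\log(\secParam))$ with a small hidden constant works and is still $\omega(\log(\secParam))$ since $\varphi=\omega(1)$. By~\cref{cor:elect} this makes $\err(n,n',\beta,\beta')$ negligible, so it is swallowed by the $O(\cdot)$. Choosing $r'$ so that $r'+O(t')=r(\secParam)$ (padding with dummy rounds if the budget is not met exactly) yields $r'-t'=r(\secParam)-O(n')\ge r(\secParam)-\varphi(\secParam)\log(\secParam)$, and therefore bias $O(1/\sqrt{r(\secParam)-\varphi(\secParam)\log(\secParam)})$, as claimed.

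For~\itemref{cf:itemone}, the second summand $1/(\secParam\log(\secParam))$ is exactly the (now non-negligible) committee-election error, so the committee must be pinned at $n'=C\log(\secParam)$ for a suitable constant $C$. Writing $\mu=(\beta'-\beta)^2/(2(1-\beta))$ and using $n=\secParam^c$, we get $\err=\frac{n}{n'}\,e^{-\mu n'}=\frac{\secParam^{c}}{C\log(\secParam)}\cdot\secParam^{-\mu C}$ (up to the constant relating $e$ and $2$ in the exponent), so choosing $C$ with $\mu C=c+1$ gives $\err=O(1/(\secParam\log(\secParam)))$, the second summand. With this $n'=\Theta(\log(\secParam))$ the additive overhead equals $K\log(\secParam)$ for a constant $K$, and setting $r'=r(\secParam)-K\log(\secParam)$ leaves an inner gap $r'-t'=r(\secParam)-K'\log(\secParam)$ for a constant $K'$; once $r(\secParam)\ge 2K'\log(\secParam)$ one has $r'-t'\ge\tfrac12(r(\secParam)-\log(\secParam))$, so the~\citet{BeimelOO15} term is $O(1/\sqrt{r(\secParam)-\log(\secParam)})$ and the first summand follows.

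The main obstacle is the calibration demanded by~\itemref{cf:itemone}: the committee must be large enough for $\err$ to reach the prescribed $1/(\secParam\log(\secParam))$ level, which, via $n=\secParam^c$, pins $n'=\Theta(\log(\secParam))$ with the constant $C=(c+1)/\mu$; this same constant makes the round overhead $\Theta(\log(\secParam))$ rather than exactly $\log(\secParam)$, so the achieved inner gap is $r(\secParam)-\Theta(\log(\secParam))$. This matches the target $r(\secParam)-\log(\secParam)$ only up to a constant factor inside the square root — hence is absorbed by the $O(\cdot)$ — precisely in the informative regime where $r(\secParam)$ exceeds the overhead by a constant factor (the only regime in which the claimed bias is sub-constant). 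In~\itemref{cf:itemtwo} this tension disappears, since the freedom to shrink $n'$ below $\varphi(\secParam)\log(\secParam)$ while keeping it $\omega(\log(\secParam))$ lets $\varphi=\omega(1)$ absorb any constant overhead, and $\err$ is simply negligible.
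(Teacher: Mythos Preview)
Your proof is correct and follows essentially the same approach as the paper: instantiate \cref{cor:main_cf-coro} with the \citet{BeimelOO15} protocol as the inner $n'$-party protocol, taking $n'=\Theta(\log\secParam)$ for Item~\ref{cf:itemone} and $n'=\Theta(\varphi(\secParam)\log\secParam)$ for Item~\ref{cf:itemtwo}. The paper's own argument is terser---it fixes $\beta'=(1+\beta)/2$ and simply writes $n'=\log(\secParam)$ and $n'=\varphi(\secParam)\log(\secParam)$ without the constant calibration you carry out---so your more careful treatment of the constant $C$ (to force $\err=O(1/(\secParam\log\secParam))$) and of the $\Theta(\log\secParam)$ round overhead is a welcome elaboration rather than a departure.
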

%\begin{corollary}[restating \cref{cor:intro_CF}]
%Assume that \TDP and \CRH exist. Let $n\in \N$, let $t=\beta n$ for $0<\beta<1$, let $0<\delta<1/2$, and let $\psi=\omega(\log(\secParam))$. Then, there exists an $n$-party, $O(\psi(\secParam)+1/\delta^2)$-round, $\delta$-bias, $t$-secure coin-flipping protocol.
%\end{corollary}

\begin{proof}
For $n$ and $\beta$ as above, fix $\beta'=(1+\beta)/2$.
For Item~\ref{cf:itemone} consider $n'=\log(\secParam)$ and for Item~\ref{cf:itemtwo} consider $n'=\varphi(\secParam)\cdot\log(\secParam)$.\footnote{If $n<\varphi(\secParam)\cdot\log(\secParam)$, the parties can simply consider $n'=n$ and $t'=t$.}
Set $t'=\beta' n'$ and let $\pi'$ be the protocol of~\cite{BeimelOO15} that is an $n'$-party, \rchanged{$r$}{$r'$}-round, \rchanged{$O(1/\sqrt{r-t'})$}{$O(1/\sqrt{r'-t'})$}-round coin-flipping protocol, tolerating up to $t'$ corrupted parties. The proof follows by~\cref{cor:main_cf-coro}.%, since $t'< n'\le \psi(\secParam)$.
%$t'< n'=\psi(\secParam)$.
\end{proof}

\section{Fairness to Full Security for With-Input Functionalities}\label{sec:fair_to_full_withinputs}
In this section, we present a reduction from a fully secure computation to a fair computation for functionalities with inputs. The main additional challenge compared to no-input functionalities (\cref{sec:fairtofull_noinput}) is enforcing the small committees to carry out the computation on the inputs of the honest parties, while preserving privacy.
%We start by adjusting, in \cref{sec:restricted_idabort}, the intermediate security definition of fairness with restricted identifiable abort to functionalities with inputs.
In \cref{sec:fairtofull_withinput_noHM}, we show how to reduce full security to fairness with restricted identifiable abort in a round-efficient manner.
In \cref{sec:fairtofull_withinput_HM}, we present an analogue result in the honest-majority setting that is unconditional and black-box in the underlying fair protocol.
Applications are found in \cref{sec:apps_withinput}.

Since we consider delegating a computation of functionalities \emph{with inputs} to a small committee, we use a protocol for computing the function over secret-shared inputs. In the honest-majority setting, the parties use a $(t'+1)$-out-of-$n'$ ECSS scheme (\cref{def:ECSS}) to distribute their inputs among the committee members. We denote by $\finout{f}{n}{t'}{n'}$ the Reconstruct-Compute-Share variant of $f$, that receives secret shares of the inputs, computes $f$ over the reconstructed $n$-tuple, and outputs secret shares of the result.
In the no-honest-majority setting, every party initially commits to its input (in a somewhat non-trivial way, in order to identify cheating committee members). We denote by $\fin{f}{n}{n'}$ the Verify-Reconstruct-Compute variant of $f$, that receives $n'$-out-of-$n'$ secret shares for the decommitment of each party, verifies that all the commitments can be opened, reconstructs the $n$-tuple, computes $f$, and outputs the result. Formal definitions of these functionalities can be found in \cref{sec:shared_inputs}.
As before, for integers $n'<n$ and for $0<\beta<\beta'<1$ we define $\err(n,n',\beta,\beta')=\frac{n}{n'} \cdot e^{-\frac{(\beta'-\beta)^2 n'}{2(1-\beta)}}$.

We prove the following theorem.

\begin{theorem}[restating \cref{thm:intro:mainThmfair_to_full}]\label{thm:mainThmfair_to_full}
Assume that \TDP, \CRH, and non-interactive perfectly binding commitment schemes exist.
Let $f$ be an $n$-party functionality with public output, let $0<\beta<\beta'<1$, let $n'=\min(n,\log(\secParam)\cdot\varphi(\secParam))$ with $\varphi=1/\sqrt{1-\beta'}+\Omega(1)$, and let $t=\beta n$ and $t'=\beta' n'$.
\begin{enumerate}
    \item
    If $\fin{f}{n}{n'}$ can be \rchanged{$(\delta,t')$}{$(\delta',t')$}-securely computed with fairness by an \rchanged{$r$}{$r'$}-round protocol, then $f$ can be \rchanged{$(t'\cdot\delta + \err(n,n',\beta,\beta'),t)$}{$(t'\cdot\delta' + \err(n,n',\beta,\beta'),t)$}-securely computed with full security by an \rchanged{$O(t'\cdot r)$}{$O(t'\cdot r')$} protocol.
    \item
    For a deterministic functionality $f$, if $\fin{f}{n}{n'}$ can be \rchanged{$(\delta,n'-1)$}{$(\delta',n'-1)$}-securely computed with fairness by an \rchanged{$r$}{$r'$}-round protocol, $\ell$-times in parallel, for $\ell=\secParam^\uglyExp$, then $f$ can be \rchanged{$(\varphi(\secParam)^2\cdot \ell\cdot\delta + \err(n,n',\beta,\beta'),t)$}{$(\varphi(\secParam)^2\cdot \ell\cdot\delta' + \err(n,n',\beta,\beta'),t)$}-securely computed with full security by an \rchanged{$O(\varphi(\secParam)^2\cdot r)$}{$O(\varphi(\secParam)^2\cdot r')$} protocol.
    \item
    For $\beta'<1/2$, the following holds unconditionally.
    If $\finout{f}{n}{t'}{n'}$ can be \rchanged{$(\delta,t')$}{$(\delta',t')$}-securely computed with abort by an \rchanged{$r$}{$r'$}-round protocol, $\ell$-times in parallel, for $\ell=\secParam^\uglyExp$, then $f$ can be \rchanged{$(\varphi(\secParam)^2\cdot \ell\cdot\delta + \err(n,n',\beta,\beta'),t)$}{$(\varphi(\secParam)^2\cdot \ell\cdot\delta' + \err(n,n',\beta,\beta'),t)$}-securely computed with full security by an \rchanged{$r$}{$r'$}-round protocol.
\end{enumerate}
\end{theorem}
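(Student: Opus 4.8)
The plan is to mirror the two-phase template of the no-input case (\cref{thm:mainThm_fairtofull_noinput}): first reduce full security to \emph{fairness with restricted identifiable abort}, and then reduce the latter to fairness (Items~1 and~2) or to security with abort (Item~3). Each item is obtained by stitching a restricted-id-abort-to-full reduction together with a fair-to-restricted-id-abort compiler through the sequential-composition theorem (\cref{prop:Composition}). The genuinely new difficulty relative to the no-input setting is that the parties outside the elected committee now hold private inputs that must feed into the committee's computation, so the heavy lifting all sits in the second phase; the first phase is essentially input-oblivious and can be quoted almost verbatim.

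Concretely, for the restricted-id-abort-to-full phase I would reuse the no-input reductions unchanged, since they only manipulate committees and abort-identities. For Item~1, elect a single committee of size $n'$ via Feige's protocol (\cref{cor:elect}) and make $t'+1$ sequential calls to the $(\fin{f}{n}{n'},\text{fair with restricted id-abort})$ functionality, exactly as in \cref{thm:ridfair_to_full_noinput}; player-elimination inside the committee bounds the iterations by $t'+1$, giving $O(t'\cdot r')$ rounds. For Items~2 and~3, use the parallel sub-committee reduction of \cref{thm:ridfair_to_full_parallel_noinput}: elect a committee of size $m=\Theta(\varphi\log\secParam)$, take all sub-committees of size $n'=m-\log\secParam/\varphi$ as $\vCS$, and invoke the $\vCS$-identifiable-abort functionality iteratively. \cref{claim:par_poly_noinput,claim:par_honest_noinput,claim:par_term_noinput} carry over intact, since they concern only the combinatorics of $\committee$: polynomially many sub-committees, the required fraction of honest parties in each (an honest majority once $\beta'<1/2$, as needed for Item~3), and termination after $\varphi^2$ rounds because every aborting iteration burns $\log\secParam/\varphi$ fresh corrupted identities.

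The core of the proof is the fair-to-restricted-id-abort compiler for the with-input case. For the no-honest-majority items I would have each party $\Party_i$ secret-share its input in an $n'$-out-of-$n'$ scheme, publicly commit to every share with a non-interactive perfectly binding commitment, and send each decommitment, encrypted, to the matching committee member; the delegated functionality $\fin{f}{n}{n'}$, parametrized by these commitments, verifies the openings, reconstructs the $n$-tuple, computes $f$, and releases the output. Perfect binding guarantees that a corrupted committee member cannot substitute an honest party's share (a bad opening identifies the cheater inside the committee), while a corrupted party \emph{outside} the committee that sends an invalid decommitment is handled in-protocol rather than by abort, which is exactly what keeps the iteration count tied to committee corruptions only. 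For the parallel variant (Item~2) I would additionally have each party prove through $\zkmany$ that the \emph{same} input underlies the shares handed to every sub-committee; this consistency is what forces all sub-committees to compute the identical output, so that the termination argument of \cref{thm:ridfair_to_full_parallel_noinput} remains valid. Item~3 instead replaces the commitment/ZK machinery by an ECSS-based compiler in the spirit of \cref{thm:idfair_to_ridfair_noinput_HM}, run over $\finout{f}{n}{t'}{n'}$, yielding an unconditional and black-box transformation; here the honest majority $t'<n'/2$ in each sub-committee is what lets $\Recon$ correct the faulty shares.

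I expect the main obstacle to be the security analysis of this input-distribution compiler. The simulator must, on one hand, extract each corrupted party's effective input from its perfectly binding commitments (so it can query the ideal functionality), and on the other hand simulate the honest parties' shares and encrypted openings so that they are indistinguishable despite carrying no real information. The subtle point is showing that a corrupted committee cannot both learn the honest inputs and abort selectively: this is where the $\vCS$-identifiable-abort ideal model (which grants no security only when some sub-committee is \emph{fully} corrupted, and forbids abort when some sub-committee is \emph{fully} honest) must be matched exactly by the compiled protocol. For the parallel case the delicate step is arguing that the ZK consistency proofs pin down a single well-defined input per party even against a rushing adversary, so that aborting a sub-committee genuinely exposes a new corrupted identity and the $\varphi^2$ termination bound is not undermined.
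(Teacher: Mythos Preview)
Your proposal is correct and follows essentially the same two-phase architecture as the paper: the paper likewise derives Item~1 from a with-input analogue of \cref{thm:ridfair_to_full_noinput} composed with a commitment/encryption/$\zkmany$-based compiler (\cref{thm:idfair_to_ridfair}), Item~2 from the parallel sub-committee reduction (\cref{thm:ridfair_to_full_parallel_withinput}) composed with the same compiler, and Item~3 from that parallel reduction composed with an ECSS-based honest-majority compiler (\cref{thm:idfair_to_ridfair_hm}). The only cosmetic difference is that the paper states dedicated with-input versions of the restricted-id-abort-to-full reductions rather than quoting the no-input ones directly, but your observation that these are input-oblivious and carry over verbatim is exactly the point.
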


The proof of \cref{thm:mainThmfair_to_full} is given in the sections below, where the first part follows from a combination of \cref{thm:ridfair_to_full_withinput} and \cref{thm:idfair_to_ridfair}; the second part from \cref{thm:ridfair_to_full_parallel_withinput} and \cref{thm:idfair_to_ridfair}; and the third part from \cref{thm:ridfair_to_full_parallel_withinput} and \cref{thm:idfair_to_ridfair_hm}.

\subsection{Fairness to Full Security without an Honest Majority (With Inputs)}\label{sec:fairtofull_withinput_noHM}
We start by constructing a reduction from full security to fairness for functionalities with inputs, when an honest majority is \emph{not} assumed.
In \cref{sec:ridfair_to_full_withinput_noHM}, we show how to compute $f$ with full security in the hybrid model computing $f$ with fairness and restricted identifiable abort using $\omega(1)$ calls.
In \cref{sec:ridfair_to_full_parallel}, we show how to compile a fair protocol for $\fin{f}{n}{n'}$ to a fair protocol for $f$ with restricted identifiable abort.

\subsubsection{Fairness with Restricted Identifiable Abort to Full Security}\label{sec:ridfair_to_full_withinput_noHM}

We now show how to reduce full security to fairness with restricted identifiable abort.
Similarly to \cref{sec:ridfair_to_full_noinput_noHM}, we start with the simpler case where a single committee $\committee$ is considered (\ie $\ell=1$). For deterministic functionalities, we can use the technique from \cref{sec:ridfair_to_full_noinput_HM} and reduce the round complexity by using multiple committees.

\begin{theorem}\label{thm:ridfair_to_full_withinput}
Let $f$ be an $n$-party, public-output functionality, let $n'<n$, let $0<\beta<\beta'<1$, and let $t=\beta n$ and $t'=\beta'n'$.
Then, $f$ can be $(\err(n,n',\beta,\beta'),t)$-securely computed with full security in a hybrid model that computes $f$ with fairness and $(n',t')$-identifiable-abort, by using $t'+1$ sequential calls to the ideal functionality.
\end{theorem}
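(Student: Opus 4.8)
The plan is to mirror the no-input argument of \cref{thm:ridfair_to_full_noinput} almost verbatim, the only genuinely new ingredient being the treatment of the parties' inputs. I would present a protocol identical in structure to \cref{prot:ridfair_to_full_noinput}: the $n$ parties first invoke $\felect(n,n',\beta')$ to elect a committee $\committee_1\subseteq[n]$ of size $n'$, and then iterate, for $i=1,\ldots,t'+1$, a single call to the trusted party computing $f$ with fairness and $\committee_1$-identifiable-abort. I would keep exactly the same bookkeeping conventions as in the no-input case, namely that the party with the lowest index in the current committee $\committee_i$ stands in for the already-identified parties in $\committee_1\setminus\committee_i$, so that every ideal call is made with a committee of the required size $n'$ and so that each abort removes a \emph{fresh} corrupted committee member (an abort reported against an already-removed party is reassigned to the lowest-index member of $\committee_i$, which is then necessarily corrupt since it is the one playing the removed parties). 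If a call returns a valid output $y$ the parties output $y$ and halt; if it returns $(\bot,\is)$ with $\is\in\committee_1\cap\IS$, they set $\committee_{i+1}=\committee_i\setminus\set{\is}$ and continue. The one conceptual difference from the no-input case is that here all $n$ parties feed their true inputs directly into the hybrid functionality, which evaluates the full $n$-party $f$; the committee's sole role remains restricting which parties may abort.

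Next I would build the simulator $\Sim$ exactly as in \cref{thm:ridfair_to_full_noinput}: it runs $\Adv$, emulates $\felect$ by handing $\Adv$ a random partition of the honest parties and reading back a committee $\committee_1$, and then emulates the sequence of idfair calls. In an iteration where $\Adv$ aborts with $(\abort,\is)$ on a not-yet-identified $\is\in\IS\cap\committee_1$, the simulator replies $(\bot,\is)$, deletes $\is$, and proceeds to the next iteration. The only point of departure is the first iteration in which $\Adv$ declines to abort: there, $\Sim$ reads off the effective inputs $\set{x_i'}_{i\in\IS}$ that $\Adv$ submitted for the corrupted parties in \emph{this} call, forwards them to the trusted party of the fully secure ideal computation of $f$, receives the output $y$, hands $y$ back to $\Adv$ as the corrupted parties' output, and finally outputs whatever $\Adv$ outputs.

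The observation that makes the reduction go through for inputs is that fairness forbids a late abort, so in every iteration $\Adv$ must commit to aborting \emph{before} the trusted party evaluates $f$, hence before any party learns a single bit of output. Consequently the inputs submitted in all aborted iterations are never used and may be discarded, and the corrupted parties' effective inputs are unambiguously those of the unique successful iteration; thus $\Sim$'s single call to the fully secure trusted party reproduces the real output distribution even when $f$ is randomized, since only one genuine evaluation ever occurs. For termination I would invoke \cref{cor:elect}: with probability at least $1-\err(n,n',\beta,\beta')$ the elected committee satisfies $\size{\committee_1\cap\IS}\le t'$, and conditioned on this event each abort eliminates a distinct corrupted committee member, so after at most $t'$ aborts the committee is corruption-free and the $(t'+1)$-st call succeeds. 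I expect the main obstacle to be bookkeeping rather than conceptual, specifically verifying that on the good event the fully-corrupted-committee branch of the restricted-identifiable-abort functionality (where the adversary would otherwise obtain all inputs and dictate the output) is never reached, since $\size{\committee_1\cap\IS}\le t'<n'$ there; this makes the hybrid-model simulation perfect, so that the entire statistical loss $\err(n,n',\beta,\beta')$ is inherited solely from Feige's election.
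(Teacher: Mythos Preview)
Your proposal is correct and follows essentially the same approach as the paper: present the with-input analogue of \cref{prot:ridfair_to_full_noinput} (elect a committee via $\felect$, then iterate the $\committee_1$-idfair call up to $t'+1$ times with the lowest-index surviving member playing the eliminated parties on default inputs), and argue that the simulation is identical to the no-input case except that in the first non-aborting iteration the simulator forwards the adversary's submitted inputs to the fully secure trusted party. The paper's proof is in fact terser than yours, simply pointing back to \cref{thm:ridfair_to_full_noinput} and noting this single difference; your additional remarks about fairness precluding late aborts and the fully-corrupted branch never being reached are sound and make explicit what the paper leaves implicit.
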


\begin{proof}
The protocol is very similar to \cref{prot:ridfair_to_full_noinput}, we present it here for completeness.
\begin{protocol}(fairness with restricted identifiable abort to full security)\label{prot:ridfair_to_full_withinput_noHM}
\begin{itemize}
    \item\textbf{Hybrid Model:}
    The protocol is defined in the hybrid model computing $\felect$ with full security, and $f$ with fairness with $(n',t')$-identifiable-abort.
    \item\textbf{Common Input:}
    The values $t',n'\in\N$.
    \item\textbf{Private Input:}
    Every party $\Party_i$ has private input $x_i\in\zs$, for $i\in[n]$.
    \item\textbf{The Protocol:}
\end{itemize}
\begin{enumerate}
    \item \label{step:rid_elect}
    All the parties invoke $\felect(n,n',\beta')$ and elect a committee $\committee_1\subset[n]$ of size $n'$.
    \item
    For $i=1,\ldots,t'+1$ do
    \begin{enumerate}
        \item
        All parties that have not been previously identified send their inputs to the trusted party computing $(f,\committee_1\mhyphen\idfair)$, where the party with the lowest index in $\committee_i$ simulates all parties in $\committee_1\setminus \committee_i$, using their predetermined default input values.
        Denote the output $\Party_j$ receives by $y_j$.
        \item
        Every party $\Party_j$ checks if $y_j$ is a valid output, if so $\Party_j$ outputs $y_j$ and halts.
        Otherwise, all parties received $(\bot,\is)$ as output, where $\is\in\committee_1\cap\IS$.
        If $\is\notin\committee_i$ (and so $\Party_{\is}$ is a previously identified corrupted party), then all parties set $\is$ to be the party with the lowest index in $\committee_i$.
        \item
        All parties set $\committee_{i+1}=\committee_i\setminus\set{\is}$.
    \end{enumerate}
\end{enumerate}
\end{protocol}

Proving the security of \cref{prot:ridfair_to_full_withinput_noHM} follows almost identically as the proof of \cref{thm:ridfair_to_full_noinput}. The difference is that when simulating the ideal functionality computing $f$ with $\committee$-identifiable-abort in a non-aborting iteration, the simulator gets input values from the adversary, which it sends to the trusted party in the fully secure ideal computation in order to get the output.
\end{proof}

\subsubsection{Reducing the Round Complexity}\label{sec:ridfair_to_full_parallel}

We next show that the technique, used in \cref{sec:ridfair_to_full_noinput_HM} for the honest-majority setting, to reduce the number of rounds for invoking the ideal computation from $\omega(\log(\secParam))$ to $\omega(1)$, can be applied also to the no-honest-majority case. However, this improvement will turn out to be meaningful only for deterministic functionalities (with public output). It is well known that every functionality can be adjusted to be deterministic and with public output via standard techniques; however, the resulting functionality has inputs (even if the original functionality had no inputs), therefore, it is only relevant for the with-input setting.

\begin{theorem}\label{thm:ridfair_to_full_parallel_withinput}
Let $f$ be an $n$-party functionality, let $0<\beta<\beta'<1$, let $n'=\log(\secParam)\cdot(\varphi(\secParam)-1/\varphi(\secParam))$ with $\varphi(\secParam)=1/\sqrt{1-\beta'}+\Omega(1)$, let $t'=n'-1$, let $\ell=\secParam^\uglyExp$, and let $t=\beta n$.
Then, $f$ can be $(\err(n,n',\beta,\beta'),t)$-securely computed with full security in a hybrid model that computes $f$ with fairness and $(\ell, n',t')$-identifiable-abort, by invoking the ideal functionality in $\varphi(\secParam)^2$ rounds.
\end{theorem}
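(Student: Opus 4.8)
The plan is to prove \cref{thm:ridfair_to_full_parallel_withinput} by grafting the parallel-committee technique behind \cref{thm:ridfair_to_full_parallel_noinput} onto the input-handling of the single-committee reduction of \cref{thm:ridfair_to_full_withinput}. Concretely, I would modify \cref{prot:ridfair_to_full_withinput_noHM} exactly as \cref{prot:ridfair_to_full_noinput} was modified in the proof of \cref{thm:ridfair_to_full_parallel_noinput}: the parties first invoke $\felect(n,m,\beta')$ to elect a single committee $\committee\subseteq[n]$ of size $m=\log(\secParam)\cdot\varphi(\secParam)$, and then set $\vCS=(\committee_1,\ldots,\committee_{\ell'})$ to be the family of \emph{all} subsets of $\committee$ of size $n'=m-n''$, where $n''=\log(\secParam)/\varphi(\secParam)$ and $\ell'=\binom{m}{n'}$. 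The parties then proceed in iterations: in each iteration they send their inputs to the trusted party computing $f$ with fairness and $(\ell,n',t')$-identifiable-abort \wrt $\vCS$ (with already-identified parties eliminated and their contribution defaulted, as in the single-committee protocol); if the output is a set of identified corrupted parties they are removed and another iteration begins, and otherwise the parties halt with the output.

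The combinatorial heart of the argument reuses two of the three claims behind \cref{thm:ridfair_to_full_parallel_noinput} verbatim and replaces the third. The bound $\binom{m}{n'}=\binom{m}{n''}\le\secParam^{\log(e)\cdot(2/e+1/\varphi(\secParam))}$ of \cref{claim:par_poly_noinput} shows that $\ell=\secParam^\uglyExp$ parallel copies suffice, and the termination bound of \cref{claim:par_term_noinput} applies unchanged: to abort an invocation the adversary must name a corrupted party in every sub-committee, i.e.\ identify a hitting set of all $n'$-subsets of $\committee$, which must contain more than $n''$ freshly removed corrupted parties, so after at most $m/n''=\varphi(\secParam)^2$ iterations the computation completes. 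The one claim that changes is \cref{claim:par_honest_noinput}: because the hybrid functionality is invoked with $t'=n'-1$, I need not an honest majority in each sub-committee but only that no sub-committee is \emph{fully} corrupted. Since $\felect$ guarantees $\ssize{\committee\cap\IS}\le\beta'm$, it suffices that $\beta'm<n'$, i.e.\ $\beta'\varphi(\secParam)<\varphi(\secParam)-1/\varphi(\secParam)$, which rearranges to $\varphi(\secParam)^2(1-\beta')>1$, i.e.\ $\varphi(\secParam)>1/\sqrt{1-\beta'}$; this is exactly the hypothesis $\varphi=1/\sqrt{1-\beta'}+\Omega(1)$, and it yields $\ssize{\committee_i\cap\IS}\le\beta'm<n'$, hence $\le n'-1=t'$, for every sub-committee. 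This weaker requirement is precisely why the threshold on $\varphi$ improves from the $1/\sqrt{1-2\beta'}$ of the honest-majority no-input case to $1/\sqrt{1-\beta'}$.

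For the simulation I would follow the simulator of \cref{thm:ridfair_to_full_parallel_noinput}, augmented with the input extraction of \cref{thm:ridfair_to_full_withinput}. The simulator first emulates $\felect$, handing the adversary the induced partition and recording $\committee$; in each iteration it emulates the $\vCS$-identifiable-abort computation of $f$, reading off the inputs the adversary submits for the corrupted parties. If the adversary sends $(\abort,\set{\is_1,\ldots,\is_{\ell'}})$ with each $\is_j\in\IS\cap\committee_j$, the simulator returns $(\bot,\set{\is_1,\ldots,\is_{\ell'}})$, removes the identified parties, and continues; in the first non-aborting iteration it calls the full-security trusted party for $f$, using as the corrupted parties' inputs those submitted by the adversary in that iteration (and defaults for already-removed corrupted parties), forwards the returned $y$ to the adversary, and outputs whatever the adversary outputs. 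Because no sub-committee is fully corrupted, every abort is an \emph{early} abort that leaks nothing about the honest inputs, so the simulation in the hybrid model is perfect, and the only error is the $\err(n,n',\beta,\beta')$ term from Feige's election, by \cref{cor:elect}.

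I expect the main obstacle to be the bookkeeping of the simulator across iterations in the presence of inputs: one must argue that the effective inputs the adversary commits to in the single deciding iteration are well-defined, that they coincide with what is handed to the full-security trusted party, and that the default substitution for removed corrupted parties matches the real execution. This is exactly the point at which the no-input proof is silent, so care is needed there; the combinatorial claims, by contrast, transfer with only the single modification to \cref{claim:par_honest_noinput} described above, and the input bookkeeping is localized and follows the template already established for the single-committee case in \cref{thm:ridfair_to_full_withinput}.
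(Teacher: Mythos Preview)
Your proposal is correct and follows essentially the same approach as the paper's proof: elect a committee of size $m=\log(\secParam)\varphi(\secParam)$, take all size-$n'$ sub-committees, reuse \cref{claim:par_poly_noinput,claim:par_term_noinput} verbatim, and replace \cref{claim:par_honest_noinput} by the weaker requirement that every sub-committee contains at least one honest party, which yields exactly the threshold $\varphi(\secParam)>1/\sqrt{1-\beta'}$ you derive. The paper's proof is in fact only a sketch and says even less about the input bookkeeping than you do, so your treatment is at least as complete.
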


\begin{proofsketch}
Assume for simplicity of exposition that $n=\Omega(\log(\secParam)\cdot \varphi(\secParam))$.
%\rdeleted{ (see \cref{remark:election_arbitrary_parties} for technical trick to handle arbitrary $n$).}
Similarly to the proof of \cref{thm:ridfair_to_full_parallel_noinput}, the parties initially invoke $\felect(n,m,\beta')$ to elect a committee $\committee\subseteq[n]$ of size $m=\min(n,\log(\secParam)\cdot \varphi(\secParam))$, and consider all subsets of $\committee$ of size $n'=m-n''$ for $n''=\log(\secParam)/\varphi(\secParam)$, denoted $\vCS=(\committee_1,\ldots,\committee_{\ell'})$. Following \cref{claim:par_poly_noinput}, there are polynomially many sub-committees, more precisely, $\ell'\leq\ell$. In addition, following \cref{claim:par_term_noinput} the protocol will complete within $\varphi(\secParam)^2$. It is left to show that in every sub-committee there is an honest party. This is done by slightly adjusting \cref{claim:par_honest_noinput}.

\begin{claim}\label{claim:par_honest}
In every sub-committee $\committee_i$ there exists at least one honest party.
\end{claim}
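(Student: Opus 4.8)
The plan is to mimic the proof of \cref{claim:par_honest_noinput}, relaxing the honest-majority requirement used there to the weaker requirement of a single honest party that is needed here. The starting observation is that a sub-committee $\committee_i$ contains \emph{no} honest party if and only if $\committee_i\subseteq\IS$; since $\ssize{\committee_i}=n'$, this can occur only if $\ssize{\committee\cap\IS}\geq n'$. Hence it suffices to guarantee $n'>\ssize{\committee\cap\IS}$, which rules out any fully corrupted sub-committee.

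First I would invoke the guarantee of the committee-election functionality $\felect(n,m,\beta')$ (\cref{def:func-felect}), by which $\ssize{\committee\cap\IS}\leq\beta'\cdot m$. The claim therefore reduces to verifying the inequality $n'>\beta'\cdot m$, namely
\[
\log(\secParam)\cdot\left(\varphi(\secParam)-1/\varphi(\secParam)\right)\; >\; \beta'\cdot\log(\secParam)\cdot\varphi(\secParam).
\]
Dividing through by $\log(\secParam)$ and multiplying by $\varphi(\secParam)>0$, this is equivalent to $(1-\beta')\cdot\varphi(\secParam)^2>1$, i.e.\ to $\varphi(\secParam)>1/\sqrt{1-\beta'}$, which holds for all sufficiently large $\secParam$ by the hypothesis $\varphi=1/\sqrt{1-\beta'}+\Omega(1)$.

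The argument is a short inequality manipulation, so I do not anticipate any genuine obstacle; the only points requiring care are mostly bookkeeping. Specifically, one must (i) use the correct $\felect$ bound $\ssize{\committee\cap\IS}\leq\beta'\cdot m$ on the corruptions inside the \emph{large} committee $\committee$, rather than any bound on the corruption fraction within an individual sub-committee, and (ii) observe that the threshold obtained here, $1/\sqrt{1-\beta'}$, is exactly the weakening of the threshold $1/\sqrt{1-2\beta'}$ of \cref{claim:par_honest_noinput}: the extra factor of $2$ disappears precisely because we now demand only a single honest party per sub-committee (the condition $n'>\ssize{\committee\cap\IS}$) instead of an honest majority (the condition $n'>2\cdot\ssize{\committee\cap\IS}$). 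This weaker guarantee is exactly what suffices in the with-input, no-honest-majority setting, where fairness with $\vCS$-identifiable-abort only requires that no sub-committee be fully corrupted.
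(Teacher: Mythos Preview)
Your proposal is correct and follows essentially the same argument as the paper: reduce to the inequality $n'>\ssize{\committee\cap\IS}$, invoke the $\felect$ bound $\ssize{\committee\cap\IS}\le\beta' m$, and solve to obtain $\varphi(\secParam)>1/\sqrt{1-\beta'}$. The additional commentary you give on the comparison with \cref{claim:par_honest_noinput} is accurate but not present in the paper's terser version.
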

\begin{proof}
We require that $n'>\ssize{\committee\cap\IS}$.
By the definition of $\felect(n,m,\beta')$ it holds that $\ssize{\committee\cap\IS}\leq\beta'\cdot m$. The claim will therefore follow if $n'>\beta'\cdot m$, \ie
\begin{equation}\label{eq:nohm}
\log(\secParam)\cdot(\varphi(\secParam)-1/\varphi(\secParam)) >\beta'\cdot\log(\secParam)\cdot \varphi(\secParam).
\end{equation}

\noindent
Since $0<\beta'<1$ and $\varphi=\Omega(1)$, \cref{eq:nohm} holds for
\[
\varphi(\secParam)>\frac{1}{\sqrt{1-\beta'}}.
\]
\end{proof}

The simulator proceeds similarly to the simulator in \cref{thm:ridfair_to_full_noinput}, where in every iteration, if the adversary sends $(\abort,\ID)$, with $\ID\subseteq\IS$ and $\ID\cap\committee_l\neq\emptyset$ for every $l\in[\ell']$, the simulator $\Sim$ simulates sending $(\bot,\ID)$ to the parties. If $\Adv$ sends input values $\sset{x'_i}_{i\in\IS}$ to the computation, $\Sim$ forwards these input values to the trusted party and obtains the output $y$. Next, $\Sim$ forwards $y$ to the adversary and outputs whatever $\Adv$ outputs.
\end{proofsketch}

\subsubsection{Fairness to Fairness with Restricted Identifiable Abort}\label{sec:fair_to_ridfair}
In this section, we present a reduction from fair computation with identifiable abort to fair computation for functionalities with inputs. More specifically, let $\pi'$ be a protocol for computing $\fin{f}{n}{n'}$ with fairness, even when run $\ell$ times in parallel by subsets of parties $\vCS= (\committee_1,\ldots,\committee_\ell)$. We show that $\pi'$ can be compiled into a protocol $\pi$ that computes $f$ with fairness and $\vCS$-identifiable-abort. For $\ell>1$ we require that $f$ is deterministic.

The above is formally stated in the theorem below, using the following notations. Let $f$ be a deterministic $n$-party functionality with public output, let $t,n'<n$, let $t'<n'$ and $\ell\in\poly(n)$, and let $\vCS=(\committee_1,\ldots,\committee_\ell)$, where $\committee_1,\ldots,\committee_\ell\subseteq[n]$ of size $n'$.

\begin{theorem}\label{thm:idfair_to_ridfair}
Assume that \TDP, \CRH, and non-interactive perfectly binding commitment schemes exist.
Then, there exists a \ppt algorithm $\CompilerNHM{n'}{n}$ such that for any $n'$-party, \rchanged{$r$}{$r'$}-round protocol $\pi'$ computing $\fin{f}{n}{n'}$, the protocol $\pi = \CompilerNHM{n'}{n}(\pi',\vCS)$ is an $n$-party, \rchanged{$O(r)$}{$O(r')$}-round protocol computing $f$ with the following guarantee.
If the number of corrupted parties in every $\committee_j$ is at most $t'$, and $\pi'$ is a protocol that \rchanged{$(\delta,t')$}{$(\delta',t')$}-securely computes $\fin{f}{n}{n'}$ with fairness, $\ell$ times in parallel, then $\pi$ is a protocol that \rchanged{$(\ell\cdot\delta,t)$}{$(\ell\cdot\delta',t)$}-securely computes $f$ with fairness and $\vCS$-identifiable-abort.
\end{theorem}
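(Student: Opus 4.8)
The plan is to follow the template of \cref{thm:idfair_to_ridfair_noinput}: I would construct the compiler $\CompilerNHM{n'}{n}$ in the $(\faugct,\zkmany)$-hybrid model (\cref{def:augFunc,def:zkmany}), prove a with-input analogue of \cref{lem:idfair_to_ridfair_noinput} establishing that the compiled protocol $(\ell\cdot\delta',t)$-securely computes $f$ with fairness and $\vCS$-identifiable-abort in this hybrid model, and then conclude by instantiating $\faugct$ and $\zkmany$ in a round-preserving manner via \cref{lem:instantiating_zkmanmy_noinput} and invoking the sequential composition theorem (\cref{prop:Composition}). The entire difficulty therefore lies in the hybrid-model lemma, and specifically in the input-handling that is absent from the no-input construction \cref{prot:fair_to_ridfair_noinput}.

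First I would prepend an \emph{input-delegation phase} to the execution of $\pi'$. Each party $\Party_i$ secret-shares its input $x_i$ using an $n'$-out-of-$n'$ scheme, independently for each sub-committee $\committee_1,\dots,\committee_\ell$, broadcasts a non-interactive perfectly-binding commitment $\comval_i^{j}$ to every share, and sends the corresponding decommitment to the designated committee member encrypted over the broadcast channel. To address the two subtleties flagged in \cref{sec:fair_to_full_withinputs}, each $\Party_i$ then uses $\zkmany$ to prove to all $n$ parties (i) that each ciphertext it sent is a valid opening of the matching public commitment, and (ii) that the committed shares across all sub-committees reconstruct to one and the same value. A party failing either proof is publicly assigned its default input. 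Once this phase is complete, the committee members obtain committed randomness through $\faugct^{\committee_j}$ and run the $\ell$ parallel executions of $\pi'$ computing $\fin{f}{n}{n'}$ over the broadcast channel, using the public commitment vectors as the functionality parameters and the opened shares as their private inputs; finally, each committee member proves via $\zkmany$ that its reported output is the output of $\pi'$ on its committed randomness and the observed transcript, exactly as in \cref{prot:fair_to_ridfair_noinput}.

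For security I would construct a simulator $\Sim$ that plays the honest parties together with the hybrid functionalities. During the input-delegation phase $\Sim$ extracts the effective input of each corrupted $\Party_i$ by reading it off the perfectly-binding public commitments and the accepting $\zkmany$ statements (defaulting any party whose proofs fail), and simulates the honest parties by committing to and sharing $0$, relying on the hiding of the commitments and the zero-knowledge property. It then invokes the simulator $\tilde{\Sim}$ guaranteed for the $\ell$-fold parallel execution of $\pi'$, feeding it the extracted inputs: whenever $\tilde{\Sim}$ aborts \emph{every} parallel instance, $\Sim$ forwards the message $(\abort,\{\is_1,\dots,\is_\ell\})$ with $\is_j\in\IS\cap\committee_j$ to the trusted party computing $f$ with $\vCS$-identifiable-abort; otherwise it submits the extracted corrupted inputs, obtains $y$, and hands it back to $\tilde{\Sim}$. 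The event $\committee_j\subseteq\IS$ for some $j$ is automatically consistent, since in that case the $\vCS$-identifiable-abort ideal model already reveals all inputs to the adversary and lets it choose the output---precisely the power $\tilde{\Sim}$ has over a fully corrupted committee. The $(\ell\cdot\delta',t)$ bound then follows from the security of the $\ell$-fold parallel composition of $\pi'$ (a hybrid argument over its $\ell$ instances), with the commitment and zero-knowledge layers contributing only a negligible additive term.

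The step I expect to be the main obstacle is arguing that corrupted parties \emph{outside} all committees can never induce an abort, while still guaranteeing that every honest committee member holds a correctly-opened input that is consistent across the $\ell$ runs. This is exactly the purpose of the two $\zkmany$ proofs: perfect binding prevents a corrupted committee member from altering an honest party's committed share without detection; proof (i) prevents a corrupted party from feeding an honest committee member an unopenable ciphertext, forcing a public default rather than an abort; and proof (ii) forces each party to commit the same reconstructed input to all sub-committees, which---because $f$ is deterministic and public-output---guarantees identical outputs across the parallel executions and hence a well-defined single value $y$ to deliver. Verifying that these mechanisms compose so that only members of a committee can cause that committee's instance of $\pi'$ to abort, and that this maps cleanly onto the restricted-identifiable-abort ideal model, is the delicate part of the analysis.
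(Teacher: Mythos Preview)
Your proposal is correct and follows essentially the same approach as the paper: the compiler in the $(\faugct,\zkmany)$-hybrid model with an input-delegation phase (secret-share, publicly commit, encrypt and broadcast the openings, then prove via $\zkmany$ both well-formedness and cross-committee consistency), followed by committed-randomness generation, parallel execution of $\pi'$, and output proofs, is exactly \cref{prot:fair_to_ridfair}, and your simulator sketch matches the one in \cref{lem:idfair_to_ridfair}. The only detail you leave implicit is that the encryption used to deliver openings should be a \emph{committing} public-key scheme (so that the relation $R_j$ in the output proof binds the decrypted shares to the public ciphertexts), but this is constructible from the stated assumptions.
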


\begin{proof}
%The proof follows from \cref{lem:idfair_to_ridfair,lem:instantiating_zkmanmy} below.
In \cref{lem:idfair_to_ridfair}, we construct the protocol compiler $\CompilerNHM{n}{n'}(\pi',\vCS)$ in the $(\faugct,\zkmany)$-hybrid model (explained below), and in \cref{lem:instantiating_zkmanmy_noinput}, we show how to instantiate the ideal functionalities $\faugct$ and $\zkmany$ in a round-preserving manner.
The proof follows from the sequential composition theorem (\cref{prop:Composition}).
\end{proof}

\paragraph{A high-level description of the compiler.}
We now describe the compiler at a high level. For simplicity, we consider a single committee $\committee$ that computes $\fin{f}{n}{n'}$ with fairness and identifiable abort. The compiler consists of three phases: initially, every party shares its input among the committee members; next, the committee computes $f$ over the shared inputs; and finally, the committee members distribute the output to all the parties.

In the first phase, it is important that the input values of the parties are shared in a way that forces each committee member to use the actual value it received from each party (otherwise, corrupted parties might learn the value of $f$ on inputs that are correlated to the actual input values of the honest parties). In case a corrupted committee member uses a different value, it should be identified \emph{without} learning the output. Note that the guarantee provided by a fair computation is that if the adversary learns the output then so do all honest parties, but there is no restriction on the adversary from entering arbitrary values to the computation. One way to solve this issue is by having the functionality $\fin{f}{n}{n'}$ itself identify committee members that modify their shares to all the parties \emph{before} computing the function $f$.
However, now it should be ensured that a corrupted party cannot falsely incriminate an honest committee member, and claim that the committee member used different values than the one it gave him.

We solve both issues by having each party first publicly commit the values it sends to the committee members, in the following way. Every party secret shares its input value and publicly commits to each share. Next, the party encrypts each share using a committing public-key encryption scheme\footnote{A committing public-key encryption scheme is an encryption scheme with the property that it is computationally infeasible to find two pairs of (different) plaintext and randomness that are encrypted to the same ciphertext.} (every share is encrypted with the public key of the recipient committee member), and broadcasts all ciphertexts. Finally, each party proves to all other parties that it behaved honestly.

In the second phase, the committee members compute $\fin{f}{n}{n'}$ over the broadcast channel. This computation is compiled to enforce a semi-honest behavior in a similar way to the GMW compiler~\cite{GoldreichMW87}. The committee members run an augmented coin-tossing protocol, where each committee member receives a random string, and all other parties receive a commitment to this string. Next, the committee members execute the fair protocol computing $\fin{f}{n}{n'}$ with identifiable abort until it completes with an output value $y$ or with an identity of a corrupted party in the committee.

In the third phase, each committee member broadcasts its output value and proves to all parties that it behaved honestly, \ie that it followed the protocol using as input the shares it decrypted from the ciphertexts in the first phase, and the committed randomness generated in the second phase. If a party receives an accepting proof, it outputs the corresponding output value. Assuming that there exists an honest party in the committee, at least one of the proofs will be accepting.

If during the first two phases a party in the committee is identified as corrupted, then every party outputs its identity and the protocol halts. If a party outside of the committee is identified, \radded{then} all the parties remove this party from the computation and resume the execution.

\paragraph{Ideal functionalities used in the compiler.}
We construct the compiler in the hybrid model computing $\faugct^{\committee}$ and $\zkmany$ (see \cref{def:augFunc,def:zkmany}).
In the compiler below we consider two types of NP relations. The first is used when a party sends its shared input to the committees, and the second when a committee member sends its output value to all the parties.
These relation are formally described as follows:
\begin{itemize}
    \item
    In the relation $\Renc$, parametrized by a public-key encryption scheme $(\EncGen,\Enc,\Dec)$ and a non-interactive commitment scheme $\Com$, the public statement $(\vek,\vencval,\vcomval)$ consists of a vector of encryption keys $\vek=(\ek_{1,1},\ldots,\ek_{n',\ell})$, a vector of ciphertexts $\vencval=(\encval_{1,1},\ldots,\encval_{n',\ell})$ and a vector of commitments $\vcomval=(\comval_{1,1},\ldots,\comval_{n',\ell})$;
    the witness $(\vsvalhat,\vr)$ consists of vectors of strings $\vsvalhat=(\svalhat_{1,1},\ldots,\svalhat_{n',\ell})$, where $\svalhat_{j,l}=(\sval_{j,l},\decomval_{j,l})$, and randomness $\vr=(\rval_{1,1},\ldots,\rval_{n',\ell})$.
    For every $j\in[n']$ and $l\in[\ell]$, it holds that
    $\encval_{j,l}=\Enc_{\ek_{j,l}}(\svalhat_{j,l};\rval_{j,l})$,
    that $\comval_{j,l}=\Com(\sval_{j,l};\decomval_{j,l})$,
    and in addition, by denoting $x_l=\bigoplus_{j\in[n']}{\sval_{j,l}}$, that $x_1=\ldots=x_\ell$.
    \item
    In the following vector of NP-relations $(R_1,\ldots,R_{n'})$, parametrized by a public-key encryption scheme $(\EncGen,\Enc,\Dec)$ and a commitment scheme $\Com$, for every $j\in[n']$, the relation $R_j$ contains pairs $((\vencval,\vm,\comaugct,\outvalue),(\dk,\rndaugct,\decomaugct))$, where the public instance consists of a vector of ciphertexts $\vencval=(\encval_1,\ldots,\encval_n)$, a vector of messages $\vm=(m_1,\ldots,m_p)$, a commitment $\comaugct$, and a value $\outvalue$. The witness consists of a decryption key $\dk$, a random string $\rndaugct$, and a decommitment information $\decomaugct$. It holds that $\outvalue$ is the output value under the next-message function of $\Party_j$ in protocol $\pi'$ on input $(\svalhat_1,\ldots,\svalhat_n)=(\Dec_{\dk}(\encval_1),\ldots,\Dec_{\dk}(\encval_n))$, randomness $\rndaugct$ and messages $\vm$; in addition, $\comaugct=\Com(\rndaugct;\decomaugct)$.
\end{itemize}

\begin{lemma}\label{lem:idfair_to_ridfair}
Assume that committing public-key encryption schemes and non-interactive perfectly binding commitment schemes exist, and consider the same notations as in \cref{thm:idfair_to_ridfair}.
If $\pi'$ is a \rchanged{$(\delta,t')$}{$(\delta',t')$}-secure protocol computing $\fin{f}{n}{n'}$ with fairness, $\ell$ times in parallel,
%and the number of corrupted parties in every $\committee_j$ is smaller than $t'$,
then the protocol $\pi = \CompilerNHM{n'}{n}(\pi',\vCS)$ is an $n$-party protocol that \rchanged{$(\ell\cdot\delta,t)$}{$(\ell\cdot\delta',t)$}-securely computes $f$ with fairness and $\vCS$-identifiable-abort, in the $(\faugct,\zkmany)$-hybrid model.
\end{lemma}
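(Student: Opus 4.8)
The plan is to prove the lemma in the real/ideal paradigm. Fix an adversary $\Adv$ in the $(\faugct,\zkmany)$-hybrid model corrupting a set $\IS$ with $\ssize{\IS}\le t$ and $\ssize{\IS\cap\committee_j}\le t'$ for every $j\in[\ell]$, and construct a simulator $\Sim$ for the ideal model computing $f$ with fairness and $\vCS$-identifiable-abort. Since $t'<n'$, each committee contains at least one honest party, so the fully-corrupted-committee branch of the ideal model never arises under the hypothesis. The decisive structural advantage I would exploit is that $\Sim$ emulates the $\zkmany$ functionality internally; hence whenever a corrupted party submits a proof, $\Sim$ reads its witness directly and obtains straight-line extraction with no rewinding.

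First I would describe $\Sim$ phase by phase. In the input-sharing phase, $\Sim$ emulates the honest parties' commitments and committing-encryptions of shares, and reads from each corrupted party's $\Renc$ proof the witness $(\vsvalhat,\vr)$; from the shares therein it reconstructs that party's committed input $x_i=\bigoplus_{j}\sval_{j,l}$, which the $\Renc$ relation forces to be a single value across all $\ell$ committees. In the computation phase, $\Sim$ emulates $\faugct^{\committee}$ honestly and then runs $\tilde\Sim$, the simulator guaranteed for the $\ell$-fold parallel execution of $\pi'$ (after the round-preserving upgrade of $\pi'$ to fairness with identifiable abort, see Remark~\ref{remark:compiler-restrictions}), playing the trusted party for $\fin{f}{n}{n'}$ towards it. If $\tilde\Sim$ aborts every one of the $\ell$ instances, outputting identities $\is_1,\ldots,\is_\ell$ with $\is_j\in\IS\cap\committee_j$, then $\Sim$ sends $(\abort,\{\is_1,\ldots,\is_\ell\})$ to its trusted party; otherwise (some instance completes) $\Sim$ submits the extracted inputs $\{x_i\}_{i\in\IS}$, receives $y$, and feeds $y$ to $\tilde\Sim$ as the output of every non-aborted instance. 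Determinism and public output of $f$ are used here: all committees compute the same $y=f(x_1,\ldots,x_n)$, so a single value from the trusted party is consistent with all instances. In the output-distribution phase, $\Sim$ emulates accepting $\zkmany$ proofs (relation $R_j$) for honest committee members announcing $y$, and verifies the corrupted members' proofs; since every committee has an honest member, at least one accepting proof is produced, matching delivery of $y$ in the ideal model.

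The indistinguishability is established by a short hybrid sequence. Starting from the real execution, I would first replace the emulation of $\faugct$ and the output-phase $\zkmany$ proofs by playing the functionalities (an identical distribution); then replace the committee's execution of the $\ell$-fold $\pi'$ by $\tilde\Sim$, fed the true output, incurring the $\ell\cdot\delta'$ term from the $(\delta',t')$-security of $\pi'$ run $\ell$ times in parallel; then replace the honest parties' encryptions aimed at honest committee members by encryptions of $0$ (indistinguishable by semantic security, since $\Adv$ lacks the honest decryption keys); and finally replace the honest shares encrypted to corrupted committee members by uniformly random shares. The correctness of this simulation rests on two binding-type facts, which I would verify explicitly: binding of the public commitments together with the committing property of the encryption scheme, and the Verify step of $\fin{f}{n}{n'}$, guarantee that corrupted committee members cannot substitute shares, so the value $\fin{f}{n}{n'}$ reconstructs for each corrupted $\Party_i$ equals the extracted $x_i$; hence the $y$ routed through the outer trusted party matches the real output.

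I expect the main obstacle to be the last hybrid step, namely decoupling the corrupted committee members' inputs to $\pi'$ from the honest parties' real inputs. The difficulty is that $\tilde\Sim$ must be run on the corrupted members' decrypted shares yet be fed an externally supplied output $y$, and a naive substitution of honest shares by dummies would make $y$ inconsistent with $\tilde\Sim$'s inputs. The resolution I would carry out is to invoke $t'$-privacy of the ($n'$-out-of-$n'$) secret-sharing scheme: the at-most-$t'$ shares of each honest input seen by corrupted committee members are distributed identically to uniformly random shares, independent of the honest input. Since $\tilde\Sim$ is a fixed algorithm whose output depends only on the corrupted-input distribution and on $y$, replacing genuine shares by identically-distributed random shares leaves its simulated view unchanged while removing the last dependence on honest inputs, so that $y$ may legitimately be obtained from the trusted party for $f$ and the final hybrid coincides with the ideal execution driven by $\Sim$.
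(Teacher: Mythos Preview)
Your proposal is correct and follows the same approach as the paper: emulate $\faugct$ and $\zkmany$ to obtain straight-line extraction of the corrupted parties' committed inputs, invoke the simulator $\tilde{\Sim}$ for the $\ell$-fold parallel $\pi'$ (feeding it the single value $y$ obtained from the outer trusted party, which is consistent across committees by determinism and the $\Renc$ relation), and argue indistinguishability via the same hybrid chain. Your separation of the final decoupling step into a semantic-security move (ciphertexts under honest committee keys) and a secret-sharing-privacy move (the $\le t'$ shares under corrupted keys are uniform regardless of the secret) is slightly more explicit than the paper's single semantic-security claim for the analogous hybrid.

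One small point both your sketch and the paper's elide: after your two moves, the broadcast \emph{commitments} $\comval_i^{j,l}$ to the shares destined for honest committee members still depend on the honest inputs (the joint distribution of those shares is constrained by $\bigoplus_j s_i^{j,l}=x_i$), so an additional commitment-hiding hybrid is needed before the simulation is truly input-independent. This is routine given the computational hiding of the perfectly binding scheme, but should be stated alongside the encryption step.
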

\begin{proof}
The proof proceeds in a similar way to the proof of \cref{thm:idfair_to_ridfair_noinput}, where the main difference relates to sending the shared inputs to each committee. As in \cref{thm:idfair_to_ridfair_noinput}, we assume \wlg that $\pi'$ is fair with identifiable abort and that all messages are sent over the broadcast channel.
The $n$-party protocol $\pi = \CompilerNHM{n'}{n}(\pi',\vCS)$ is defined as follows.
\begin{construction}(fairness to fairness with restricted identifiable abort)\label{prot:fair_to_ridfair}
\begin{itemize}
    \item\textbf{Hybrid Model:}
    The protocol is defined in the hybrid model computing $\faugct$ with restricted identifiable abort and $\zkmany$ with full security.
    \item\textbf{Common Input:}
    A public-key encryption scheme $(\EncGen,\Enc,\Dec)$, a non-interactive commitment scheme $\com$, an $n'$-party protocol $\pi'$, and $\vCS=(\committee_1,\ldots,\committee_\ell)$. We use the notation $\Party^l_j$ to refer to the \jth party in $\committee_l$.
    \item\textbf{Private Input:}
    Every party $\Party_i$, for $i\in[n]$, has private input $x_i\in\zs$.
    \item\textbf{The Protocol:}
    Let $\ID$ denote the (initially empty) set of identified corrupted parties. At any point in the protocol, we call a committee $\committee_l$ \emph{active} if $\committee_l\cap\ID=\emptyset$.
\end{itemize}
\begin{enumerate}%[leftmargin=*]
    \item\label{step:rid_pk}
    Every $\Party_j\in\bigcup_{l\in[\ell]} \committee_l$ generates $(\dk_j,\ek_j)\gets \EncGen(1^\secParam)$ and broadcasts $\ek_j$.
    If some $\Party_{\is}\in\bigcup \committee_l$ did not broadcast, every party $\Party_i$ adds $\is$ to $\ID$.
    (The key-pair generated by $\Party_j^l\in\committee_l$ is denoted by $(\dk_{j,l},\ek_{j,l})$.) Denote $\vek=(\ek_{1,1},\ldots,\ek_{n',\ell})$.

    \item\label{step:rid_input}
    Every party $\Party_i$ proceeds as follows:
    \begin{enumerate}
        \item\label{step:rid_input_encrypt}
        For every active $\committee_l$, \radded{$\Party_i$} samples random values $(\sval_i^{1,l},\ldots,\sval_i^{n',l})$, conditioned on $x_i= \bigoplus_j{\sval_i^{j,l}}$. For every $j\in[n']$, \rchanged{commit}{$\Party_i$ commits} to the \jth share as $\comval_i^{j,l}=\Com(\sval_i^{j,l};\decomval_i^{j,l})$, denote $\svalhat_i^{j,l}=(\sval_i^{j,l},\decomval_i^{j,l})$, and encrypt $\encval_i^{j,l}=\Enc_{\ek_{j,l}}(\svalhat_i^{j,l};\rval_i^{j,l})$.
        \item\label{step:rid_input_prove}
        Send the values $((\vek,\vencval_i,\vcomval_i),(\vsvalhat_i,\vr_i))$ to $\zkmany$ (parametrized by $\Renc$),\footnote{More formally, we consider an ideal world for $n$-bounded parallel computation of $\zkmany$, where $\Party_i$ acts as the prover in the \ith computation.} where
        $\vencval_i=(\encval_i^{1,1},\ldots,\encval_i^{n',\ell})$,
        $\vcomval_i=(\comval_i^{1,1},\ldots,\comval_i^{n',\ell})$,
        $\vsvalhat_i=(\svalhat_i^{1,1},\ldots,\svalhat_i^{n',\ell})$, and
        $\vr_i=(\rval_i^{1,1},\ldots,\rval_i^{n',\ell})$.

        If $\zkmany$ returned $((\vek,\vencval_{\is},\vcomval_{\is}),0)$ for some $\Party_{\is}$, then all parties hard-wire to $f$ the default value for $\Party_{\is}$ and remove $\Party_{\is}$ from the party-set; if $\is\in\bigcup \committee_l$, then every $\Party_i$ adds $\is$ to $\ID$.
    \end{enumerate}

    \item\label{step:rid_decrypt}
    For every active $\committee_l$, every party $\Party^l_j\in\committee_l$ decrypts the vector $\vencval_{j,l}=(\encval_1^{j,l},\ldots,\encval_n^{j,l})$, by computing $\svalhat_i^{j,l}=\Dec_{\dk_{j,l}}(\encval_i^{j,l})$, for every $i\in[n]$. In addition, parse $\svalhat_i^{j,l}=(\sval_i^{j,l},\decomval_i^{j,l})$.

    \item\label{step:rid_coinflip}
    For every active $\committee_l$, all parties invoke $\faugct^{\committee_l}$ with $\committee_l$-identifiable-abort (in parallel);\footnote{More formally, the parties invoke the functionality computing at once $\ell$ instances of $\faugct$, where the \lth instance has $\committee_l$-identifiable-abort.} every $\Party^l_j\in\committee_l$ receives back $(\rndaugct_{j,l},\decomaugct_{j,l},\vcomaugct_l)$ where $\vcomaugct_l=(\comaugct_{1,l},\ldots,\comaugct_{n',l})$ is a public output.
    In case the computation for $\committee_l$ aborts and some party $\Party_{\is} = \Party^l_{\js}\in \committee_l$ is identified as corrupted, every $\Party_i$ adds $\is$ to $\ID$.

    \item\label{step:rid_gmw}
    For every active $\committee_l$, the parties in $\committee_l$ execute the protocol $\pi'$ for computing $\fin{f}{n}{n'}$, parametrized by $\vcomval=(\vcomval_1,\ldots,\vcomval_n)$, over the broadcast channel, where $\Party_j^l\in\committee_l$ uses $(\svalhat_1^{j,l},\ldots,\svalhat_n^{j,l})$ as its input and $\rndaugct_{j,l}$ as its random coins.
    Denote by $\vm_{j,l}=(m_1^{j,l},\ldots,m_p^{j,l})$ the messages $\Party_j^l\in\committee_l$ received during the protocol and let $\outvalue_{j,l}$ be the output $\Party_j^l\in\committee_l$ received (either a valid value $y$ or $(\bot,\is)$ with $\is\in\committee_l$).

    \item\label{step:rid_output}
    For every active $\committee_l$, every $\Party_j^l\in\committee_l$ proves to all parties that $\outvalue_{j,l}$ is indeed its output value, \ie $\Party_j^l$ sends $((\vencval_{j,l},\vm_{j,l},\comaugct_{j,l},\outvalue_{j,l}),(\dk_j,\rndaugct_{j,l},\decomaugct_{j,l}))$ to $\zkmany$ (parametrized by $R_j$).\footnote{More formally, we consider an ideal world for $(n'\cdot\ell)$-bounded parallel computation of $\zkmany$, where every $\Party_j^l\in\committee_l$ acts as the prover in a separate computation.}
    Once party $\Party_i$ receives $((\vencval_{j,l},\vm_{j,l},\comaugct_{j,l},\outvalue_{j,l}),1)$ where the values $(\vencval_{j,l},\vm_{j,l},\comaugct_{j,l})$ match the common view, and $\outvalue_{j,l}= (\bot,\is)$ (for some $\Party_{\is} = \Party^l_{\js}\in \committee_l$), $\Party_i$ adds $\is$ to $\ID$; otherwise, $\Party_i$ adds $\outvalue_{j,l}$ to the (initially empty) set $\OUT_i$.
    If $\Party_i$ receives an invalid proof from some party $\Party_{\is} = \Party^l_{\js}\in \committee_l$, $\Party_i$ adds $\is$ to $\ID$.

    \item\label{step:rid_term}
    If $\OUT_i\neq\emptyset$, party $\Party_i$ arbitrarily chooses $y\in\OUT_i$ and outputs $y$.
    Otherwise, $\Party_i$ outputs $(\bot,\ID)$.
\end{enumerate}
\end{construction}

%We start by proving correctness. In case $\OUT_i\neq \emptyset$ for some honest party $\Party_i$, then with all but a negligible probability, $\OUT_i=\OUT_j$ for every honest party $\Party_j$. This holds since $\Party_i$ adds a value $y$ to $\OUT_i$ once $\zkmany$ validates $y$ in Step~\ref{step:rid_output}, however, $\zkmany$ simultaneously validates $y$ to all honest parties. Furthermore, in this case $\ssize{\OUT_i}=1$, since $f$ is deterministic and all computations of $\pi'$ are executed using secret shares of the same input values, following the validity of the relation $\Renc$ in Step~\ref{step:rid_input}. Therefore, all honest parties will output the correct value $y$. In case $\OUT_i=\emptyset$ for every honest party $\Party_i$, it is guaranteed that all honest parties will output the same set $\ID\subseteq\IS$ that satisfies $\ID\cap\committee_l\neq\emptyset$ for every $l\in[\ell]$.

Let $\Adv$ be an adversary attacking the execution of protocol $\pi$ in the $(\faugct,\zkmany)$-hybrid model and let $\IS\subseteq[n]$ be a subset of size at most $t$, satisfying $\ssize{\IS\cap\committee_l}\leq t'$, for every $l\in[\ell]$.
We construct the following adversary $\Sim$ for the ideal model computing $f$ with fairness and $\vCS$-identifiable-abort.
On inputs $\set{x_i}_{i\in\IS}$ and auxiliary input $\aux$, the simulator $\Sim$ starts by emulating $\Adv$ on these inputs.
$\Sim$ plays towards $\Adv$ the roles of the honest parties and the ideal functionalities $\faugct$ and $\zkmany$.
\Sim initializes empty sets $\ID$ and $\OUT$.
For simplicity, assume that all input values and random strings are elements in $\zo^\secParam$.

To simulate Step~\ref{step:rid_pk}, the simulator $\Sim$ sends to \Adv public keys for honest parties in $\bigcup\committee_l$ and receives from \Adv public keys for corrupted parties in $\bigcup \committee_l$; if \Adv does not provide a public key for $\Party_{\is}=\Party_{\js}^l\in\bigcup\committee_l$, the simulator $\Sim$ adds $\is$ to $\ID$. Denote by $\ek_{j,l}$ the public key for $\Party_j^l\in\committee_l$ and let $\vek=(\ek_{1,1},\ldots,\ek_{n',\ell})$.

To simulate Step~\ref{step:rid_input}, the simulator \Sim proceeds as follows for every honest party $\Party_i$.
In Step~\ref{step:rid_input_encrypt}, for every active $\committee_l$ (satisfying $\committee_l\cap\ID=\emptyset$), compute secret shares of zero, $0^{2\secParam}=\bigoplus_{j\in[n']}{\sval_i^{j,l}}$, commit to each share as $\comval_i^{j,l}=\Com(\sval_i^{j,l};\decomval_i^{j,l})$, denote $\svalhat_i^{j,l}=(\sval_i^{j,l},\decomval_i^{j,l})$, and encrypt $\encval_i^{i,j}\gets\Enc_{\ek_{j,l}}(\svalhat_i^{j,l})$.
In Step~\ref{step:rid_input_prove}, for every active $\committee_l$, denote $\vencval_i=(\encval_i^{1,1},\ldots,\encval_i^{n',\ell})$, and $\vcomval_i=(\comval_i^{1,1},\ldots,\comval_i^{n',\ell})$.
Next, $\Sim$ simulates $\zkmany$ as follows. On behalf of every honest party $\Party_i$, the simulator $\Sim$ sends $((\vek,\vencval_i,\vcomval_i),1)$ to $\Adv$; on behalf of every corrupted party $\TParty_i$, the simulator $\Sim$ receives values $((\vek,\vencval_i,\vcomval_i),(\vsvalhat_i,\vr_i))$ from $\Adv$, verifies the relation $\Renc$ and answers $\Adv$ accordingly.

To simulate Step~\ref{step:rid_coinflip}, for every active $\committee_l$, the simulator $\Sim$ emulates $\faugct^{\committee_l}$ by sampling random strings $(\rndaugct_{j,l},\decomaugct_{j,l})$ for parties $\Party_j^l\in\committee_l$, computing $\comaugct_{j,l}=\com(\rndaugct_{j,l};\decomaugct_{j,l})$ and setting $\vcomaugct_l=(\comaugct_{1,l},\ldots,\comaugct_{n',l})$. Next, \Sim hands $(\rndaugct_{j,l},\decomval_{j,l},\vcomaugct_l)$ to corrupted parties $\TParty_j\in\committee_l$ and $(\lambda,\vcomaugct_l)$ to corrupted parties outside of $\committee_l$. In case $\Sim$ receives $(\abort,\is)$ with $\is\in\IS\cap\committee_l$ from $\Adv$, it responds with $(\bot,\is)$ and adds $\is$ to $\ID$.

Next, the simulator $\Sim$ uses the simulator $\tilde{\Sim}$ that is guaranteed to exist for the $\ell$-times parallel execution of $\pi'$ when interacting with the residual adversary of $\Adv$ in Step~\ref{step:rid_gmw}.
The simulator $\Sim$ invokes $\tilde{\Sim}$ on input values $\sset{\vsvalhat_{j,l}}_{j\in\committee_l\cap\IS}$ for the \lth execution of $\pi'$ with parties in $\committee_l$ (where the values $\vsvalhat_{j,l}=(\svalhat_1^{j,l},\ldots,\svalhat_n^{j,l})$ are obtained during the simulation of Step~\ref{step:rid_input_prove}) and on auxiliary information containing $z$, the input values $\sset{x_i}_{i\in\IS}$, and the transcript of the simulation until this point.
If $\tilde{\Sim}$ sends $(\abort,\is)$ for the \lth computation, with $\is\in\IS\cap\committee_l$, the simulator $\Sim$ adds $\is$ to $\ID$.
Otherwise, $\tilde{\Sim}$ sends input values $\sset{\tilde{\vs}_{j,l}}_{j\in\committee_l\cap\IS}$ with $\tilde{\vs}_{j,l}=(\tilde{\sval}_1^{j,l},\ldots,\tilde{\sval}_n^{j,l})$.
The simulator \Sim verifies that $\tilde{\sval}_i^{j,l}=(\sval_i^{j,l},\decomval_i^{j,l})$ and that $\comval_i=\Com(\sval_i^{j,l};\decomval_i^{j,l})$; if the verifications fails, \Sim responds to $\tilde{\Sim}$ with $(\bot,\js)$ (for the smallest such $\js$) and adds $\is$ to $\ID$, such that $\Party_{\is}=\Party_{\js}^l\in\committee_l$.
In case no computation has valid inputs from $\Adv$, the simulator \Sim sends the set $\ID$ to its trusted party and receives back $(\bot,\ID)$.
Otherwise, $\Sim$ reconstructs the input values $\sset{x'_i}_{i\in\IS}$, by computing $x'_i=\bigoplus_j{\sval_i^{j,l}}$ for an arbitrary computation $l$ with valid inputs.
Next, \Sim sends $\sset{x'_i}_{i\in\IS}$ to the trusted party, receives output $y$ and forwards $y$ to $\tilde{\Sim}$ for every computation with valid inputs.
In both cases, $\Sim$ receives the output from $\tilde{\Sim}$, which contains the view of the adversary, and interacts with $\Adv$ accordingly.

To simulate Step~\ref{step:rid_output}, the simulator $\Sim$ simulates $\zkmany$. The simulator $\Sim$ sends, on behalf of every honest party $\Party_j^l$ in an active $\committee_l$, the message $((\vencval_{j,l},\vm_{j,l},\comaugct_{j,l},\outvalue_{j,l}),1)$ to every corrupted party, where $\vm_{j,l}$ is obtained from the output of $\tilde{\Sim}$ and $\outvalue_{j,l}$ is either the output value $y$ or $(\bot,\is)$ where $\Party_{\is}=\Party_{\js}^l\in\committee_l$ is the identified corrupted party. In addition, $\Sim$ receives $((\vencval_{j,l},\vm_{j,l},\comaugct_{j,l},\outvalue_{j,l}),(\dk_j,\rndaugct_{j,l},\decomaugct_{j,l}))$ from $\Adv$ on behalf of every corrupted party $\Party_j$ in $\committee_l$ and verifies the relation according to $R_j$. Finally, $\Sim$ outputs whatever $\Adv$ outputs and halts.

%Computational indistinguishability between the real execution of the compiled protocol $\pi$ running with adversary \Adv and the ideal computation of $f$ running with \Sim follows using standard techniques.
%\rnote{complete the proof}
\paragraph{Proving indistinguishability.}
We prove computational indistinguishability between the real execution of the compiled protocol $\pi$ running with adversary \Adv and the ideal computation of $f$ running with \Sim via a series of hybrids experiments. The output of each experiment is the output of the honest parties and of the adversary.

\paragraph{The game $\HYB^1_{\pi, \IS, \Adv(z)}(\vx,\secParam)$.}
This game is defined to be the execution of the protocol $\pi$ in the $(\faugct,\zkmany)$-hybrid model on inputs $\vx\in(\zs)^n$ and security parameter $\secParam$ with adversary \Adv running on auxiliary information $z$ and controlling parties in $\IS$.

\paragraph{The game $\HYB^2_{\pi, \IS, \Adv(z)}(\vx,\secParam)$.}
In this game, we modify $\HYB^1_{\pi, \IS, \Adv(z)}(\vx,\secParam)$ as follows.
Whenever an honest party invokes $\zkmany$ with $(x,w)$ (in Steps~\ref{step:rid_input_prove} and~\ref{step:rid_output}), all parties receive output $(x,1)$ without checking if $w$ is a witness for $x$.

\begin{claim}\label{claim:hyb1}
$\sset{\HYB^1_{\pi, \IS, \Adv(z)}(\vx,\secParam)}_{\vx,z,\secParam} \approx \sset{\HYB^2_{\pi, \IS, \Adv(z)}(\vx,\secParam)}_{\vx,z,\secParam}$.
\end{claim}
\begin{proof}
This is immediate since honest parties always send valid witnesses to $\zkmany$.
\end{proof}

\paragraph{The game $\HYB^3_{\pi, \IS, \Adv(z)}(\vx,\secParam)$.}
In this game, we modify $\HYB^2_{\pi, \IS, \Adv(z)}(\vx,\secParam)$ as follows.
In Step~\ref{step:rid_gmw}, instead of running $\ell$ instances of protocol $\pi'$ in parallel, use the simulator $\tilde{\Sim}$ that is guaranteed to exists for the residual adversary \Adv.

More specifically, $\tilde{\Sim}$ is invoked on input values $\sset{\vs_{j,l}}_{j\in\committee_l\cap\IS}$ for the \lth execution of $\pi'$ with parties in $\committee_l$ (where $\vs_{j,l}=(\sval_1^{j,l},\ldots,\sval_n^{j,l})$ are the values sent by $\Adv$ to $\zkmany$ in Step~\ref{step:rid_input}) and on auxiliary information containing $z$, the input values $\sset{x_i}_{i\in\IS}$ and the transcript of the experiment until this point.
If $\tilde{\Sim}$ sends $(\abort,\is)$ for the \lth computation, with $\is\in\IS\cap\committee_l$, add $\is$ to $\ID$.
Otherwise, $\tilde{\Sim}$ sends input values $\sset{\tilde{\vs}_{j,l}}_{j\in\committee_l\cap\IS}$ with $\tilde{\vs}_{j,l}=(\tilde{\sval}_1^{j,l},\ldots,\tilde{\sval}_n^{j,l})$; If for some $j\in\committee_l\cap\IS$ the signed values are not verified, reply with $(\bot,j)$ to $\tilde{\Sim}$ and add $j$ to $\ID$.
If there exist instances with verified inputs, reconstruct the input values $\sset{x'_i}_{i\in\IS}$, where $x'_i=\Recon(\tilde{\sval}_i^{1,l}, \ldots, \tilde{\sval}_i^{n',l})$ for an arbitrary computation $l$ with valid inputs (where for $j\notin \committee_l\cap\IS$, set $\tilde{\sval}_i^{j,l}=\sval_i^{j,l}$), and compute $f$ over these inputs and the honest parties' inputs to obtain the output $y$ and forward $y$ to $\tilde{\Sim}$ for every computation with valid inputs.
Next, use the output from $\tilde{\Sim}$, which contains the view of the adversary, in order to interacts with $\Adv$.

\begin{claim}\label{claim:hyb2}
$\sset{\HYB^2_{\pi, \IS, \Adv(z)}(\vx,\secParam)}_{\vx,z,\secParam} \ci \sset{\HYB^3_{\pi, \IS, \Adv(z)}(\vx,\secParam)}_{\vx,z,\secParam} $.
\end{claim}
\begin{proof}
This follows from the security of the simulator $\tilde{\Sim}$.
Indeed, the ability to distinguish between $\HYB^2$ and $\HYB^3$ with non-negligible advantage implies the same advantage in distinguishing between the simulator $\tilde{\Sim}$ and the execution of the $\ell$-times parallel execution of $\pi'$ when interacting with the residual adversary of $\Adv$ in Step~\ref{step:rid_gmw}.
\end{proof}

\paragraph{The games $\HYB^{4,\is,\js,\ls}_{\pi, \IS, \Adv(z)}(\vx,\secParam)$, for $1\leq \is\leq n$, $1\leq \js\leq n'$ and $1\leq \ls\leq \ell$.}
In these games, we modify $\HYB^3_{\pi, \IS, \Adv(z)}(\vx,\secParam)$ as follows.
For every honest party $\Party_i$, in addition to generating $(\sval_i^{1,l},\ldots,\sval_i^{n',l})\gets\Share(x_i)$, generate secret shares of zero as $(\tilde{\sval}_i^{1,l},\ldots,\tilde{\sval}_i^{n',l})\gets\Share(0)$.
Next, for $(i,j,l)< (\is,\js,\ls)$ (\ie for $i< \is$, or $i=\is$ and $j<\js$, or $i=\is$, $j=\js$ and $l<\ls$), compute $c_i^{j,l}\gets\Enc_{pk_j}(\tilde{\sval}_i^{j,l})$, and for $(i,j,l)\geq(\is,\js,\ls)$, compute $c_i^{j,l}\gets\Enc_{pk_j}(\sval_i^{j,l})$.
\begin{claim}\label{claim:hyb3}
$\sset{\HYB^3_{\pi, \IS, \Adv(z)}(\vx,\secParam)}_{\vx,z,\secParam} \ci \sset{\HYB^{4,n.n',\ell}_{\pi, \IS, \Adv(z)}(\vx,\secParam)}_{\vx,z,\secParam}$.
\end{claim}
\begin{proof}
Note that $\HYB^3_{\pi, \IS, \Adv(z)}(\vx,\secParam) \approx \HYB^{4,1,1,1}_{\pi, \IS, \Adv(z)}(\vx,\secParam)$.
For every $i\in[n]$, $j\in[n']$ and $l\in[\ell-1]$, it holds that $\HYB^{4,i,j,l}_{\pi, \IS, \Adv(z)}(\vx,\secParam) \ci \HYB^{4,i,j,l+1}_{\pi, \IS, \Adv(z)}(\vx,\secParam)$;
similarly, for every $i\in[n]$, $j\in[n'-1]$ it holds that $\HYB^{4,i,j,\ell}_{\pi, \IS, \Adv(z)}(\vx,\secParam) \ci \HYB^{4,i,j+1,1}_{\pi, \IS, \Adv(z)}(\vx,\secParam)$ and for every $i\in[n-1]$ it holds that $\HYB^{4,i,n',\ell}_{\pi, \IS, \Adv(z)}(\vx,\secParam) \ci \HYB^{4,i+1,1,1}_{\pi, \IS, \Adv(z)}(\vx,\secParam)$.
Otherwise, since the simulated computation of $\pi'$ is independent of the honest parties' inputs, there exists an attack on the semantic security of the encryption scheme.
The claim follows using a standard hybrid argument.
\end{proof}

\paragraph{The game $\HYB^5_{\pi, \IS, \Adv(z)}(\vx,\secParam)$.}
In this game, we modify $\HYB^{4,n.n',\ell}_{\pi, \IS, \Adv(z)}(\vx,\secParam)$ as follows.
Instead of computing $f$ on $\sset{x'_i}_{i\in\IS}$ and on the input values of the honest parties, send $\sset{x'_i}_{i\in\IS}$ (or the set $\ID$ in case all computations are aborted) to the ideal functionality computing $f$ in the fair ideal model with $(\committee_1,\ldots,\committee_\ell)$-identifiable abort and get back the output $y$.

\begin{claim}\label{claim:hyb4}
$\sset{\HYB^{4,n.n',\ell}_{\pi, \IS, \Adv(z)}(\vx,\secParam)}_{\vx,z,\secParam} \approx \sset{\HYB^5_{\pi, \IS, \Adv(z)}(\vx,\secParam)}_{\vx,z,\secParam} $.
\end{claim}
\begin{proof}
This is immediate since the ideal model computes $f$ on the honest parties' inputs as required.
\end{proof}

The proof of \cref{lem:idfair_to_ridfair} now follows since $\HYB^5_{\pi, \IS, \Adv(z)}(\vx,\secParam)$ exactly describes the simulation done by $\Sim$, and in particular, does not depend on the input values of honest parties.
\end{proof}

\begin{remark}\label{remark:reactive}
The result in \cref{thm:idfair_to_ridfair}, of computing an $n$-party functionality $f$ with fairness and restricted identifiable abort, is achieved by having small committees compute $\fin{f}{n}{n'}$, over committed values, with fairness and identifiable abort. This technique can be extended in a straightforward way to construct a secure computation of $f$ with restricted identifiable abort, by having each of the committees securely compute $\fin{f}{n}{n'}$ without fairness, only with identifiable abort (where the adversary can abort the computation in each committee after learning the output). Moreover, the technique can be extended to a computation with restricted identifiable abort of \emph{single-input reactive functionalities} (see \cref{sec:reactive}) that receive inputs from the parties only at the first call, \ie the output of all proceeding calls is determined by the input for the first call. Indeed, this translates to having the committees compute $\fin{f_i}{n}{n'}$ (where $f_i$ is the \ith function of the single-input reactive functionality) over the same committed values that were provided by all the parties, and prove in zero knowledge that the outcome of every computation is correct.
Looking ahead, this observation will turn out useful in \cref{sec:apps:or}.
\end{remark}

\subsection{Fairness to Full Security with an Honest Majority (With Inputs)}\label{sec:fairtofull_withinput_HM}

\subsubsection{Fairness with Restricted Identifiable Abort with an Honest Majority}\label{sec:fair_to_ridfair_HM}
In this section, we show that in the case that an honest majority is guaranteed, the reduction from fairness with restricted identifiable abort to fairness can be much more elegant, and more importantly, be based on much simpler tools. Specifically, we devise a compiler, similar to the one for the no-honest-majority case that is based solely on error-correcting secret-sharing schemes (ECSS), which exist unconditionally (see \cref{def:ECSS}).
\begin{theorem}\label{thm:idfair_to_ridfair_hm}
Let $f$, $n'$, $\ell$, and $\vCS$ as in \cref{thm:idfair_to_ridfair}, in addition, let $t<n/2$ and $t'<n'/2$.
There exists a \ppt algorithm $\CompilerHM{n'}{n}$ such that for any $n'$-party, \rchanged{$r$}{$r'$}-round protocol $\pi'$ computing $\finout{f}{n}{t'}{n'}$, the protocol $\pi = \CompilerHM{n'}{n}(\pi',\vCS)$ is an $n$-party, \rchanged{$O(t'\cdot r)$}{$O(t'\cdot r')$}-round protocol computing $f$ with the following guarantee, in the information-theoretic (statistical) setting.

If the number of corrupted parties in every $\committee_j$ is smaller than $t'$, and $\pi'$ is a protocol that \rchanged{$(\delta,t')$}{$(\delta',t')$}-securely computes $\finout{f}{n}{t'}{n'}$ with abort, $\ell$ times in parallel, then $\pi$ is a protocol that \rchanged{$(\ell\cdot\delta,t)$}{$(\ell\cdot\delta',t)$}-securely computes $f$ with fairness and $\vCS$-identifiable-abort.
Furthermore, the compiler is black-box \wrt the protocol $\pi'$.
\end{theorem}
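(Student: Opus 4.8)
The plan is to reproduce, in the honest-majority setting, the two-step template of \cref{thm:idfair_to_ridfair_noinput_HM}, replacing the zero-knowledge and commitment machinery of the no-honest-majority compiler (\cref{lem:idfair_to_ridfair}) by error-correcting secret sharing (\cref{def:ECSS}), whose robustness is exactly what the honest-majority assumption buys. Concretely, I would first prove the statement under the stronger hypothesis that $\pi'$ computes $\finout{f}{n}{t'}{n'}$ with \emph{identifiable} abort (an analogue of \cref{lem:idfair_to_ridfair_noinput_HM}), and then invoke \cref{lem:abort_to_idabort_noinput_HM} to compile, round-preservingly, black-box, and with information-theoretic security, any honest-majority protocol secure with abort into one secure with identifiable abort. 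The theorem then follows from the sequential composition theorem (\cref{prop:Composition}).

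For the compiler $\CompilerHM{n'}{n}(\pi',\vCS)$ I would mirror \cref{prot:abort_to_ridfair_noinput_hmn}, prefixed by an input-distribution phase. First, each party $\Party_i$ secret-shares its input $x_i$ with a $(t'+1)$-out-of-$n'$ ECSS scheme toward every committee $\committee_l$, sending the $j$-th share to the $j$-th member of $\committee_l$. Second, the committees run $\pi'$ in parallel to compute $\finout{f}{n}{t'}{n'}$ with identifiable abort; since each $\committee_l$ contains fewer than $t'$ corruptions, the reconstruction performed inside the functionality error-corrects the (at most $t'-1$) faulty shares and outputs correct shares of $y=f(\vx)$. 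Third, aborts are announced exactly as in the no-input case: if every committee aborted, the parties output $(\bot,\sset{\is_1,\ldots,\is_\ell})$; otherwise they fix the minimal non-aborting committee $\committee_{\ls}$, whose members broadcast their output shares, and everybody reconstructs $y$ via $\Recon$.

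The simulator would be assembled, as in \cref{lem:idfair_to_ridfair_noinput_HM}, around the simulator $\tilde{\Sim}$ guaranteed for the $\ell$-times parallel execution of $\pi'$. Two points need checking. \emph{Fairness}: by the $t'$-privacy of the ECSS scheme and the bound $\ssize{\IS\cap\committee_l}<t'$, the adversary's view of the output shares prior to the reconstruction phase is independent of $y$; hence its choice of which committees to abort is output-independent, and since only the single committee $\committee_{\ls}$ is ever reconstructed, the adversary learns $y$ only when all honest parties do. \emph{Input extraction}: the simulator reads off the adversary's effective inputs $\sset{x_i'}_{i\in\IS}$ by error-correcting the shares the adversary fed into the $\ls$-th instance, submits them to the trusted party computing $f$ with $\vCS$-identifiable-abort, and relays the returned output back to $\tilde{\Sim}$. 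Indistinguishability is then inherited, up to the $\ell\cdot\delta'$ term, from the statistical security of $\pi'$ together with the unconditional security of the ECSS scheme.

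The step I expect to be the main obstacle is \emph{input consistency}: with inputs now distributed rather than publicly committed and proved consistent via $\zkmany$ (as in \cref{thm:idfair_to_ridfair}), a corrupted dealer may feed different committees shares that error-correct to different effective inputs, so distinct committees may evaluate $f$ on different inputs. The conceptual resolution is that only $\committee_{\ls}$ is ever revealed, so a single effective input---extracted from $\committee_{\ls}$---suffices for the simulation; the technical price is that, lacking zero-knowledge, one cannot certify inputs in one shot, and guaranteeing progress toward a reconstructable committee while eliminating identified cheaters must be done iteratively. Arguing that this elimination terminates within $t'$ rounds (each abort removing a distinct corrupted committee member, of which there are fewer than $t'$) is what yields the $O(t'\cdot r')$ bound, in contrast to the $O(r')$ of the zero-knowledge-based no-honest-majority compiler, and is the part of the analysis I expect to require the most care.
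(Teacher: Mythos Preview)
Your two-step skeleton---first upgrade $\pi'$ from abort to identifiable abort via \cref{lem:abort_to_idabort_noinput_HM}, then build the compiler assuming identifiable abort---matches the paper. The compiler itself, however, is genuinely different.

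The paper does \emph{not} argue input consistency away. It works in the information-theoretic Setup-Commit-then-Prove hybrid model (instantiated, in the honest-majority setting, by the constant-round protocol of \cite[Lem.~6.2]{CCGZ17}): every party receives correlated randomness $(\rinput^i,\vrmask^i,\vrprot^i)$, commits to its input by broadcasting $x_i\oplus\rinput^i$, secret-shares $x_i$ to each committee over channels masked with $\vrmask^i$, and then \emph{proves} (via the prove phase of the functionality) that all shares are consistent with the committed input. Committee members run $\pi'$ on the committed randomness $\vrprot$ and likewise prove correctness of their announced outputs. So input consistency is \emph{enforced}, and the simulator extracts $\{x_i'\}_{i\in\IS}$ once, directly from the commitments, rather than from a distinguished committee $\committee_{\ls}$. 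Your ECSS-only route---let committees disagree and extract from $\committee_{\ls}$ alone, hiding the other committees' outputs behind $t'$-privacy---is a reasonable alternative idea, but it is not the paper's.

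Your last paragraph is where the proposal goes off. The ``iterative elimination'' you invoke to justify the $O(t'\cdot r')$ bound does not belong in this compiler: player elimination lives in the \emph{outer} reductions (\cref{thm:ridfair_to_full_withinput,thm:ridfair_to_full_parallel_withinput}), whereas the present compiler is one-shot. In the paper's \cref{prot:abort_to_ridfair_hm} there is no loop; a party that fails its proof is added to $\ID$, non-active committees are simply skipped, and the protocol terminates after a single parallel run of $\pi'$, outputting either $y$ or $(\bot,\ID)$. Your own construction as you describe it is also one-shot and $O(r')$-round, so grafting an inner elimination loop onto it to match the stated bound is both unnecessary for correctness and inconsistent with the rest of your sketch.
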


\begin{proof}
Initially, following \cref{lem:abort_to_idabort_noinput_HM} we can assume that $\pi'$ is a protocol that \rchanged{$(\delta,t')$}{$(\delta',t')$}-securely computes $\finout{f}{n}{t'}{n'}$ with identifiable abort.
Next, we construct $\CompilerHM{n'}{n}$ by adjusting the compiler for the no-honest-majority setting (\cref{prot:fair_to_ridfair}) as follows.

The compiler is initially defined in the Setup-Commit-then-Proof hybrid model (defined in~\cite{IOZ14}). Upon first invocation of the trusted party (the setup-commit phase), every party $\Party_i$ obtains correlated randomness $\vr_i=(\rinput^i,\vrmask^i, \vrprot^i)$, where $\vrmask^i=(\rmask^{i,1},\ldots,\rmask^{i,n'})$ and $\vrprot^i=(\rprot^{i,1},\ldots,\rprot^{i,\ell})$. The committed correlated randomness is used as follows:
\begin{itemize}
    \item
    $\rinput^i$ is used to commit to the input of $\Party_i$ by broadcasting $x_i\xor \rinput^i$.
    \item
    $\vrmask^i$ is used to mask the communication (over the broadcast channel) between $\Party_i$ and $\Party_j$. These are pairwise correlated values, \ie $\rmask^{i,j}=\rmask^{j,i}$.
    \item
    $\vrprot^i$ consists of randomness for executing the protocol $\pi'$. If $\Party_i\in\committee_l$, then $\Party_i$ will use $\rprot^{i,l}$ as its committed randomness for the protocol executed by $\committee_l$.
\end{itemize}
Upon future calls to the trusted party (the prove phase), every party $\Party_i$ can prove statements using the secret witness $\vr_i$.
For simplicity we assume that the functionality $f$ is deterministic; randomized functionalities can be made deterministic using standard techniques, by having the trusted party for the setup-commit phase add shares of the same random string (using $(t'+1)$-out-of-$n'$ ECSS) to the members of every committee.
The $n$-party protocol $\pi = \CompilerHM{n'}{n}(\pi',\vCS)$ is defined as follows.
\begin{construction}(security with abort to fairness with restricted identifiable abort)\label{prot:abort_to_ridfair_hm}
\begin{itemize}
    \item\textbf{Hybrid Model:}
    The protocol is defined in the Setup-Commit-then-Prove hybrid model with full security.
    \item\textbf{Common Input:}
    An $n'$-party protocol $\pi'$ and $\vCS=(\committee_1,\ldots,\committee_\ell)$. We use the notation $\Party^l_j$ to refer to the \jth party in $\committee_l$.
    \item\textbf{Private Input:}
    Every party $\Party_i$, for $i\in[n]$, has private input $x_i\in\zs$.
    \item\textbf{The Protocol:}
    Let $\ID$ denote the (initially empty) set of identified corrupted parties. At any point in the protocol, we call a committee $\committee_l$ \emph{active} if $\committee_l\cap\ID=\emptyset$.
\end{itemize}
\begin{enumerate}%[leftmargin=*]
    \item\label{step:rid_hm_input}
    Every party $\Party_i$ proceeds as follows:
    \begin{enumerate}
        \item\label{step:rid_hm_setupcommit}
        Call the trusted party to obtain the correlated randomness $\vr_i=(\rinput^i,\vrmask^i, \vrprot^i)$.
        \item\label{step:rid_hm_broadcast}
        Broadcast $x_i \xor \rinput^i$.
        \item\label{step:rid_hm_input_encrypt}
        For every active $\committee_l$, secret share its input as $(\sval_i^{1,l},\ldots,\sval_i^{n',l})\gets\Share(x_i)$, and send $\sval_i^{j,l}$ (masked using the pairwise correlated randomness) to $\Party_j^l$.
        \item\label{step:rid_hm_input_prove}
        Prove that it sent to every committee shares of its committed input value, using the committed correlated randomness. If a party $\Party_{\is}$ fails to provide an accepting proof, it is removed from the party set, and if $\is\in\bigcup \committee_l$ then $\is$ is added to the list of identified parties $\ID$.
    \end{enumerate}

    \item\label{step:rid__hm_compute}
    For every active $\committee_l$, every party $\Party^l_j\in\committee_l$ unmasks the vector $\vsval_{j,l}=(\sval_1^{j,l},\ldots,\sval_n^{j,l})$ (using $\vrmask$), and participates in the execution of protocol $\pi'$ for computing $\finout{f}{n}{t'}{n'}$, over the broadcast channel, where $\Party_j^l\in\committee_l$ uses $\vsval_{j,l}$ as its input and its random coins from $\vrprot$.
    Denote by $\vm_{j,l}=(m_1^{j,l},\ldots,m_p^{j,l})$ the messages $\Party_j^l\in\committee_l$ received during the protocol and let $\outvalue_{j,l}$ be the output $\Party_j^l\in\committee_l$ received (either a valid value $y$ or $(\bot,\is)$ with $\is\in\committee_l$).

    \item\label{step:rid_hm_output}
    For every active $\committee_l$, every $\Party_j^l\in\committee_l$ broadcasts $\outvalue_{j,l}$ and proves to all parties that this is indeed its output value and that $\outvalue_{j,l}$ is consistent with the common view and its committed correlated randomness.
    If party $\Party_i$ receives output values $(\outvalue_{1,l},\ldots,\outvalue_{n',l})$ from a committee $\committee_l$, where at least $n'-t'$ are valid, $\Party_i$ reconstructs the output $y$ and outputs it (take the minimal $l$ if several committees send valid output values). Otherwise, every committee identified a corrupted party $\Party_{\is} = \Party^l_{\js}\in \committee_l$; every party adds each such $\is$ to $\ID$, and outputs $(\bot,\ID)$.
\end{enumerate}
\end{construction}

The simulation follows standard techniques. Given an adversary $\Adv$, the simulator $\Sim$ first simulates honestly the first call to the Setup-Commit-then-Prove functionality. Next, $\Sim$ commits to zero for every honest party (\ie sends a random string to $\Adv$), receives from $\Adv$ commitments to the input values of the corrupted parties, and extracts the values $\sset{x'_i}_{i\in\IS}$. The simulator $\Sim$ continues by handing $\Adv$ (masked) secret shares of zero on behalf of the honest parties to every corrupted party which is a member on some committee, and verifies that the proof is correct. In addition, $\Sim$ receives from $\Adv$ the values on behalf of the corrupted parties and verifies their validity.

In order to simulate Step~\ref{step:rid__hm_compute}, the simulator $\Sim$ uses the simulator $\tilde{\Sim}$ for the $\ell$-bounded parallel composition of the protocol $\pi'$. For every committee that $\tilde{\Sim}$ does not abort its computations, $\Sim$ hands secret shares of zero as the output values for the corrupted parties. If there exist committees for which $\tilde{\Sim}$ did not abort the computation, $\Sim$ sends the values $\sset{x'_i}_{i\in\IS}$ to the trusted party computing $f$ and receives back the output $y$. Next, $\Sim$ computes secret sharing of $y$ for every non-aborting committee, and sends them to $\tilde{\Sim}$ on behalf of honest committee members. In case $\tilde{\Sim}$ aborts all computation, $\Sim$ updates the set of identified parties $\ID$, and hands the trusted party $(\abort,\ID)$. Finally, $\Sim$ interacts with $\Adv$ according to $\tilde{\Sim}$ and outputs whatever $\Adv$ outputs. The proof follow in a standard way via the security of the protocol $\pi'$ and of the ECSS scheme.

To complete the proof of the theorem, we use the fully secure, constant-round implementation of the Setup-Commit-then-Prove functionality in the honest-majority setting, given in~\cite[Lem.\ 6.2]{CCGZ17}.
\end{proof}

\subsection{Applications}\label{sec:apps_withinput}

In \cref{sec:apps:or}, we show how to reduce the round complexity for computing Boolean Or from linear to super-constant. In \cref{sec:apps:iklp}, we show how to improve the round complexity in the best-of-both-worlds result of~\cite{IKKLP11}. Finally, in \cref{sec:apps:ikpsy}, we show how to improve the round complexity in a recent result from~\cite{IKPSY16}.

\rnote{Are there applications for $1/p$ security???}

\subsubsection{Multiparty Boolean OR}\label{sec:apps:or}

\citet[Thm.\ 9]{GordonK09} constructed a fully secure protocol that computes the $n$-party Boolean OR functionality and tolerates an arbitrary number of corruptions.
Loosely speaking, every party initially broadcasts a commitment of its input (all parties output $1$ if some party does not commit). Next, all parties iteratively run a protocol computing the \emph{committed OR functionality} $\fcomor$ with identifiable abort, until the result is obtained.
The functionality $\fcomor$ verifies that each party provided the same set of commitments as well as a valid decommitment of its own input. If so the functionality computes Boolean OR; otherwise, it notifies to each party which other parties are not consistent with him, where in the latter case every party continues the protocol only with parties that are consistent with him.
In case the protocol outputs $\bot$ and a corrupted party is identified, all parties proceed to the next iteration without the identified party (this can be simulated due to the special properties of the Boolean OR function).
It follows that the protocol in~\cite{GordonK09} requires $t+1$ sequential calls to $\fcomor$.

\begin{nfbox}{The reactive committed OR functionality}{fig:fcomor}
\small
\begin{center}
    \textbf{The functionality} $\fcomor$
\end{center}
The functionality $\fcomor$ is run by party-set $\PS$ of size $n$, parametrized by a commitment scheme $\Com$ and a vector of commitments $\vc=(c_1,\ldots,c_n)$.

\begin{itemize}
    \item
    Upon the first invocation, with input $(x_i,\rho_i)$ for party $\Party_i\in\PS$, proceed as follows:
    \begin{enumerate}
        \item
        For each party $\Party_i\in\PS$, if $\Com(x_i; \rho_i) \neq c_i$, add $\Party_i$ to the (initially empty) set $\MS$.
        \item
        For every $\Party_i\in\MS$, set $x'_i=0$; for every $\Party_i\in\PS\setminus\MS$, set $x'_i=x_i$.
        \item
        Compute $y=x'_1\vee\ldots\vee x'_n$.
        \item
        Return $\MS$ to all parties and store $y$ as the internal secret state.
    \end{enumerate}
    \item
    Upon the second invocation, return $y$ to all parties.
\end{itemize}

\end{nfbox}

We next show how to drastically improve the round complexity of computing Boolean OR, when a constant fraction of parties are honest, by computing $\fcomor$ with restricted identifiable abort. Note that a direct application of \cref{thm:ridfair_to_full_withinput} does not help, since the protocol in~\cite{GordonK09} is already fully secure and cannot be made secure with fair abort in a meaningful way.\footnote{In fact, as pointed out in~\cite{CL17}, every fair protocol for Boolean OR can be immediately transformed to a fully secure protocol.} The idea is to iteratively invoke $\fcomor$ with restricted identifiable abort, and eliminate identified corrupted parties. Intuitively, for appropriate parameters, the adversary can abort the computation only in a limited (super-constant) number of invocations, after which the honest parties are guaranteed to obtain the output. However, a closer look shows that \emph{any} party can in fact provide an invalid input to $\fcomor$ and force the output to be its identity (as it disagrees with all other parties). This will result in a linear number of invocations of $\fcomor$ before terminating the protocol.

We overcome this obstacle by slightly modifying the definition of $\fcomor$ from~\cite{GordonK09}. Initially, we parametrize $\fcomor$ by the vector of commitments that were sent over the broadcast channel. Second, we would like to ensure that $\fcomor$ will compute the Boolean OR of the inputs values even if some parties provide invalid decommitments as their inputs. We cannot simply replace the input bit of parties that failed to provide a valid decommitment with $0$ (\ie ignore those parties), since this will lead to the following attack. $\Adv$ commits on behalf of some corrupted $\Party_i$ to $1$ and for all other corrupted parties to $0$; upon the first call to $\fcomor$, it gives an invalid decommitment for $\Party_i$ and aborts the computation after receiving the output bit by identifying a corrupted $\Party_j$. Since the bits of all corrupted parties are treated as $0$ in this case, \Adv learns whether all honest parties have $0$ or not. Next, $\fcomor$ is invoked without $\Party_j$, and \Adv sends valid decommitments for all corrupted parties; in this case the output will be $1$. Clearly, such an attack cannot be simulated in the fully secure ideal model.
Therefore, we consider a two-phase single-input reactive version of $\fcomor$, formally defined in \cref{fig:fcomor}. In the first call, the functionality outputs the set of parties that provided invalid decommitments, and stores the result of Boolean OR without those parties as its internal secret state. In the second invocation all parties receive the output bit.

\begin{corollary}[restating \cref{cor:intro_OR}]
Assume that \TDP, \CRH, and non-interactive perfectly binding commitment schemes exist.
Then, the $n$-party Boolean OR functionality can be computed with full security in $\omega(1)$ rounds facing $t=\beta n$ corruptions, for $0<\beta<1$.
\end{corollary}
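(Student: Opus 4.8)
The plan is to instantiate the committee-delegation machinery of \cref{sec:fair_to_full_withinputs} with the reactive committed-OR functionality $\fcomor$ of \cref{fig:fcomor}, following the overall iterate-and-eliminate strategy of \citet{GordonK09} but replacing their all-parties computation of $\fcomor$ with identifiable abort by a \emph{delegated} computation with restricted identifiable abort. First I would have every party $\Party_i$ broadcast a non-interactive perfectly binding commitment $c_i$ to its input bit (defaulting to output $1$ whenever some party refuses to commit, exploiting the special structure of OR as in~\cite{GordonK09}), thereby fixing the vector $\vc=(c_1,\ldots,c_n)$ that parametrizes $\fcomor$. Since $\fcomor$ is a \emph{single-input reactive} functionality—inputs are supplied only at the first call—I would invoke \cref{remark:reactive}, which extends the compiler of \cref{thm:idfair_to_ridfair} from non-reactive functionalities to single-input reactive ones: this yields a protocol computing $\fcomor$ with $\vCS$-restricted-identifiable-abort by delegating the two-phase reactive computation of $\fin{\fcomor}{n}{n'}$ to the sub-committees in $\vCS$, where each committee member proves in zero knowledge that the output of every reactive call is consistent with the broadcast commitments and with its committed randomness.

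Next I would plug this into the iterate-and-eliminate reduction. The cleanest route to $\omega(1)$ rounds elects a single committee $\committee$ of super-logarithmic size $n'=\omega(\log(\secParam))$ via Feige's protocol (\cref{cor:elect}, which then has negligible election error) and applies the sequential player-elimination reduction of \cref{thm:ridfair_to_full_withinput}: at most $t'+1=\omega(\log(\secParam))$ iterations suffice, and since each fair/identifiable-abort computation of the (reactive) $\fin{\fcomor}{n}{n'}$ runs in a constant number of rounds—OR admits a constant-round fair, hence fully secure, $n'$-party protocol (see the footnote following the corollary and~\cite{CL17})—the overall round complexity is $\omega(\log(\secParam))=\omega(1)$. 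To sharpen this toward $O(\logstar(\secParam))$ I would instead elect a committee of size $m=\varphi(\secParam)\cdot\log(\secParam)$ with $\varphi=\logstar$ and take $\vCS$ to be all sub-committees of size $n'=m-\log(\secParam)/\varphi(\secParam)$, exactly as in \cref{thm:ridfair_to_full_parallel_withinput}; the three supporting claims carry over verbatim—polynomially many sub-committees (\cref{claim:par_poly_noinput}), an honest party in each (\cref{claim:par_honest}), and termination after $\varphi(\secParam)^2$ functionality rounds (\cref{claim:par_term_noinput})—so the protocol again runs in $\omega(1)$ rounds.

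The main obstacle is showing that the two-phase reactive structure of $\fcomor$ defeats the ``invalid-decommitment'' attack and admits a simulation in the fully secure ideal model. The danger is that a corrupted party supplies a bad decommitment, learns from the first-phase output $\MS$ partial information about the honest inputs (namely whether the OR of the remaining inputs is already $1$), aborts, and then re-enters with a valid decommitment in a later iteration to change its effective input. I would argue this cannot help: because the inputs are publicly committed once (fixing $\vc$), and because $\fcomor$ treats every party in $\MS$ as contributing $0$ and stores the resulting $y$ \emph{before} revealing anything about it, the effective input vector—and hence $y$—is determined the moment the commitments are broadcast and is identical across all re-invocations. Consequently the simulator can extract $\set{x_i}_{i\in\IS}$ from the broadcast commitments, send them once to the trusted party, and faithfully emulate both reactive phases and every restricted-identifiable abort.

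The remaining gaps then reduce, via the hybrid argument of \cref{lem:idfair_to_ridfair}, to the binding and hiding of the commitment scheme, the semantic security of the committing encryption scheme used to ship shares to committee members, and the security of the underlying fair protocol run by each sub-committee; the polynomial bound on the number of sub-committees keeps the accumulated simulation error within $\varphi(\secParam)^2\cdot\ell\cdot\delta'+\err(n,n',\beta,\beta')$, which is negligible for $n'=\omega(\log(\secParam))$. This simulation-soundness of the reactive, restricted-identifiable-abort computation is the step I expect to require the most care.
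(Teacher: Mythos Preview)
Your high-level structure matches the paper's---commit, elect a committee, delegate the reactive $\fcomor$ to parallel sub-committees via \cref{remark:reactive}, and iterate with player elimination---but the simulation argument has a genuine gap.

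Your claim that ``the effective input vector---and hence $y$---is determined the moment the commitments are broadcast and is identical across all re-invocations'' is false. A corrupted party that committed to $1$ may, in each iteration, supply either a valid decommitment (effective input $1$) or an invalid one (placed in $\MS$ and treated as $0$); hence $y$ can vary across iterations. You also misidentify the leakage: $\MS$ contains only parties with invalid decommitments, so $\MS\subseteq\IS$ and it reveals nothing the adversary did not already know; the dangerous leakage is the second-phase output $y$. The paper's actual argument is that the adversary learns nontrivial information (namely the OR of the honest inputs) only in an iteration where \emph{all} of its effective inputs are $0$, meaning every corrupted party outside $\MS$ opened validly to $0$ and hence committed to $0$. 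If the adversary then aborts after seeing $y$, the protocol removes $\ID\cup\MS$; in every later iteration the remaining corrupted parties are committed to $0$, so the output stays $y$. This is why the simulator can send $0$ for every corrupted party exactly once and answer all later iterations with the same $y$, and otherwise (if termination occurs in an iteration with some valid corrupted input equal to $1$) send $1$. Your proposed simulator---extract the committed values and send those once---would be inconsistent with the real execution whenever the adversary commits to $1$ but later opens invalidly.

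A secondary issue: your first (sequential) route invokes \cref{thm:ridfair_to_full_withinput}, which needs \emph{fairness} with restricted identifiable abort, yet via \cref{remark:reactive} you only obtain \emph{unfair} restricted identifiable abort for $\fcomor$; there is no known constant-round fair protocol for $\fin{\fcomor}{n}{n'}$ (or $n'$-party OR) without an honest majority, so the appeal to a ``constant-round fair, hence fully secure'' inner protocol is unfounded. The paper instead runs the committee computation with (unfair) identifiable abort---achievable in constant rounds by~\cite{Pass04}---and relies on the two-phase structure of $\fcomor$ itself to supply the fairness; consequently the iterate-and-eliminate reduction here is bespoke rather than an instance of \cref{thm:ridfair_to_full_withinput} or \cref{thm:ridfair_to_full_parallel_withinput}.
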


\begin{proof}
Assume for simplicity of exposition that $n=\Omega(\log(\secParam)\cdot \varphi(\secParam))$.
% (see \cref{remark:election_arbitrary_parties} for technical trick to handle arbitrary $n$).
Let $m=\log(\secParam)\cdot\varphi(\secParam)$ with $\varphi(\secParam)=\omega(1)$, $n'=m-\log(\secParam)/\varphi(\secParam)$, and $\ell=\secParam^\uglyExp$. We construct the Boolean OR protocol in the $(\felect,\fcomor)$-hybrid model, where $\felect$ is computed with full security and $\fcomor$ with (unfair) $(\ell, n',n'-1)$-identifiable-abort. The corollary will follow from \cref{lem:Feige} and \cref{thm:idfair_to_ridfair}, since under the assumptions in the corollary and using the protocol from~\cite{Pass04}, the functionality $\fin{\fcomor}{n}{n'}$ can be computed in constant rounds with identifiable abort, $\ell$ times in parallel, facing $n'-1$ corruptions; hence $\fcomor$ can be computed with $(\ell, n',n'-1)$-identifiable-abort. Note that the reactive functionality $\fcomor$ receives inputs from the parties only for the first call, and the output of the second call is determined deterministically by the input values to the first call; therefore, following \cref{remark:reactive}, the functionality $\fin{\fcomor}{n}{n'}$ can indeed be computed with $(\ell, n',n'-1)$-identifiable-abort.

The protocol proceeds as follows with party-set $\PS=\sset{\Party_1,\ldots,\Party_n}$. Initially, every party $\Party_i$ broadcasts a commitment of its inputs $c_i=\Com(x_i;\rho_i)$; in case some party didn't broadcast, all parties output $1$ and halt. Next, denote $\beta'=(1+\beta)/2$, the parties invoke $\felect(n,m,\beta')$ to elect a committee $\committee\subseteq[n]$ of size $m$. Denote by $\vCS=(\committee_1,\ldots,\committee_\ell)$ all subsets of $\committee$ of size $n'$ (following \cref{claim:par_poly_noinput} there are at most $\ell$ such subsets).

The parties proceed by iteratively invoking the two-phase ideal functionality computing $\fcomor(\PS,\vc)$ with $\vCS$-identifiable-abort. In the first call, every party $\Party_i$ sends $(x_i,\rho_i)$ as its input and receives back a subset $\MS\subseteq\PS$ or $(\abort,\ID)$; in the latter case $\Party_i$ sets $\PS\setminus\ID$ and proceeds to the next iteration. If the first call completed successfully without abort, every party invokes $\fcomor$ for the second time and receives a bit $y\in\zo$ or $(\abort,\ID)$; in the former case party $\Party_i$ outputs $y$ and halts whereas in the latter case, $\Party_i$ sets $\PS=\PS\setminus(\ID\cup\MS)$ and proceeds to the next iteration.

Similarly to~\cite{GordonK09}, the idea behind the simulation is that if the adversary aborts a computation of $\fcomor$, then it learns new information on the honest parties' inputs only if it sets all the corrupted parties' inputs to $0$. Following the definition of $\fcomor$, the adversary can learn the result using inputs $0$ for all corrupted parties only if it committed to zeros in the first round or if it sends invalid decommitments in the first invocation of $\fcomor$ in some iteration and did not abort it; in both cases \Adv cannot use other inputs and force honest parties to output $1$ in later invocations.

Let $\Adv$ be an adversary attacking the protocol in the $(\felect,\fcomor)$-hybrid model and let $\IS$ be the set of corrupted parties.
We construct a simulator $\Sim$ for the ideal model computing Boolean OR with full security, as follows.
On inputs $\set{x_i}_{i\in\IS}$ and auxiliary input $\aux$, the simulator $\Sim$ starts by emulating $\Adv$ on these inputs.
Initially, $\Sim$ broadcasts commitments to $0$ on behalf of honest parties and receives commitments from $\Adv$ on behalf of corrupted parties. In case \Adv didn't send commitments for all corrupted parties, \Sim send to the trusted party $1$ on behalf of every corrupted party, outputs whatever \Adv outputs and halts. In the second round, \Sim emulates towards $\Adv$ the committee-election functionality $\felect(n,m,\beta')$, \ie $\Sim$ partitions the honest parties to $n/m$ subsets, under the condition that every subset has at least $(1-\beta')n$ parties, hands them to $\Adv$ and receives back a committee $\committee$ of size $m$ that contains exactly one of the subsets (and doesn't intersects the other).

Next, $\Sim$ simulates $\fcomor$ to $\Adv$ in every iteration as follows:
\begin{itemize}
    \item \textbf{Simulating the first call:}
    \begin{itemize}
        \item
        If $\Adv$ sends $(\abort,\JS)$, the simulator \Sim responds with $(\bot,\ID)$ and proceeds to the simulation of the next iteration without the paries in $\ID$.
        \item
        Otherwise, $\Adv$ provided $(x_i,\rho_i)$ for every corrupted party $\Party_i$ that was not yet identified. \Sim prepares the list $\MS$ of corrupted parties with invalid decommitments, and sends $\MS$ to \Adv.
        \item
        If \Adv responds with $(\abort,\JS)$, the simulator \Sim answers with $(\bot,\ID)$ and proceeds to the simulation of the next iteration without the paries in $\ID$.
        \item
        If \Adv responds with \continue, the simulator \Sim proceeds to the simulation of the second call.
    \end{itemize}
    \item \textbf{Simulating the second call:}
    \begin{itemize}
        \item
        If $\Adv$ sent $(\abort,\JS)$, the simulator \Sim responds with $(\bot,\ID)$ and proceeds to the simulation of the next iteration without the paries in $\ID\cup\MS$.
        \item
        Otherwise, if one of the inputs (with valid decommitment) sent by \Adv is $1$, \Sim responds to \Adv with $1$; if all (valid) inputs received from \Adv are $0$, \Sim responds to \Adv with the output bit $y\in\zo$, where in case $y$ is not set yet, \Sim sends to the trusted party $0$ on behalf of every corrupted party and receives back the output $y$.
        \item
        If \Adv responds with $(\abort,\ID)$, the simulator \Sim answers with $(\bot,\ID)$ and proceeds to the simulation of the next iteration without the paries in $\ID\cup\MS$.
        \item
        If \Adv responds with \continue, the simulator \Sim stops simulating $\fcomor$ towards \Adv.
    \end{itemize}
\end{itemize}
Next, if \Sim did not send inputs to the trusted party yet, it send $1$ on behalf of every corrupted party (and receives back the output $1$). Finally, \Sim outputs whatever \Adv outputs and halts.

Proving computational indistinguishability between the output of the honest parties and of \Adv in the execution of the protocol and the output of the honest parties and of \Sim in the fully secure ideal model follows in similar lines to the proof of \citet{GordonK09}.
\end{proof}

\subsubsection{Combining Full Security and Privacy}\label{sec:apps:iklp}

Given an $n$-party functionality, the ideal model for computing $f$ with $t$-full-privacy is defined in a similar way to the ideal model computing $f$ with abort (\cref{def:ideal_abort}) with the exception that the adversary can invoke the trusted party $t+1$ times, and learn the result of $f$ computed on the same inputs for the honest parties and different inputs for the corrupted parties in each time.
This notion was introduced by \citet{IKKLP11}, who showed that assuming the existence of \TDP and \CRH, for every $n$-party functionality $f$ there exists a single protocol $\pi$ that securely computes $f$ with $t$-full-privacy, tolerating $t<n$ corrupted parties, and achieves $t$-full-security if $t<n/2$; however, the round complexity in~\cite{IKKLP11} is $O(t)$.
We next show how to reduce the round complexity and obtain better privacy guarantees, when a constant number of the parties are honest.

\begin{corollary}[restating \cref{cor:intro_IKLP}]\label{cor:IKLP}
Assume that \TDP, \CRH, and non-interactive perfectly binding commitment schemes exist. Let $f$ be an $n$-party functionality, let $t=\beta n$ for $0<\beta<1$, and let $n'=\omega(\log(\secParam))$.
Then, there exists a single protocol $\pi$, with round complexity $O(n')$, such that:
\begin{enumerate}
    \item
    $\pi$ computes $f$ with $n'$-full-privacy.
    \item
    If $t<n/2$ then $\pi$ computes $f$ with $t$-full-security.
\end{enumerate}
\end{corollary}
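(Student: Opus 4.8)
The plan is to run the player-elimination protocol of \citet{IKKLP11} inside a single small committee, so that the number of eliminations---and hence both the round complexity and the privacy loss---is governed by the number of corrupted \emph{committee} members rather than by the total number of corruptions. Concretely, I would first elect a committee $\committee\subseteq[n]$ of size $n'=\omega(\log(\secParam))$ via Feige's lightest-bin protocol, fixing a constant $\beta'\in(\beta,1)$ and $t'=\beta' n'$; by \cref{cor:elect} the committee contains at most $t'$ corrupted parties except with negligible probability, and when $\beta<1/2$ one may additionally take $\beta'<1/2$, so that the committee has a strict honest majority. Every party then commits to and secret-shares its input to the committee members exactly as in the compiler of \cref{thm:idfair_to_ridfair} (the ``without fairness'' variant of \cref{remark:reactive}), which forces committee members to use the genuine honest inputs and, crucially, guarantees that no party \emph{outside} the committee can ever cause the computation to restart.

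Inside the committee I would then run the IKLP loop on the reconstruct-compute-share functionality $\finout{f}{n}{t'}{n'}$, which reconstructs the committed inputs, evaluates $f$, and re-shares the result among the committee members using a $\ceil{n'/2}$-out-of-$n'$ error-correcting secret-sharing scheme (mirroring the $\ceil{n/2}$-out-of-$n$ scheme of \citet{IKKLP11}). The committee computes this functionality with \emph{identifiable} abort (lifted to $\committee$-restricted identifiable abort by the compiler), and on every abort the identified committee member is eliminated---its slot being filled by a default input, as in the outer loop of \cref{thm:ridfair_to_full_withinput}---and the computation is repeated. Since only committee members can be eliminated, the loop runs at most $t'+1=O(n')$ times; instantiating each inner identifiable-abort computation by the constant-round protocol of \citet{Pass04} (using \TDP and \CRH) gives total round complexity $O(n')$. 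The choice of sharing threshold yields the two guarantees simultaneously: whenever $\committee$ has an honest majority (in particular when $t<n/2$), the $n'-t'>n'/2$ honest members always hold at least $\ceil{n'/2}$ valid shares and reconstruct the output, so the adversary learns nothing from any aborted iteration and the protocol is fully secure; in general the adversary's $t'$ shares let it recover the output in each of the at most $t'+1\le n'$ iterations, on input vectors differing only in which eliminated slots carry default values, which is exactly $n'$-full-privacy.

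I would package the analysis as a single simulation argument that folds together the election simulator of \cref{thm:ridfair_to_full_withinput} and the extraction/hybrid sequence of \cref{lem:idfair_to_ridfair}: the simulator emulates $\felect$ to plant the committee, extracts the corrupted parties' committed inputs through the $\zkmany$ machinery, and in each iteration either reports the identified aborter or---when the adversary holds enough shares---queries its ideal functionality once to obtain the output it must hand back; in the honest-majority case it instead queries the fully secure trusted party a single time and simulates below-threshold output shares. Indistinguishability follows from semantic security of the encryption, soundness and zero knowledge of $\zkmany$, and privacy of the ECSS, as in \cref{lem:idfair_to_ridfair}, plus the negligible election-error term of \cref{cor:elect}. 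The main obstacle I anticipate is the privacy accounting across eliminations: one must verify that the combination of up-front input commitments with the default-substitution elimination rule lets the adversary extract at most one fresh evaluation of $f$ per eliminated committee member, so that the simulator needs at most $t'+1\le n'$ ideal invocations---this is precisely the point where confining the elimination to the committee turns the $O(t)$ bound of \citet{IKKLP11} into $O(n')$.
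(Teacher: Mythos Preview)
Your approach is essentially the same as the paper's: elect a super-logarithmic committee, have every party commit-and-share its input to the committee via the compiler of \cref{thm:idfair_to_ridfair}, let the committee repeatedly compute (with identifiable abort) a variant of the shared-input functionality that outputs $\ceil{n'/2}$-out-of-$n'$ ECSS shares of the result, and eliminate identified committee members; the $\ceil{n'/2}$ threshold simultaneously yields full security when the committee has an honest majority and bounds the privacy loss by the number of eliminations otherwise. One small terminological slip: you write $\finout{f}{n}{t'}{n'}$ but then describe it as reconstructing \emph{committed} inputs---what you actually need (and what the paper uses) is the commitment-based $\fin{f}{n}{n'}$ modified to output ECSS shares rather than the cleartext result, since $\finout{\cdot}{n}{t'}{n'}$ relies on ECSS input-sharing, which is unavailable when $t'\ge n'/2$.
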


\begin{proofsketch}
Consider a variant of the functionality $\fin{f}{n}{n'}$ that instead of outputting the output value $y$ \rchanged{on}{in} the clear, secret shares $y$ using a $\ceil{n'/2}$-out-of-$n'$ ECSS scheme, denoted $(\Share',\Recon')$. Let $\pi'$ be a constant-round protocol computing this variant of $\fin{f}{n}{n'}$ with abort (\eg the protocol from~\cite{Pass04}) and let $\committee\subseteq[n]$ of size $n'$. Consider a variant of the compiler presented in \cref{prot:fair_to_ridfair}, where in case the computation completes without abort, each party locally reconstructs the shares it received from the committee members in $\committee$ and obtains the output value.
By adjusting the proof of \cref{thm:idfair_to_ridfair}, the protocol $\pi=\Comp(\pi',\committee)$ is a constant-round protocol computing $f$ with $\committee$-identifiable-abort and $\floor{(n'-1)/2}$-fairness or $(n'-1)$-abort, \ie if $t'<n'/2$ then $\pi$ is fair and if $n'/2\leq t'\leq n'-1$ it is secure with abort, and in case of abort a corrupted party in $\committee$ is identified.
%The protocol proceeds by first electing a committee $\committee$ of size $n'=\omega(\log(\secParam))$ and iteratively computing $f$ with $\floor{(n'-1)/2}$-fairness and $\committee$-identifiable-abort.\rnote{complete the proof}
\end{proofsketch}

\subsubsection{Partially Identifiable Abort to Full Security}\label{sec:apps:ikpsy}

\citet{IKPSY16} introduced security with $\alpha$-partially identifiable abort as security with identifiable abort, such that upon abort, a subset of parties is identified, where at least an $\alpha$-fraction of the subset is corrupted. Next, they presented a transformation, in the honest-majority setting, from $\alpha$-partially identifiable abort, for $\alpha\leq 1/2$, to full security.
The transformation is based on the player-elimination approach, and the idea is to compute $\finout{f^n}{n}{t'}{n'}$ with $\alpha$-partially identifiable abort by a committee that initially consists of all the parties (\ie $n'=n$), where in case of abort, all the identified parties (both honest and corrupted) are removed from the committee. It follows that the number of iterations is $O(n)$.
We next show how to reduce the round complexity when a constant fraction of the parties are honest.

\begin{corollary}[restating \cref{cor:intro_IKPSY}]\label{cor:IKPSY}
Let $f^n$ be an $n$-party functionality, let $n'=\logstar(\secParam)\cdot\log(\secParam)$, let $0<\beta<\beta'<1/2$, let $t=\beta n$, let $t'=\beta' n'$, and let $\pi'$ be an \rchanged{$r$}{$r'$}-round protocol that securely computes $\finout{f^n}{n}{t'}{n'}$ with $\beta'$-partially identifiable abort, tolerating $t'$ corruptions.
Then, $f^n$ can be computed with full security, tolerating $t$ corruptions, by a \rchanged{$O(t'\cdot r)$}{$O(t'\cdot r')$}-round protocol that uses the protocol $\pi'$ in a black-box way.
\end{corollary}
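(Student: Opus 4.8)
The plan is to combine Feige's committee election with the IKPSY player-elimination, replacing their all-parties committee by a small elected one. I would work in a hybrid model that computes $\finout{f^n}{n}{t'}{n'}$ with $\beta'$-partially identifiable abort and instantiate this call by the given protocol $\pi'$ (this is where the black-box use of $\pi'$ enters). The top-level protocol first invokes $\felect(n,n',\beta')$ to elect a committee $\committee$ of size $n'$. By \cref{cor:elect}, except with probability $\err(n,n',\beta,\beta')$ the committee contains at most $t'=\beta'n'$ corrupted parties; since $n'=\logstar(\secParam)\cdot\log(\secParam)=\omega(\log(\secParam))$ this error is negligible, and because $\beta'<1/2$ the committee enjoys an honest majority. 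Each party then secret-shares its input to $\committee$ using a $(t'+1)$-out-of-$n'$ ECSS, and the committee repeatedly runs $\pi'$ to compute $\finout{f^n}{n}{t'}{n'}$, distributing ECSS shares of the result to all $n$ parties upon success; the robustness of the ECSS lets every party reconstruct the correct output despite up to $t'$ faulty committee outputs.

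The heart of the argument is the elimination loop and its round count. Whenever a run of $\pi'$ aborts, $\beta'$-partial identifiability returns a set $B$ of which at least a $\beta'$-fraction is corrupted; I would remove $B$ from the committee and re-run. Since $|B|\ge 1$ and $\beta'>0$, the integer number of corrupted parties in $B$ is at least $\lceil\beta'|B|\rceil\ge 1$, so each abort permanently eliminates at least one genuine corruption. As the committee starts with at most $t'$ corrupted parties, there are at most $t'$ aborts, hence at most $t'+1$ iterations and $O(t'\cdot r')$ rounds. For the simulator I would compose the perfect simulation of $\felect$ (full security) with the simulator guaranteed for $\pi'$ in each iteration: emulate $\felect$, hand the adversary ECSS shares of $0$ for the honest parties, forward each identified set to drive the elimination, and upon the first non-aborting run extract the adversary's effective inputs and call the fully secure trusted party for $f^n$. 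Indistinguishability then reduces to the security of $\pi'$ and to the privacy and robustness of the ECSS, exactly as in \cref{thm:ridfair_to_full_withinput} and \cref{thm:idfair_to_ridfair_hm}.

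The main obstacle is reconciling the fixed arity $n'$ of the black-box protocol $\pi'$ with a committee that shrinks as honest parties are eliminated alongside corrupted ones — unlike exact identifiable abort, $\beta'$-partial identifiability forces us to discard honest committee members together with the corrupted ones. I would resolve this in the spirit of \cref{prot:ridfair_to_full_withinput_noHM}: keep the $n'$ input-roles of $\pi'$ fixed throughout and have the surviving committee members cover the vacated roles (say, the lowest-indexed survivor plays the removed slots), so $\pi'$ is always invoked on exactly $n'$ virtual parties whose corruption count only ever decreases. Two points then need care, and I expect this bookkeeping to be the delicate part. First, one must verify that after each elimination the number of corrupted roles is still at most $t'$, so that both the honest majority $\beta'<1/2$ and the $\beta'$-partial-identifiability guarantee of $\pi'$ continue to apply; this holds since eliminations only remove corruptions on net. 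Second, the virtual parties covering the vacated roles must hold the correct ECSS shares of the honest inputs, which I would ensure by having the parties re-share their (publicly committed) inputs to the current role-assignment at the start of each iteration and prove consistency in zero knowledge, so that eliminating honest committee members never disturbs the effective input fed into $f^n$.
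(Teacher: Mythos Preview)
Your high-level plan coincides with the paper's proof sketch: elect a small committee via $\felect$, have all parties ECSS-share their inputs to the committee, run $\pi'$ iteratively, and on abort remove the entire identified subset (honest and corrupted members alike) before re-iterating; the $O(t'\cdot r')$ bound follows because each abort eliminates at least one genuinely corrupted committee member. The paper's sketch says no more than this.

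There is, however, a real gap in your handling of the shrinking committee. Your suggestion that ``the lowest-indexed survivor plays the removed slots'' fails in general: if that survivor is corrupted, then after removing a set $B$ of size $b$ she controls her own slot plus all $b$ vacated ones, so the number of adversarially controlled \emph{roles} is $c'+b$ where $c'\le t'-\lceil\beta' b\rceil$ is the number of surviving corrupted parties; already for $b=2$ and $\beta'<1/2$ this gives $c'+b\le t'+1>t'$, so $\pi'$'s threshold is violated. Your justification ``eliminations only remove corruptions on net'' is true for \emph{real} committee members but not for \emph{roles}, and it is the role count that the $t'$-security of $\pi'$ constrains. The analogy to \cref{prot:ridfair_to_full_withinput_noHM} is misleading: there only previously identified \emph{corrupted} parties are removed, so assigning their slots to anyone is harmless; under $\beta'$-partial identifiability you are also vacating honest slots.

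One way to repair this is to spread the vacated roles evenly over \emph{all} survivors rather than a single one: since each elimination removes at least a $\beta'$-fraction of corruptions from the identified set, the fraction of corrupted survivors stays at most $\beta'$, and proportional covering then keeps the adversarial-role count at most $\beta'(n'-d)\cdot\frac{n'}{n'-d}=\beta' n'=t'$. Your re-sharing step then makes the invariant go through. The paper's own proof is only a sketch and does not spell this mechanism out either, so you are right to flag it as the delicate step, but your current resolution does not work as stated.
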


\begin{proofsketch}
The proof follows in similar lines to proof of \cref{thm:ridfair_to_full_withinput,thm:idfair_to_ridfair_hm}, with the exception that upon abort in a committee, all identified parties are removed from the committee. In more detail, the parties initially elect a committee $\committee$ of size $\log(\secParam)\cdot\logstar(\secParam)$. Next, every party secret shares its input to the committee members, that execute the protocol $\pi'$. Upon completion, every committee member broadcasts the output value it received. If the output consists of at most $t'$ messages of the form $(\bot,\cdot)$, every party reconstructs the output value and halts. Otherwise, if more than $t'$ messages are of the form $(\bot,\ID)$, the parties consider the subset $\ID$ that appears in most messages, set $\committee=\committee\setminus \ID$, and re-iterate.
\end{proofsketch}

\subsection{Computation over Shared Inputs}\label{sec:shared_inputs}
\radded{In this section, we provide the definitions of the functionalities $\fin{f}{n}{n'}$ and $\finout{f}{n}{t'}{n'}$ that were used earlier.}

A basic idea underlying our reductions is to delegate the computation to a small committee. Intuitively, this is done by having each party secret share its input value among the committee members, and letting the committee compute the function over the shared inputs. The type of the secret-sharing scheme that we use depends on whether an honest majority is assumed or not. In the honest-majority setting, we consider an error-correcting secret-sharing scheme (ECSS, see \cref{def:ECSS}), allowing honest committee members to reconstruct the correct input values even if corrupted committee members arbitrarily modify the shares they received.

The case that an honest majority cannot be guaranteed is somewhat more subtle. In this case, ECSS schemes do not exist~\cite{IOS12}, and we need to use an $n'$-out-of-$n'$ secret sharing.
%Furthermore, to prevent corrupted parties from computing the function on wrong inputs (by changing some of their shares), we let each party sign its shares (before sending them to the committee) and broadcast the verification key.
Furthermore, to prevent corrupted parties from computing the function on wrong inputs (by changing some of their shares), we let each party broadcast commitments to its shares.
The functionality computed by the committee, parametrized by all the commitments, first verifies that the decommitments are valid and only later reconstructs the input values and evaluates the function.

%\deleted{, they might learn the output of the computation over multiple (correlated) inputs. The solution of having all parties publicly commit to the shares they send to the committee and letting the committee members proves that they use the correct shares is insufficient, since the cheating can only be detected \emph{after} the protocol has begun, and it is possible that the corrupted parties have already learned some information at that point. We overcome this obstacle by }

\vspace{-2ex}
\paragraph{The Verify-Reconstruct-Compute functionality.}
Given a public-output $n$-party functionality $f$ and $n'<n$, the $n'$-party functionality $\fin{f}{n}{n'}$ is parametrized by $n$ vectors of commitments $\vcomval_i=(\comval_i^1,\ldots,\comval_i^{n'})$, for $i\in[n]$.
Each of the commitments $c_i^j$ commits a unique value $s_i^j$; denote $x_i=\bigoplus_{j\in[n']}{s_i^j}$.
Each input value to $\fin{f}{n}{n'}$ consists of a vector of $n$ values, such that the \ith value for the \jth party is the decommitment to $c_i^j$.
The functionality validates that all of the decommitments are valid, reconstructs all $x_i$'s, evaluates $y=f(x_1,\ldots,x_n)$ and outputs the result $y$ to all $n'$ parties. A formal description of the functionality appears in \cref{fig:ssin}.

\begin{nfbox}{The Verify-Reconstruct-Compute functionality}{fig:ssin}
\small
\begin{center}
    \textbf{The functionality} $\fin{f}{n}{n'}$
\end{center}
The $n'$-party functionality $\fin{f}{n}{n'}$ is parametrized by an $n$-party public-output functionality $f$, a non-interactive commitment scheme $\com$ and vectors of commitments $\vcomval_i=(\comval_i^1,\ldots,\comval_i^{n'})$, for every $i\in[n]$.
$\fin{f}{n}{n'}$ is formally defined as follows, on inputs $(\vs_1,\ldots,\vs_{n'})$.
\begin{enumerate}
    \item
    For every $j\in[n']$, parse $\vs_j$ as an $n$-vector $(\hat{s}^j_1,\ldots,\hat{s}^j_n)$;
    for every $i\in[n]$, parse $\hat{s}^j_i$ as a pair $\hat{s}^j_i=(s^j_i,\decomval^j_i)$ and verify that $\comval^j_i=\Com(s^j_i;\decomval^j_i)$.
    If there exists a malformed $\vs_j$, output $(\bot,j)$ to all parties (for the smallest such $j$).
    \item
    For every $i\in[n]$, compute $x_i=\bigoplus_{j\in[n']}{s_i^j}$.
    \item
    Compute $y=f(x_1,\ldots,x_n)$ and output $y$ to all parties.
\end{enumerate}
\end{nfbox}

\vspace{-2ex}
\paragraph{The Reconstruct-Compute-Share functionality}
This functionality is parametrized by a $(t'+1)$-out-of-$n'$ error-correcting secret-sharing scheme (ECSS, \cref{def:ECSS}) that can correct up to $t'$ errors during the reconstruction procedure.
Given an $n$-party functionality $f$, denote by $\finout{f}{n}{t'}{n'}$ the $n'$-party functionality, that receives as input secret shares of an $n$-tuple $(x_1,\ldots,x_n)$, evaluates $f$ on the reconstructed values as $y=f(x_1,\ldots,x_n)$,\footnote{For the sake of simplicity, we consider $f$ as a public-output functionality. Adjusting the protocol for private-output functionalities is straightforward.} and outputs secret shares of $y$. A formal description of the functionality appears in \cref{fig:RecCompShare}.

\begin{nfbox}{The Reconstruct-Compute-Share functionality}{fig:RecCompShare}
\small
\begin{center}
    \textbf{The functionality} $\finout{f}{n}{t'}{n'}$
\end{center}
The $n'$-party functionality $\finout{f}{n}{t'}{n'}$, parametrized by an $n$-party functionality $f$ and a $(t'+1)$-out-of-$n'$ ECSS scheme $(\Share,\Recon)$, is formally defined as follows, on inputs $(\vs_1,\ldots,\vs_{n'})$.
\begin{enumerate}
    \item
    For every $j\in[n']$ parse $\vs_j$ as an $n$-vector $(s^j_1,\ldots,s^j_n)$ (in case $\vs_j$ cannot be parsed, set to the vector $(0,\ldots,0)$; the value $0$ is arbitrary).
    \item
    For every $i\in[n]$, compute $x_i=\Recon(s_i^1, \ldots, s_i^{n'})$ (in case the reconstruction fails, set $x_i=\tilde{x}_i$ for some default value $\tilde{x}_i$).
    \item
    Compute $y=f(x_1,\ldots,x_n)$.
    \item
    Secret share the result $(y_1,\ldots, y_{n'})\gets\Share(y)$.
    \item
    Set the output for every $j\in[n']$ to be $y_j$.
\end{enumerate}
\end{nfbox}

\ifdefined\IsResultWithAbort
\section{Security with Abort to Full Security}\label{sec:abort_to_full}

In this section, we show a black-box reduction from full security to security with abort when sufficiently many parties are honest.
We stress that we do not assume that the underlying functionality also provides identifiability; that is, an adversary may abort the computation after seeing the output, in which case, all honest parties receive a special $\bot$ symbol as output but no malicious party should be identified.
The results in this section are unconditional, \ie we consider computationally unbounded adversaries and that parties can communicate over secure point-to-point channels as well as over a broadcast channel.
Security with fairness and restricted abort is defined in \cref{sec:restricted_fair_def}.
In \cref{sec:fair_to_full_hm} we show how to uplift a fair computation of $f$ with restricted abort to a fully secure computation of $f$, when sufficiently many parties are honest.
In \cref{sec:abort_to_rfair} we show how to reduce a fair computation of $f$ with restricted abort to security with abort of the Reconstruct-Compute-Share variant of a function~$f$ (see \cref{sec:shared_inputs}), when a majority of the parties are honest.

\begin{theorem}[restating \cref{thm:intro:aborttofull}]\label{thm:aborttofull}
Let $f$ be an $n$-party functionality.
\begin{enumerate}
    \item
    Let $t$ be such that $t \cdot (2t+1)<n$ and denote $n'=2t+1$. Then, $f$ can be $t$-securely computed with full security, with statistical security, in the hybrid model computing $\finout{f}{n}{t}{n'}$ with abort, using $t$ invocations of the ideal functionality.
    For $n'=3t+1$ and $t \cdot (3t+1)<n$, the above holds with perfect security.
    \item
    Let $t$ be such that $(t+1) \cdot (2t+1) \leq n$ and denote $n'=2t+1$. Then, $f$ can be $t$-securely computed with full security, with statistical security, in the hybrid model computing $\finout{f}{n}{t}{n'}$ with abort, $t+1$ times in parallel, using a single invocation of the ideal functionality.
    For $n'=3t+1$ and $(t+1) \cdot (3t+1)<n$, the above holds with perfect security.
\end{enumerate}
\end{theorem}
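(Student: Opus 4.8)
The plan is to reduce full security to security with abort by delegating the computation of $f$ to small committees that run the Reconstruct-Compute-Share functionality $\finout{f}{n}{t}{n'}$, and to exploit the honest supermajority guaranteed by the hypothesis $t\cdot(2t+1)<n$ (or $t\cdot(3t+1)<n$) to force output delivery even though the underlying primitive offers no identifiability. Concretely, each party $\Party_i$ secret-shares its input $x_i$ among the members of a committee $\committee$ using the $(t+1)$-out-of-$n'$ ECSS scheme over the point-to-point channels; the committee runs the given protocol for $\finout{f}{n}{t}{n'}$ with abort, whose output is an ECSS sharing of $y=f(x_1,\ldots,x_n)$; the committee members then broadcast their output shares and every party reconstructs $y$ via $\Recon$. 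Three properties make this robust: (i) since $\size{\committee\cap\IS}\le t$ and $n'\ge 2t+1$, the reconstruction inside the functionality recovers the honest parties' inputs correctly despite up to $t$ corrupted shares, and likewise the final reconstruction of $y$ tolerates the $\le t$ faulty shares broadcast by corrupted members; (ii) because the output is shared with threshold $t+1>\size{\committee\cap\IS}$, corrupted committee members learn nothing about their committee's output until the shares are broadcast, so privacy holds and the adversary's abort decisions are independent of the outputs; and (iii) a \emph{fully honest} committee can never abort, hence always completes.

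For the second part I would use $t+1$ pairwise-disjoint committees $\committee_1,\ldots,\committee_{t+1}$ of size $n'=2t+1$, which fit because $(t+1)(2t+1)\le n$. Running them in a single parallel invocation of $\finout{f}{n}{t}{n'}$, the pigeonhole principle guarantees that at least one committee is disjoint from $\IS$ and therefore completes. The honest parties agree on the smallest-index committee that did not abort (each learns the abort status of a committee by majority vote over its members' reports, which is correct since a majority of every committee is honest), and only that committee broadcasts its shares, so a single evaluation of $f$ is ever revealed. The simulator $\Sim$, which plays the trusted parties of all committee computations in the hybrid model, determines this committee from the (output-independent) abort pattern, reads off the shares the corrupted parties fed to it, reconstructs their effective inputs, submits them to the ideal functionality for $f$, and simulates the chosen committee broadcasting a fresh sharing of the returned $y$; all other committees are simulated with sharings of $0$, which is indistinguishable by the threshold property. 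The perfect-security variant is obtained verbatim with $n'=3t+1$, using the perfectly correct Reed-Solomon ECSS of \cite{BGW88} (correcting $t<n'/3$ errors) in place of the statistically secure authenticated scheme of \cite{RB89}.

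For the first part only $t$ disjoint committees are guaranteed to fit (from $t\cdot(2t+1)<n$), so no single parallel shot can force completion; instead I would invoke committees sequentially. The idea is a player-elimination at the granularity of committees rather than individual parties: when a committee's computation aborts we learn nothing about who cheated, but we do learn that this committee contains a corrupted party, and since the committees are disjoint each successive abort charges a distinct corrupted party against the adversary's budget of $t$. Thus once the adversary has aborted every abortable committee its budget is exhausted and a remaining committee must complete. Arranging the accounting so that it yields exactly the stated number of invocations, while drawing each fresh committee from the still-unused (hence increasingly honest) pool so that the final committee is guaranteed honest precisely when the budget is spent, is the delicate part of the argument. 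The simulator mirrors the single-committee proof of \cref{thm:ridfair_to_full_withinput}: it relays $\bot$ for each aborted invocation and, on the first completing invocation, extracts the corrupted effective inputs from the shares supplied to that committee and queries the ideal functionality.

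The main obstacle, and the reason the result is nontrivial, is exactly the absence of identifiability: the usual player-elimination route is unavailable, so guaranteed output delivery must be wrung out of committee combinatorics alone. This bites in two places: (a) proving completion within the claimed committee/invocation budget — immediate by pigeonhole in the parallel case, but requiring the careful sequential charging argument above for the first part; and (b) the simulation, where $\Sim$ must commit to a single effective input per corrupted party even though the adversary may feed inconsistent shares to many committees. The resolution of (b) is that the threshold-$(t+1)$ output sharing hides every committee's result from the $\le t$ corrupted members, so only the single reconstructed committee ever leaks an evaluation of $f$, and the simulator need only be consistent with that one.
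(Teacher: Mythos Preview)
Your decomposition and the simulation arguments are essentially the paper's approach. The paper, too, proves the theorem by composing two steps: first it shows that security with abort for $\finout{f}{n}{t}{n'}$ yields \emph{fairness with restricted abort} for $f$ (your ECSS-in/ECSS-out protocol and the simulator that feeds sharings of $0$ until the non-aborting committee is determined), and then it shows that fairness with restricted abort yields full security by the disjoint-committee pigeonhole you describe. Your Part~2 is exactly \cref{thm:rfair_to_full} (second item) combined with \cref{thm:abort_to_restrict_fair}.

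For Part~1, however, you are making life harder than necessary. There is no ``delicate'' adaptive drawing of committees from an ``increasingly honest pool.'' The paper simply fixes, upfront, $t$ pairwise-disjoint committees $\committee_1,\ldots,\committee_t$ carved out of the first $t(2t+1)$ parties, and tries them one after the other. If all $t$ sequential calls abort, then each $\committee_j$ contained at least one corrupted party; since the committees are disjoint this accounts for all $t$ corruptions, so the single party $\Party_{t(2t+1)+1}$ (who exists because $t(2t+1)<n$) is honest and serves as a final singleton committee that cannot be aborted. No charging argument beyond this one-line pigeonhole is needed, and no committee is ever chosen adaptively.
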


\noindent
The proof of \cref{thm:aborttofull} follows from \cref{thm:rfair_to_full,thm:abort_to_restrict_fair} appearing in \cref{sec:abort_to_rfair,sec:fair_to_full_hm} (respectively).

\subsection{Fairness with Restricted Abort}\label{sec:restricted_fair_def}

We start by modifying the security definition of \emph{fairness with restricted identifiable abort} (\cref{def:ideal_ridabort}), such that in case the adversary aborts the computation, no corrupted party is identified.

\paragraph{Ideal model with fairness and restricted abort.}
An ideal computation, with fairness and $\vCS$-abort, of an $n$-party functionality $f$ on input $\vx=(x_1,\ldots,x_n)$ for parties $(\Party_1,\ldots,\Party_n)$ \wrt $\vCS=(\committee_1,\ldots,\committee_\ell)$, where $\committee_1,\ldots,\committee_\ell\subseteq [n]$, in the presence of an ideal-model adversary $\Adv$ controlling the parties indexed by $\IS\subseteq[n]$, proceeds via the following steps.
\begin{itemize}
    \item[\emph{Sending inputs to trusted party}:]
    An honest party $\Party_i$ sends its input $x_i$ to the trusted party.
    The adversary may send to the trusted party arbitrary inputs for the corrupted parties. Let $x_i'$ be the value actually sent as the input of party $\Party_i$.

	\item[\emph{Early abort}:]
    If there exists a corrupted party in every subset $\committee_j$, \ie if $\IS\cap\committee_j\neq \emptyset$ for every $j\in[\ell]$, then the adversary $\Adv$ can abort the computation by sending the \abort message to the trusted party. In case of such abort, the trusted party sends the message $\bot$ to all parties and halts.

    \item[\emph{Trusted party answers the parties}:]
    If $\committee_j\subseteq\IS$ for some $j\in[\ell]$, the trusted party sends all the input values $x'_1,\ldots,x'_n$ to the adversary, waits to receive from the adversary output values $y'_1,\ldots,y'_n$, sends $y'_i$ to $\Party_i$ and proceeds to the \emph{Outputs} step.
    Otherwise, the trusted party computes $(y_1, \ldots, y_n)=f(x_1', \ldots, x_n')$ and sends $y_i$ to party $\Party_i$ for every $i\in[n]$.

    \item[\emph{Outputs}:]
    Honest parties always output the message received from the trusted party and the corrupted parties output nothing.
    The adversary $\Adv$ outputs an arbitrary function of the initial inputs $\set{x_i}_{i\in\IS}$, the messages received by the corrupted parties from the trusted party and its auxiliary input.
\end{itemize}

\begin{definition}[ideal-model computation with fairness and restricted abort]\label{def:ideal_rfair}
Let $f\colon(\zs)^n \mapsto (\zs)^n$ be an $n$-party functionality, let $\IS\subseteq [n]$, and let $\vCS=(\committee_1,\ldots,\committee_\ell)$, where $\committee_1,\ldots,\committee_\ell\subseteq [n]$.
The {\sf joint execution of $f$ with $\vCS$ under $(\Adv, I)$ in the ideal model}, on input vector $\vx=(x_1, \ldots, x_n)$, auxiliary input $\aux$ to $\Adv$, and security parameter $\secParam$, denoted $\IDEAL^{\vCS\mhyphen\abort}_{f,\IS,\Adv(\aux)}(\vx,\secParam)$, is defined as the output vector of $\Party_1, \ldots, \Party_n$, and $\Adv(\aux)$ resulting from the above described ideal process.
\end{definition}

\paragraph{Security definition.}

\begin{definition}\label{def:SecureProtocol_rfair}
Let $f\colon(\zs)^n \mapsto (\zs)^n$ be an $n$-party functionality and let $\pi$ be a \ppt protocol computing $f$.
The {\sf protocol $\pi$ $(\delta,t)$-securely computes $f$ with fairness and $(\ell, n',t')$-abort (and
%information-theoretic
statistical security)}, if for every real-model adversary \Adv, there exists an adversary $\Sim$ for the ideal model, whose running time is polynomial in the running time of $\Adv$, such that for every $\IS\subseteq [n]$ of size at most $t$, and subsets $\vCS=(\committee_1,\ldots,\committee_\ell)$, where for every $j\in[\ell]$, $\committee_j\subseteq [n]$ satisfies $\ssize{\committee_j}=n'$ and $\ssize{\IS\cap\committee_j}\leq t'$, it holds that
%{\small{
\[
\set{\bigbrack \REAL_{\pi, \IS, \Adv(\aux)}(\vx, \secParam)}_{(\vx, \aux)\in(\zs)^{n+1}, \secParam\in\N}
\deltaclose
\set{\bigbrack \IDEAL^{\vCS\mhyphen\fair}_{f, \IS, \Sim(\aux)}(\vx, \secParam)}_{(\vx, \aux)\in(\zs)^{n+1}, \secParam\in\N}.
\]
%}}
If $\delta$ is negligible, we say that $\pi$ is a protocol that $t$-securely computes $f$ with fairness and $(\ell, n',t')$-abort and statistical security.

Similarly, {\sf $\pi$ is a protocol that $(\delta,t)$-securely computes $f$ with fairness and $(\ell, n',t')$-abort (and perfect security)}, if
%{\small{
\[
\set{\bigbrack \REAL_{\pi, \IS, \Adv(\aux)}(\vx, \secParam)}_{(\vx, \aux)\in(\zs)^{n+1}, \secParam\in\N}
\deltaequiv
\set{\bigbrack \IDEAL^{\vCS\mhyphen\fair}_{f, \IS, \Sim(\aux)}(\vx, \secParam)}_{(\vx, \aux)\in(\zs)^{n+1}, \secParam\in\N}.
\]
%}}
If $\delta=0$, we say that $\pi$ is a protocol that $t$-securely computes $f$ with fairness and $(\ell, n',t')$-abort and perfect security.
\end{definition}

We denote fairness with $(1,n',t')$-abort by fairness with $(n',t')$-abort.

\subsection{Fairness with Restricted Abort to Full Security}\label{sec:fair_to_full_hm}

Having defined fairness with restricted abort, we show how to compute $f$ with full security in the hybrid model computing $f$ with fairness and restricted abort (recall that no corrupted parties are identified in case of abort). These reductions apply when sufficiently many parties are honest, roughly $t=O(\sqrt{n})$, and are valid for any number of parties. The main idea is to partition a subset of the parties into $t+1$ committees, where each committee has an honest majority. It is guaranteed in this situation that at least one committee is fully honest, hence, the adversary cannot abort all $t+1$ computations.
%\Enote{Please state what we are doing in this section. Specifically, remind the reader what hybrid model we assume, and refer it to the definition of the ideal computation that this hybrid is providing the party with access to.}
\begin{theorem}\label{thm:rfair_to_full}
Let $f$ be an $n$-party functionality.
\begin{enumerate}
    \item
    Let $t$ be such that $t \cdot (2t+1)<n$ and consider a hybrid model that securely computes $f$ with
    %$\delta$-
    fairness and $(2t+1,t)$-abort. Then, $f$ can be
    %$(t\cdot \delta,t)$-
    $t$-securely computed with full security, using $t$ sequential calls to the ideal functionality.
    \item
    Let $t$ be such that $(t+1) \cdot (2t+1) \leq n$ and consider a hybrid model that securely computes $f$ with
    %$\delta$-
    fairness and $(t+1,2t+1,t)$-abort. Then, $f$ can be
    %$((t+1)\cdot \delta,t)$-
    $t$-securely computed with full security, using a single call to the ideal functionality.
    %\Enote{Actually, we could get the exact same parameters as in part 1, using the same trick. Right?} \rnote{I don't see how}
\end{enumerate}
\end{theorem}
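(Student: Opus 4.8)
The plan is to realize full security by delegating the evaluation to small committees that each carry an honest majority, and to exploit a pigeonhole argument forcing at least one committee to be entirely honest. The key observation is that, by \cref{def:ideal_rfair}, each invocation of the hybrid functionality evaluates $f$ on the inputs of \emph{all} $n$ parties but lets only committee members trigger an abort: the adversary may send $\abort$ only when \emph{every} committee $\committee_j$ satisfies $\IS\cap\committee_j\neq\emptyset$. Hence a committee $\committee$ with $\committee\cap\IS=\emptyset$ produces a call the adversary provably cannot abort. The whole reduction thus splits into two tasks: arranging the committees so that one is forced to be fully honest, and arguing that invoking them (in parallel for the second part, sequentially for the first) reaches that committee within the claimed number of calls. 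Throughout, the size $2t+1$ (or $3t+1$ for the perfect variant) is chosen so that no committee can be fully corrupted, as $|\committee|>t$ rules out the ``$\committee_j\subseteq\IS$'' branch in which the adversary would dictate the output.

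For the single-call (parallel) part I would partition a subset of the $n\ge(t+1)(2t+1)$ parties into $t+1$ \emph{disjoint} committees $\committee_1,\dots,\committee_{t+1}$, each of size $2t+1$, and make one call to the functionality with $\vCS=(\committee_1,\dots,\committee_{t+1})$. Since the committees are disjoint and $|\IS|\le t$, some $\committee_{j^\ast}$ satisfies $\committee_{j^\ast}\cap\IS=\emptyset$, so the adversary cannot issue $\abort$, and every honest party receives the genuine output $f(x_1',\dots,x_n')$. The simulator $\Sim$ is then immediate: it plays the restricted-abort trusted party, reads the corrupted parties' effective inputs, forwards them to the full-security trusted party, and relays the output. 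Because neither an abort nor a fully-corrupted committee is possible, the restricted-abort ideal execution collapses onto the full-security ideal execution, yielding a perfect simulation.

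For the sequential part I would instead elect committees greedily and invoke them \emph{one at a time}: in iteration $i$ the parties call the $(2t+1,t)$-abort functionality with a size-$(2t+1)$ committee $\committee_i$ chosen disjoint from $\committee_1,\dots,\committee_{i-1}$, halting at the first non-aborted call. The termination argument is the combinatorial heart of the proof: each aborted call certifies, via the restricted-abort rule, that the corresponding disjoint committee contains at least one corrupted party, so after $k$ aborts at least $k$ distinct corruptions have been ``charged'' to the used committees; since at most $t$ parties are corrupted while $t\cdot(2t+1)<n$ leaves room for the disjoint committees needed, the adversary must eventually run out of corruptions to spend, and a fully-honest, un-abortable committee is reached within the budgeted sequence of calls. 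The simulator is again trivial: by fairness every (early or late) abort is information-free, so $\Sim$ emulates each aborted call by returning $\bot$ without contacting its trusted party, and on the first successful call it forwards the corrupted inputs and relays the output.

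The step I expect to be the main obstacle is exactly this counting argument for the sequential part: pinning down precisely how many disjoint honest-majority committees the constraint $t\cdot(2t+1)<n$ affords, and confirming that the adversary's aborts can be charged to \emph{distinct} corruptions tightly enough to force success within the stated number of sequential calls (and, for the perfect-security variant, redoing the arithmetic with size-$3t+1$ committees so that downstream reconstruction tolerates up to $t$ corrupted shares). Once the committee combinatorics are settled, the remaining security analysis is routine, since the fairness of the hybrid functionality makes every abort perfectly simulatable with no access to the honest parties' inputs or outputs.
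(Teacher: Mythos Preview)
Your treatment of Part~2 matches the paper exactly: partition the first $(t+1)(2t+1)$ parties into $t+1$ disjoint size-$(2t+1)$ committees, invoke once with $\vCS=(\committee_1,\ldots,\committee_{t+1})$, and observe that with only $t$ corruptions some $\committee_{j^\ast}$ is fully honest, so the adversary cannot abort.

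For Part~1 your counting argument has a genuine gap, and it is precisely the obstacle you anticipated. You argue that each aborted call certifies a distinct corruption inside the (disjoint) used committee, so after $t$ aborts all $t$ corruptions are exhausted and the \emph{next} committee is fully honest. But that reasoning needs $t+1$ disjoint committees of size $2t+1$, whereas the hypothesis $t\cdot(2t+1)<n$ only guarantees $t$ such committees plus at least one leftover party. With $t$ disjoint committees and $t$ corruptions the adversary can place exactly one corrupted party in each committee and abort every single call; your scheme then has no room for a $(t+1)$th disjoint size-$(2t+1)$ committee, and the theorem claims only $t$ sequential calls.

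The paper resolves this not by finding another committee but by exploiting the leftover party. It fixes $\committee_j=\{\Party_{(j-1)(2t+1)+1},\ldots,\Party_{j(2t+1)}\}$ for $j\in[t]$ and calls them sequentially. If all $t$ calls abort, then each $\committee_j$ contains a corrupted party; since the committees are disjoint and $|\IS|\le t$, this forces $\IS\subseteq\committee_1\cup\cdots\cup\committee_t$. Hence party $\Party_{t(2t+1)+1}$ is \emph{provably} honest, and the paper lets this party ``complete the computation on its own.'' This leftover-honest-party step is the missing ingredient in your sketch: it is what buys the tighter $t$-call bound under the weaker hypothesis $t(2t+1)<n$ rather than $(t+1)(2t+1)\le n$. (The paper is admittedly terse about what ``on its own'' means operationally, but the structural point---that after $t$ aborts one specific party is certified honest---is the mechanism you should incorporate.)
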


\begin{proof}
We start by proving the first part of the theorem.
Since $t\cdot(2t+1)<n$, the first $t\cdot(2t+1)$ parties can be partitioned into $t$ disjoint committees $\committee_1,\ldots,\committee_t$ each of size $2t+1$, by setting for every $j\in[t]$
\[
\committee_j=\set{\Party_{(j-1)(2t+1)+1}, \ldots, \Party_{j\cdot(2t+1)}}.
\]
In each of the committees $\committee_1,\ldots,\committee_t$ it is guaranteed that a majority of the parties are honest.
The idea now is to sequentially compute $f$, where the \jth computation is fair with $\committee_j$-abort.
In case all $t$ computations abort, it is guaranteed that party $\Party_{t(2t+1)+1}$ is honest, and can complete the computation on its own.

As for the second part of the theorem, since $(t+1)\cdot(2t+1)\leq n$, the first $(t+1)\cdot(2t+1)$ parties can be partitioned into disjoint committees $\committee_1,\ldots,\committee_{t+1}$ each of size $2t+1$, by setting for every $j\in[t+1]$ the set
$
\committee_j=\sset{\Party_{(j-1)(2t+1)+1}, \ldots, \Party_{j\cdot(2t+1)}}.
$
As before, in each of the committees $\committee_1,\ldots,\committee_{t+1}$ it is guaranteed that a majority of the parties are honest.
Each party $\Party_i$ sends its input $x_i$ to the trusted party computing $f$ with
%$\delta$-
fairness and $\vCS$-abort, receives back a value $y$, outputs $y$ and halts.
Since $\Adv$ controls at most $t$ parties and there are $t+1$ committees, it is guaranteed that honest parties will receive valid output value.
The simulation is straightforward. \Sim receives input values $\sset{x'_i}_{i\in\IS}$ from \Adv (and sets missing values to default), sends them to its trusted party, receives back an output value $y$, forwards $y$ to \Adv and outputs whatever \Adv outputs. It is immediate that the simulation is perfect.

\end{proof}

\subsection{Security with Abort to Fairness with Restricted Abort}\label{sec:abort_to_rfair}

At the basis of our reduction is the $n'$-party functionality Reconstruct-Compute-Share $\finout{f}{n}{t'}{n'}$ that receives shared input values using a $(t'+1)$-out-of-$n'$ ECSS scheme (\cref{def:ECSS}), computes the function $f$ over the reconstructed inputs and outputs shares of the result. We show that, in the honest-majority setting, if $\finout{f}{n}{t'}{n'}$ can be computed with abort $\ell$ times in parallel by subsets $\vCS=(\committee_1,\ldots,\committee_\ell)$ of $n'$ parties, where each $\committee_l$ has at most $t'<n'/2$ corrupted parties, then $f$ can be computed with fairness and $\vCS$-abort.

\begin{theorem}\label{thm:abort_to_restrict_fair}
Let $f$ be an $n$-party functionality, let $t',n'\in\N$ be such that $2t'<n'<n$, let $\ell=\poly(n)$, and let $\vCS=(\committee_1,\ldots,\committee_\ell)$, where $\committee_1,\ldots,\committee_\ell\subseteq[n]$ be subsets of size $n'$.
Then, $f$ can be $t$-securely computed with fairness and $\vCS$-abort and statistical security in a hybrid model that securely computes $\finout{f}{n}{t'}{n'}$ with abort, $\ell$ times in parallel, whenever $t<n$ and the number of corrupted parties in every $\committee_i$ is at most $t'$. Furthermore, the above holds with perfect security whenever $t'<n'/3$.
\end{theorem}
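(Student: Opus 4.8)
The plan is to realize a single $\vCS\mhyphen\abort$ fair computation of $f$ out of the $\ell$ parallel, abort-secure computations of $\finout{f}{n}{t'}{n'}$, exploiting the fact that the Reconstruct-Compute-Share functionality hands out only \emph{shares} of the output. The protocol I would use is the following. Every party $\Party_i$ (whether or not it sits on a committee) secret shares its input $x_i$ \emph{independently} for each committee, $(\sval_i^{1,l},\ldots,\sval_i^{n',l})\gets\Share(x_i)$, and sends $\sval_i^{j,l}$ to the $j$-th member of $\committee_l$ over the secure channels. The committees then invoke the $\ell$-parallel $\finout{f}{n}{t'}{n'}$ with abort, the $j$-th member of $\committee_l$ feeding in $(\sval_1^{j,l},\ldots,\sval_n^{j,l})$ and receiving back an output share $y_j^l$ (or $\bot$ if its instance aborted). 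Finally, each committee publicly announces whether its instance aborted -- reliably, by majority vote among its $>n'/2$ honest members (here $t'<n'/2$ follows from $2t'<n'$) -- and, letting $L$ be the set of non-aborting committees, if $L=\emptyset$ then all parties output $\bot$, while otherwise the members of $\committee_{\ls}$ with $\ls=\min L$ broadcast their output shares and every party reconstructs $y=\Recon(y_1^{\ls},\ldots,y_{n'}^{\ls})$.

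Three features of this construction drive the proof. First, because each $\finout{}{}{}{}$ instance outputs a fresh $(t'+1)$-out-of-$n'$ sharing of its result and each $\committee_l$ contains at most $t'$ corrupted parties, the adversary sees at most $t'$ output shares per committee during the abort-secure phase; by $t'$-privacy these reveal nothing about any $y^{(l)}$, and since the shares across committees come from independent invocations of $\Share$, the adversary's decision of which instances to abort is (up to a negligible or zero statistical term) independent of the actual outputs. Second, only the single committee $\committee_{\ls}$ ever has its full sharing revealed, so the adversary learns at most one evaluation of $f$; this is what will let me match the \emph{single} effective input of the $\vCS\mhyphen\abort$ ideal model even though a corrupted party may feed inconsistent shares to different committees. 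Third, error correction does the heavy lifting for both input robustness and output reconstruction: in $\committee_{\ls}$ the $\ge n'-t'>t'$ honest members hold correct shares of each honest party's input, so $\Recon$ recovers the true $x_i$ despite the $\le t'$ garbage shares injected by corrupted members, and symmetrically the broadcast output shares reconstruct correctly.

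For the simulator $\Sim$ against a hybrid-model adversary $\Adv$ with $\size{\IS\cap\committee_l}\le t'$ for all $l$, I would proceed as follows. In the sharing phase $\Sim$ hands dummy shares of $0$ to corrupted committee members on behalf of honest parties (indistinguishable by $t'$-privacy) and records the shares $\Adv$ sends to honest committee members. Playing the $\ell$-parallel $\finout{}{}{}{}$ functionality, $\Sim$ returns freshly uniform output shares to each corrupted member of every not-fully-corrupted committee, and observes which instances $\Adv$ aborts. If $L=\emptyset$, then every committee contains a corrupted party, so $\Sim$ legitimately sends $\abort$ to its trusted party. Otherwise it sets $\ls=\min L$, computes each corrupted party's effective input as $x'_i=\Recon$ of the full share vector it now holds for $\Party_i$ in $\committee_{\ls}$, forwards $\set{x'_i}_{i\in\IS}$ to the trusted party, obtains $y$, and simulates the broadcast by sampling honest members' shares of $\committee_{\ls}$ that are consistent with the uniform corrupted shares already delivered and that reconstruct to $y$ (possible since $\le t'$ shares can be completed to a sharing of any secret). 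Correctness hinges on the earlier observation that the value reconstructed in $\committee_{\ls}$ equals $f$ applied to the honest parties' true inputs together with the extracted $x'_i$, which is exactly this $y$. A fully corrupted committee $\committee_j\subseteq\IS$ is handled by the ``adversary determines the output'' branch of the ideal model, which is available precisely because such a committee exists and which lets $\Sim$ -- now given all honest inputs -- reproduce $\Adv$'s chosen output.

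The main obstacle I anticipate is the simultaneous handling of input inconsistency and adaptive aborts: a corrupted party can share different values to different committees, and $\Adv$ can use its abort decisions to select, after the fact, which committee's evaluation becomes the output. The argument must show that this adaptivity still collapses to a single ideal effective input, which rests on two points I would nail down carefully -- that no output information leaks before the abort pattern (hence before $\ls$) is fixed, so that extracting $x'_i$ from $\committee_{\ls}$ at that moment is faithful, and that revealing only $\committee_{\ls}$ prevents the adversary from ever obtaining a second evaluation of $f$. The statistical-versus-perfect distinction is then a routine consequence of the underlying ECSS: Reed--Solomon decoding gives perfect correctness when $t'<n'/3$, whereas authenticated shares incur a negligible reconstruction error -- amplified by the $\ell=\poly(n)$ parallel uses, hence still negligible -- when only $2t'<n'$ holds.
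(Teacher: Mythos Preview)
Your proposal is correct and follows essentially the same route as the paper: the paper's \cref{prot:abort_to_rfair} is precisely your protocol (share inputs to each committee, run the $\ell$ parallel instances, declare abort by majority among the $>n'/2$ honest members of each committee, and have the lexicographically first non-aborting committee broadcast its output shares for everyone to reconstruct), and the paper's simulator extracts $\sset{x'_i}_{i\in\IS}$ from exactly the $\committee_{\ls}$ instance as you do. Two small remarks: the fully-corrupted-committee branch you mention cannot arise here since the theorem assumes $\ssize{\IS\cap\committee_l}\le t'<n'$, so you may drop that case; and your care in sampling the honest broadcast shares \emph{consistent} with the dummy shares already handed to $\Adv$ is in fact the right way to make the joint distribution match---the paper simply writes ``$(y_1,\ldots,y_{n'})\gets\Share(y)$'' and appeals to ECSS properties, but your formulation is cleaner.
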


\begin{proof}
The protocol in the $\finout{f}{n}{t'}{n'}$-hybrid model is defined as follows.
\begin{protocol}(security with abort to fairness with restricted abort)\label{prot:abort_to_rfair}
\begin{itemize}
    \item\textbf{Hybrid Model:}
    The protocol is defined in the hybrid model computing $\finout{f}{n}{t'}{n'}$ with $\vCS$-abort.
    \item\textbf{Common Input:}
    A $(t'+1)$-out-of-$n'$ ECSS scheme $(\Share,\Recon)$, and $\vCS=(\committee_1,\ldots,\committee_\ell)$. We use the notation $\Party^l_j$ to refer to the \jth party in $\committee_l$.
    \item\textbf{Private Input:}
    Every party $\Party_i$, for $i\in[n]$, has private input $x_i\in\zs$.
    \item\textbf{The Protocol:}
\end{itemize}
\begin{enumerate}
    \item\label{step:abort_secret_share}
    For every $l\in[\ell]$, every party $\Party_i$ secret shares its input $(s_i^{1,l},\ldots,s_i^{n',l})\gets\Share(x_i)$ and sends $s_i^{j,l}$ to the \jth party in $\committee_l$ (over a private channel).

    \item\label{step:abort_ideal_functionality}
    Every party $\Party^l_j\in\committee_l$, sends $(s_1^{j,l},\ldots,s_n^{j,l})$ to the ideal functionality computing $\finout{f}{n}{t'}{n'}$ (in case $\Party^l_j$ did not receive a value from $\Party_i$ it sets $s_i^{j,l}=\lambda$).
    In return, $\Party^l_j$ receives $y_{j,l}$ as its output value. If $y_{j,l}=\bot$, then $\Party^l_j$ broadcasts $(l,\bot)$.

    \item\label{step:abort_output}
    If more than $t'$ values $(l,\bot)$ were broadcasted for every $\committee_l$, then all parties output $\bot$ and halt. Otherwise, let $\ls$ be the smallest $l\in[\ell]$ such that no more than $t'$ values $(l,\bot)$ were broadcasted.
    Every party $\Party^{\ls}_j\in\committee_{\ls}$ broadcasts $y_{j,\ls}$.

    \item\label{step:abort_output_recon}
    Every party $\Party_i$ computes $y=\Recon(y_{1,\ls},\ldots,y_{n',\ls})$ (where in case $\Party^{\ls}_j\in\committee_{\ls}$ did not send a value, $\Party_i$ sets $y_{j,\ls}=\lambda$) and outputs $y$.
\end{enumerate}
\end{protocol}

Let $\Adv$ be a computationally unbounded adversary attacking the execution of \cref{prot:abort_to_rfair}, denoted as $\pi$, in the $\finout{f}{n}{t'}{n'}$-hybrid model, and let $\IS\subseteq[n]$ be a subset of size at most $t$, satisfying $\ssize{\IS\cap\committee_l}\leq t'$ for every $l\in[\ell]$.
We construct an ideal-model adversary $\Sim$ (simulator) computing $f$ with fairness and $\vCS$-abort.
On inputs $\set{x_i}_{i\in\IS}$ and auxiliary input $\aux$, the simulator $\Sim$ starts by emulating $\Adv$ on these inputs. $\Sim$ interacts with $\Adv$, playing the roles of the honest parties and the ideal functionality $\finout{f}{n}{t'}{n'}$. For simplicity, assume that all input and output values are elements in $\zo^\secParam$.

To simulate Step~\ref{step:abort_secret_share}, for every $l\in[\ell]$, the simulator \Sim generates shares of zero for every honest party $\Party_i$ as $(s_i^{1,l},\ldots,s_i^{n',l})\gets\Share(0^\secParam)$ and sends $s_i^{j,l}$ to \Adv for every corrupted $\TParty^l_j\in\committee_l$; in addition, $\Sim$ receives from $\Adv$ a value $s_i^{j,l}$ on behalf of every corrupted party $\TParty_i$ that is sent for an honest party $\Party^l_j\in\committee_l$.
To simulate Step~\ref{step:abort_ideal_functionality}, for every committee $\committee_l$ that contains corrupted parties, $\Sim$ simulates the ideal functionality computing $\finout{f}{n}{t'}{n'}$ by receiving from \Adv an input value $(\tilde{s}_1^{j,l},\ldots,\tilde{s}_n^{j,l})$ for every corrupted $\TParty^l_j\in\committee_l$. In case \Adv sends an early \abort\ message, the simulator \Sim responds with $\bot$ message to all corrupted parties; otherwise, \Sim computes $(y_{1,l},\ldots,y_{n',l})\gets\Share(0^\secParam)$ and responds with $y_{j,l}$ for every $\Party_j\in\committee_l\cap\IS$ . In the latter case, \Sim waits to receive \abort or \continue from \Adv.
In case \Adv sent $\abort$ for the computation corresponding to $\committee_l$ (either an early abort or a late abort), \Sim simulates broadcasting $(l,\bot)$ by every honest party in $\committee_l$; in addition, \Sim receives messages $(l,\bot)$ from \Adv.

In case \Adv sent an \abort\ message for every ideal computation of $\finout{f}{n}{t'}{n'}$ by a committee $\committee_l$, the simulator \Sim sends an \abort message to its trusted party, outputs whatever \Adv outputs and halts. Otherwise, let $\ls$ be the smallest $l\in[\ell]$ for which the adversary \Adv did not send an \abort message to \Sim (\ie to the ideal computation of $\finout{f}{n}{t'}{n'}$ for the committee $\committee_l$).

For every $i\in\IS$, the simulator \Sim computes $x'_i=\Recon(\tilde{s}_i^{1,\ls},\ldots,\tilde{s}_i^{n',\ls})$, where for $j\in\committee_{\ls}\setminus\IS$, the value $\tilde{s}_i^{j,\ls}=s_i^{j,\ls}$ is obtained during the simulation of Step~\ref{step:abort_secret_share} and for $j\notin\committee_{\ls}\cap\IS$, the value $\tilde{s}_i^{j,\ls}$ is obtained during the simulation of Step~\ref{step:abort_ideal_functionality} (in case some value was not sent by $\Adv$, set it to $\lambda$). If the reconstruction of $x'_i$ failed, set it to be the default value. \Sim sends $\sset{x'_i}_{i\in\IS}$ to the trusted party computing $f$ and receives back the output $y$. Finally, $\Sim$ computes $(y_1,\ldots,y_{n'})\gets\Share(y)$ and broadcasts $y_j$ on behalf of every $\Party^{\ls}_j\in\committee_{\ls}\setminus\IS$, receives from \Adv values for parties $\Party^{\ls}_j\in\committee_{\ls}\cap\IS$, outputs whatever \Adv outputs and halts.

It is straightforward to prove that, following the properties of the ECSS scheme, the output of the honest parties and \Adv in the real execution of protocol $\pi$ and the output of the honest parties and \Sim in the ideal computation of $f$ with fairness and $\vCS$-abort are statistically %$(\ell\cdot\delta)$-
close for $t'<n'/2$ and identically distributed
%perfectly $(\ell\cdot\delta)$-close
for $t'<n'/3$.
\end{proof}

\fi
\newcommand{\idabortPar}{\MathAlgX{id \mhyphen abort\mhyphen par}}
\newcommand{\psiS}{\psi}
\newcommand{\psiSJ}{\psi}

\newcommand{\AdvPsi}{\Adv^\psi}
\newcommand{\hAdvPsi}{\widehat{\Adv}^\psi}

\section{Necessity of Super-Constant Sequential Fair Calls}\label{sec:impossibility}
In this section, we prove that a super-constant number of \emph{functionality rounds}, in which (possibly parallel) calls to the fair functionality are made, is necessary for a fully secure implementation of some functionalities. Specifically, we show the necessity of such number of functionality rounds when the functionality in consideration is coin flipping.

The model in which the lower bound is proved is defined in \cref{sec:imposability:model}, and the lower bound is stated and proved in \cref{sec:Impo:Proof}. The lower bound proof uses a useful corollary of Cleve's lower bound~\cite{Cleve86}, whose proof is given in \cref{sec:impo:Cleve}.

\subsection{The Model}\label{sec:imposability:model}
Consider an $n$-party coin-flipping protocol $\pi$ executed in the hybrid model where parties can compute $\fcf{n}$ ($n$-party coin-flipping functionality, see \cref{def:coinflip}) with fairness and $\committee\mhyphen\idabort$ for every $\committee\subseteq[n]$ (see \cref{def:ideal_ridabort_noinput}).
Namely, a protocol in this model has in addition to the standard communication rounds, also \emph{functionality rounds} in which any subset (committee) $\committee\subseteq[n]$ of the parties can ask the trusted party to flip a coin, and the trusted party outputs a uniform bit visible to all $n$ parties. If the subset $\committee$ contains a corrupted party, then the attacker can abort this call before learning the value of the output bit, but at the price of revealing the identity of a corrupted party in the committee to all parties (even those not in the committee). An all-corrupted committee can determine the output of the trusted party arbitrarily, without being identified.

We prove the lower bound in a stronger hybrid model than the above (where the life of the honest parties are easier) that allows \emph{parallel} calls to the trusted party by different committees at the same functionality round. In such a case, (only) an all-corrupted committee is assumed to be rushing, and can decide whether to abort or not based on the outcome of other parallel calls. The output of all other committees, is published at the same time to all parties. We denote the above hybrid model by \emph{$(\fcf{n}, [n]$-$\idabortPar)$-hybrid model}. An \emph{$\numcalls$-call} protocol in this model has at most $\numcalls$ functionality rounds.

Recall that in a coin-flipping protocol, the output of an all-honest execution is a common uniform bit. The protocol is \emph{$\consistent$-consistent} if in an honest execution, any two parties output the same value with probability at least $1/2 + \consistent$ (in the case that $\consistent=1/2-\negl(\secParam)$, we simply say the protocol is consistent).
Finally, a fail-stop attacker for $\pi$ might only deviate from the protocol by early aborting (in the $(\fcf{n}, [n]$-$\idabortPar)$-hybrid model, this can be done either by stop sending messages during an interaction round, or by aborting a call to the ideal functionality).

\begin{remark}[Is this the right model?] The above model is generous with the honest parties, as it assumes that \emph{only} an all-corrupted committee can be rushing: only a party in such a committee can decide to abort or not based on the result of the other parallel calls to the trusted party. Indeed, in this model, the reductions of \cref{sec:fairtofull_noinput_noHM} can be modified, in the spirit of the reductions given in \cref{sec:fairtofull_noinput_HM}, to yield protocols of only $\omega(1)$ sequential calls to the fair functionality.

We do not know whether the above model can be justified by the existence of a fully secure protocol that computes the functionality at hand (at least not for a randomized functionality in the dishonest-majority setting). Hence, the model we actually use to prove our positive results in \cref{sec:fairtofull_noinput_noHM} is much more pessimistic, and essentially does not allow parallel invocations of the trusted party. In this pessimistic model, it is not hard to prove that a super-logarithmic number of functionality rounds is needed, making the result of \cref{sec:fairtofull_noinput_noHM} optimal in this respect.% in fully secure coin-flipping protocols with a lower number of sequential invocations of the fair functionality, the corrupted parties can abort every call to the trusted party, and the lower bound easily follows from~\cite{Cleve86}).
\end{remark}
%\rnote{Actually, there is a subtle difference here. In the upper bound in \cref{sec:fairtofull_noinput_HM} we do not have parallel calls to $(\fcf{n}, [n]$-$\idabortPar)$ that potentially give many random bit in every round - we use sequential calls to $(\fcf{n}, (\C\cs_1,\ldots,\C\cs_\numcalls)\mhyphen\idabort)$, \ie if no abort occurs, all $n$ parties receive a single bit, but in order to abort the attacker must identify a party in each $\C\cs_i$. The way to realize it is by having each $\C\cs_i$ fairly compute $\fcf{n'}$ in parallel, and choose one of the results of a non-aborting committee. Because of this difference we can apply the round-reduction technique in the honest-majority case.}

\vspace{-1ex}
\subsection{The Lower Bound}\label{sec:Impo:Proof}
Given the above formulation, our lower bound is stated as follows.
\begin{theorem}\label{thm:lowerbound}
Let $\pi$ be an $n$-party, $\numcalls$-call, $m$-time, $\consistent$-consistent coin-flipping protocol in the $(\fcf{n}, [n]$-$\idabortPar)$-hybrid model. Then, there exists a fail-stop attacker, controlling a $\beta n$-size subset of the parties, for $1/2<\beta<1$, that can bias the output of $\pi$ by $\Omega(\consistent n^{-c}/m)$, for $c = O(\numcalls \cdot \log m / \log n)$.
\end{theorem}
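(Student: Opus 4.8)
The plan is to reduce attacking the $n$-party protocol $\pi$ to attacking a two-party coin-flipping protocol, where Cleve's impossibility result~\cite{Cleve86} guarantees a biasing attack. The construction follows exactly the template laid out in \cref{sec:intro:technique}: partition the $n$ parties into two blocks $\cs_0$ (of size $\beta n$) and $\cs_1 = [n]\setminus\cs_0$, and define a two-party protocol $\psi=(\Party_0,\Party_1)$ in which $\Party_0$ emulates all parties in $\cs_0$ and $\Party_1$ emulates all parties in $\cs_1$. Communication rounds of $\pi$ are emulated in the natural way. The subtle point is the functionality rounds: for a functionality round in which committees $\committee_1,\ldots,\committee_\ell$ invoke the trusted coin-flipping party in parallel, I would have $\Party_0$ (say) send $\ell$ uniformly random bits, one per round of $\psi$, which both parties interpret as the outputs of the $\ell$ parallel calls. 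Aborts are translated faithfully: an abort by $\Party_0$ while emulating the output-bit of some committee $\committee$ is interpreted as the first party of $\committee$ aborting that particular fair call (with the remaining $\cs_0$-parties aborting immediately after), and an abort in a communication round is interpreted as all of $\cs_0$ aborting. I would need to verify that $\psi$ is still a consistent coin-flipping protocol (up to the loss in $\consistent$) and has a bounded number of rounds, roughly $\numcalls\cdot\ell + (\text{communication rounds})$; since $\ell$ can be as large as $n^{O(1)}$ and there are $\numcalls$ functionality rounds, the round count of $\psi$ is $n^{O(\numcalls)}$, which is exactly where the exponent $c=O(\numcalls\cdot\log m/\log n)$ will come from.

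Next I would invoke the corollary of Cleve's theorem (stated to be proved in \cref{sec:impo:Cleve}): in any $R$-round consistent two-party coin-flipping protocol there is a round $\is$ and a party that can bias the output by $\Omega(\consistent/R)$ merely by deciding, based on its view, whether to abort at round $\is$. Applying this to $\psi$ with $R=n^{O(\numcalls)}\cdot(\text{poly})$ yields a bias of $\Omega(\consistent/R)$, and substituting $R$ gives the claimed $\Omega(\consistent\, n^{-c}/m)$ form once the $m$-time factor and the communication rounds are folded in. Without loss of generality the attacking party is $\Party_0$ and round $\is$ is a functionality round (the other cases — $\Party_1$ attacking, or $\is$ a communication round — translate directly into fail-stop attacks on $\pi$, since there the emulated aborts correspond to legitimate fail-stop behavior of $\cs_0$ or $\cs_1$).

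The main obstacle, and the heart of the argument, is the following mismatch in abort power. In $\psi$ the adversary corrupting $\Party_0$ decides whether to abort \emph{before} sending the $\is$'th message, i.e.\ before the value of that coin is revealed; but the $\is$'th message of $\psi$ corresponds to the output bit of a single committee $\committee$ within a parallel functionality round of $\pi$. In $\pi$, the attacker controlling $\cs_0$ can prevent the honest parties from learning $\committee$'s output \emph{after} seeing it only if it controls \emph{all} of $\committee$ — this is precisely the rushing power granted to an all-corrupted committee in the $[n]$-$\idabortPar$ model. So the Cleve-style attack on $\psi$ translates into a legitimate attack on $\pi$ only when $\committee\subseteq\cs_0$. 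To close this gap I would choose the partition $\cs_0$ \emph{at random} (subject to $\ssize{\cs_0}=\beta n$) and argue that, since every committee invoked in $\pi$ is small — here I would first reduce to the no-large-committee case by noting that with only $\numcalls=O(1)$ functionality rounds the attacker can force every call made by a committee of size exceeding $\log(\secParam)$ to abort, so the relevant $\committee$ has at most $\log(\secParam)$ members — a random $\beta n$-subset with $\beta>1/2$ contains $\committee$ entirely with probability at least $\beta^{\log(\secParam)}=\secParam^{-O(1)}=n^{-O(1)}$. Averaging the guaranteed bias over the random choice of $\cs_0$, and retaining only the event that $\committee\subseteq\cs_0$, costs another $n^{-O(1)}$ factor, which is absorbed into the exponent $c$. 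The remaining care is bookkeeping: tracking how the consistency parameter $\consistent$ degrades through the emulation, confirming that the induced attacker on $\pi$ is genuinely fail-stop (it only early-aborts), and verifying that the final bias, after the union over the $n^{-c}$ success probability and the $1/m$ and $1/R$ factors, matches $\Omega(\consistent\, n^{-c}/m)$ with $c=O(\numcalls\cdot\log m/\log n)$.
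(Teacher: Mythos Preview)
Your overall architecture matches the paper: reduce to a two-party protocol $\psi$, apply the single-round variant of Cleve's attacker, and translate back to $\pi$, handling large committees separately by having the adversary abort them. However, there is a genuine gap in the translation step, and your account of where the exponent $c$ comes from is off.

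\textbf{The circularity you do not address.} You propose to choose the corrupted block $\cs_0$ uniformly at random and then argue that the committee $\committee=\comStar$ attacked at Cleve's round $\is$ satisfies $\comStar\subseteq\cs_0$ with probability roughly $\beta^{\size{\comStar}}$. The problem is that $\is$---and hence $\comStar$---is produced by applying \cref{lem:simple_advM} to the two-party protocol $\psi$, and $\psi$ itself depends on the partition $\cs_0$. So $\comStar$ is a \emph{function of} $\cs_0$, and the event $\{\comStar(\cs_0)\subseteq\cs_0\}$ cannot be bounded as if $\comStar$ were fixed. In principle the protocol could be such that for every choice of $\cs_0$ the Cleve round lands on a committee containing an honest party; your averaging then yields nothing. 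The paper flags exactly this issue and resolves it differently: it fixes a \emph{small} set $\cs$ (size $(1-\beta)n$), lets $\Party_0$ control $\cs$ and emulate the ideal functionality, and pushes the randomness into $\Party_1$'s abort-handling---when $\Party_0$ aborts on committee $\comStar$, party $\Party_1$ draws a uniform $(1-\beta)n$-size set $\cT\subseteq[n]\setminus(\cs\cup\append(\comStar))$ and finishes the emulation as if everyone outside $\cT$ aborted. Now $\psi$ is a single fixed protocol, Cleve's $\is$ and $\comStar$ are fixed independently of $\cT$, and the attacker on $\pi$ corrupts $[n]\setminus\cT$. The event $\{\cT\cap\append(\comStar)=\emptyset\}$ is then genuinely independent of the view up to round $\is$, which is what makes the averaging in \cref{lem:small_committees_attack} go through.

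\textbf{Where $c$ actually comes from.} The round count of $\psi$ is not $n^{O(\numcalls)}$; it is at most $m$ (the running time of $\pi$), so Cleve contributes a $1/m$ factor. The $n^{-c}$ loss arises from two places: the probability that a random $\cT$ misses the padded committee $\append(\comStar)$ of size $c\log n$, and the choice of the threshold $c\log n$ for ``small'' committees. The latter is set so that each aborting block $\J_i$ (of size $(\beta-1/2)n/\numcalls$) hits every large committee with probability $1-1/\poly(m)$, which forces $c=\Theta(\numcalls\log m/\log n)$ (\cref{clm:good_subsets}).
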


The above yields that for a constant $\numcalls$ and non-negligible $\consistent$, protocol $\pi$ is not fully secure facing up to $\beta n$ corruptions. For simplicity, we will prove the theorem for consistent protocols (\ie $\consistent = 1/2$), but our technique readily captures any non-negligible $\consistent$.

The proof of \cref{thm:lowerbound} follows the high-level description given in the Introduction (\cref{sec:intro:technique}). In \cref{sec:small_committees}, we show how to bias any coin-flipping protocol, assuming the ideal functionality is \emph{not} invoked by ``large'' committees (all committees are of size at most $c \cdot \log n$, for a constant $c$ to be determined by the analysis). In \cref{sec:large_committees}, we adjust the proof of \cref{sec:small_committees} for proving \cref{thm:lowerbound}, by showing that large committees are not useful for protocols of constant number of functionality rounds.

\vspace{-1ex}
\subsubsection{Biasing Protocols with No Large Committees}\label{sec:small_committees}

In this section, we prove \cref{thm:lowerbound} assuming the ideal functionality is only invoked by ``small'' committees. This is captured by the following lemma.

\begin{lemma}\label{Lem:lowerbound_small_committees}
Let $\pi$ be an $n$-party, $\numcalls$-call, $m$-time, consistent coin-flipping protocol in the $(\fcf{n}, [n]$-$\idabortPar)$-hybrid model, in which all calls to the ideal functionality are made by committees of size at most $c\cdot \log n$ for a constant $c$. Then, there exists a fail-stop attacker, controlling a $\beta n$-size subset of the parties, for $1/2<\beta<1$, that can bias the output of $\pi$ by $\Omega(n^{-c}/m)$.
%\rnote{why did $\gamma$ disappeared? What is $c$? is it any constant?}\Lnote{We wrote right after the theorem that for simplicity we assume $\consistent=1/2$. That's why it says that the protocol is consistent (\ie $\consistent = 1/2$ as we defined in the same paragraph). And yes, $c$ may be any constant. We choose a specific $c$ only in the next section.}
\end{lemma}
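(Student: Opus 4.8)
The plan is to reduce biasing the $n$-party hybrid protocol $\pi$ to attacking a two-party coin-flipping protocol in the standard model, and then invoke Cleve's lower bound. First I would fix a random partition of $[n]$ into two sets $\cs_0$ of size $\beta n$ and $\cs_1 = [n]\setminus\cs_0$, and define a two-party protocol $\psi = (\Party_0,\Party_1)$ where $\Party_0$ emulates all parties in $\cs_0$ and $\Party_1$ emulates all parties in $\cs_1$, running a single random execution of $\pi$ internally. The crucial point is emulating the functionality rounds: when a functionality round of $\pi$ has (small) committees $\committee_1,\ldots,\committee_\ell$ invoking $\fcf{n}$ in parallel, I let $\Party_0$ produce the $\ell$ output bits, sending each in a separate round of $\psi$ as a uniformly random bit, and both parties interpret these bits as the outputs of the trusted party. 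Each party outputs the output of the first $\pi$-party under its control. Abort handling is defined so that a $\psi$-party aborting is translated faithfully into a fail-stop abort pattern in $\pi$: if $\Party_0$ aborts while emulating the output of a committee $\committee$, then in $\pi$ the first party of $\committee$ aborts the corresponding call and all of $\cs_0$ aborts immediately afterward (and symmetrically for communication rounds and for $\Party_1$).

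Next I would apply the corollary of Cleve's theorem (stated for later use in \cref{sec:impo:Cleve}): since $\psi$ is a consistent two-party coin-flipping protocol with some bounded number of rounds $R$, there is a party and a round $\is$ such that an adversary controlling that party can bias the common output by $\Omega(1/R)$ merely by deciding, based on its view, whether to abort at round $\is$. Here $R$ is the total number of $\psi$-rounds, which is $O(m)$ from the communication rounds plus the total number of parallel calls summed over all functionality rounds; because every committee has size at most $c\log n$ and there are $\numcalls = O(1)$ functionality rounds, the number of distinct committees (hence the number of bits $\Party_0$ sends) is at most polynomial, giving $R = \poly(n)\cdot m$ and thus a bias of $\Omega(n^{-c}/m)$ for an appropriate exponent. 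I would assume \wlg that the biasing party is $\Party_0$ and that $\is$ falls in a functionality round, since all other cases (communication round, or party $\Party_1$) translate immediately into fail-stop attacks on $\pi$ controlling $\cs_0$.

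The heart of the argument, and the main obstacle, is bridging the gap between the power of the $\psi$-attacker and that of a legitimate fail-stop $\pi$-attacker corrupting $\cs_0$. The $\psi$-attacker controlling $\Party_0$ can decide whether to abort \emph{before} sending the $\is$'th bit, i.e.\ before the other parties learn the output of the committee $\committee$ associated with that bit. In $\pi$, however, the adversary can withhold the output of $\committee$ from the honest parties only if it aborts the call made by $\committee$, and \emph{fairness with restricted identifiable abort} permits aborting a call only if the committee is not fully honest; moreover, to see the committee's output before deciding, the adversary must control \emph{all} of $\committee$ (only an all-corrupted committee is rushing in the $[n]$-$\idabortPar$ model). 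Thus the translated attack on $\pi$ is valid precisely when $\committee \subseteq \cs_0$. I would close this gap probabilistically: since $\committee$ has size at most $c\log n$ and $\cs_0$ is a uniformly random $\beta n$-subset with $\beta > 1/2$, the event $\committee\subseteq\cs_0$ occurs with probability at least roughly $\beta^{c\log n} = n^{-c\log(1/\beta)} \ge n^{-c'}$, which is noticeable. Formally I would argue that there is a fixed choice of committee index (the one containing the $\is$'th bit) and then a random $\cs_0$ containing it, so in expectation over the partition the attack succeeds on the relevant committee with inverse-polynomial probability; conditioning on this good event and running the Cleve-derived abort decision yields a fail-stop $\pi$-attacker that biases the output by $\Omega(n^{-c}/m)$, as claimed. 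The remaining care is to check that when $\committee\not\subseteq\cs_0$ the attack does no harm (it simply behaves honestly or aborts in a way that still keeps all honest parties consistent), so the overall bias is the product of the noticeable success probability and the $\Omega(1/R)$ bias from Cleve, which is $\Omega(n^{-c}/m)$.
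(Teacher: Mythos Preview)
Your high-level plan matches the paper's, and you correctly identify that the translated attack on $\pi$ succeeds precisely when the aborted committee $\comStar$ is fully corrupted. But your probability argument for this event has a real gap: you fix the corrupted set $\cs_0$ \emph{first}, define $\psi$ in terms of the partition, and only then apply Cleve to obtain the aborting round $\is$. Since the abort handling in $\psi$ (and hence the default outputs that drive Cleve's analysis) depends on the partition, the round $\is$ --- and with it the committee $\comStar$ --- is a function of $\cs_0$. You therefore cannot invoke independence to conclude $\Pr[\comStar \subseteq \cs_0] \ge \beta^{c\log n}$; it is conceivable that for \emph{every} choice of $\cs_0$, Cleve's round lands on a committee that meets the honest side. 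The paper flags exactly this pitfall.

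The paper decouples the two sources of randomness. It fixes an arbitrary $(1-\beta)n$-size set $\cs$, lets $\Party_0$ control $\cs$ and emulate the trusted party, and defines $\psi$ relative to $\cs$ alone --- so Cleve's round $\is$ is pinned down before any further randomness is drawn. The honest set $\cT$ is then sampled \emph{inside} $\psi$, by $\Party_1$, only \emph{after} $\Party_0$ aborts; the $\pi$-attacker corrupts $[n]\setminus\cT$ (which always contains $\cs$). Because the view up to the abort, and hence $\comStar$, is independent of $\cT$, the expected bias of $\AdvpiI$ over a uniform $(1-\beta)n$-size $\cT\subseteq[n]\setminus\cs$ factors cleanly as $\Pr[\cT\cap\append(\comStar)=\emptyset]$ times the two-party bias, and the first factor is at least $((2\beta-1)/(\beta e))^{c\log n}$ since $\append(\comStar)$ is padded to have exactly $c\log n$ elements outside $\cs$. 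A minor separate point: $\psi$ has at most $m$ rounds, not $\poly(n)\cdot m$, since $m$ already bounds the total number of functionality calls; the $n^{-c}$ loss comes entirely from the probability above, not from the round count.
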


%\Lnote{Ran, I added the above Lemma statement. I'm still not sure it's needed. We don't refer to it later on.}
%\rnote{I still think its important to have a clear statement of what we prove in this section. We should also refer to the lemma in \cref{sec:large_committees}, maybe in the second paragraph}
%\Lnote{We can keep it, I don't either way. I don't think though we should refer to it in the next section. There we simply prove the theorem as stated above.}

The proof is a reduction of the multiparty coin-flipping protocol $\pi$ in the $(\fcf{n}, [n]$-$\idabortPar)$-hybrid model to a two-party coin-flipping protocol in the plain model. Since the latter can be biased by a fail-stop attacker~\cite{Cleve86}, the proof is completed by showing how to translate any such fail-stop attacker into a (fail-stop) attacker on $\pi$. Interestingly, the proof of this part holds for any (no-large-committees) protocol, regardless of the number of its functionality rounds. The following discussion is \wrt a fixed $(1 - \beta)n$-size subset $\cs\subseteq [n]$ (chosen arbitrarily).

Loosely speaking, the two-party protocol $\psiS = (\Party_0,\Party_1)$ relates to the $n$-party protocol $\pi$ as follows: the $n$ parties of $\pi$ are partitioned into the two subsets
\remove{of sizes $(1 - \beta) n$ and $\beta n$}
$\cs$ and $[n]\setminus \cs$, that are controlled by $\Party_0$ and $\Party_1$, respectively, in a random emulation of $\pi$. Party $\Party_0$ emulates the ideal functionality: when $\numcomm$ committees, $\committee_1, \ldots, \committee_\numcomm$, call the ideal functionality in parallel in $\pi$, this translates in $\psiS$ to $\Party_0$ sending $\numcomm$ uniformly distributed bits, sequentially, one in each round. These bits are interpreted by the parties as the answers of the ideal functionality. If, as a concrete example, $\Party_0$ aborts in a round in which it is supposed to send a uniform bit, emulating the answer of the ideal functionality to a call by some committee {$\committee$, party $\Party_1$ interprets it as if the first party in $\committee$ aborted the call and continues the execution of $\pi$ as follows: it randomly chooses some $(1-\beta) n$-size subset of the parties under its control that does not intersect $\committee$, denoted $\cT$, and emulates the remainder of the execution as if all parties outside $\cT$ abort right after the call to the trusted party. For technical reasons, it would have been convenient to assume that $\committee$ has \emph{exactly} $ c \log n$ elements outside $\cs$. Since the latter might not always be the case, when choosing the subset $\cT$, party $\Party_1$ first appends arbitrary elements to $\committee$ so that the resulting subset has the requited property.

We now formally define the two-party protocol $\psi$ induced by the multiparty protocol $\pi$. For $\committee \subseteq [n]$, let $\append(\committee)$ be the committee $\committee'$ obtained from $\committee$ by padding it with arbitrary (but fixed) parties from $[n] \setminus \cs$, such that $\size{\committee' \setminus \cs} = c \log n$ (if $\size{\committee \setminus \cs} = c \log n$, then $\committee' = \committee$). The two-party coin-flipping protocol in the standard model $\psiS = (\Party_0, \Party_1)$ is defined as follows.
{\samepage{
\begin{protocol}[Protocol $\psiS= (\Party_0,\Party_1)$: two-party coin flipping from multiparty coin flipping]\label{prot:two_party}~
\begin{enumerate}
	\item
    Party $\Party_0$ controls the parties in $\I_0 = \cs$ and $\Party_1$ controls the parties in $\I_1 = [n] \setminus \cs$ in an emulation of $\pi$: in each communication round of~$\pi$, party $\Party_i$ (for $i \in \zo$) internally emulates all messages among parties in $\I_i$. Party $\Party_i$ also sends to $\Party_{1-i}$ all messages from parties in $\I_i$ to parties in $\I_{1-i}$ and over the broadcast channel.
	
	\item
    Party $\Party_0$ emulates the ideal functionality: when a parallel call to the functionality by $\numcomm$ committees, denoted by $\committee_1,\ldots, \committee_\numcomm$,\footnote{The order of the committees is set arbitrarily.} is due, $\Party_0$ sends $\numcomm$ separate messages, one for each committee, each containing a uniformly distributed bit.
\end{enumerate}

\textbf{Output:}
Party $\Party_0$ (resp., $\Party_1$) outputs the output value of an arbitrary non-aborted party in $\I_0$ (\resp $\I_1$) in $\pi$.

\vspace{1ex}
\textbf{Dealing with abort:}
\begin{enumerate}
	\item
    \emph{Party $\Party_i$ aborts while emulating a communication round:} party $\Party_{1-i}$ continues the emulation internally, as if all the parties in $\I_i$ stopped sending messages in the current round of $\pi$.
    If the aborting party is $\Party_0$, then $\Party_1$ continues the execution by emulating all remaining calls to the ideal functionality on its own.
	
	\item
    \emph{Party $\Party_0$ aborts while emulating a functionality round:} let $\committee_1,\ldots, \committee_\numcomm$ be the committees of the current functionality round, and let $\committee_j$ be the committee corresponding to the aborting round in $\psiS$. Party $\Party_1$ continues the emulation as if the party with the minimal index in $\committee_j$
    aborted the call to the ideal functionality (in $\pi$), and flips the remaining coins for $\committee_{j+1},\ldots,\committee_\numcomm$ internally.
    Next, $\Party_1$ decides on its output as follows: $\Party_1$ chooses uniformly at random a $(1-\beta)n$-size subset $\cT \subseteq [n]\setminus \left(\cs \cup \append(\committee_j) \right)$, and continues the emulation of $\pi$ internally as if all remaining parties outside $\cT$ stopped sending messages.
\end{enumerate}
\end{protocol}
}}
In the analysis of the attack on $\pi$ given below, it will be helpful to think of the subset of corrupted parties as a randomly chosen $\beta n$-size subset containing $\cs$, and then bound the probability that it strictly contains $\append(\committee)$ (where $\committee$ is the aborted committee). The set $\cT$ will than play the role of the honest parties, allowing us to easily relate the expected bias (over the choice of $\cT$) of our attacker for $\pi$ to that of an attacker controlling $\Party_0$ in $\psiS$.

%Loosely speaking, the two-party protocol $\psiS = (\Party_0,\Party_1)$ relates to the $n$-party protocol $\pi$ as follows: the $n$ parties of $\pi$ are partitioned into \lchanged{the two subset $\cs$ and $[n]\setminus \cs$} \Inote{Into the two subset $\cs$ and $[n]\setminus \cs$?}, controlled by $\Party_0$ and $\Party_1$ respectively. Party $\Party_0$ emulates also the ideal functionality: when $\numcomm$ committees, $\committee_1, \ldots, \committee_\numcomm$, call the ideal functionality in parallel in $\pi$, this translates in $\psiS$ to $\Party_0$ sending $\numcomm$ uniformly distributed bits sequentially, one in each round. These bits are interpreted by the parties as the answers of the ideal functionality. If, as a concrete example, $\Party_0$ aborts in a round in which it is supposed to send a uniform bit, emulating the answer of the ideal functionality to a call by some committee $\committee$, party $\Party_1$ interprets it as if the first party in $\committee$ aborted the call and continues the execution of $\pi$ as follows: it randomly chooses some $(1-\beta) n$-size subset of the parties under its control, and emulates the remainder of the execution as if all parties outside this subset abort right after the call to the trusted party. %Party $\Party_1$ flips internally the answer to the remaining functionality calls.
%	\Inote{Why the need of the random subset $\cT$}

We prove the lemma by translating the fail-stop attacker for $\psiS$ guaranteed by~\cite{Cleve86} (see \cref{lem:simple_advM}) into a fail-stop attacker for $\pi$. The fail-stop attacker for the two-party protocol naturally translates into an attack on $\pi$, if the aborting round in not a functionality round (simply aborts the parties controlled by the corrupted party). For the case that the aborting round is a functionality round, we exploit the fact that, by assumption, all committees are small. Assume the guaranteed attacker for $\psiS$ corrupts party $\Party_0$ and attacks by aborting in a round in which $\Party_0$ is due to emulate the answer of the ideal functionality to some committee $\comStar$.
In order to mimic such an attack in $\pi$, the attacker needs to control all parties in $\comStar$. This is indeed the case with noticeable probability when it chooses the set $\cT$ at random.

Analyzing the above is rather subtle, and crucially relies on the fact that $\cT$ is randomly chosen by $\Party_1$ {\em during} the execution of $\psiS$. This way, the aborting round in $\psiS$ (which is pre-determined and independent of a particular execution of the protocol) is fixed regardless of the choice of $\cT$. Hence, the identity of the parties in $\comStar$ is independent of the choice of $\cT$ in $\psiS$, and therefore, independent of the choice of the corrupted parties (outside $\cs$) in $\pi$. In case we would have fixed $\cT$ in advance, and let $\Party_0$ control all parties in $[n] \setminus \cT$, it might have been the case that for any choice of $\cT$, the aborting round in $\psiS$ would have corresponded to a committee $\comStar \subseteq \cT$. In this case, an attacker that controls $\Party_0$ and aborts during a functionality round cannot be
%could not have been 
translated to an attacker on $\pi$.

When formalizing the above, we use a variant of the two-party attacker guaranteed by \citet{Cleve86}. Specifically, we assume the fail-stop attacker either aborts in a predetermined round, or does not abort at all. On the other hand, Cleve's attacker either aborts in a given predetermined round, or aborts in the next round. The existence of the required attacker is stated in the following lemma, whose proof (see \cref{sec:impo:Cleve}) easily follows the result of \citet{Cleve86}.

\begin{lemma}\label{lem:simple_advM}
Let $\psi = \left(\Party_0,\Party_1\right)$ be a two-party, $r$-round, $\consistent$-consistent coin-flipping protocol. Then, there exist a party $\Party \in \left\{\Party_0,\Party_1\right\}$, a round index $\is \in [r]$, an index $\js \in\{\is,\is-1\}$, and a bit $b \in \zo$ such that the following holds. Let $b_{\js}$ denote the output of $\Party$ in case the other party aborts in round $\js$, and consider the fail-stop attacker $\Adv$, that corrupts $\Party$ and instructs $\Party$ to abort in round $\is$ if $b_{\js} = b$. Then, $\Adv$ biases the output of the other party by $\consistent/(8r + 2)$.
\end{lemma}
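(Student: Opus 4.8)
The plan is to prove Lemma~\ref{lem:simple_advM} by a hybrid argument over the round-by-round abort attacks that \citet{Cleve86} uses, then post-process the resulting attacker (who aborts in round $\is$ or in round $\is+1$) into one who aborts in a single predetermined round $\is$ or not at all. First I would recall the structure of Cleve's attack. For each round $i$ and each choice of the aborting party, Cleve considers the \emph{defined outputs}: when a party aborts in round $i$, the other party must decide on an output based only on the messages received through round $i$ (or $i-1$, depending on whose turn it is). Let me denote by $c_i$ the expected output bit of the honest party conditioned on the adversary aborting exactly at round $i$ (after seeing round $i$'s incoming message but before sending round $i$'s outgoing message), and by $c_i'$ the analogous quantity when the adversary aborts just \emph{before} sending in round $i$. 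By consistency, the sequence of these conditional expectations must start near the honest output expectation and telescope across the $r$ rounds; a standard averaging argument shows that there exists an index $\is$ where two \emph{adjacent} conditional expectations differ by at least $\consistent/(8r+2)$. The party that can induce this gap, by choosing between aborting and continuing based on whether its locally-determined output bit $b_{\js}$ equals a fixed target bit $b$, is the attacker we want.

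The key steps, in order, are as follows. First I would set up the $2r+1$ (or so) intermediate ``defined output'' random variables that interpolate between the all-honest execution and the trivial execution, exactly as in Cleve's telescoping sum. Second, I would apply the triangle inequality / pigeonhole to locate a single adjacent pair whose gap is at least $1/(8r+2)$ times the baseline consistency; this gives the round $\is$, the distinguished party $\Party$, and the neighbouring index $\js\in\{\is,\is-1\}$. Third — and this is the one genuinely new wrinkle compared to Cleve — I would recast the attack as ``abort in round $\is$ if the quantity $b_{\js}$ (the output $\Party$ would commit to if the \emph{other} party aborts in round $\js$) equals the fixed bit $b$, and otherwise play honestly to completion.'' The point is that $b_{\js}$ is computable by $\Party$ from its own view at round $\is$, so the decision rule is admissible for a fail-stop attacker, and conditioning the abort on $b_{\js}=b$ pushes the honest party's output expectation in a fixed direction, yielding a one-sided bias of the stated magnitude. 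Fourth, I would verify that this single-round-or-nothing attacker achieves bias $\consistent/(8r+2)$ by averaging over the two values of $b$ and noting that one of them must realize at least the average gap.

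I expect the main obstacle to be the third step: converting Cleve's ``abort in round $\is$ versus abort in round $\is+1$'' dichotomy into the cleaner ``abort in round $\is$ versus never abort'' form required by Lemma~\ref{lem:simple_advM}. In Cleve's original argument both branches involve premature termination, whereas here one branch must be a \emph{full honest completion}; I would need to check that the defined output of the honest party in the never-abort branch coincides (up to negligible consistency error) with the honest protocol output, so that the gap I isolated is preserved when one branch is replaced by genuine completion. This is where consistency of the protocol — that honest parties agree on the output except with probability $\negl(\secParam)$ — is used, and I would fold that $\negl$ term into the additive slack. The rest of the argument is the standard telescoping and pigeonhole, which is routine. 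Since the lemma is explicitly stated to follow easily from \citet{Cleve86}, I would keep the exposition at the level of identifying these random variables and the decision rule, and defer the detailed arithmetic of the telescoping sum to the cited work, emphasizing only the modification needed to obtain the single-abort-round form that the reduction in \cref{sec:small_committees} consumes.
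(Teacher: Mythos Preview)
Your high-level plan is sound and would eventually yield the lemma, but it is more laborious than the paper's route and contains one misconception about what drives the conversion step.

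The paper does \emph{not} re-open Cleve's telescoping sum. It takes Cleve's two-round attacker $\Adv$ (who aborts in round $\js$ if $b_{\js}=b$ and in round $\js+1$ otherwise, achieving bias $B\ge\consistent/(4r+1)$) as a black box, and splits it into two single-round attackers: $\Adv_0$ aborts in round $\js$ if $b_{\js}=0$ and otherwise completes honestly, while $\Adv_1$ aborts in round $\js+1$ if $b_{\js}=1$ and otherwise completes honestly. A four-line calculation then shows $\tfrac12(B_0+B_1)=\tfrac12 B + \tfrac12(\ex{O\mid\text{no abort}}-\tfrac12)=\tfrac12 B$, so one of $\Adv_0,\Adv_1$ already has bias $\ge\consistent/(8r+2)$. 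Two points to note: first, the ``never abort'' branch contributes zero not because of \emph{consistency} (agreement between parties) but simply because the honest output of a coin-flipping protocol is unbiased; no $\negl$ slack enters here. Second, the averaging is not ``over the two values of $b$ for a fixed round'' as you wrote in step four, but over two attackers with \emph{different} abort rounds ($\js$ versus $\js+1$); this is precisely why the lemma statement allows $\js\in\{\is,\is-1\}$.

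Your proposal would work if fleshed out, but the paper's argument is shorter: no telescoping, just one algebraic identity on top of Cleve's bound.
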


Let $\Advpsi$ be the fail-stop attacker for $\psiS$ guaranteed by \cref{lem:simple_advM}. Since $\psiS$ consists of at most $m$ communication rounds (recall that $m$ is a bound on the running time of $\pi$),
%\rnote{I'm not sure this is accurate. every functionality round in $\pi$ translates into potentially polynomially many round in $\psi$}\Lnote{That shouldn't be a problem if the running time of the protocol is defined as the sum of running times of all parties (and not the round complexity). Then, every call to the ideal functionality is counted as at least one operation.}
by \cref{lem:simple_advM}, $\Advpsi$ biases the output of $\psiS$ by at least $1/(16m + 4)$. Observe that if $\Advpsi$ corrupts $\Party_1$, an attacker for $\pi$ controlling $\I_1 = [n]\setminus \cs$ can
%trivially
bias the output of the honest parties by the same bias, by simply mimicking $\Advpsi$ (acting honestly until some predetermined round $i$, and then aborting with all of the parties in $\I_1$ only if $\Advpsi$ aborted in the corresponding round). Similarly, if $\Advpsi$ corrupts $\Party_0$ and attacks by aborting in a round that corresponds to an interaction round of $\pi$, then a similar attack is also possible. Hence, we assume \wlg that $\Advpsi$ attacks by corrupting $\Party_0$, and aborting, if at all, in a round of $\psiS$ that corresponds to a functionality round in $\pi$.

Let $\is$ be the round in $\psiS$ in which $\Advpsi$ might abort, let $\comStar$ be the committee in $\pi$ that corresponds to round $\is$ in $\psiS$, and let $\is_f$ be the index of the functionality round in $\pi$ in which $\comStar$ calls the ideal functionality.
% (That is, $\Advpsi$ might abort instead of sending a uniform bit as the result of a call to the ideal functionality by committee $\comStar$.)
For a $(1 - \beta) n$-size subset $\cT \subseteq [n]\setminus \cs$, the attacker $\AdvpiI$, corrupting the parties in $[n] \setminus \cT$, is defined as follows.

\begin{algorithm}[The fail-stop attacker $\AdvpiI$ for $\pi$]\label{adv:small_committeed}	~
\begin{enumerate}
	\item
    Start emulating $\Advpsi$, by forwarding to $\Advpsi$ the messages sent by the honest party in $\psi$ and the results of the calls to the ideal functionality.\footnote{This is done to ensure the coins that $\Party_0$ flips in $\psi$ are the same as the coins flipped by the ideal functionality in the execution of $\pi$. Note that this does not change the distribution over the flipped bits.}
	
	\item
    Until the $\is_f$'th functionality round, act honestly.
    %\rnote{do we need it as a separate step, or can we just say that we emulate $\Advpsi$ until the $\is_f$'th functionality round?}\Lnote{It's not necessary but I would keep it. It wasn't in a separate step to begin with, and we somewhere along the line moved it to a separate step for readability reasons.}
%	\Inote{Please use the term ``functionality round'' when suitable}

	\item
    In the $\is_f$'th functionality round: 
		\begin{enumerate}
		\item
        If $\append(\comStar) \subseteq [n] \setminus \cT$: wait until all other committees but $\comStar$ in the $\is_f$'th functionality round are done. {Let $\committee_1,\ldots,\committee_\numcomm$ be the committees that call the ideal functionality in the $\is_f$'th functionality round of $\pi$, and correspond to rounds earlier than $\is$ in $\psiS$.} Give the result bits of $\committee_1,\ldots,\committee_\numcomm$ to $\Advpsi$, sequentially. If $\Advpsi$ chooses to abort, instruct the minimal-index party in $\comStar$ to abort the call to the ideal functionality,\footnote{Note that this is indeed possible in the case in hand; since $\comStar$ is all-corrupted, and we assumed $\AdvpiI$ to be rushing. Thus, it can give $\Advpsi$ all results to calls made by $\committee_1,\ldots,\committee_\numcomm$, and only then choose to abort the call by $\comStar$.} and in the next interaction round of $\pi$, instruct all parties in $ [n] \setminus \cT$ to abort. If $\Advpsi$ does not abort, complete the honest execution of $\pi$.
		
		\item
		If $\append(\comStar) \not\subseteq [n] \setminus \cT$: continue in the honest execution of $\pi$.
	\end{enumerate}
\end{enumerate}	
\end{algorithm}

Namely, the fail-stop attacker $\AdvpiI$, corrupting the parties in $[n] \setminus \cT$, uses the real interaction of $\pi$ to emulate an interaction of $(\Advpsi,\Party_1)$. The attacker $\AdvpiI$ acts honestly until (the emulated) $\Advpsi$ chooses to abort, which, by assumption, happens in a functionality round. Recall that in the emulation of $\pi$ done in $\psiS$, such an abort is interpreted as the first member of the corresponding committee aborting the call to the ideal functionality, followed by all parties outside a random $(1 - \beta)n$-size subset of $ [n]\setminus (\cs \cup \append(\comStar))$, aborting after the call.
%\Lnote{Ran, I just removed the $\cT$ notation from the last sentence that refers to the two party protocol}
The above attack goes through, only if all parties in $\append(\comStar)$ are corrupted. The key observation is that for a randomly chosen $\cT$, with noticeable probability, the set $ [n] \setminus \cT$ contains this (small, by assumption) committee. If the latter happens, then $\AdvpiI$ perfectly replicates the attack by $\Advpsi$, and thus incurs the same bias in $\pi$ as $\Advpsi$ achieves in $\psiS$. Hence, there exists some $\cT$ such that $\AdvpiI$ succeeds in biasing the protocol.

The above intuition is made rigorous in the following lemma that concludes the proof of \cref{thm:lowerbound} for the no-large-committees case.

\begin{lemma}\label{lem:small_committees_attack}
There exists a $(1-\beta) n$-size subset $\cT \subseteq [n] {\setminus \cs}$ such that $\AdvpiI$ biases the output of the honest parties in $\pi$ by $\Omega(n^{-c}/m)$.
\end{lemma}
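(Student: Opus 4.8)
The plan is to prove the lemma by a probabilistic argument over the choice of $\cT$: instead of exhibiting a single good subset directly, I would let $\cT$ be a uniformly random $(1-\beta)n$-size subset of $[n]\setminus\cs$ and lower bound the \emph{expected} bias that $\AdvpiI$ achieves on the honest parties' output. Since the best $\cT$ does at least as well as the average, a lower bound of $\Omega(n^{-c}/m)$ on the expectation immediately yields the existence of the required subset. Concretely, let $\epsilon := 1/(16m+4)$ be the bias of $\Advpsi$ on $\Party_1$'s output guaranteed by \cref{lem:simple_advM} (using $\consistent = 1/2$ and that $\psiS$ has at most $m$ rounds), and let $X$ denote the event that the attack ``goes through,'' i.e., that $\append(\comStar) \subseteq [n]\setminus\cT$.

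The key step is to show that, conditioned on $X$ (and on the realized value of $\comStar$), the joint distribution of the transcript of $\pi$ under $\AdvpiI$ together with the honest parties' output is \emph{identical} to the distribution induced by running $\psiS$ under $\Advpsi$, read off as $\Party_1$'s output. Two observations drive this. First, since $\AdvpiI$ behaves honestly up to the $\is_f$'th functionality round (and so do the parties in $\cT$), the transcript up to that round---and hence the random variable $\comStar$---is distributed independently of which subset was designated as $\cT$. Second, conditioned on $X$ and on $\comStar$, the set $\cT$, being uniform over $(1-\beta)n$-size subsets of $[n]\setminus\cs$ and now restricted to be disjoint from the $c\log n$ padded elements of $\append(\comStar)$ lying outside $\cs$, is exactly uniform over $(1-\beta)n$-size subsets of $[n]\setminus(\cs\cup\append(\comStar))$---precisely the distribution from which $\Party_1$ draws its surviving set in the abort branch of \cref{prot:two_party}. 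Together these give a perfect coupling: in the abort branch, instructing the minimal-index party of the (all-corrupted) $\comStar$ to abort its call and then aborting all of $[n]\setminus\cT$ reproduces exactly the continuation that $\Party_1$ emulates, while in the non-abort branch both executions proceed honestly. Hence, conditioned on $X$, the bias on the honest output is at least $\epsilon$.

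On the complementary event $\neg X$, the attacker $\AdvpiI$ never deviates, so the honest output is an unbiased coin and contributes $0$ to the signed bias. It remains to lower bound $\Pr[X]$. By the independence noted above I may fix $\comStar$ and estimate, over the random $\cT$, the probability that $\cT$ avoids the $k := c\log n$ elements of $\append(\comStar)$ outside $\cs$:
\[
\Pr[X \mid \comStar] \;=\; \frac{\binom{\beta n - k}{(1-\beta)n}}{\binom{\beta n}{(1-\beta)n}} \;=\; \prod_{i=0}^{k-1}\frac{(2\beta-1)n - i}{\beta n - i} \;\ge\; \left(\frac{2\beta-1}{2\beta}\right)^{c\log n} \;=\; n^{-O(c)},
\]
where the inequality uses $1/2<\beta<1$ and $k = c\log n = o(n)$. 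Averaging over $\comStar$ keeps $\Pr[X] \ge n^{-O(c)}$, so the expected bias is at least $\Pr[X]\cdot\epsilon = \Omega(n^{-c}/m)$, with the constant in the exponent (depending only on $\beta$) absorbed into the $O(\cdot)$.

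I expect the main obstacle to be making the coupling in the second paragraph fully rigorous---in particular, verifying that $\comStar$ is genuinely independent of $\cT$ despite committee membership being allowed to depend on the transcript, and checking that the rushing power granted to the all-corrupted committee $\comStar$ in the $[n]$-$\idabortPar$ model is exactly what lets $\AdvpiI$ feed the parallel results $\committee_1,\ldots,\committee_\numcomm$ to $\Advpsi$ before committing to an abort. Once this conditional-distribution identity is established, the averaging and the combinatorial estimate are routine.
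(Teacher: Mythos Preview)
Your proposal is correct and follows essentially the same approach as the paper: average over a uniformly random $\cT$, split according to whether $\cT$ avoids $\append(\comStar)$, identify the conditional bias on that event with the bias $\Advpsi$ achieves in $\psiS$ (using the same coupling you describe, with the independence of $\comStar$ from $\cT$ and the fact that the conditional law of $\cT$ matches $\Party_1$'s random choice), observe the bias is $0$ on the complement, and lower bound the avoidance probability combinatorially. The only cosmetic difference is the constant base in the combinatorial estimate (the paper gets $((2\beta-1)/(\beta e))^{c\log n}$ via $\binom{(2\beta-1)n}{c\log n}/\binom{\beta n}{c\log n}$, you get $((2\beta-1)/(2\beta))^{c\log n}$), which is immaterial for the $\Omega(n^{-c}/m)$ conclusion.
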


\begin{proof}
Recall our notation: let $\is$ be the round in $\psiS$ in which $\Advpsi$ might abort, let $\comStar$ be the committee in $\pi$ that corresponds to round $\is$ in $\psiS$, and let $\is_f$ be the index of the functionality round in $\pi$ in which $\comStar$ calls the ideal functionality. Denote $\widehat{\comStar} = \append(\comStar)$ and hence, by construction, $\size{\widehat{\comStar} \setminus \cs } =c \log n$. Finally, recall that $\ct \subseteq [n]\setminus \cs$ and that $\AdvpiI$ corrupts the parties in $[n]\setminus \ct$.

Let $B_\cT$ denote the bias of the output of the honest parties in $\pi$, in a random execution of $\pi$ in which $\AdvpiI$ controls the parties in {$[n] \setminus \cT$}. Let $V$ denote the view of the parties in $\cs$ up to the $\is_f$ functionality round, including the answers for the calls made in this round by the committees appearing before $\comStar$ in the (arbitrary) order of $\psiS$. It holds that the expected bias of the honest parties can be written as follows:

\begin{align}
\eex{\cT}{\eex{v}{\ebv}} & = \eex{v}{\eex{\cT}{\ebv}} \\
& = \eex{v}{\eex{\cT \mid \cT \cap \widehat{\comStar} = \emptyset}{\ebv}\cdot \pr{\cT \cap \widehat{\comStar} = \emptyset \mid V = v}} \nonumber\\
& + \eex{v}{\eex{\cT \mid \cT \cap \widehat{\comStar} \neq \emptyset}{\ebv}\cdot \pr{\cT \cap \widehat{\comStar} \neq \emptyset \mid V = v}} \nonumber\\
& = \eex{v}{\eex{\cT \mid \cT \cap \widehat{\comStar} = \emptyset}{\ebv}}\cdot \pr{\cT \cap \widehat{\comStar} = \emptyset} \nonumber\\
& + \eex{v}{\eex{\cT \mid \cT \cap \widehat{\comStar} \neq \emptyset}{\ebv}}\cdot \pr{\cT \cap \widehat{\comStar} \neq \emptyset}.\nonumber	
\end{align}
The above expectations are with respect to $v \la V$ and a {uniformly distributed} set $\cT \la [n]\setminus \cs$ of size $(1-\beta)n$. The first equality follows by the fact that $V$ is independent of the choice of $\cT$ (since $\cT$ is always disjoint of $\cs$). The last equality holds since, by construction, the size of ${\widehat{\comStar} \setminus \cs}$ is fixed, and thus the event $\cT \cap \widehat{\comStar} = \emptyset$ is independent of $V$.%\\

By the definition of the two-party protocol $\psiS$ and the attacker $\AdvpiI$, {and due to the fact that $\widehat{\comStar}$ has a fixed number of elements in $[n] \setminus \cs$,} it holds that the expected bias of the honest parties in $\pi$ conditioned on $\widehat{\comStar}$ being all corrupted, $\eex{v}{\eex{\cT \mid \cT \cap \widehat{\comStar} = \emptyset}{\ebv}}$, is equal to the bias of the honest party in $\psiS$ in an execution with $\Advpsi$. Hence,
\begin{align}\label{eq:small_committees_attack:3}
	\eex{v}{\eex{\cT \mid \cT \cap \widehat{\comStar} = \emptyset}{\ebv}} \geq 1/(16m + 4).
\end{align}

Since when $\cT \cap \widehat{\comStar} \neq \emptyset$, the attacker $\AdvpiI$ simply acts as an honest $\Party_0$ would, {and for similar reasons as in the case of \cref{eq:small_committees_attack:3},} the expectation $\eex{v}{\eex{\cT \mid \cT \cap \widehat{\comStar} \neq \emptyset}{\ebv}}$ is the expected bias of the output of the honest parties in an {\em honest execution} of $\pi$. Hence,
\begin{align}\label{eq:small_committees:1}
{\eex{v}{\eex{\cT \mid \cT \cap \widehat{\comStar} \neq \emptyset}{\ebv}} = 0}.
\end{align}
	
%	\ldeleted{Indeed, if the above term is non negligible, then $\pi$ is biased to begin with, and we are done; so we assume this expectation is negligible in the security parameter. We further assume in particular $\eex{v}{\eex{\cT \mid \cT \cap \comStar \neq \emptyset}{\ebv}} = 0$ for ease of presentation. } \Inote{don't do that, take the $\negl(n)$ for the next few lines..}

%	\Inote{I still don't get the above proof. If $\AdvpiI$ acts honestly, then bias is 0, no need to assume it. Also, whatever argument u use, please take it out of the footnote (it is not the p;lace for proofs..}\Lnote{See my above comment and changes.}
%	
%	\Inote{The above paragraph is a mess. Please remove all comments etc, so I can read}

\noindent
It follows that
\begin{align}
\eex{\cT}{\eex{v}{\ebv}} = \frac{1}{16m + 4} \cdot \pr{\cT \cap \widehat{\comStar} = \emptyset}.
\end{align}

\noindent
Finally, a simple counting argument yields that:
\begin{align}
\pr{\cT \cap \widehat{\comStar} = \emptyset} = \frac{\binom{\left(2\beta - 1\right)n}{c \log n}}{\binom{\beta n}{c \log n}} \geq \left(\frac{2\beta - 1}{\beta e}\right)^{c \log n}.
\end{align}

\noindent
Putting everything together, we conclude that
\begin{align}
	\eex{\cT}{\eex{v}{\ebv}} \geq \frac{1}{16m + 4} \left(\frac{2\beta - 1}{\beta e}\right)^{c \log n} \in \Omega ({n^{-c}}/{m}).
\end{align}

\noindent
In particular, there exists a set $\cT \in [n]{\setminus \cs}$ such that $\AdvpiI$ biases the output of the honest parties in $\pi$ by $\Omega ({n^{-c}}/{m})$.
\end{proof}

\subsubsection{Biasing Arbitrary Protocols}\label{sec:large_committees}

%\Inote{Define $\append$ also in this section, and use it similarly}
In this section, we extend the approach of \cref{sec:small_committees} to the case where there may be large committees (\ie larger than $c \log n$). Hence, proving \cref{thm:lowerbound}.

Loosely speaking, the two-party protocol and the attack of \cref{sec:small_committees} are adjusted in the following manner. The revised two-party protocol $\psiSJ$ also includes $\numcalls$ linear-size subsets of parties $\Jcoll$, each associated with a single functionality round. In the \ith functionality round of the emulated execution of $\pi$, parties $\Party_0$ and $\Party_1$ go on as if the parties in $\J_i$ abort all calls to the ideal functionality by committees that intersect with $\J_i$. The attacker $\AdvpiIprime$ we construct for $\pi$, corrupting all parties in $[n]\setminus \cT$, acts as the attacker from the no-large-committees case (\cref{adv:small_committeed}), with the following additional change: in the \ith functionality round, it instructs the parties in $\J_i$ to prematurely abort all calls made by committees it intersects with (we make sure $\Jcoll \subseteq [n] \setminus \cT$). Since the subsets $\Jcoll$ are all of linear size, with high probability the above strategy
%fails (aborts)
will abort all calls made by committees of size larger than $c \log n$, essentially leaving us with a protocol with no large committees, and thus vulnerable to the strategy of \cref{adv:small_committeed}.

Moving to the formal proof, let $\beta' = (\beta - 1/2)/\numcalls$, and let {$c = \log \left( m(32m+10) \right)/\beta' \log n$}.
The following claim shows that there exists a collection $\Jcoll$ such that in a random execution of $\pi$, for every $i \in [\numcalls]$, subset $\J_i$ intersects all committees of size at least $c \log n$ of the \ith functionality round with high probability.

\begin{claim}\label{clm:good_subsets}
For $\numcalls$ distinct subsets $\overline{\J} = \Jcoll$, let $\pi_{\overline{\J}}$ denote the variant of $\pi$ in which in the \ith functionality round, the parties in $\J_i$ abort all functionality calls made by committees they take part in this functionality round. Let $E_{\overline{\J}}^i$ denote the event, defined \wrt a random execution of $\pi_{\overline{\J}}$, that a call to the trusted party is made by a committee $\committee$ of size larger than $c \log n$ in the $\ith$ functionality round, and $\committee \cap \cj_i = \emptyset$. Then, there exists a collection $\overline{\J} = \Jcoll\subseteq [n]$ of distinct $\beta' n$-size subsets, such that $\pr{E_{\overline{\J}}^i} \leq 1/(32m + 10)$ for every $i \in [\numcalls]$.
\end{claim}
\begin{proof}
The claim is proved using a simple probabilistic argument. Let $\overline{\J} = \Jcoll$ be a collection of $\numcalls$ disjoint $\beta'$-subsets of $[n]$ chosen uniformly at random. We show that $\pr{E_{\overline{\J}}^i} \leq 1/(32m + 10)$ for all $i\in [\numcalls]$, where the probability is taken over the selection of $\Jcoll$ and the randomness of the parties and the ideal functionality. The claim thus immediately follows.
	
Let $i\in [\numcalls]$ and let $\committee$ be a committee of size at least $c \log n$ in the \ith functionality round. The probability that $\J_i$ is disjoint of $\committee$ is bounded by:
\begin{align*}
	\pr{\committee \cap \J_i = \emptyset} & = \prod_{k=0}^{\size{\committee} - 1} \left(1 - \frac{\size{\J_i}}{n-k}\right) \\
	& \leq \left(1 - \frac{\size{\J_i}}{n}\right)^{\size{\committee}} \\
	& \leq e^{-\frac{\size{\J_i} \size{\committee}}{n}} \\
	& \leq e^{- \beta' c \log n} \\
	& = n^{-\beta' c}.
\end{align*}
Substituting $c$ with {$\log \left( m(32m + 10) \right) /\beta' \log n$}, we get that {$\pr{\committee \cap \J_i = \emptyset} \leq 1/(32m + 10) m$}. As the number of committees in the \ith round of calls is at most $m$ (recall that $m$ is the total running-time of $\pi$), by union bound over these committees we conclude that
\[
\pr{E_{\overline{\J}}^i} \leq \frac{1}{32m + 10}.
\]
\end{proof}

%{Similarly to \cref{sec:small_committees}, it will be convenient to assume that the aborting committee has at least $ c \log n$ elements outside $\cs \cup \J$. If this is not the case, we arbitrarily increase its size via the $\append$ operation: for $\committee \subseteq [n]$, we now l

The two-party coin-flipping protocol is a variant of the two-party protocol considered in \cref{sec:small_committees}. Fix subsets $\Jcoll$ whose existence is guaranteed by \cref{clm:good_subsets}, denote $\J = \bigcup_{i\in [\numcalls]} \J_i$, and fix an arbitrary $(1 - \beta)n$-size subset $\cs\subseteq [n] \setminus \J$.
Let $\append(\committee)$ be defined as follows. If $\size{\committee} \leq c \log n$, then we define $\append(\committee)$ to be the committee $\committee'$ obtained from $\committee$ by padding it with arbitrary (but fixed) parties from $[n] \setminus \left( \cs \cup \J \right)$, such that $\size{\committee' \setminus \left( \cs \cup \J \right)} = c \log n$. Otherwise ($\size{\committee} > c \log n$), set $\append(\committee)$ to an arbitrary $(c \log n)$-size subset of $[n]\setminus (\cs \cup \J)$.
The two-party protocol $\psiSJ$ is defined as follows:
%\rnote{is there a difference between the notations $\committee'$ and $\committee^\ast$?}\Lnote{$\committee'$ is not a particular set. It's only used in the definition of $\append$. $\comStar$ is the padded subset considered by the attacker.}

\begin{protocol}[Protocol $\psiSJ = (\Party_0,\Party_1)$]\label{prot:two_party_large}~
\begin{enumerate}
	\item
    Party $\Party_0$ controls the parties in $\I_0 = \cs \cup \J$ and $\Party_1$ controls the parties in $\I_1 = [n] \setminus \left( \cs \cup \J\right)$, in an emulation of $\pi$: in each communication round, $\Party_i$ (for $i \in \left\{0,1\right\}$) emulates internally all messages among parties in $\I_i$. Party $\Party_i$ also sends to $\Party_{1-i}$ all messages from parties in $\I_i$ to parties in $\I_{1-i}$ and over the broadcast channel.
	
	\item
    For $i = 1,\ldots, \numcalls$: let $\committee^{(i)}$ be the set of committees in the \ith functionality round of the emulation, and let
    \[
    \committee^{(i)}_\cap = \left\{\committee \in \committee^{(i)} \colon \committee \cap \J_i \neq \emptyset\right\}
    \quad \text{ and } \quad
    \committee^{(i)}_{\not\cap} = \left\{\committee \in \committee^{(i)} \colon \committee \cap \J_i = \emptyset\right\}.
    \]
    For each committee $\committee \in \committee^{(i)}_\cap$, the parties assume the call by $\committee$ is aborted by the parties in $\committee \cap \J_i$. Denote $\committee^{(i)}_{\not\cap} = \committee_1,\ldots,\committee_\numcomm$; then $\Party_0$ sends $\numcomm$ separate messages, one for each committee, each containing a uniform bit as the result of the call by that committee.
\end{enumerate}

\textbf{Output:} Party $\Party_0$ (resp., $\Party_1$) outputs the output value of an arbitrary non-aborted party in $\I_0$ (resp., $\I_1$) in $\pi$. %\\

\vspace{2ex}
\textbf{Dealing with abort:}
\begin{enumerate}
	\item
    \emph{Party $\Party_i$ aborts during a communication round:} party $\Party_{1-i}$ continues the emulation internally, as if all the parties in $\I_i$ aborted in the current round of $\pi$. If $\Party_1$ continues the emulation internally, it emulates all remaining calls to the ideal functionality by itself.
    %If at any round which emulates a interaction round in $\pi$ $\Party_i$ aborts, the other party continues the execution internally, as if all the parties in $\I_i$ aborted in the current round of $\pi$. In case $\Party_1$ continues the execution internally, it emulates all remaining calls to the ideal functionality by itself.
	
	\item
    \emph{Party $\Party_0$ aborts during a functionality round:} let $\committee_1,\ldots, \committee_\numcomm$ be the committees of the current functionality round, that are not aborted by the parties in $\J$, and let $\committee_j$ be the committee corresponding to the aborting round. Party $\Party_1$ continues the emulation by itself, as if the minimal-index party in $\committee_j$ aborted the call to the ideal functionality, and flips the remaining coins for $\committee_{j+1},\ldots,\committee_\numcomm$ internally. Then, $\Party_1$ decides on its output as follows:
    it chooses uniformly at random a $(1-\beta)n$-size subset $\cT \subseteq [n]\setminus \left(\cs \cup \J \cup \append(\committee_j) \right)$, and continues the emulation of $\pi$ internally as if all remaining parties outside $\cT$ aborted right after the call to the ideal functionality.
\end{enumerate}	
\end{protocol}

Note that in the \ith functionality round, party $\Party_0$ sends uniform bits only for committees that do not intersect with $\J_i$. For the other committees, the parties assume their calls were aborted by the parties in $\J_i$.

The proof of the theorem proceeds as follows. Similarly to \cref{sec:small_committees}, we turn to translate the guaranteed attacker against the two-party protocol $\psiSJ$ into an attacker on $\pi$. The attacker $\AdvpiIprime$ on $\pi$, follows closely the attacker from the no-large-committees case, with the following adjustment. To simulate the addition of the aborting subsets to the two-party protocol $\AdvpiIprime$, which corrupts the pre-designated aborting subsets $\Jcoll$, simply instructs each of them to abort in the functionality rounds dictated by $\psiSJ$. In the case where all calls by large committees are indeed aborted, we are essentially left with the no-large-committees case, and the proof then proceeds as in \cref{sec:small_committees}.

%\begin{proof}
%The claim \rnote{which claim} is proved using a simple probabilistic argument. Let $\overline{\J} = \Jcoll$ be a collection of $\numcalls$ disjoint $\beta'$-subsets of $[n]$ chosen uniformly at random. We show that $\pr{E_{\overline{\J}}^i} \leq 1/(32m + 10)$ for all $i\in [\numcalls]$, where the probability is taken over the selection of $\Jcoll$, the randomness of the parties and the randomness of the ideal functionality. The claim thus immediately follows.
%	
%Let $i\in [\numcalls]$ and let $\committee$ be a committee of size at least $c \log n$ in the \ith functionality round. The probability that $\J_i$ is disjoint of $\committee$ is bounded by: \rnote{please explain in words before the calculation}
%\begin{align*}
%	\pr{\committee \cap \J_i = \emptyset} & = \prod_{k=0}^{\size{\committee} - 1} \left(1 - \frac{\size{\J_i}}{n-k}\right) \\
%	& \leq \left(1 - \frac{\size{\J_i}}{n}\right)^{\size{\committee}} \\
%	& \leq e^{-\frac{\size{\J_i} \size{\committee}}{n}} \\
%	& \leq e^{- \beta' c \log n} \\
%	& = n^{-\beta' c}.
%\end{align*}
%Substituting $c$ with {$\log \left( m(32m + 10) \right) /\beta' \log n$}, we get that {$\pr{\committee \cap \J_i} \leq 1/(32m + 10) m$}. \rnote{$\pr{\committee \cap \J_i = \emptyset}$?} As the number of committees in the \ith round of calls is at most $m$ (recall that $m$ is the total running-time of $\pi$), by union bound over these committees we conclude that
%\[
%\pr{E_{\overline{\J}}^i} \leq \frac{1}{32m + 10}.
%\]
%% Taking union bound again over the number of invocations of the ideal functionality, $\pr{E} \leq \half$.
%\end{proof}

Let $\Advpsiprime$ be the attacker for $\psiSJ$ guaranteed by \cref{lem:simple_advM}. As in \cref{sec:small_committees}, we assume without loss of generality that $\Advpsiprime$ corrupts $\Party_0$ and might abort in round $\is$ of $\psiSJ$, that corresponds to a functionality round in $\pi$. Let $\comStar$ be the committee in $\pi$ that corresponds to round $\is$ in $\psi$, and let $\is_f$ be the index of the functionality round in $\pi$ in which the committee members of $\comStar$ invoke the ideal functionality.

% \Inote{see my edits below}
{The main difference between the above attacker and the one we used in \cref{sec:small_committees}, is that in the above, if the size of the ``attacked committee'' is too large, \ie if $\size{\comStar}> c \log n$, then $\comStar$ is no longer a subset of $\append(\comStar)$. This seems to be problematic, since the attack on protocol $\pi$ given in \cref{sec:small_committees}, crucially relies on the fact that $\comStar \subseteq \append(\comStar)$. Fortunately, by the choice of the sets $\Jcoll$, the committee $\comStar$ will be large only with very small probability, and thus the resulting attack on $\pi$ will go through. We formalize this intuition by considering an attacker $\hAdvPsi$ on protocol $\psi$ that never aborts if the attacked committee is larger than $c \log n$, and still noticeably biases the protocol $\psi$.} Formally, the attacker $\hAdvPsi$ acts as ${\Adv}^\psi$, but with the following change: let $\comStar$ be committee corresponding to round $\is$ (in which ${\Adv}^\psi$ is about to abort). If $\size{\comStar} > c \log n$,
% \Inote{simpler, if $\size{\committee} > c \log n $. Should be reflected below}
then $\hAdvPsi$ does not abort (even if $\Advpsiprime$ does). Recall that $\Advpsiprime$ might wish to abort the call of $\comStar$ only if $\comStar$ does not intersect $\J_{\is_f}$ (otherwise, $\Party_0$ does not send an answer bit to $\comStar$ in $\psi$). By \cref{clm:good_subsets}, if $\size{\comStar} > c \log n$, the latter happens with probability at most $1/(32m+10)$.
%Putting
Combining this together with \cref{lem:simple_advM}, it holds that the bias of $\Party_1$ in an execution of $\psi$ with $\hAdvPsi$ is at least $1/(32m + 8)$.

% \ladded{If $\size{\committee \setminus \left(\cs \cup \J \right)} < c \log n$, let $\comStar$ be the committee obtained from $\committee$ by padding it with arbitrary (but fixed) parties from $[n] \setminus \left(\cs \cup \J \right)$, such that $\size{\comStar \setminus \left(\cs \cup \J \right)} = c \log n$; otherwise, let $\comStar = \committee$.}

For a $(1 - \beta) n$-size subset $\cT \subseteq [n]\setminus (\cs \cup \J)$, the attacker $\AdvpiIprime$ corrupting $[n] \setminus \cT$ is defined as follows:
	
%	\ideleted{similarly to $\AdvpiI$ (\cref{adv:small_committed}) with the following difference:\Lnote{Removed the notation $\I$ as in the previous section} in the \ith invocation of the ideal functionality, $\AdvpiIprime$ \lchanged{instructs} \Inote{stick to instruct} all parties in $\J_i$ to abort all calls to the functionality by committees of which they are members.}} %Since with high probability, this fails all calls made by large committees to the ideal functionality, the proof goes through as before.
{ \samepage
\begin{algorithm}[The fail-stop attacker $\AdvpiIprime$ for $\pi$]\label{adv:large_committees}~
\begin{enumerate}
	\item
    Start emulating $\hAdvPsi$, by forwarding to $\hAdvPsi$ the messages sent by the honest party in $\psi$ and the results of the calls to the ideal functionality.
	\item
    For $i = 1,\ldots, \is_f$: for each committee $\committee \in \committee^{(i)}$, instruct the parties in $\committee \cap \J_i$ to abort the call of $\committee$ to the ideal functionality.
	\item
	In the $\is_f$ invocation:
	\begin{enumerate}
		\item
        If $\append(\comStar) \subseteq [n] \setminus \cT$: wait until all other committees but $\comStar$ in the $\is_f$ round of calls are done. {Let $\committee_1,\ldots,\committee_\numcomm$ be the committees that call the ideal functionality in the $\is_f$'th functionality round of $\pi$, and correspond to rounds earlier than $\is$ in $\psiS$.} Give the result bits of $\committee_1, \ldots, \committee_\numcomm$ to $\hAdvPsi$ sequentially. If $\hAdvPsi$ chooses to abort, instruct the minimal-index party in $\comStar$ to abort the call to the ideal functionality, and in the next ordinary communication round of $\pi$, instruct all parties in $[n] \setminus \cT$ to abort. If $\hAdvPsi$ does not abort, complete the honest execution of $\pi$.
		\item
		If $\append(\comStar) \not\subseteq [n] \setminus \cT$: continue in the honest execution of $\pi$.
	\end{enumerate}
	\item
    If not aborted after invocation $\is_f$, then for $i = \is_f+1,\ldots, \numcalls$: for each committee $\committee \in \committee^{(i)}$, instruct the parties in $\committee \cap \J_i$ to abort the call of $\committee$ to the ideal functionality.
\end{enumerate}
\end{algorithm}
}

The success bias obtained by $\AdvpiIprime$ is analyzed
in following lemma, which immediately yields the proof of \cref{thm:lowerbound}.

\begin{lemma}\label{lem:large_committees_attack}
There exists a $(1 - \beta) n$-subset $\cT \subseteq [n] \setminus \left( \cs \cup \J \right)$ such that $\AdvpiIprime$ biases the output of the honest parties in $\pi$ by $\Omega(n^{-c}/m)$.
\end{lemma}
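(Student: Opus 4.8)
The plan is to mirror the proof of \cref{lem:small_committees_attack}, now treating the pre-designated aborting sets $\J = \bigcup_{i \in [\numcalls]} \J_i$ as permanently corrupted parties whose job is to neutralize all large committees, so that the residual interaction behaves like a no-large-committees protocol to which the earlier analysis applies. I would keep the notation fixed before the statement: $\is$ is the round of $\psiSJ$ in which $\hAdvPsi$ may abort, $\comStar$ is the committee of $\pi$ corresponding to round $\is$, $\is_f$ is its functionality round, and $\widehat{\comStar} = \append(\comStar)$, which by construction satisfies $\size{\widehat{\comStar} \setminus (\cs \cup \J)} = c \log n$. Let $B_\cT$ denote the bias of the honest parties' output in a random execution of $\pi$ with $\AdvpiIprime$ corrupting $[n] \setminus \cT$, and let $V$ be the view of the parties in $\cs$ up to the point in the $\is_f$'th functionality round at which $\hAdvPsi$ decides whether to abort. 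Since $\cT$ is always disjoint from $\cs$ (and the aborts performed by $\J$ are identical across all choices of $\cT$) we have $V \perp \cT$; and since $\widehat{\comStar}$ has a fixed number of elements outside $\cs \cup \J$, the event $\{\cT \cap \widehat{\comStar} = \emptyset\}$ is independent of $V$. Exactly as in \cref{lem:small_committees_attack} this lets the expected bias factor:
\begin{align*}
\eex{\cT}{\eex{v}{\ebv}} &= \eex{v}{\eex{\cT \mid \cT \cap \widehat{\comStar} = \emptyset}{\ebv}} \cdot \pr{\cT \cap \widehat{\comStar} = \emptyset} \\
&\quad + \eex{v}{\eex{\cT \mid \cT \cap \widehat{\comStar} \neq \emptyset}{\ebv}} \cdot \pr{\cT \cap \widehat{\comStar} \neq \emptyset}.
\end{align*}

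Next I would evaluate the two conditional expectations. When $\cT \cap \widehat{\comStar} = \emptyset$, the whole padded committee is corrupted, $\widehat{\comStar} \subseteq [n]\setminus\cT$, so $\AdvpiIprime$ can run the full strategy of \cref{adv:large_committees}: the parties in $\J \subseteq [n]\setminus\cT$ abort every committee they intersect in each functionality round, and (when $\size{\comStar}\le c\log n$) the minimal-index party of $\comStar$ aborts exactly when $\hAdvPsi$ does. This reproduces an execution of $\psiSJ$ under $\hAdvPsi$, so the conditional expectation equals the bias of $\Party_1$ in $\psiSJ$ against $\hAdvPsi$, which the preceding discussion already established to be at least $1/(32m+8)$; crucially this figure has already folded in, via \cref{clm:good_subsets}, the $\le 1/(32m+10)$ probability that some large committee escapes $\J_{\is_f}$, because $\hAdvPsi$ refrains from aborting whenever $\size{\comStar} > c\log n$. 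When $\cT \cap \widehat{\comStar} \neq \emptyset$, by \cref{adv:large_committees}(3b) the attacker never aborts $\comStar$ and otherwise acts precisely as an honest $\Party_0$ in $\psiSJ$ would (it still drives the $\J$-aborts, but these are part of the honest $\psiSJ$ execution); hence this conditional expectation is the bias of an honest execution of $\psiSJ$, namely $0$, since $\psiSJ$ is a consistent coin-flipping protocol with uniform honest output.

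Finally I would substitute the counting bound. Here $\cT$ is drawn uniformly from $[n]\setminus(\cs\cup\J)$, which has size $n/2$ because $\size{\cs}=(1-\beta)n$ and $\size{\J}=\numcalls\beta' n = (\beta-1/2)n$; the corrupted part of this universe has size $(\beta-1/2)n$, so
\[
\pr{\cT \cap \widehat{\comStar} = \emptyset} = \frac{\binom{(\beta-1/2)n}{c\log n}}{\binom{n/2}{c\log n}} \geq \left(\frac{2\beta-1}{e}\right)^{c\log n} \in n^{-\Theta(c)},
\]
and therefore $\eex{\cT}{\eex{v}{\ebv}} \geq \frac{1}{32m+8}\cdot\pr{\cT \cap \widehat{\comStar} = \emptyset} \in \Omega(n^{-c}/m)$, with an averaging argument then producing a fixed $(1-\beta)n$-size subset $\cT \subseteq [n]\setminus(\cs\cup\J)$ attaining this bias. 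The main obstacle I anticipate is the same delicate point that underlies \cref{lem:small_committees_attack}, now complicated by the ubiquitous $\J$-aborts: I must justify that an execution of $\AdvpiIprime$ in which $\comStar$ is \emph{not} aborted genuinely coincides in distribution with an honest execution of $\psiSJ$ (so that it contributes $0$ to the bias), which relies on the aborts made by $\J$ being public and output-independent in the fair-with-identifiable-abort model, keeping the two virtual parties $\Party_0,\Party_1$ perfectly synchronized and making $\psiSJ$ a bona fide consistent coin-flipping protocol on which \cref{lem:simple_advM} may be invoked. The genuinely new ingredient over the small-committees case is verifying that a single randomly chosen $\cT$ can simultaneously leave $\J$ fully corrupted (automatic, as $\cT\cap\J=\emptyset$) and swallow the padded committee $\widehat{\comStar}$ with the computed $n^{-\Theta(c)}$ probability, while \cref{clm:good_subsets} handles the large committees uniformly over the entire execution; since that uniform control is already absorbed into the $1/(32m+8)$ bias bound, no further loss is incurred in this step.
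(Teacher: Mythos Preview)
Your argument tracks the paper's structure closely and gets most pieces right, but there is a genuine gap in how you handle the second conditional expectation $\eex{v}{\eex{\cT \mid \cT \cap \widehat{\comStar} \neq \emptyset}{\ebv}}$. You assert this is $0$ because it equals the bias of an honest execution of $\psiSJ$, and $\psiSJ$ is ``a consistent coin-flipping protocol with uniform honest output.'' The first identification is fine, but the second assertion is unjustified: an honest execution of $\psiSJ$ is by construction an execution of $\pi$ in which the parties in $\J_i$ abort every committee they intersect in functionality round $i$. Those $\J$-aborts are themselves a fail-stop attack on $\pi$, and nothing rules out that this attack already biases the honest parties' output; hence $\psiSJ$ need not have uniform honest output, and the term need not vanish. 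Your justification that the $\J$-aborts are ``public and output-independent'' establishes \emph{consistency} of $\psiSJ$ (the two virtual parties agree on the output), but not \emph{uniformity} of their common bit.

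The paper closes this gap with a case split rather than by claiming the term is zero. It observes that this conditional expectation is exactly the bias achieved in $\pi$ by an attacker that corrupts the parties in $\Jcoll$ and does nothing but the scheduled aborts. Either that bias already has magnitude at least $c' \cdot n^{-c}/m$ for a suitable constant $c'$, in which case this simpler attacker witnesses the lemma directly; or it is bounded in magnitude by $c' \cdot n^{-c}/m$, in which case
\[
\eex{\cT}{\eex{v}{\ebv}} \ \geq\ \frac{1}{32m+8}\left(\frac{2\beta-1}{e}\right)^{c\log n} - c' \cdot \frac{n^{-c}}{m},
\]
which is still $\Omega(n^{-c}/m)$ for small enough $c'$. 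This WLOG step is precisely the new ingredient over \cref{lem:small_committees_attack}, and it is the piece your sketch is missing.
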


\begin{proof}
The following proof follows similar lines to the proof of \cref{lem:small_committees_attack}.

Recall our notation: let $\is$ be the round in $\psiS$ in which $\hAdvPsi$ might abort, let ${\comStar}$ be the committee in $\pi$ that corresponds to round $\is$ in $\psiSJ$, and let $\is_f$ be the index of the functionality round in $\pi$ in which the committee members of ${\comStar}$ call the ideal functionality. Denote $\widehat{\comStar} = \append(\comStar)$ and hence, by construction, $\ssize{\widehat{\comStar} \setminus \left( \cs \cup \J \right)} = c \log n$. Finally, recall that $\ct \subseteq [n]\setminus (\cs \cup \cj)$ and that $\AdvpiI$ corrupts the parties in $[n]\setminus \ct$.

Let $B_\cT$ denote the bias of the output of the honest parties in $\pi$, in a random execution of $\pi$ in which $\AdvpiI$ controls the parties in {$[n] \setminus \cT$}. Let $V$ denote the view of the parties in $\cs \cup \J$ up to the $\is_f$ functionality round, including the answer of the calls made in this round by the committees appearing before $\comStar$ in the (arbitrary) order of $\psiSJ$. As in the proof of \cref{lem:small_committees_attack}, it holds that
\begin{align}
\eex{\cT}{\eex{v}{\ebv}}
& = \eex{v}{\eex{\cT \mid \cT \cap \widehat{\comStar} = \emptyset}{\ebv}}\cdot \pr{\cT \cap \widehat{\comStar} = \emptyset} \\
& + \eex{v}{\eex{\cT \mid \cT \cap \widehat{\comStar} \neq \emptyset}{\ebv}}\cdot \pr{\cT \cap \widehat{\comStar} \neq \emptyset}.	\nonumber
\end{align}
The above expectations are with respect to $v \la V$ and $\cT \la [n]\setminus \left( \cs \cup \J \right)$ of size $(1-\beta)n$.

As in the proof of \cref{lem:small_committees_attack}, it holds that $\eex{v}{\eex{\cT \mid \cT \cap \widehat{\comStar} = \emptyset}{\ebv}}$ is the bias of the honest party in $\psiSJ$ in an execution with {$\hAdvPsi$}. {Note it might be the case that $\cT \cap \widehat{\comStar} = \emptyset$, but $\cT \cap \comStar \neq \emptyset$ (if $\size{\comStar} > c \log n$, and hence $\widehat{\comStar}$ is an arbitrary $(c \log n)$-size subset of $[n]\setminus (\cs \cup \J)$), so $\AdvpiIprime$ cannot abort the call by committee $\comStar$. But this is benign, since in case $\size{\comStar} > c \log n$, $\hAdvPsi$ does not abort.} Hence,

\begin{align}
\eex{v}{\eex{\cT \mid \cT \cap \widehat{\comStar} = \emptyset}{\ebv}} \geq 1/(32m + 8).
\end{align}

Similarly, $\eex{v}{\eex{\cT \mid \cT \cap \widehat{\comStar} \neq \emptyset}{\ebv}}$ is the bias achieved by an attacker that always instructs the parties in $\Jcoll$ to abort (subset $\cj_i$ in the \ith functionality round), but nothing else. We assume \wlg that
\begin{align}\label{eq:large_committees:1}
\eex{v}{\eex{\cT \mid \cT \cap \widehat{\comStar} \neq \emptyset}{\ebv}} < c' \cdot n^{-c}/{m},
\end{align}
for $c'>0$ to be determined later by the analysis, as otherwise, an attacker corrupting $\Jcoll$ biases the output of the honest parties in $\pi$ by $c' \cdot n^{-c}/{m}$, concluding the lemma proof.

A simple counting argument yields that
\begin{align}
\pr{\cT \cap \widehat{\comStar} = \emptyset} \geq \left(\frac{2\beta - 1}{\beta e}\right)^{c \log n}.
\end{align}

Putting everything together, we conclude that
\begin{align*}
\eex{\cT}{\eex{v}{\ebv}} \geq \frac{1}{32m + 8}\left(\frac{2\beta - 1}{e}\right)^{c \log n} - c' \cdot n^{-c}/{m},
\end{align*}
which is in $\Omega({n^{-c}}/{m})$ for small enough $c'$. It follows that there exists a set $\cT \in [n] \setminus (\cs \cup \cj)$ such that $\AdvpiIprime$ biases the output of the honest parties in $\pi$ by $\Omega({n^{-c}}/{m})$.
\end{proof}

\subsection{A Single-Aborting-Round Variant of Cleve's Attacker}\label{sec:impo:Cleve}
\citet{Cleve86} showed that for every (plain-model) $r$-round, $\consistent$-consistent, two-party coin-flipping protocol, there exists an efficient fail-stop attacker that by corrupting one of the parties, can bias the output of the other party by at least $\consistent/(4r+1)$. In more detail, let $\pi = (\Party_0, \Party_1)$ be a two-party, $r$-round, $\consistent$-consistent coin-flipping protocol. The attacker $\Adv$ guaranteed by~\cite{Cleve86} is defined as follows: Assume for concreteness that $\Adv$ is corrupting $\Party_0$ and would like to bias the output of $\Party_1$ towards one. The attacker $\Adv$ follows the protocol honestly up to some pre-defined round $\is \in [r]$, in which $\Party_0$ is about to send a message. Then, before sending the message, it examines the value of its output in the case that $\Party_1$ aborts right after receiving the message; denote this value by $b_{\is}$. If $b_{\is} = 0$, the attacker $\Adv$ instructs $\Party_0$ to abort in round $\is$; otherwise, $\Adv$ instructs $\Party_0$ to honestly send its $\is$'th message and abort right after doing that.

The following lemma establishes the existence of an even simpler kind of a fail-stop attacker, one that might abort only in a single pre-defined round.

\begin{lemma}[Restatement of \cref{lem:simple_advM}]\label{lem:simple_adv}
Let $\pi = \left(\Party_0,\Party_1\right)$ be a two-party, $r$-round, $\consistent$-consistent coin-flipping protocol. Then, there exist a party $\Party \in \left\{\Party_0,\Party_1\right\}$, a round index $\is \in [r]$, an index $\js \in\sset{\is,\is-1}$, and a bit $b \in \zo$ such that the following holds. Consider the fail-stop attacker $\Adv'$, that corrupting $\Party$, instructs $\Party$ to abort in round $\is$ if $b_{\js} = b$. Then, $\Adv'$ biases the output of the other party by $\consistent/(8r + 2)$.
\end{lemma}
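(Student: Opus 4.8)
The goal is to upgrade Cleve's attacker, which either aborts in round $\is$ or in round $\is+1$ depending on a predicted output bit, into an attacker that aborts in a single fixed round (or not at all). The plan is to start from Cleve's original attacker and carefully unpack what it does, then argue that one of its two ``aborting sub-strategies'' must already achieve a comparable bias on its own.

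Recall Cleve's attacker (as described in the paragraph preceding the lemma): it runs honestly up to a predetermined round $\is$ at which the corrupted party $\Party$ is about to send a message, it computes $b_\is$, the value $\Party$ would output if the partner aborted immediately after receiving the round-$\is$ message, and then it either aborts \emph{before} sending (in round $\is$) or aborts \emph{after} sending (effectively in round $\is+1$), the choice being dictated by whether $b_\is$ matches the target bit. So Cleve's single attacker is really a convex combination of two ``single-round'' fail-stop attackers: $\Adv_{\is}$, which aborts in round $\is$ precisely when $b_\is = b$ for the appropriate bit $b$, and $\Adv_{\is+1}$, which sends the round-$\is$ message honestly and aborts in round $\is+1$ precisely when $b_\is = b$. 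First I would write Cleve's guarantee as saying that the \emph{combination} of these two biasing effects is at least $\consistent/(4r+1)$, summed over the two rounds. The key observation is that the bias Cleve obtains is additive across the two abort points: the total bias is a sum of the bias contributed by aborting in round $\is$ and the bias contributed by aborting in round $\is+1$.

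The main step is then an averaging argument. If Cleve's attacker biases the output by $\consistent/(4r+1)$, and this bias decomposes as a sum of contributions from two single-round abort strategies, then at least one of those two strategies contributes at least half, i.e., at least $\consistent/(8r+2)$. I would make the decomposition explicit by tracking the output distribution of the honest partner $\Party_{1-\text{index}}$ under three scenarios: no abort, abort-in-$\is$-conditioned-on-$b_\is=b$, and abort-in-$(\is+1)$-conditioned-on-$b_\is=b$. Writing the target bias as a telescoping/triangle-style inequality in these three distributions, the total deviation from the honest output probability is bounded above by the sum of the two conditional deviations; hence one of the two single-round deviations is at least half the total. This selects the pair $(\is, \js)$ with $\js \in \{\is, \is-1\}$ claimed in the lemma statement --- reindexing so that ``abort in round $\is$ after honestly sending the round-$(\is-1)$ message'' is labeled by $\js = \is-1$, and ``abort in round $\is$ before sending'' is labeled by $\js = \is$. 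The bit $b$ and party $\Party$ are inherited directly from Cleve's attacker.

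The subtle point --- and the step I expect to need the most care --- is matching the conditioning variable $b_\js$ to the abort round correctly. In the single-round attacker $\Adv'$ of the lemma, the decision to abort in round $\is$ is based on $b_\js$ where $\js \in \{\is, \is-1\}$; this reflects that the prediction bit examined is the output the party \emph{would} see if the partner aborted, which is computed either just before sending the round-$\is$ message ($\js=\is$) or one step earlier. I would verify that in each of the two single-round strategies extracted from Cleve's attacker, the predicate triggering the abort is exactly a condition of the form $b_\js = b$ for the appropriate $\js$, so that $\Adv'$ as defined in the lemma statement literally coincides with the better of the two sub-strategies. Once this bookkeeping is done, the bias bound $\consistent/(8r+2)$ follows immediately from the averaging argument, completing the proof. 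The only real content beyond Cleve is the linearity/triangle-inequality decomposition of the bias and this careful re-indexing; everything else is a direct appeal to \cite{Cleve86}.
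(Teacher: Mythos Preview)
Your overall plan---split Cleve's two-round attacker into two single-round attackers and average---is exactly the paper's approach. But your decomposition is wrong in a way that breaks the averaging. Cleve's attacker aborts in round $\is$ when $b_{\is}=b$ and aborts in round $\is+1$ when $b_{\is}=1-b$; these are \emph{complementary} events. Your second sub-attacker $\Adv_{\is+1}$ aborts in round $\is+1$ when $b_{\is}=b$, the \emph{same} condition as $\Adv_{\is}$. With this decomposition the two biases do not sum (or triangle-bound) to Cleve's bias: writing $O_h,O_{\is},O_{\is+1}$ for the honest party's output under no abort, abort-at-$\is$, and abort-at-$(\is{+}1)$, Cleve's bias is $\ex{(O_{\is}-O_h)\mathbf{1}_{b_{\is}=b}}+\ex{(O_{\is+1}-O_h)\mathbf{1}_{b_{\is}=1-b}}$, whereas your two attackers have biases $\ex{(O_{\is}-O_h)\mathbf{1}_{b_{\is}=b}}$ and $\ex{(O_{\is+1}-O_h)\mathbf{1}_{b_{\is}=b}}$, whose sum is something else entirely. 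Consequently your sentence ``the bit $b$ \ldots\ is inherited directly from Cleve's attacker'' is also off: in the correct decomposition, one candidate uses $b$ and the other uses $1-b$, and the lemma's existential over $b$ is discharged by picking whichever candidate wins.

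Once you flip the bit in the second attacker (so $\Adv_0$ aborts at $\js$ iff $b_{\js}=0$ and $\Adv_1$ aborts at $\js+1$ iff $b_{\js}=1$), the relationship is not merely a triangle inequality but an exact identity $B_0+B_1=B$: the cross-terms are $\Pr[b_{\js}=0]\,\ex{O_h\mid b_{\js}=0}+\Pr[b_{\js}=1]\,\ex{O_h\mid b_{\js}=1}-\tfrac12=\ex{O_h}-\tfrac12=0$ because the honest protocol is unbiased. This is how the paper gets the clean factor-of-two loss. Your ``telescoping/triangle'' framing is unnecessary once the decomposition is fixed.
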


\begin{proof}
Let $\Adv$ be the fail-stop attacker for $\pi$ guaranteed by~\cite{Cleve86}: $\Adv$ controls party $\Party \in \set{\Party_0, \Party_1}$, and aborts in round $\js$ if $b_{\js} = b$ for some $b \in \zo$; otherwise $\Adv$ aborts in round $\js+1$. {(The attacker guaranteed by~\cite{Cleve86} might also instruct $\Party_0$ to abort in the very first round; it is easy to see, however, that this attacker also yields an attacker of the type stated in the lemma.)} Assume for concreteness that $\Party = \Party_0$ and $b=0$. Consider the following two adversaries, both corrupting $\Party_0$:
\begin{itemize}
	\item
    Attacker $\Adv_0$ follows the protocol honestly up to round $\js$. Then, if $b_{\js} = 0$, it aborts in round $\js$; otherwise, it continues with the execution of the protocol honestly.
	
	\item
    Attacker $\Adv_1$ follows the protocol honestly up to round $\js+1$. Then, if $b_{\js} = 1$, it aborts in round $\js+1$; otherwise, it continues with the execution of the protocol honestly.
\end{itemize}

We prove that the average of the bias the two adversaries achieve is half the bias of $\Adv$. Recalling that the bias of $\Adv$ is at least $\consistent/(4r+1)$, this will complete the proof. Let $O$ be a random variable denoting the output of $\Party_1$ in $\left(\Adv, \Party_1\right)$. Also, denote by $Q_j$ the event in which $\Party_0$ aborted in round~$j$, and by $\neg Q$ the event in which $\Party_0$ did not abort.
The bias achieved by $\Adv$ can be written as:
\begin{align}
B = \pr{b_{\js} = 0} \cdot \ex{O \mid b_{\js} = 0 \wedge Q_{\js}} + \pr{b_{\js} = 1} \cdot \ex{O \mid b_{\js} = 1 \wedge Q_{\js+1}} - \half.
\end{align}

\noindent
Similarly, we can write the bias of $\Adv_0$ and of $\Adv_1$ as:
\begin{align*}
B_0 & = \pr{b_{\js} = 0} \cdot \ex{O \mid b_{\js} = 0 \wedge Q_{\js}} + \pr{b_{\js} = 1} \cdot \ex{O \mid \neg Q} -\half. \\
B_1 & = \pr{b_{\js} = 1} \cdot \ex{O \mid b_{\js} = 0 \wedge Q_{\js+1}} + \pr{b_{\js} = 0} \cdot \ex{O \mid \neg Q} - \half.
\end{align*}
Taking the average of $B_0$ and $B_1$ yields that:
\begin{align*}
\half\left(B_0 + B_1\right) & = \half\left(\pr{b_{\js} = 0} \cdot \ex{O \mid b_{\js} = 0 \wedge Q_{\js}} + \pr{b_{\js} = 1} \cdot \ex{O \mid b_1 = 0 \wedge Q_{\js+1}} - \half\right) \\
& + \half\left(\pr{b_{\js} = 1} \cdot \ex{O \mid \neg Q} + \pr{b_{\js} = 0} \cdot \ex{O \mid \neg Q} - \half \right) \\
& = \half B + \half \left(\ex{O \mid \neg Q} - \half \right) \\
& = \half B.
\end{align*}
The last equality follows by the fact that $\ex{O \mid \neg Q} = \half$ (the expected output of the honest party in an honest execution of $\pi$).
\end{proof}

{\small{
\bibliographystyle{abbrvnat}
\bibliography{crypto}
}}

\appendix

%%%%%%%%%%%%%%%%%%%%%%%%%%%%%%%%%%%%%%%%%%%%%%%%%%%%%%%%%%%%
%%%%%%%%%%%%% Fairness with Restricted Identifiable Abort %%
%%%%%%%%%%%%%   THE NO-INPUT CASE                         %%
%%%%%%%%%%%%%%%%%%%%%%%%%%%%%%%%%%%%%%%%%%%%%%%%%%%%%%%%%%%%
\section{Fairness with Restricted Identifiable Abort (No Inputs)}\label{sec:restricted_idabort_noinput}

We start by introducing a variant of fairness with identifiable abort that will be used as an intermediate step in our constructions. This definition captures the delegation of the computation to smaller committees that independently carry out the (same) fair computation, such that the adversary can only abort the computation of committees with corrupted parties. Looking ahead, in the honest-majority setting we will consider a vector of committees $\vCS=(\committee_1,\ldots,\committee_\ell)$ that run in parallel, whereas in the no-honest-majority case we will consider a single committee $\committee$ (\ie $\ell=1$).

\paragraph{Ideal model with fairness and restricted identifiable abort (no inputs).}
An ideal computation, with $\vCS$-identifiable-abort, of a no-input, $n$-party functionality $f$ for parties $(\Party_1,\ldots,\Party_n)$ \wrt $\vCS=(\committee_1,\ldots,\committee_\ell)$, where $\committee_1,\ldots,\committee_\ell\subseteq [n]$, in the presence of an ideal-model adversary $\Adv$ controlling the parties indexed by $\IS\subseteq[n]$, proceeds via the following steps.
\begin{itemize}
	\item[\emph{Early abort}:]
    $\Adv$ can abort the computation by choosing an index of a corrupted party $\is_j\in\IS \cap \committee_j$ for every $j\in [\ell]$ and sending the abort message $(\abort,\sset{\is_1,\ldots,\is_\ell})$ to the trusted party. In case of such abort, the trusted party sends the message $(\bot,\sset{\is_1,\ldots,\is_\ell})$ to all parties and halts.

    \item[\emph{Trusted party answers the parties}:]
    If $\committee_j\subseteq\IS$ for some $j\in[\ell]$, the trusted party receives from the adversary output values $y'_1,\ldots,y'_n$ and sends $y'_i$ to $\Party_i$.
    Otherwise, the trusted party uniformly samples random coins $r$, computes $(y_1, \ldots, y_n)=f(\lambda,\ldots,\lambda;r)$, and sends $y_i$ to party $\Party_i$ for every $i\in[n]$.

    \item[\emph{Outputs}:]
    Honest parties always output the message received from the trusted party and the corrupted parties output nothing.
    The adversary $\Adv$ outputs an arbitrary function of the messages received by the corrupted parties from the trusted party and its auxiliary input.
\end{itemize}

\begin{definition}[ideal-model computation with fairness and restricted identifiable abort (no inputs)]\label{def:ideal_ridabort_noinput}
Let $f\colon(\zs)^n \mapsto (\zs)^n$ be a no-input, $n$-party functionality, let $\IS\subseteq [n]$, and let $\vCS=(\committee_1,\ldots,\committee_\ell)$, where $\committee_1,\ldots,\committee_\ell\subseteq [n]$.
The {\sf joint execution of $f$ with $\vCS$ under $(\Adv, I)$ in the ideal model}, on auxiliary input $\aux$ to $\Adv$ and security parameter $\secParam$, denoted $\IDEAL^{\vCS\mhyphen\idfair}_{f,\IS,\Adv(\aux)}(\secParam)$, is defined as the output vector of $\Party_1, \ldots, \Party_n$ and $\Adv(\aux)$ resulting from the above described ideal process.
\end{definition}

\begin{definition}\label{def:SecureProtocol_ridfair_noinput}
Let $f\colon(\zs)^n \mapsto (\zs)^n$ be a no-input, $n$-party functionality and let $\pi$ be a probabilistic polynomial-time protocol computing $f$.
The {\sf protocol $\pi$ $(\delta,t)$-securely computes $f$ with fairness and $(\ell, n',t')$-identifiable-abort (and computational security)}, if for every probabilistic polynomial-time real-model adversary \Adv, there exists a probabilistic polynomial-time adversary $\Sim$ for the ideal model, such that for every $\IS\subseteq [n]$ of size at most $t$ and subsets $\vCS=(\committee_1,\ldots,\committee_\ell)$ satisfying $\committee_j\subseteq [n]$, $\ssize{\committee_j}=n'$, and $\ssize{\IS\cap\committee_j}\leq t'$, for every $j\in[\ell]$, it holds that
\[
\set{\bigbrack \REAL_{\pi, \IS, \Adv(\aux)}(\secParam)}_{\aux\in\zs, \secParam\in\N}
\deltaci
\set{\bigbrack \IDEAL^{\vCS\mhyphen\idfair}_{f, \IS, \Sim(\aux)}(\secParam)}_{\aux\in\zs, \secParam\in\N}.
\]
If $\delta$ is negligible, we say that $\pi$ is a protocol that $t$-securely computes $f$ with fairness and $(\ell, n',t')$-identifiable-abort and computational security. We denote fairness with $(1,n',t')$-identifiable-abort by fairness with $(n',t')$-identifiable-abort.

The {\sf protocol $\pi$ $(\delta,t)$-securely computes $f$ with fairness and $(\ell, n',t')$-identifiable-abort (and information-theoretic (statistical) security)}, if for every real-model adversary \Adv, there exists an adversary $\Sim$ for the ideal model, whose running time is polynomial in the running time of $\Adv$, such that for every $\IS\subseteq [n]$ of size at most $t$, and subsets $\vCS=(\committee_1,\ldots,\committee_\ell)$ as above,
%$\committee_1,\ldots,\committee_\ell\subseteq [n]$ satisfying $\ssize{\committee_j}=n'$ and $\ssize{\IS\cap\committee_j}\leq t'$, for every $j\in[\ell]$,
it holds that
%{\small{
\[
\set{\bigbrack \REAL_{\pi, \IS, \Adv(\aux)}(\secParam)}_{\aux\in\zs, \secParam\in\N}
\deltaclose
\set{\bigbrack \IDEAL^{\vCS\mhyphen\idfair}_{f, \IS, \Sim(\aux)}(\secParam)}_{\aux\in\zs, \secParam\in\N}.
\]
%}}
If $\delta$ is negligible, we say that $\pi$ is a protocol that $t$-securely computes $f$ with fairness and $(\ell, n',t')$-identifiable-abort and statistical security.
\end{definition}
\enote{This definition should be united with the with inputs one, and be moved to the preliminaries.}

\end{document}